\newcommand{\Ignore}[1]{}
\newcommand{\be}{\begin{equation}}
\newcommand{\ee}{\end{equation}}
\newcommand{\bes}{\begin{eqnarray}}
\newcommand{\ees}{\end{eqnarray}}
\newcommand{\Hkin}{\mathcal H_{\text{kin}}}
\newcommand{\Piphys}{\Pi_{\text{phys}}}
\newcommand{\Hphys}{\mathcal H_{\text{phys}}}
\newcommand{\Aphys}{\mathcal A_{\text{phys}}}
\newcommand{\Hibar}{\mathcal H_{\overline{i}}}
\renewcommand\bra[1]{{\langle{#1}|}}
\renewcommand\ket[1]{{|{#1}\rangle}}
\newtheorem{definition}{Definition}
\newtheorem{theorem}{Theorem}
\newtheorem{lemma}{Lemma}
\newtheorem{corollary}{Corollary}[lemma]
\newtheorem{example}{Example}
\newtheorem{claim}{Claim}
\newcommand{\tr}{\text{Tr}}
\definecolor{crimson}{rgb}{0.86, 0.08, 0.24}
\definecolor{persianblue}{rgb}{0.11, 0.22, 0.73}
\definecolor{cobalt}{rgb}{0.0, 0.28, 0.67}
\definecolor{green(pigment)}{rgb}{0.0, 0.65, 0.31}
\definecolor{mediumseagreen}{rgb}{0.2, 0.6, 0.65}
\definecolor{asparagus}{rgb}{0.53, 0.66, 0.42}
\definecolor{lightgray}{rgb}{0.83, 0.83, 0.83}
\definecolor{anti-flashwhite}{rgb}{0.95, 0.95, 0.96}
\DeclareMathOperator{\Tr}{Tr}
\newcommand{\mcG}{\mathcal{G}}
\newcommand{\ibar}{\overline{i}}
\newcommand{\jbar}{\overline{j}}
\newcommand{\mcH}{\mathcal{H}}
\newcommand{\mcS}{\mathcal{S}}
\newcommand{\mcR}{\mathcal{R}}
\newcommand{\eff}{{\text{eff}}}
\newcommand{\mdsh}{\mathds{h}}
\newcommand{\mbfU}{\mathbf{U}}
\newcommand{\mbfUX}{\mathbf{U}_{\!X}}
\newcommand{\mcA}{\mathcal{A}}
\newcommand{\mcI}{\mathcal{I}}
\newcommand{\onebar}{\overline{1}}
\newcommand{\twobar}{\overline{2}}
\newcommand{\mbt}{\mathbbm{t}}
\newcommand{\mbd}{\mathbbm{d}}
\newcommand{\Pit}{\hat{\Pi}_{\mbt}}
\newcommand{\Pid}{\hat{\Pi}_{\mbd}}
\newcommand{\PiUX}{\hat{\Pi}_{\ibar}^{X}}
\newcommand{\PiUXperp}{\hat{\Pi}_{\ibar}^{X\perp}}
\newcommand{\PiUXj}{\hat{\Pi}_{\jbar}^{X'}}
\newcommand{\PiUXjperp}{\hat{\Pi}_{\jbar}^{X'\perp}}
\begin{document}

\title{Quantum Frame Relativity of Subsystems, Correlations and Thermodynamics}

\author{Philipp A. H\"ohn} 
\email{philipp.hoehn@oist.jp}
\affiliation{Okinawa Institute of Science and Technology Graduate University, 1919-1 Tancha, Onna-son, Okinawa 904-0495, Japan}

\author{Isha Kotecha}
\email{ikp307@gmail.com}
\affiliation{Basic Research Community for Physics e.V., Germany}

\author{Fabio M. Mele}
\email{fabio.mele@oist.jp}
\affiliation{Okinawa Institute of Science and Technology Graduate University, 1919-1 Tancha, Onna-son, Okinawa 904-0495, Japan}


\begin{abstract}
It was recently noted that different internal quantum reference frames (QRFs) partition a system in different ways into subsystems, much like different inertial observers in special relativity decompose spacetime in different ways into space and time.~Here we expand on this QRF relativity of subsystems and elucidate that it is the source of all novel QRF dependent effects, just like the relativity of simultaneity is the origin of all characteristic special relativistic phenomena.~We show that subsystem relativity, in fact, also arises in special relativity with internal frames and, by implying the relativity of simultaneity, constitutes a generalisation of it.~Physical consequences of the QRF relativity of subsystems, which we explore here systematically, and the relativity of simultaneity may thus be seen in similar light.~We focus on investigating when and how subsystem correlations and entropies, interactions and types of dynamics (open vs.\ closed), as well as quantum thermodynamical processes change under QRF transformations.~We show that thermal equilibrium is generically QRF relative and find that, remarkably, \emph{QRF transformations not only can change a subsystem temperature, but even map positive into negative temperature states}.~While reminiscent of the Unruh effect, this phenomenon is distinct and does not invoke spacetime structure.~We examine how non-equilibrium notions such as heat and work exchange (related to the first law), as well as entropy production and flow (related to the second law) depend on the choice of QRF.~In arriving at these results, we develop the first study of how reduced subsystem states transform under QRF changes, which, in contrast to global states, transform non-unitarily in the generic case.~Focusing on physical insights, we restrict to ideal QRFs associated with finite abelian groups.~Besides being conducive to rigour, the ensuing finite-dimensional setting is where quantum information-theoretic quantities and quantum thermodynamics are best developed.~We anticipate, however, that our results extend qualitatively to more general groups and frames, and even to subsystems in gauge theory and gravity.

\end{abstract}

\maketitle

{\hypersetup{linkcolor=black}
\tableofcontents}

\section{Introduction}\label{Sec:intro}

\begin{figure*}[!t]
\centering
\includegraphics[width=\textwidth]{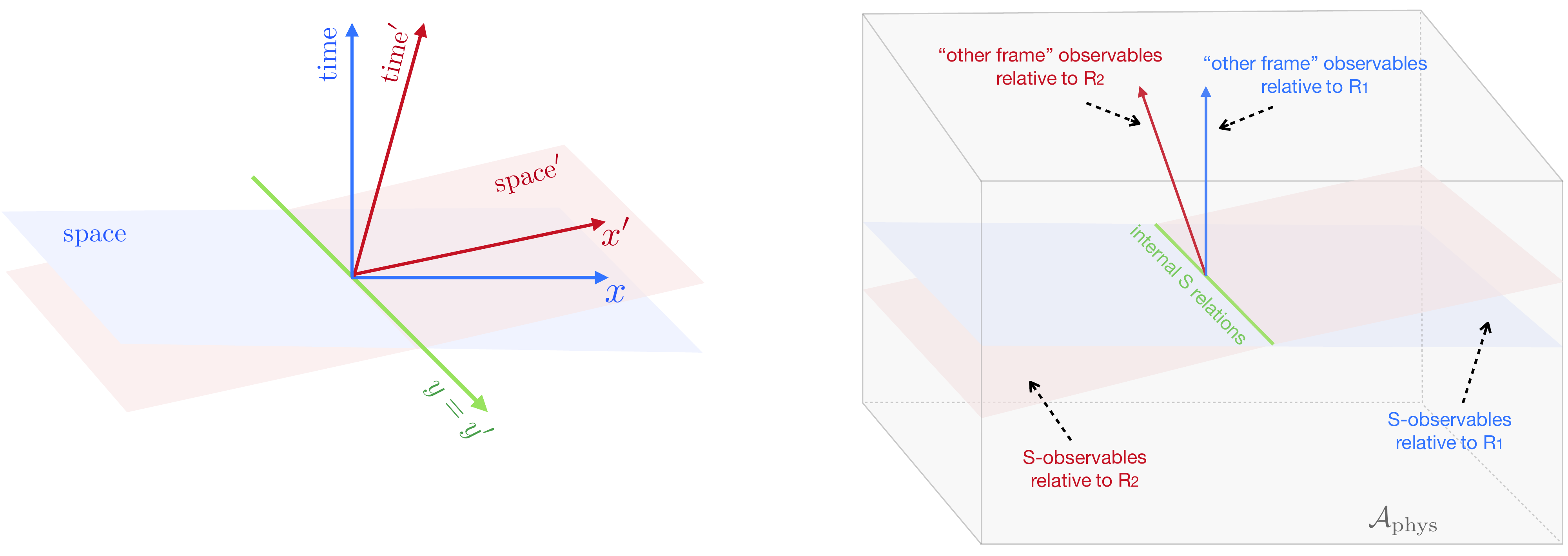}
\caption{Just like different inertial observers in special relativity decompose spacetime in different ways into space and time (relativity of simultaneity, left), different internal quantum reference frames (QRFs) partition the total system in different ways into subsystems (quantum relativity of subsystems, right). More precisely, two QRFs $R_1$ and $R_2$ decompose the total algebra $\Aphys$ of gauge-invariant (external-frame-independent) observables in different ways into subalgebras corresponding to a subsystem $S$ (their joint complement), and ``the other frame''. In fact, we demonstrate that the relativity of subsystems also arises for special relativity with internal frames (tetrads) and \emph{implies} the relativity of simultaneity. The relativity of subsystems is thus a generalisation of the relativity of simultaneity. Just as the latter is \emph{the} source of all the characteristic special relativistic effects and thereby can be viewed as the essence of special covariance, subsystem relativity is \emph{the} source of all the QRF dependence of physical properties and thereby can similarly be regarded as the \emph{essence of QRF covariance}. Given their intimate relation, their physical consequences should be seen in a similar light. While $R_1$ and $R_2$ describe $S$ by distinct invariant subalgebras (blue and red planes on the right), they will agree on the description of internal relational observables of $S$, which comprise the overlap of these subalgebras (green line on the right). This is analogous to how different Lorentz observers agree on scalars (and may agree on some spatial directions). }
\label{Fig:relSR}
\end{figure*}

The notion of a subsystem is one of the most elementary ingredients of modern physics.~Our concepts of interactions, correlations, locality, (de)compositions, exchanges of heat and work, etc.~derive from it.~But what defines a subsystem and relative to what is it defined? 

In many treatments in quantum theory, it is standard that a rigid partitioning into subsystems is given as an external input.~This includes quantum field theory (QFT) in Minkowski space, where locality and subsystems are defined relative to the (for the field) external background spacetime.~Such a rigid and external notion of subsystem is challenged, however, when gravity is taken into account and spacetime becomes an internal and dynamical subsystem itself.~This has sparked much recent debate around suitably defining subsystems associated with subregions in gravitational theories and this involves the introduction of additional reference degrees of freedom \cite{Donnelly:2016auv,Carrozza:2022xut,Freidel:2020xyx,Speranza:2022lxr,Ciambelli:2021nmv,Kabel:2023jve,Chandrasekaran:2022cip,Witten:2023qsv,Jensen:2023yxy,Soni:2023fke}, which can be done in multiple ways \cite{Carrozza:2022xut,Goeller:2022rsx}.~In fact, this even extends to non-gravitational field theories with gauge symmetry, where gauge invariance requires subsystems to be further defined relative to reference degrees of freedom called edge modes \cite{Donnelly:2016auv,SCPHedgemodesasreferenceframes,Geiller:2019bti,Gomes:2022kpm,Riello:2020zbk,Riello:2021lfl,Gomes:2016mwl}.

But this is far from the only circumstance under which a rigid notion of subsystem becomes challenged.~Considerations on the meaning of subsystem in quantum information theory have established the idea that a subsystem ought to be operationally defined in terms of the (algebra of) observables accessible to the observer \cite{Zanardi2001,Zanardi2004,Barnum2003GeneralizationBasedCoherent,barnum_subsystem-independent_2004,viola_barnum_2010,Bartlett:2006tzx}.~This too is suggestive of a subsystem relativity: a subsystem is defined relative to what's accessible and this may change depending on the circumstance and observer.~This also resonates with the ideas underlying renormalisation and coarse-graining in many-body physics and QFT, according to which what is accessible and observable depends on the scale one is looking at.~Moreover, it has been argued that the subsystem structure can be naturally defined in terms of the Hamiltonian (more precisely its spectrum) \cite{Cotler2019} and hence the dynamics of the theory; changing the dynamics could thus lead to a distinct subsystem partitioning.

All these examples indicate that there is no absolute notion of subsystem and that it has to be defined relative to some reference structure, for example a reference frame.~This paper will be concerned with the question of what happens when, rather than choosing our reference frame to be external to a system, we associate it with one of its subsystems, thus invoking an \emph{internal} and thereby dynamical reference frame.~Since what is accessible to an external and internal reference frame can be very different, this will also lead to distinct definitions of subsystems.~In the quantum theory, an internal frame becomes a \emph{quantum reference frame} (QRF) \cite{Aharonov:1967zza,Aharonov:1967zz,Aharonov:1984zz,Angelo_2011,Bartlett:2006tzx,Palmer:2013zza,busch2011position,loveridge2011measurement,loveridge2017relativity,miyadera2016approximating,Loveridge:2019phw,Rovelli:1990pi,rovelliQuantumGravity2004,GiacominiQMcovariance:2019,delaHamette:2021oex,PHquantrel,PHtrinity,Hoehn:2020epv,Vanrietvelde:2018pgb,Vanrietvelde:2018dit,Hohn:2018toe,Hohn:2018iwn,PHinternalQRF,PHQRFassymmetries,delaHamette:2021piz,Giacomini:2021gei,CastroRuizQuantumclockstemporallocalisability,Suleymanov:2023wio,yang2020switching,Giacomini:2018gxh,delaHammetteQRFsforgeneralsymmetrygroups,delaHamette:2021iwx,Ballesteros:2020lgl,streiter2021relativistic,Mikusch:2021kro,Glowacki:2023nnf,Carette:2023wpz,Castro-Ruiz:2021vnq,gour2008resource,smithCommunicatingSharedReference2019,Bojowald:2022caa,Bojowald:2019mas,Favalli:2022wei}.~These become crucial when an external frame is not available in quantum information protocols, the foundations of quantum theory and quantum thermodynamics.~But, in fact, the edge modes used in gauge theory and gravity for gauge-invariantly defining subsystems are also precisely such dynamical or quantum frames \cite{SCPHedgemodesasreferenceframes,Carrozza:2022xut,Kabel:2023jve} (see also \cite{Gomes:2022kpm,Riello:2020zbk,Riello:2021lfl,Gomes:2016mwl}).~Furthermore, dynamical frames are more generally crucial in gravitational theories to define physical observables \cite{Goeller:2022rsx}. Internal frames are thus a widely applicable concept in physics.

Given a quantum notion of reference frame and the observation that there is no unique internal subsystem that can serve as one, the natural question is how one can change between the descriptions relative to two different QRFs.~The question is an enticing one, for now the references frames may be in superposition or entangled with other subsystems.~This has resulted in a recent wave of efforts on developing QRF transformations in several contexts and establishing a novel notion of QRF covariance of physical properties and laws \cite{GiacominiQMcovariance:2019,delaHamette:2021oex,PHquantrel,PHtrinity,Hoehn:2020epv,Vanrietvelde:2018pgb,Vanrietvelde:2018dit,Hohn:2018toe,Hohn:2018iwn,PHinternalQRF,PHQRFassymmetries,delaHamette:2021piz,Giacomini:2021gei,CastroRuizQuantumclockstemporallocalisability,Suleymanov:2023wio,yang2020switching,Giacomini:2018gxh,delaHammetteQRFsforgeneralsymmetrygroups,delaHamette:2021iwx,Ballesteros:2020lgl,streiter2021relativistic,Mikusch:2021kro,Glowacki:2023nnf,Carette:2023wpz,Castro-Ruiz:2021vnq,Bojowald:2010xp,Bojowald:2010qw,Hohn:2011us}.~In other words, the aim is to formulate a quantum relativity principle.

A notion of relativity usually entails that different frames decompose some relevant set of observables or states in different ways, leading to different descriptions of this set.~For example, in Galilean relativity, distinct inertial frames decompose space in different ways into orthogonal directions.~In special relativity, this is extended to the possibility of decomposing also spacetime in different ways into space and time.~This then carries over also to general relativity, however, only in a locally meaningful way.~What then is the set that different QRFs decompose in different ways?

It has been shown \cite{PHquantrel}, that the definition of subsystem depends on the internal QRF and so different QRFs decompose the total system in different ways into subsystems.~More precisely, different QRFs decompose the total algebra of gauge-invariant (or external-frame-independent) observables in different ways into subsystems, much like different Lorentz observers decompose spacetime in different ways into space and time (see Fig.~\ref{Fig:relSR}).~This quantum relativity of subsystems was extended in \cite{delaHamette:2021oex,Castro-Ruiz:2021vnq} to general symmetry groups and explains the earlier observed QRF dependence of entanglement and superpositions \cite{GiacominiQMcovariance:2019,Vanrietvelde:2018pgb,delaHammetteQRFsforgeneralsymmetrygroups}, degree of classicality of a subsystem \cite{Vanrietvelde:2018pgb,le2020blurred,tuziemski2020decoherence}, temporal localisability in relational dynamics \cite{CastroRuizQuantumclockstemporallocalisability,PHtrinity}, etc.~The internal QRF programme is thus really about defining subsystems relative to one another and comparing the ensuing descriptions.

Here, we shall build up on these observations and both refine and simplify the presentation of the quantum relativity of subsystems, as well as systematically explore its physical consequences.~We will demonstrate that subsystem relativity and the relativity of simultaneity are intimately related.~Indeed, subsystem relativity is nothing special to quantum theory and in special relativity with internal frames (tetrads), one obtains it too:~different internal frames decompose the set of invariant observables in different ways into subsystems, and this \emph{implies} the relativity of simultaneity.~(In Galilean relativity, it would imply the relativity of spatial decompositions.)~In other words, subsystem relativity is a generalisation of the relativity of simultaneity (cfr.~Fig.~\ref{Fig:relSR}).~Furthermore, we will show that the framework of QRF covariance \cite{delaHamette:2021oex,PHquantrel,PHtrinity,Hoehn:2020epv,Vanrietvelde:2018pgb,Vanrietvelde:2018dit,Hohn:2018toe,Hohn:2018iwn,PHinternalQRF,PHQRFassymmetries,delaHamette:2021piz,Giacomini:2021gei,CastroRuizQuantumclockstemporallocalisability,Suleymanov:2023wio,yang2020switching} that we will invoke directly mimics the covariance structures underlying special relativity and thereby represents a natural quantum extension of the notion of frame covariance.~Since the relativity of subsystems arises structurally in the same form in special relativity and for QRFs, their consequences should be understood in the same light.~In particular, as will become clear in this work, just like the relativity of simultaneity is \emph{the} root of all the novelties of special compared to Galilean relativity, so is the quantum relativity of subsystems \emph{the} root of all the novel QRF relative physics. 

Now what are the precise physical consequences of the quantum relativity of subsystems?~It does not take much imagination to anticipate that it will render any physical property that depends on the definition of subsystem generally QRF dependent.~However, the challenge lies in characterising the  conditions under which physical properties become contingent on the choice of QRF and to quantify how they may change under a change of QRF.~This is a non-trivial task because not all observables and states transform non-trivially under QRF changes.~Just like all inertial observers at an event in Minkowski space agree on the description of scalars at that event (and two inertial observers with a relative boost in $x$-direction even agree on the other spatial directions), there exist states and observables of subsystems that all QRFs will agree upon, see Fig.~\ref{Fig:relSR} for an illustration.~As we will see, these observables will correspond to internal relational observables of those subsystems.~In fact, in the QRF setting, the situation is even more subtle, because states need not remain exactly invariant under QRF transformations for certain properties to remain invariant.~For example, for subsystem entropies to remain invariant it suffices that QRF transformations act on the relevant states equivalently to multilocal unitaries and, as will become evident in this work, there exists a plethora of ways in which this may happen.~Characterising systematically when and how physical properties change under QRF changes thus is a rather non-trivial endeavour.   

The aim of this article is to precisely launch such a systematic investigation, while unearthing novel physical effects.~Our focus will lie on correlations, entropies, interactions and thermodynamic properties, all of which depend on a partitioning into subsystems.~We will thus also see that what is a dynamically or thermodynamically open, closed or isolated subsystem will depend on the choice of QRF.~We will provide (in some cases partial) characterisations when this happens, both for the full set of dynamical trajectories or even individual trajectories (i.e.\ in an effective treatment).

Reference systems appear ubiquitously in thermodynamics, and in some cases \cite{Aberg2014,lostaglio2015description,Lostaglio2015a,Lostaglio2019a,Marvian2019a,Erker2017,Oppenheim2015a,woods2019autonomous,Woods:2019fzh} these can be considered as QRFs.~However, what has not been previously explored in the literature are the consequences of QRF covariance for thermodynamics.~Given that the framework of QRF covariance applies also to total systems of microscopic size (e.g.\ few-particle systems), it is thus natural to explore \emph{quantum} thermodynamics \cite{binder2019thermodynamics,Goold_2016,e15062100,vinanders} in this context.~In a nutshell, quantum thermodynamics departs from notions of macroscopic reservoirs, thermal baths, etc.~and aims to extend the principles of thermodynamics to quantum microscopic systems, while recovering the standard phenomenological thermodynamics in appropriate macroscopic limits.~It does not rely on weak coupling limits or close-to-equilibrium scenarios, which is crucial for our purposes because, as we will show, QRF transformations can transform non-interacting subsystems into strongly interacting ones, or equilibrium states into non-equilibrium ones.

In particular, we will see that the notion of a Gibbs state of thermal equilibrium for a given subsystem is QRF relative.~Generally, a subsystem that is in thermal equilibrium relative to one frame, will not be in equilibrium relative to another, although we also characterise subsystem Gibbs states that \emph{are} QRF-invariant.~We further  give examples of subsystem Gibbs states that map under QRF transformations into other Gibbs states, however, of \emph{different entropy}. Remarkably, \emph{QRF transformations can not only change the temperature of a thermal equilibrium state, but even flip its sign}.~We illustrate in a qubit example how positive temperature Gibbs states can transform into negative temperature ones.~This frame relativity of equilibrium and temperature may evoke analogies with the Unruh effect.~We emphasise, however, that in our QRF context the effect is of a different origin; no spacetime structure is required and, unlike inertial and Rindler observers in Minkowski space, the different QRFs see the \emph{same} subsystem.~Furthermore, unlike in the Unruh effect, the frames are here not external, but internal and quantum.~The QRF relativity of equilibrium and temperature is a direct result of the relativity of subsystems. 

We further demonstrate that non-equilibrium notions such as work and heat exchange between subsystems (related to the first law of thermodynamics), as well as entropy production and entropy flow (related to the second law) are in general QRF relative.~We supplement these observations with partial characterisations of situations when these notions are QRF-independent. 

In arriving at these results, we also provide the first systematic analysis of how reduced subsystem states transform under QRF changes.~In the previous literature, the focus was rather placed on how global states for the total system transform under QRF changes.~While the global state may transform unitarily, the same is in general not true for subsystem states.~Moreover, given some subsystem state in one QRF perspective, how it will transform into another QRF perspective will generally depend on the global state, though we will also characterise a subclass of subsystem states which transform independently of the global state.~Conversely, generally there will also exist many global states that lead to the same transformation of a subsystem state.~This becomes relevant, for example, when seeking a subclass of global states such that a subsystem state transforms unitarily (so that, e.g., entropies are preserved). Subsystem state transformations are thus a somewhat challenging topic. 

In order to render this work more broadly accessible and avoid clouding the physical insights with technicalities, we impose a few simplifying restrictions on our setup below.~First, we will restrict our attention to \emph{ideal} QRFs, namely those, whose orientation states are perfectly distinguishable.~Second, we restrict to QRFs associated with finite abelian groups.~QRFs that are non-ideal or associated with general symmetry (unimodular Lie) groups, while requiring more machinery, can, however, be treated in essentially the same way and can still yield unitary QRF transformations at the global level \cite{delaHamette:2021oex,PHtrinity,Hoehn:2020epv}.~We thus anticipate that our results here will carry over qualitatively to the more general case.~In fact, coming back to our opening discussion, we even suspect that the insights gained here will extend qualitatively to the case of gauge theories and gravity, where edge mode (quantum) reference frames can be used to gauge-invariantly associate subsystems with subregions \cite{Carrozza:2022xut,Kabel:2023jve,Chandrasekaran:2022cip,Witten:2023qsv,Jensen:2023yxy,Goeller:2022rsx,SCPHedgemodesasreferenceframes,Gomes:2022kpm,Riello:2020zbk,Riello:2021lfl,Gomes:2016mwl}.~Our results may also have bearing on recent efforts \cite{Fahn:2022zql} to explore gravitationally induced decoherence and relational open system dynamics in quantum gravity, as these involve choices of temporal QRFs. 

The restriction to finite abelian groups has the further advantage that it typically leads to finite-dimensional Hilbert spaces, which, besides being conducive to a rigorous treatment, is the context in which quantum information-theoretic quantities of our interest\,---\,like entropies\,---\,and quantum thermodynamics are best developed.~This is also the case of most interest for experiments testing quantum thermodynamics.~In fact, this is the main reason for this restriction.

There are, however, some insights that will not extend in the same form beyond ideal QRFs.~For example, one novel insight surrounding the subsystem relativity that we explain below is that, for ideal QRFs, an internal QRF perspective is formally nothing but a tensor product structure (TPS) on the global external-frame-invariant (and TPS-neutral) Hilbert space.~Thus, a change of ideal QRF is nothing but a change of TPS on that space, as well as on the algebra of observables defined on it.~This observation permits us to develop a simple algebraic approach to characterising when physical properties are or are not QRF relative.~The key ingredient in this endeavour will be our novel notion of TPS-invariant subalgebras, which contain operators that are up to local unitaries invariant under QRF transformations.~There exists a continuum of such subalgebras and for QRF (in-)variance it turns out crucial how states and observables lie in or across such subalgebras.

By contrast, for non-ideal frames, we rather expect QRF perspectives to amount to direct sums of TPSs, and for field theories with gauge symmetry a TPS across subsystems will not in general make sense.~We will comment on this in some detail in the conclusions.~There, we will also explain why the relational notion of subsystems and entropies invoked in this article is qualitatively distinct from the corresponding ones used in gauge theories and gravity.~In particular, our relational definitions of subsystems and entanglement entropies suggest a new alternative for that setting as well.

The rest of the paper is organised as follows.~In Sec.~\ref{sec:qrfphystps}, we review the invoked framework of QRF covariance by directly comparing it to special relativity with internal frames (tetrads).~We highlight that all the covariance structures mimic those underlying special relativity and show that in the latter case the subsystem relativity also arises and implies the relativity of simultaneity.~In Sec.~\ref{Sec:QuantRelTPSs}, we expand on the previously noted quantum relativity of subsystems and explain it in a more transparent manner analogous to the special relativistic case.~We also show that (ideal) internal QRF perspectives and changes are nothing but TPSs and changes of TPS, respectively, on the external-frame-independent Hilbert space. Sec.~\ref{Sec:Uinv&break} introduces the TPS-invariant subalgebras and characterises subsystem and composite operators residing in them, which thus are QRF transformation invariant up to local unitaries.~These algebraic results are key to the subsequent sections.~The transformations of reduced subsystem states, correlations of a subsystem (both with its complement and within it) and entropies under QRF changes are explored in Sec.~\ref{Sec:states&entanglement}.~In Sec.~\ref{Sec:Dyn}, we examine the QRF relativity of interactions in Hamiltonians and, correspondingly, of open and closed subsystem dynamics.~The consequences for quantum thermodynamics and especially equilibrium states, temperature, work and heat exchange, as well as entropy production and flow are investigated in Sec.~\ref{Sec:TD}.~We conclude with a discussion and outlook in Sec.~\ref{conc}.~Technical details have been moved to several appendices for better readability.

\section{Special vs.\ quantum reference frame covariance} \label{sec:qrfphystps}

In this section, we  review the framework for QRF covariance that we shall employ in the rest of this work. It is the so-called \emph{perspective-neutral framework} \cite{delaHamette:2021oex,PHquantrel,PHtrinity,Hoehn:2020epv,Vanrietvelde:2018pgb,Vanrietvelde:2018dit,Hohn:2018toe,Hohn:2018iwn,PHinternalQRF,PHQRFassymmetries,delaHamette:2021piz,Giacomini:2021gei,CastroRuizQuantumclockstemporallocalisability,Suleymanov:2023wio,yang2020switching}. 

A reader familiar with this framework may skip this section.~However, we shall introduce it in a somewhat novel manner, by directly comparing it to the structures underlying special covariance in special relativity. We also highlight the key differences to other recent approaches to QRF covariance and explain that the ability to directly mimic the covariance structures of special relativity in a quantum regime singles out the perspective-neutral framework among the approaches. In this sense, it can be viewed as the natural extension of the relativity principle into the quantum realm. We also use this analogy to clarify that the subsystem relativity underlying much of this article should be seen in the same light as the relativity of simultaneity (cfr.~Fig.~\ref{Fig:relSR}).

The last few years have seen the development of a few other approaches to QRF transformations that are inequivalent in the general case.~Besides the one we will use, these are the purely perspective-dependent approach \cite{GiacominiQMcovariance:2019,Giacomini:2018gxh,delaHammetteQRFsforgeneralsymmetrygroups,delaHamette:2021iwx,Ballesteros:2020lgl,streiter2021relativistic,Mikusch:2021kro}, which initiated this recent line of research, the operational approach of \cite{Glowacki:2023nnf,Carette:2023wpz} and the somewhat related further development of the quantum information-theoretic approach \cite{Castro-Ruiz:2021vnq} (see also the earlier, but distinct \cite{Palmer:2013zza}).~There is also an older effective (semiclassical) approach \cite{Bojowald:2010qw,Bojowald:2010xp,Hohn:2011us} and an algebraic one \cite{Bojowald:2019mas,Bojowald:2022caa}, which (for ideal QRFs) are shown to be equivalent to the perspective-neutral one in \cite{JulianArtur}. Since it may become hard for the novice to maintain an orientation in the face of these different QRF directions, for the convenience of the reader, we thus highlight what distinguishes the perspective-neutral framework in terms of its conceptual motivation, scope and technical achievements, and what its key differences to the other approaches are. 

This is not meant as a detailed comparison of the different approaches, which we leave for elsewhere. We also expressly emphasise that there is no single ``right'' approach to QRF changes which, in fact, can mean different things in different contexts. For example, in operational lab contexts with an observer external to a composite system, who measures this system with external devices, changing internal QRF within this system has a very different meaning from more fundamental scenarios, where no external observer exists in the first place.~The different approaches above come with different motivations and thus also different capabilities. 

That said, as we  now elucidate, the perspective-neutral framework is the fully relational (i.e.\ maximally external-frame-independent) one, it implements QRF covariance in the same structural manner as special covariance, and it is directly amenable to gauge theories and gravity, being formulated in the language of these theories. 

While the perspective-neutral and purely perspective-dependent approaches are inequivalent in general \cite{delaHamette:2021oex}, they \emph{are} equivalent in the setting of ideal QRFs employed throughout this work. We thus emphasise for the reader with sympathy for that approach that all results from Sec.~\ref{Sec:Uinv&break} onward also directly apply to it as well.

\subsection{External vs.\ internal frames}

Suppose we are given a composite system $\tilde S=R_1R_2S$, composed of three subsystems (each of which may itself be a composite system), that is subjected to some (gauge) symmetry group $\mcG$ acting on all subsystems simultaneously.~The basic premise we shall adhere to is that states and observables of $\tilde S$ that are related by such a (gauge) symmetry transformation are physically indistinguishable when $\tilde S$ is considered without access to any external reference structure, or external reference frame for short.~Accordingly, we consider $\mcG$ as a group of external frame transformations and, since these are not accessible when considering $\tilde S$ alone, we shall henceforth simply call them ``gauge transformations''.~Similarly, by the word ``gauge-invariant'', we shall thus mean external frame independence.~In a laboratory situation, $\mcG$ could refer to changes of external laboratory frame, while in the context of gauge theory or gravity, the external frame would usually be fictitious,\footnote{Except in the context of finite boundaries and edge modes \cite{SCPHedgemodesasreferenceframes,Carrozza:2022xut,Gomes:2019otw,Gomes:2021nwt,Gomes:2022kpm,Riello:2020zbk,Riello:2021lfl,Gomes:2016mwl,Kabel:2023jve,Donnelly:2016auv}.} and $\mcG$ would literally be the gauge group, which can be interpreted as implementing changes of fictitious background (gauge) frame.~For example, in gravity, $\mcG$ would be the group of (bulk) diffeomorphisms (i.e.\ essentially the set of general background coordinate changes), while in QED $\mcG$ could correspond to an appropriate set of $\rm{U}(1)$-valued spacetime functions.~This last example already highlights that the notion of frame considered here need not necessarily be a spatiotemporal one, but is more general. 

Next, we make a conceptual jump compared to the standard notion of reference frame, which is indeed the external one that is \emph{not} part of the system to be described.~The aim is now to choose some subsystem of $\tilde S$, say $R_1$, as an \emph{internal} reference frame, relative to which we describe the ``rest'' $R_2S$ in a gauge-invariant (hence external frame-independent) and relational manner.~The central question of internal frame covariance is then how this description changes, if instead we select $R_2$ as the internal frame and describe $R_1S$ relative to it.~What we describe relative to $R_1$ and $R_2$ are thus \emph{distinct} subsystems as seen relative to an external frame; only $S$ is described relative to both (which will be the focus of this paper).

Why would one make such a conceptual jump and move toward internal frames?~One motivation comes from noting that all reference frames used in experiments and observations are physical and that a more fundamental description, devoid of any unphysical background structures, ought to treat them as internal. In quantum theory, moving towards internal frames that are quantum systems themselves can be viewed, in the light of the universality of quantum theory, in some sense as an extension of the Heisenberg cut. This is also the only way in which one can extend the relativity principle genuinely into the quantum regime, i.e.\ in a way that goes beyond the classical notion of frame covariance. Indeed, this step will permit us to change QRF perspective even when the frames can be in relative superposition or entangled with other subsystems.~By contrast, covariance with respect to external frames in quantum theory is already encoded in standard textbook quantum mechanics and quantum field theory (QFT); while external frame transformations (e.g.\ Poincar\'e transformations in QFT) are implemented unitarily, they amount to changes of classical frame and in that sense give nothing new. 

Another motivation comes from the observation that, in theories with gauge symmetry, one in fact implicitly invokes dynamical internal frames when building gauge-invariant descriptions (by means of which one dresses bare, non-invariant quantities into invariant ones) \cite{SCPHedgemodesasreferenceframes,Carrozza:2022xut,Goeller:2022rsx,Gomes:2022kpm,Rovelli:2013fga,Rovelli:2020mpk}.~That is, dealing with internal frames is already baked into such theories.~Furthermore, in operational situations, different agents may not share a common external laboratory frame and may resort instead to internal subsystems as references \cite{Bartlett:2006tzx,PHQRFassymmetries}. 

\subsection{Special and QRF covariance: the perspective-neutral way}

Let us now review the perspective-neutral approach to QRF covariance.~To highlight that its structures are direct quantum versions of those underlying special covariance with internal frames, let us introduce both in parallel.~The relation between the perspective-neutral approach and special relativity is explained in more detail in \cite{delaHamette:2021oex}, though here we add further facets of it.~We also explain the key differences to the other approaches to QRF covariance and use these to clarify that they cannot mimic special covariance in the same sense.

\subsubsection{Description relative to an external frame} 

We begin with special relativity, restricting for simplicity to the Lorentz part of the Poincar\'e transformations (it can be straightforwardly extended to the full Poincar\'e group).~Let $R_1$ and $R_2$ each be given by a tetrad, i.e.\ an orthonormal basis in Minkowski space, $e_a^{\mu}$ and $e'{}_{a'}^{\mu}$, respectively, where $\mu$ labels spacetime indices $t,x,y,z$ and $a$ denotes the frame index $0,1,2,3$.~Owing to normalisation, $\eta_{ab}=e^\mu_ae^\nu_b\eta_{\mu\nu}$, it is clear that the frame orientations $e^\mu_a\in\rm{SO}^+(3,1)$ are Lorentz group valued themselves.\footnote{The restriction to the special and orthonormal component of the group is not necessary, but natural when restricting to equally handed frames.}~Furthermore, let the subsystem $S$ for simplicity be given by some four-velocity $v^\mu$.~The coordinates labeled by $\mu$ are ones associated with some arbitrary background Lorentz frame.~Changes of external background frame are given by spacetime Lorentz transformations $\Lambda^\nu{}_\mu\in\rm{SO}^+(1,3)$ that act on spacetime indices and thus on each of $R_1R_2S$ simultaneously.~These are what we call ``gauge transformations''.~In the case of the internal frames $R_1,R_2$, the group thus acts on itself.~Without access to the external frame, e.g.\ $v^\mu$ and $\Lambda^\mu{}_\nu v^\nu$ are indistinguishable and so the set of possible configurations $\{e_a^\mu,e'{}_{a'}^\mu,v^\mu\}$ comprises the externally distinguishable ones. 

Let us now introduce the analogous structures for QRFs.~As already mentioned in the introduction, in this work we shall restrict to some finite abelian group $\mcG$ ($\mcG\neq\{\mathbf{1}\}$) because it results in a natural finite-dimensional context in which quantum information and quantum thermodynamics are best developed, while still having direct experimental relevance.~This puts us in a position to explore their QRF covariance in a fairly rigorous setting.~The finite abelian group $\mcG$ could, for example, be given by the cyclic group $\mathbb{Z}_n$, i.e.\ the discrete translation group on a circle.\footnote{In fact, owing to the fundamental theorem of finite abelian groups, all of them can be written as direct sums of cyclic groups.}~In the sequel, $\mcG$ will assume the role that the Lorentz group has in special relativity above.~We emphasise, however, that all of the below can be extended to arbitrary (and specifically non-abelian) unimodular Lie groups \cite{delaHamette:2021oex}.~Some of the below material on finite groups is taken from \cite{PHQRFassymmetries,PHinternalQRF}, but QRFs with finite groups have also been discussed in \cite{delaHammetteQRFsforgeneralsymmetrygroups,Glowacki:2023nnf}.

Just like the Lorentz group above constitutes the space of frame orientations, we shall assume the same here.~That is, for both internal QRFs $R_1,R_2$, we take the Hilbert space to be the space of complex functions over the group $\mcH_{i}=\ell^2(\mcG)\simeq\mathbb{C}(\mcG)$ and we take it to be spanned by frame configuration states $\ket{g}_i$, $g\in\mcG$, $i=1,2$, labelling particular QRF orientations.~For simplicity, we therefore assume in this work that the two QRFs are \emph{ideal}, which means that their frame orientation states are perfectly distinguishable $\braket{g|g'}_i=\delta_{g,g'}$.~(Non-ideal frames can also be encompassed \cite{delaHamette:2021oex,PHtrinity,PHquantrel,Hoehn:2020epv}.)~Thus, we can also write $\mcH_{i}\simeq\mathbb{C}^{|\mcG|}$, where $|\mcG|$ denotes the cardinality of $\mcG$.~The frame Hilbert spaces $\mcH_{i}$ furnish a regular representation of $\mcG$, on which it therefore acts as $U_i^g\ket{g'}_i=\ket{gg'}_i$.~As in the tetrad case above, the group thus acts on itself.~All of this is a discrete version of the translation group case $(\mathbb{R},+)$, where the ideal frame Hilbert space would be $L^2(\mathbb{R})$ and position eigenstates would constitute the QRF orientation states \cite{Vanrietvelde:2018pgb,Vanrietvelde:2018dit,delaHamette:2021oex,delaHammetteQRFsforgeneralsymmetrygroups,GiacominiQMcovariance:2019}.~The $\ket{g}_i$ are thus the QRF analogue of a specific tetrad configuration $e^\mu_a$ in special relativity and $\ket{g'}_i\mapsto U_i^g\ket{g'}_i$ is the analogue of the gauge action $e^\mu_a\mapsto\Lambda^\mu{}_\nu e^\nu_a$ on the frame. 

For the subsystem $S$, on the other hand, we shall not assume anything else than that it be described by some (possibly infinite-dimensional) Hilbert space $\mcH_S$ that too carries a unitary representation $U_S^g$ of $\mcG$, though it need not be the regular one. 

In conjunction, we shall thus henceforth assume that the total kinematical Hilbert space of the externally distinguishalbe subsystems $R_1R_2S$, i.e.\ the space of states distinguishable relative to an external frame, is given by a tensor product $\mcH_{\rm kin}=\mcH_{1}\otimes\mcH_{2}\otimes\mcH_S$ and that the external frame transformations are given by a unitary tensor product representation $\mcG\rightarrow\mathcal{U}(\mcH_{\rm kin})$, $g\mapsto {U^g_{12S}:=U_1^g\otimes U_2^g\otimes U_S^g}$, where $\mathcal{U}(\mcH)$ denotes the unitary group on Hilbert space $\mcH$.~These are the ``gauge transformations'' that act globally on all subsystems, just like $\Lambda^\mu{}_\nu$ above acted on all spacetime indices.~Similarly, the kinematical algebra of observables distinguishable relative to an external frame is of the form $\mcA_{\rm kin}=\mcA_{1}\otimes\mcA_{2}\otimes\mcA_S$.

\subsubsection{Internal frame reorientations}

We have thus far only considered gauge transformations.~There are also representations of the group that correspond to physical changes:~these are the \emph{frame reorientations} and only act on the given frame.~Being physical, they commute with the gauge transformations and in this sense can be viewed as ``symmetries''.\footnote{As shown in \cite{SCPHedgemodesasreferenceframes,Carrozza:2022xut}, in the case of gauge theory and gravity, this distinction of transformations literally corresponds to what are called gauge transformations and symmetries in the context of edge modes \cite{Donnelly:2016auv,Geiller:2019bti} (see also \cite{delaHamette:2021oex} for this discussion in the QRF context and \cite{Goeller:2022rsx} for one in gravitational theories).}

In the case of special relativity, this is possible because the tetrads come with a second index: the frame index $a$. This gives rise to a second representation of the Lorentz group $e^\mu_a\mapsto \Lambda_a{}^b e_b^\mu$, with $\Lambda_a{}^b\in\rm{SO}^+(1,3)$, which by construction \emph{only} acts on $R_1$ and reorients this frame (while preserving its orthonormality).~By construction, it commutes with the gauge transformations and, since $R_2S$ are unaffected, the \emph{relation} between $R_1$ and ``the rest'' is changed.~Similarly, we have a Lorentz group action $\Lambda_{a'}{}^{b'}$ on $R_2$. 

In the context of ideal internal QRFs, these frame reorientations are simply
\be \label{qframereorient}
V_{1}^g:=U_1^g\otimes\mathds1_2\otimes\mathds1_S\,,
\ee
which only act on the frame and evidently commute with the gauge transformations $U_{12S}^{g'}$, the group being abelian.\footnote{More precisely, the gauge transformations are in our convention given by a left regular action, while reorientations are given by a right regular action. This requires a group representation that gives rise to both a left and right unitary action, a condition satisfied, for example, by the regular representation used here. This distinction also guarantees in the general (especially non-abelian) case that gauge transformations and reorientations commute \cite{delaHamette:2021oex}. Since everything is abelian here, for ease of notation, we abstain from explicitly implementing this distinction.} It changes the configuration of the QRF $R_1$, while leaving $R_2S$ untouched, thus also changing their relation. Reorientations of $R_2$ are defined similarly.

\subsubsection{``Jumping'' into an internal frame perspective}

Now what does it mean to ``jump'' into the perspective of $R_1$ or $R_2$? Since we are dealing with gauge transformations (external frame rotations), there are two equivalent ways of doing so: (i) gauge fixing to a particular orientation of the frame; (ii) building relational/frame-dressed observables to describe ``the rest'' in a gauge-invariant manner relative to the internal frame. 

In the case of special relativity, (i) means simply aligning the fictitious background Lorentz frame with the tetrad of, say, $R_1$, for example by choosing coordinates such that $e^\mu_a = \delta^\mu_a$.~This breaks the background gauge symmetry encoded in the representation $\Lambda^\mu{}_\nu$ and amounts to describing ``the rest'' relative to $R_1$ being in the ``origin'', or rather identity configuration, recalling that its orientations are group-valued.~Of course, we could also gauge fix $e^\mu_a$ to be any other element of $\rm{SO}^+(1,3)$. 

This is equivalent to (ii), dressing $R_2S$ observables with the $R_1$-tetrad: 
\be \label{SRrelobs}
v_a:=\eta_{\mu\nu}\,e^\mu_a\,v^\nu\,,\qquad \Lambda_{a'}{}^a:=\eta^{ab}\eta_{\mu\nu}e^\mu_be'{}_{a'}^\nu\,.
\ee 
These are \emph{relational observables}: $v_a$ and $\Lambda_{a'}{}^a$ describe $v^\mu$ (so $S$) and the second tetrad $e'{}_{a'}^\mu$ (so $R_2$) in a gauge (i.e.~background coordinate) invariant manner relative to internal frame $R_1$.~For comparison with the QRF case below, notice that, because the dressing variable $e^\mu_a$ is group-valued, we can view the relational observables also as frame-orientation-conditional gauge transformations of the ``bare'' quantities $v^\mu$ and $e'{}^\mu_{a'}$.~Furthermore, note that $\Lambda_{a'}{}^a$, which describes $R_1$ and $R_2$ relative to one another, also takes value in $\rm{SO}^+(1,3)$ because the individual tetrads do.~Equivalence with (i) is clear when gauge fixing to $e^\mu_a=\delta^\mu_a$, in which case the manifestly gauge-invariant relational observables $v_a$ and $\Lambda_{a'}{}^a$ will coincide with the descriptions of the bare $v^\mu$ and $e'{}_{a'}^\mu$ in a coordinate system aligned with $e^\mu_a$.~Finally, we observe that the frame reorientations $\Lambda_a{}^b$ change the values of the relational observables, in line with them being physical transformations.

In the QRF setting, we have to do an additional step in order to implement an \emph{invertible} gauge fixing as in (i).~This is the key difference between the perspective-neutral approach invoked here and the other approaches to QRF covariance, namely the purely perspective-dependent one \cite{GiacominiQMcovariance:2019,Giacomini:2018gxh,delaHammetteQRFsforgeneralsymmetrygroups,PHQRFassymmetries}, the operational one \cite{Glowacki:2023nnf,Carette:2023wpz} and the quantum information-theoretic approach to QRFs \cite{Castro-Ruiz:2021vnq}.~Indeed, gauge fixing to a frame orientation would be naturally implemented via a conditioning of states and observables on, say, the ``identity'' orientation of $R_1$,
$\ket{e}\!\bra{e}_1$.~However, the latter gives rise to a non-trivial and hence non-invertible projector on both $\mcH_{\rm kin}$ and $\mcA_{\rm kin}$.~At this level, it can therefore not be used for a gauge fixing.~Instead, we will have to work at the level of gauge equivalence classes of states.~Which ones should one pick?

One might at first contend that it would be sufficient to consider the set of states $\rho_{12S}$ that are gauge-invariant in the following sense:
\be\label{QIstates} 
\big[\rho_{12S},U_{12S}^g\big]=0\,,\qquad\forall\,g\in\mcG\,.
\ee 
This is the set of states considered in the quantum information-theoretic approach \cite{Castro-Ruiz:2021vnq,Bartlett:2006tzx,PHQRFassymmetries} and, for compact (incl.\ finite) groups, also in the operational approach \cite{Glowacki:2023nnf,Carette:2023wpz}. It is the image of the set of kinematical density matrices $\mcS(\mcH_{\rm kin})$
under the orthogonal projector given by the $G$-twirl, or \emph{incoherent} group averaging over $\mcG$ \cite{PHQRFassymmetries}
\be\label{Gtwirl} 
\hat\Pi_{\rm inv}(\bullet):=\frac{1}{|\mcG|}\sum_{g\in\mcG} U_{12S}^g(\bullet)(U^g_{12S})^\dag\,.
\ee 
However, also on $\hat\Pi_{\rm inv}\left(\mcS(\mcH_{\rm kin})\right)$ (and hence $\hat\Pi_{\rm inv}\left(\mcA_{\rm kin}\right)$), the projector 
\be
\hat{\Pi}_{e,1}(\bullet):=\left(\ket{e}\!\bra{e}_1\otimes\mathds1_2\otimes\mathds1_S\right)(\bullet)\left(\ket{e}\!\bra{e}_1\otimes\mathds1_2\otimes\mathds1_S\right)\nonumber
\ee
is non-trivial and thus not invertible.~To see this, consider the two pure states
\begin{align} 
\ket{\psi}_{12S}&=\ket{e}_1\otimes\ket{e}_2\otimes\ket{\phi}_S\,,\nonumber\\
\ket{\psi'}_{12S}&=U_{12S}^g\ket{\psi}_{12S}\,,\nonumber
\end{align}
for some $\ket{\phi}_S\in\mcH_S$ and $g\neq e$.~These two pure states are clearly gauge equivalent and thus not internally distinguishable.~Now build the two density matrices\footnote{Both of these states are not alignable in the terminology of \cite{PHQRFassymmetries,PHinternalQRF}.}
\begin{align}
\rho_{12S}&=\left(a\ket{\psi}_{12S}+b\ket{\psi'}_{12S}\right)\left(a\bra{\psi}_{12S}+b\bra{\psi'}_{12S}\right)\,,\nonumber\\
    \rho_{12S}'&=|a|^2\ket{\psi}\!\bra{\psi}_{12S}+|b|^2\ket{\psi'}\!\bra{\psi'}_{12S}\,,\nonumber 
\end{align}
for some $a,b\neq 0$. As one can easily check, 
\be\label{extdist} 
\hat\Pi_{\rm inv}(\rho_{12S})\neq\hat\Pi_{\rm inv}(\rho'_{12S})\,,
\ee
but
\be 
\hat\Pi_{e,1}\left(\hat\Pi_{\rm inv}(\rho_{12S})\right)=\hat\Pi_{e,1}\left(\hat\Pi_{\rm inv}(\rho'_{12S})\right)\,.
\ee 
Hence, $\hat\Pi_{e,1}$ can \emph{not} be used as a gauge fixing operator on $\hat\Pi_{\rm inv}(\mcA_{\rm kin})$. In other words, \emph{if} we want to mimic the gauge fixing of special relativity above, we can \emph{not} use the states of \cite{Castro-Ruiz:2021vnq,Bartlett:2006tzx,Glowacki:2023nnf,Carette:2023wpz}. 

Now we can ask about the interpretation of this mathematical observation. One way to interpret this is to argue that this is the case because $\hat\Pi_{\rm inv}\left(\mcS(\mcH_{\rm kin})\right)\subset\hat\Pi_{\rm inv}\left(\mcA_{\rm kin}\right)$, due to~\eqref{extdist}, distinguishes states that should only be distinguishable with the help of the external frame. Indeed, $\rho_{12S}$ and $\rho'_{12S}$ are a superposition and a mixture of two pure states that are internally indistinguishable. While the former is pure and the latter mixed, telling them apart requires the external frame; not having access to the external frame imposes a superselection that renders such superpositions and mixtures indistinguishable. Nevertheless, \eqref{extdist} shows that the two states lie in two distinct equivalence classes of states that are invariant according to~\eqref{QIstates}.
The states satisfying~\eqref{QIstates} are, \emph{in this sense}, not fully external frame-independent. 
(This does \emph{not} invalidate their use for QRF changes in operational situations when some external (e.g.\ lab) frame information is permitted.) 

We thus need a further restriction of permissible states. Instead of~\eqref{QIstates}, the perspective-neutral approach implements gauge invariance in the stronger sense of gauge theories (constrained quantization): we only consider pure states such that
\be \label{physstate}
U_{12S}^g\ket{\psi_{\rm phys}}=\ket{\psi_{\rm phys}}\,,\qquad\forall\,g\in\mcG\,.
\ee 
In gauge theories, these are typically called physical states\footnote{Though in situations when the external frame is not fictitious (hence not an actual gauge theory), any state in $\mcH_{\rm kin}$ will have a physical meaning.} and they can be obtained from the externally distinguishable ones via a \emph{coherent} group averaging
\begin{align}
\Piphys&:\mcH_{\rm kin}\rightarrow\mcH_{\rm phys}\,,\nonumber\\
\Piphys&:=\frac{1}{|\mcG|}\sum_{g\in\mcG}U_{12S}^g\,.\label{eq:Piphys}
\end{align}
In this finite group setting, $\Piphys$ constitutes an orthogonal projector and $\mcH_{\rm phys}$ is a proper subspace of $\mcH_{\rm kin}$ \cite{PHQRFassymmetries,PHinternalQRF}.\footnote{For continuous, non-compact groups, this is not true and $\mcH_{\rm phys}$ has to be understood in the sense of distributions \cite{Giulini:1998kf,Giulini:1998rk,delaHamette:2021oex}.} In this sense, physical states are equivalence classes of externally distinguishable ones. 

Note that the density matrices $\mcS(\mcH_{\rm phys})$ are a strict subset of $\hat\Pi_{\rm inv}\left(\mcS(\mcH_{\rm kin})\right)$.~As shown in \cite[Lemma 7]{PHinternalQRF}, $\mcS(\mcH_{\rm phys})$ has the quantum information-theoretic characterisation of being the maximal subspace of states that can be purified in an external-frame-independent manner.~Moreover, in~\cite{PHQRFassymmetries,PHinternalQRF} it was proven that this set of states, in contrast to $\hat\Pi_{\rm inv}\left(\mcS(\mcH_{\rm kin})\right)$, admits an unambiguous gauge-invariant partial trace.\footnote{It is called the \emph{relational trace} and can be used to resolve \cite{PHQRFassymmetries,PHinternalQRF} the so-called ``paradox of the third particle'' \cite{Angelo_2011}.}

This puts us finally into the position to implement the gauge fixing (i) in analogy to special relativity. Physical states admit sufficient redundancy in their description (in terms of kinematical variables) that conditioning them on the frame orientation \emph{is} invertible (it only removes redundant information). Indeed, we define the reduction or \emph{quantum coordinate map} into $R_1$'s internal perspective by simply conditioning on it being in orientation $g\in\mcG$ \cite{PHinternalQRF} (see also \cite{PHtrinity,Hoehn:2020epv,delaHamette:2021oex,PHquantrel}):\footnote{Compared to~\cite{PHinternalQRF}, we have introduced here the coherent group averaging projector $\Piphys$ into this reduction. This has the slight advantage that $\left(\mcR_1^g\right)^{-1}=\left(\mcR_1^g\right)^\dag$, while preserving all other properties of this map, as can be easily checked.}
\begin{align}  
\mcR_1^g&:\mcH_{\rm phys}\rightarrow\mcH_{2}\otimes\mcH_S\nonumber\\
\mcR_1^g&:=\sqrt{|\mcG|}\left(\bra{g}_1\otimes\mathds1_2\otimes\mathds1_S\right)\Piphys\,.\label{eq:iReduction}
\end{align}
As shown in~\cite[Lemma 21]{PHinternalQRF}, this map is unitary and its inverse is  given by
\be \label{Rinv}
\left(\mcR_1^g\right)^{-1}=\sqrt{|\mcG|}\,\Piphys\left(\ket{g}_1\otimes\mathds1_2\otimes\mathds1_S\right)=\left(\mcR_1^g\right)^\dag\,.
\ee
An attentive reader will notice that~\eqref{eq:iReduction} is the finite group version of the conditioning on a clock reading in the Page-Wootters formalism for a relational quantum time evolution \cite{Page:1983uc,giovannetti2015quantum,Smith:2017pwx,Smith:2019imm}; the invertibility of this conditioning was not noticed until \cite{PHtrinity,Hoehn:2020epv,PHquantrel,delaHamette:2021oex}.
In particular, we have
\be 
\left(U_2^g\otimes U_S^g\right)\,\ket{\psi(g')}_{2S}=\ket{\psi(gg')}_{2S}\,,
\ee
where $\ket{\psi(g)}_{2S}:=\mcR_1^g\,\ket{\psi}_{\rm phys}$. The quantum coordinate map $\mcR_1^e$ is now the quantum analogue of the gauge fixing $e^\mu_a=\delta^\mu_a$ in special relativity above.

The space of physical states is thus invertibly isometric to the two kinematical subsystem tensor factors $\mcH_{2}\otimes\mcH_S$, now describing ``the rest'' relative to frame $R_1$.~This is \emph{only} true for ideal QRFs; for non-ideal QRFs the image of a similarly defined \emph{unitary} quantum coordinate map (incl.~for general unimodular Lie groups) is rather a strict \emph{subspace} of $\mcH_2\otimes\mcH_S$ \cite{delaHamette:2021oex,PHtrinity,Hoehn:2020epv,PHquantrel}.~This is where the perspective-neutral approach invoked here and the purely perspective-dependent approach \cite{GiacominiQMcovariance:2019,Giacomini:2018gxh,delaHammetteQRFsforgeneralsymmetrygroups} to QRF covariance \emph{dis}agree in the general case of non-ideal frames.~The purely perspective-dependent approach does not invoke a global Hilbert space such as $\mcH_{\rm phys}$ and instead directly works at the level of the perspectival Hilbert spaces, $\mcH_2\otimes\mcH_S$, and specifically at the level of what would be the kinematical tensor factors. In the present case of ideal QRFs, this gives equivalent results \cite{PHinternalQRF,PHQRFassymmetries,Vanrietvelde:2018dit,Vanrietvelde:2018pgb}, and, for this reason, all results in this article directly apply to that approach as well.~But for non-ideal frames, the latter will include states of ``the rest'' that are inconsistent with the gauge invariance requirement~\eqref{physstate}.~On this larger set of perspectival states,~\eqref{Rinv} would not be invertible. As in the case of the approaches \cite{Castro-Ruiz:2021vnq,Bartlett:2006tzx,Glowacki:2023nnf,Carette:2023wpz} alluded to above, one could thus argue that the purely perspective-dependent approach contains, only in the case of non-ideal QRFs, perspectival states for ``the rest'' $R_2S$ that are not fully independent of external frame information (see \cite{delaHamette:2021oex} for further discussion).

Given the invertibility of~\eqref{eq:iReduction}, which similarly holds for the reduction into $R_2$'s perspective, we can thus view the global physical Hilbert space $\mcH_{\rm phys}$ as an (internal QRF) \emph{perspective-neutral} Hilbert space.~It contains the information about the internal QRF perspectives and, as we shall see shortly, links them.~It can thus be viewed as a description of $R_1R_2S$ \emph{prior} to choosing an internal frame relative to which the remaining degrees of freedom will be described (in particular, $S$, as a composite system, could contain further internal QRFs).

Next, let us consider the QRF version of (ii), i.e.\ equivalently describing everything in a gauge-invariant manner in terms of relational/frame-dressed observables.~To this end, we recall that the relational observables~\eqref{SRrelobs} in special relativity amount to frame-orientation-conditional gauge transformations of the bare quantities describing $R_2S$.~The QRF analogue is \cite{PHinternalQRF,PHQRFassymmetries} (see also \cite{PHtrinity,Hoehn:2020epv,PHquantrel,delaHamette:2021oex,Chataignier:2019kof,Chataignier:2020fys}):
\be 
\tilde{O}_{f_{2S}|R_1}^{g}=|\mcG|\,\hat\Pi_{\rm inv}(\ket{g}\!\bra{g}_1\otimes f_{2S})\,,
\ee 
for arbitrary $f_{2S}\in\mcA_2\otimes\mcA_S$.~Indeed, comparing with~\eqref{Gtwirl}, this is a gauge transformation of $\mathds1_1\otimes f_{2S}$, conditioned on the frame $R_1$ being in orientation $g$.~By construction, $\tilde{O}_{f_{2S|R_1}}^g$ commutes with all gauge transformations.~This relational observable thus measures in a gauge-invariant manner the operator $f_{2S}$, conditioned on the frame being in orientation $g$.~Since we will only be working with these observables on the perspective-neutral Hilbert space, and since 
\be\label{relationpiphyspiinv}
\hat{\Pi}_{\rm phys}(\bullet):=\hat\Pi_{\rm inv}(\bullet)\Piphys=\Piphys(\bullet)\Piphys\,,
\ee 
we shall henceforth instead work with the following definition of relational observables
\be\label{RelDO} 
O_{f_{2S}|R_1}^{g}:=|\mathcal G|\,\hat{\Pi}_{\rm phys}\bigl(\ket{g}\!\bra{g}_1\otimes f_{2S}\bigr)\,.
\ee
This map is a unital $*$-homomorphism $\mcA_2\otimes\mcA_S\to\mcA_{\rm phys}=\mathcal{L}(\mcH_{\rm phys})$, i.e.\ it preserves the identity, products, linear combinations and the adjoint of the operators we would like to describe relative to $R_1$ \cite{PHinternalQRF,PHQRFassymmetries} (see also \cite{PHtrinity,delaHamette:2021oex}).

Equivalence with the gauge-fixing way (i) of ``jumping into $R_1$'s perspective now follows from the following reduction theorem \cite[Lemma 22]{PHinternalQRF} (see also \cite{PHtrinity,Hoehn:2020epv}):
\begin{equation}\label{eq:obsredthm}
    \hat{\mathcal{R}}_1^{g}\bigl(O^{g}_{f_{2S}|R_1}\bigr)=f_{2S}\,,\qquad\qquad\forall\,g\in\mathcal{G}\,,
\end{equation}
where henceforth we shall use the notation
\be \label{eq:Wconjug}
\hat{W}(\bullet):=W\,(\bullet)\,W^\dag
\ee
for the superoperator corresponding to conjugation with the unitary $W$. In particular, we thus also have
\be 
O^g_{f_{2S}|R_1} = \bigr(\hat{\mcR}_1^g\bigr)^\dag(f_{2S})\,.
\ee
Similarly, we have that expectation values of observables are preserved \cite{PHinternalQRF} (see also \cite{PHtrinity,Hoehn:2020epv,delaHamette:2021oex})
\be\label{expecpres}
\braket{\psi_{\rm phys}|O_{f_{2S}|R_1}^g|\phi_{\rm phys}} = \braket{\psi_{2S}(g)|f_{2S}|\phi_{2S}(g)}\,.
\ee 

In other words, gauge fixing the relational observable to the frame being in orientation $g$ returns the bare observable $f_{2S}$ on the perspectival Hilbert space $\mcH_{2}\otimes\mcH_S$. This is the analogue of gauge-fixing~\eqref{SRrelobs} to $e^\mu_a=\delta^\mu_a$. Furthermore, we see that $\mcA_{\rm phys}$ and $\mcA_2\otimes\mcA_S$ are isomorphic and related by the unitary quantum coordinate map. That is, \emph{every} observable on the perspective-neutral Hilbert space can be written as a relational observable relative to frame $R_1$ (and similarly $R_2$). The special relativity analogue of this will be discussed shortly.

Finally, one can check that the frame reorientations~\eqref{qframereorient} define a flow in the relational observable algebra $\mcA_{\rm phys}$ \cite{delaHamette:2021oex}
\be 
\hat{V}_1^g\bigl(O_{f_{2S}|R_1}^{g'}\bigr) = O_{f_{2S}|R_1}^{gg'}\,,
\ee
and thus change the relation between ``the rest'' and the frame, in analogy to what happens in special relativity.

\subsubsection{Internal frame changes}

\begin{figure*}[!ht]
\centering
\begin{subfigure}[b]{0.4375\textwidth}
\includegraphics[width=\textwidth]{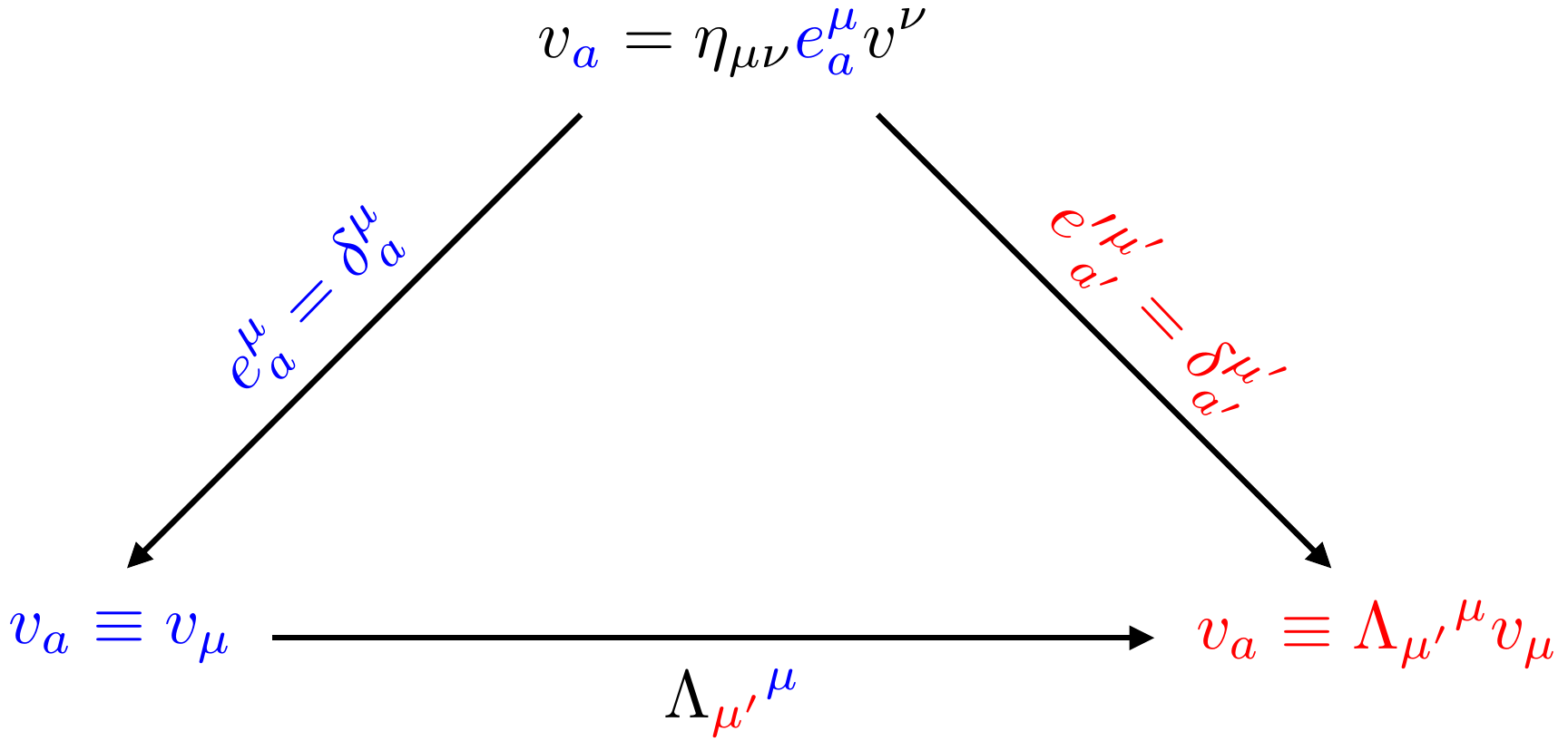}
\caption{}\label{Fig:jumpSR}
\end{subfigure}
\hspace{0.6cm}
\begin{subfigure}[b]{0.5\textwidth}
\includegraphics[width=\textwidth]{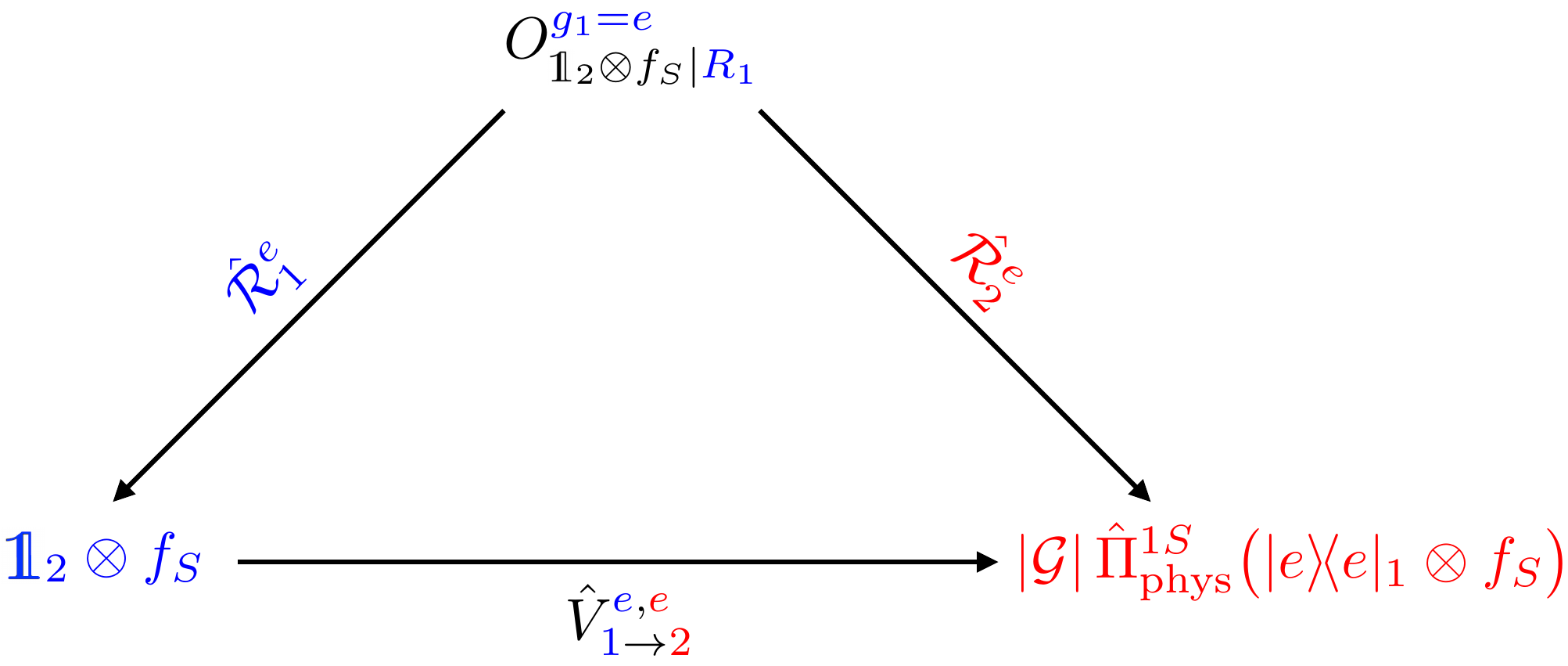}
\caption{}\label{Fig:jumpQRF}
\end{subfigure}
\caption{Internal frame changes in special relativity and in the perspective-neutral approach to QRF covariance:~gauge-fixing the frame-dressed observable $v_a$ to tetrads being in a certain orientation (e.g.~aligning the background coordinates with them) is the analogue of reducing the quantum relational observable $O_{\mathds1_2\otimes f_S|R_1}^{g_1}$ into an internal QRF perspective (down pointing arrows in (a) and (b), respectively).~Here, $f_S$ plays the analogue role of the bare $v_{\mu}$ in the coordinates adapted to the first tetrad and, in analogy with the gauge-fixing of the tetrads in (a), we set $g_1=g_2=e$ in (b).~The QRF counterpart of the Lorentz coordinate change $\Lambda_{\mu'}{}^{\mu}$ is then the quantum coordinate transformation $\hat{V}_{1\to2}^{e,e}$ mapping the observables in $R_1$-perspective to the corresponding observables in $R_2$-perspective.}
\end{figure*}

\begin{figure*}[!ht]
\hspace{-1cm}\includegraphics[width=0.7\textwidth]{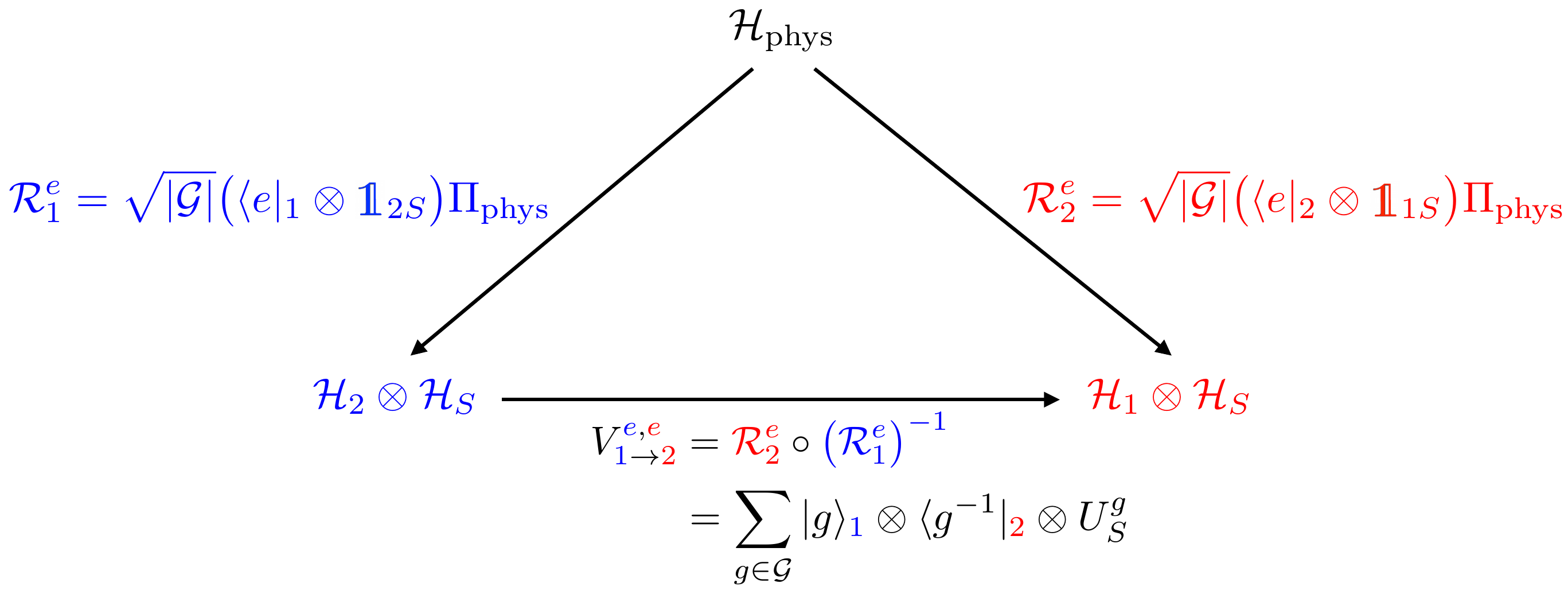}
\caption{Gauge-induced QRF transformation of Hilbert spaces (quantum coordinate change): $V_{1\to2}^{g_1,g_2}$ given in \eqref{eq:Vitoj} maps from $\mathcal H_{2}\otimes\mathcal H_S$ in the perspective of QRF $R_1$ in orientation $g_1$ via the perspective-neutral Hilbert space $\Hphys$ to $\mathcal H_{1}\otimes\mathcal H_S$ in the perspective of $R_2$ in orientation $g_2$.~For comparison with Fig.~\ref{Fig:jumpSR},~\ref{Fig:jumpQRF}, we set $g_1=g_2=e$.}
\label{Fig:hilbertQRFchange}
\end{figure*}

Let us now come to the key question of internal frame covariance: how do we change from the perspective of one internal frame to that of another?~Given that there are two equivalent ways of ``jumping'' into an internal frame perspective, there are also two ways to change internal frame:~(a) by gauge transformation, (b) by a symmetry transformation obtained from the reorientations.

We first consider (a) and again begin with special relativity.~Instead of gauge fixing to tetrad $R_1$ being in a certain orientation, we could equally well do the same for $R_2$, i.e.\ we align the background Lorentz frame with $e'{}_{a'}^\mu$, rather than $e^\mu_a$, e.g.\ by setting $e'{}_{a'}^\mu=\delta^\mu_{a'}$.~It is clear that the two gauge fixings are related by a Lorentz gauge transformation $\Lambda^\mu{}_\nu$, which is nothing but a change of background coordinates.~In particular, originally gauge fixing the relational observable $v_a$ to $e^\mu_a=\delta^\mu_a$ makes it look especially simple: it agrees with the description of the bare $v^\mu$ in this coordinate system (the analogue of~\eqref{eq:obsredthm}, as summarised in Fig.~\ref{Fig:jumpSR} and \ref{Fig:jumpQRF}).~By contrast, gauge fixing it to $e'{}_{a'}^{\mu'}=\delta^{\mu'}_{a'}$ will make it look somewhat more complicated: it will be $v_{\mu'}=\Lambda_{\mu'}{}^{\mu}v_{\mu}$, i.e.\ the description of the bare four-velocity $v_{\mu'}$ in the new coordinate system, expressed by contracting the old $v_{\mu}$ with the Lorentz transformation relating the two background coordinate choices (see Fig.~\ref{Fig:jumpSR}).~Note that, in terms of the numbers, this Lorentz transformation matrix will coincide with the relational observable one in~\eqref{SRrelobs}.~It is thus a gauge transformation by an amount given by the \emph{relation} between the two frames and $\Lambda_{\mu'}{}^{\mu}v_{\mu}$ appears like a composite observable, describing both $R_1$ \emph{and} $S$ in $R_2$'s perspective.

What are now the QRF transformations corresponding to (a)? Instead of conditioning physical states on $R_1$ being in some orientation, we could have done the same with $R_2$ instead.~Invoking the unitarity of these quantum coordinate maps, we obtain the unitary \emph{quantum coordinate changes} from the description relative to frame $R_1$ in orientation $g_1$ to the one relative to frame $R_2$ in orientation $g_2$,
\begin{align}\label{eq:Vitoj}
V_{1\to2}^{g_1,g_2}&:=\mcR_2^{g_2}\circ\left(\mcR_1^{g_1}\right)^{-1}\nonumber\\
&=\sum_{g\in\mathcal G}\ket{gg_1}_1\otimes\bra{g_2g^{-1}}_2\otimes U_S^g\;,
\end{align}
where the latter equality was established in \cite{PHinternalQRF}.~As schematically depicted in Fig.~\ref{Fig:hilbertQRFchange}, this maps us directly from $\mcH_{2}\otimes\mcH_S$ (the perspective of $R_1$) \emph{via} the perspective-neutral Hilbert space $\mcH_{\rm phys}$ to $\mcH_{1}\otimes\mcH_S$ (the perspective of $R_2$).~This QRF change can be generalised to non-ideal frames and general unimodular Lie groups and remains unitary there \cite{delaHamette:2021oex} (see also \cite{PHtrinity,Hoehn:2020epv,PHquantrel}).~This is in contrast to the purely perspective-dependent approach to QRF covariance, whose QRF changes fail to be unitary for non-ideal frames \cite{delaHammetteQRFsforgeneralsymmetrygroups}; as alluded to above, this is essentially because this approach admits perspectival states that can be argued not to be fully free of external frame information.~However, for ideal frames and finite groups, the second line in~\eqref{eq:Vitoj} agrees with the QRF transformations of the perspective-dependent approach \cite{PHinternalQRF,PHQRFassymmetries,delaHammetteQRFsforgeneralsymmetrygroups} (for the equivalence for ideal frames more generally, see \cite{Vanrietvelde:2018pgb,Vanrietvelde:2018dit,delaHammetteQRFsforgeneralsymmetrygroups,delaHamette:2021oex,PHtrinity,GiacominiQMcovariance:2019}).

It is clear that we can also apply these quantum coordinate transformations to observables, mapping the algebra $\mcA_2\otimes\mcA_S$ ($R_1$'s perspective) into $\mcA_1\otimes\mcA_S$ ($R_2$'s perspective).~For example, consider the pure $S$-observable $\mathds1_2\otimes f_S$ relative to $R_1$. One can easily check that
\be 
\hat{V}^{g_1,g_2}_{1\to2}\left(\mathds1_2\otimes f_S\right) = |\mcG|\,\hat{\Pi}_{\rm phys}^{1S}\left(\ket{g_1}\!\bra{g_1}_1\otimes f_S\right)\,,
\ee
where $\Pi^{12}_{\rm phys}$ coincides with~\eqref{eq:Piphys}, except that the $R_2$-tensor factor is removed.~That is, the QRF transformed observable is now a composite operator and in shape agrees with the relational observable $O^{g_1}_{\mathds1_2\otimes f_S|R_1}$ (except that the trivial $R_2$-tensor factor is removed).~This is the analogue of gauge fixing the relational observable $v_a$, describing $S$ relative to $R_1$, to the other frame $R_2$ being in orientation $e'{}_{a'}^{\mu'}=\delta^{\mu'}_{a'}$ above and which appears in coordinate form as $\Lambda_{\mu'}{}^{\mu}v_{\mu}$ (see Figs.~\ref{Fig:jumpSR} and~\ref{Fig:jumpQRF}).~Here, $f_S$ is the analogue of $v_{\mu}$ in the coordinates adapted to the first tetrad and $\Lambda_{\mu'}{}^{\mu}$, the gauge transformation by the relation of the two frames, is the analogue of $|\mcG|\,\hat{\Pi}^{1S}_{\rm phys}(\ket{g_1}\!\bra{g_1}_1\otimes\bullet)$, which appears like a gauge transformation conditioned on frame $R_1$ being in orientation $g_1$ (as seen from the perspective of $R_2$).

The second equivalent way (b) of changing internal frame is simple in special relativity. Specifically, invoking the orthonormality of the tetrads, we can write
\be  \label{relcondreorient}
v_a = \eta_{\mu\nu}e^\mu_a v^\nu = e'^{a'}_\mu e'_{\nu a'}e^\mu_a v^\nu =\Lambda_a{}^{a'}v_{a'}\,,
\ee
where $\Lambda_a{}^{a'}=e'^{a'}_\mu e^\mu_a$ is the $\rm{SO}^+(1,3)$-valued relational observable describing $R_1$ relative to $R_2$ (cfr.~\eqref{SRrelobs}).~This is an actual change of relational observable, mapping between $v_{a'}$, gauge-invariantly describing $S$ relative to $R_2$, to $v_a$, which gauge-invariantly describes $S$ relative to $R_1$.~Since this acts on the frame index and translates by the relation between the two frames, this is a relation-conditional frame reorientation.~Note that the matrix $\Lambda_a{}^{a'}$ coincides with the (inverse) matrix $\Lambda_\mu{}^{\mu'}$ above.

For ideal QRFs, one can also build the analogue of the relation-conditional frame reorientations in~\eqref{relcondreorient} as the second way (b) to implement QRF transformations.~They directly act on relational observables and change them \cite{delaHamette:2021oex}.~However, since we shall not be using them in the main body and they take some space, we summarise them in Appendix~\ref{App:relcondreorientations} instead.

In summary, the perspective-neutral approach to QRF covariance directly mimics the structures underlying special covariance with internal frames and it is the only one that achieves this in the sense illustrated above both for ideal and non-ideal frames \cite{delaHamette:2021oex}. (We emphasise again, however, that the purely perspective-dependent \cite{GiacominiQMcovariance:2019,Giacomini:2018gxh,delaHammetteQRFsforgeneralsymmetrygroups,delaHamette:2021iwx,Ballesteros:2020lgl,streiter2021relativistic,Mikusch:2021kro}, algebraic and effective approaches \cite{JulianArtur,Bojowald:2010qw,Bojowald:2010xp,Bojowald:2019mas,Bojowald:2022caa} are equivalent for ideal frames.)~As such, it can be viewed as implementing the natural quantum extension of the relativity principle.~In particular, the framework is by construction fully relational, i.e.\ independent of external frame information, and furthermore formulated in the language of gauge theory and gravity, thus amenable to that context.~As emphasised above, however, the other approaches to QRF covariance come with different motivations and have a valid applicability in other contexts.~Furthermore, in the ideal frame setting of this article, the perspective-neutral approach is equivalent to the purely perspective-dependent one.~In fact, we shall mostly be working with the QRF transformation~\eqref{eq:Vitoj}; thus, all results below will also apply to the perspective-dependent approach (for ideal frames).

The main aim of this article is to extend previous explorations of the physical implications of the QRF transformations~\eqref{eq:Vitoj}, especially by focusing on what they entail for the subsystem $S$ that is described relative to both frames.~To that end, we first make another observation in special relativity.

\enlargethispage{\baselineskip}

\subsubsection{Relativity of simultaneity from relativity of subsystems}\label{sssec_SrelSR}

\vspace{-1.5mm}

The observation is that $R_1$ and $R_2$ invoke two distinct gauge (external frame) independent, \emph{relational} notions of susbsystem to describe $S$ and this subsystem relativity implies the usual relativity of simultaneity. 

What are all the relational observables describing $S$ relative to $R_1$? There are only $v_a$ and its norm $v_av^a =v_\mu v^\mu$ (and combinations of these).~This constitutes the ``algebra'' of relational observables $\mcA^{\rm phys}_{S|R_1}$, describing $S$ relative to $R_1$.~Similarly, the ``algebra'' of relational observables $\mcA^{\rm phys}_{S|R_2}$ describing $S$ relative to $R_2$ is ``generated'' by $v_{a'}$ and $v_{a'}v^{a'}=v_\mu v^\mu$.~The two relational observable ``algebras'' thus overlap $\mcA^{\rm phys}_{S|R_1}\cap\mcA_{S|R_2}^{\rm phys}= \{\text{functions of }v_\mu v^\mu\}$, but do not coincide.~This is the relativity of subsystems: $R_1$ and $R_2$ describe $S$ with \emph{distinct} gauge-invariant ``algebras''.~These are two distinct, but overlapping ``factors'' in the total ``algebra'' $\Aphys$ of gauge-invariant observables for $R_1R_2S$.~Thanks to~\eqref{relcondreorient}, every gauge-invariant observable can be written as a relational observable either relative to $R_1$ or relative to $R_2$.~This is the special relativistic analogue of the QRF discussion around~\eqref{eq:obsredthm}--\eqref{expecpres}.~As such, the total ``algebra'' $\Aphys$ is generated by $v_a,v^\mu v_\mu$, yielding $\mcA^{\rm phys}_{S|R_1}$, and $\Lambda_a{}^{a'},e'^{\mu}_{a'}e^{b'}_\mu$, yielding $\mcA^{\rm phys}_{R_2|R_1}$, together describing $R_2S$ relative to $R_1$, or, equivalently, by $v_{a'},v_\mu v^\mu$, yielding $\mcA^{\rm phys}_{S|R_2}$, and $\Lambda_a{}^{a'},e^\mu_a e_\mu^b$, yielding $\mcA^{\rm phys}_{R_1|R_2}$, together describing $R_1S$ relative to $R_2$.~In other words, $R_1$ and $R_2$ decompose the total gauge-invariant ``algebra'' in different ways into subsystem $S$ and ``other frame'' (see Fig.~\ref{Fig:relSR}).

For comparison with the QRF case below, we note that the relational observable $v_a$, describing $S$ relative to $R_1$ transforms non-trivially under reorientations $\Lambda_a{}^b$ of $R_1$.~However, it clearly transforms trivially under reorientations $\Lambda_{a'}{}^{b'}$ of the other frame $R_2$ because $v_a$ does not carry any (primed) frame index of $R_2$; it is a scalar for such transformations.~Conversely, $v_{a'}$ transforms non-trivially under $R_2$-reorientations and is a scalar for $R_1$-reorientations.~Clearly, the overlap $v_\mu v^\mu$ of the ``subalgebras'' relationally describing $S$ is thus characterised by observables that are invariant under reorientations of \emph{both} tetrads $R_1$ and $R_2$.~This overlap is therefore independent of both $R_1$ and $R_2$ and  constitutes an \emph{internal} relational observable of $S$.

It is also clear that this subsystem relativity directly implies the usual relativity of simultaneity.~To see this, note that the gauge-invariant frame change map $\Lambda_{a}^{\;\,a'}$ in~\eqref{relcondreorient}, i.e.\ the relational observable encoding the relation between $R_1$ and $R_2$, is \emph{both} the transformation that maps the two non-coincident ``subalgebras'' $\mcA^{\rm phys}_{S|R_2}$ and $\mcA^{\rm phys}_{S|R_1}$ of $\Aphys$ into one another \emph{and} the one transforming the two tetrad frames into one another.\footnote{In fact, more precisely, the ``algebra'' $\mcA^{\rm phys}_{S|R_1}$ contains also $v_b^\Lambda:=\Lambda_b{}^av_a$, where $\Lambda_b{}^a\in\rm{SO}^+(3,1)$ is an arbitrary reorientation of tetrad $R_1$. $v^\Lambda_b$ is then the relational observable answering the question ``what is the four-velocity $v^\mu$ when $R_1$ is in orientation $\Lambda_b{}^a$?''. This can be seen by simply gauge-fixing $v_b^\Lambda$ to $e^\mu_b\equiv \Lambda^\mu{}_b$, where the latter is the inverse matrix of $\Lambda_b{}^a$. This is then the generalisation to arbitrary frame orientations in analogy to how~\eqref{RelDO} describes $f_{2S}$ relative to $R_1$ in orientation $g$ and the $\Lambda$-superscript takes the same role as the $g$-superscript in the quantum theory. The analogous state of affairs holds for $\mcA^{\rm phys}_{S|R_2}$. This means, in particular, that there is no single transformation $\Lambda_a{}^{a'}$ between the elements in $\mcA^{\rm phys}_{S|R_1}$ and those of $\mcA^{\rm phys}_{S|R_2}$, but one for each pair of orientation labels $\Lambda_1$ for $R_1$ and $\Lambda_2$ for $R_2$ on the relational observables before and after transformation; these are just the transformation $\Lambda_a{}^{a'}$ of the main text, contracted with a reorientation each of $R_1$ and $R_2$. This is the same for the symmetry-induced QRF transformations that we alluded to above and discuss in Appendix~\ref{App:relcondreorientations}.} If the two $S$ ``subalgebras'' were coincident, this would not give a non-trivial such transformation. Unless $\Lambda_{a}^{\;\,a'}$ in~\eqref{relcondreorient} is only a spatial rotation, $v_a$ and $v_{a'}$ (i.e.\ the relational $S$-observables that do \emph{not} lie in the overlap $\mcA^{\rm phys}_{S|R_1}\cap\mcA_{S|R_2}^{\rm phys}$ and thus are explicitly $R_1$- and $R_2$-dependent, respectively) decompose into space ($a,a'=1,2,3$) and time ($a,a'=0$) components in distinct ways.~This is the situation summarised in Fig.~\ref{Fig:relSR} in the Introduction.

We shall now discuss the same type of subsystem relativity in the context of QRFs and, like the relativity of simultaneity in special relativity (which it implies), it is \emph{the} reason for the generic frame dependence of physical properties.~As we shall see systematically in this work, it is the reason for why correlations, types of subsystem dynamics, temperature, equilibrium and thermodynamic processes are generally (but not always) QRF-dependent notions.~The purpose of this section is to clarify that this quantum relativity of subsystems is the natural quantum generalisation of the relativity of simultaneity and all its consequences should thus be viewed in the same light.

\section{QRF-dependent subsystems}\label{Sec:QuantRelTPSs}

We had assumed that the kinematical Hilbert space $\mathcal{H}_{\rm kin}=\mathcal{H}_{1}\otimes\mathcal{H}_{2}\otimes\mathcal{H}_S$ comes equipped with a natural tensor product structure (TPS) across the two QRFs and the system $S$, providing a kinematical distinction of subsystems \emph{relative to an external (possibly fictitious) reference frame}.~As we shall now see, this distinction between $R_1,R_2$ and $S$ as independent subsystems does not survive on the perspective-neutral physical Hilbert space $\mathcal{H}_{\rm phys}$.~In Figs.~\ref{Fig:subsystems1} and~\ref{Fig:subsystems2}, we provide pictorial intuition for why \emph{external} (kinematical) and \emph{internal} notions of subsystems are distinct and in Figs.~\ref{Fig:subsystems3} and~\ref{Fig:subsystems4} for why the latter furthermore depend on the choice of internal QRF.~Here in the main text, we focus on the technical argumentation.

A few words about notation are in place: rather than labelling the QRFs in equations by $R_1$ and $R_2$, for simplicity, we shall instead often use indices $i,j=1,2$, and when both $i$ and $j$ appear in expressions, it is assumed that $i\neq j$. When we are in the perspective of frame $R_i$, we also sometimes use the shorthand notation of $\ibar$ for the complement, i.e.\ the ``other frame'' $R_j$ and subsystem $S$. Lastly, operators equipped with a hat $\,\hat{}\,$, denote superoperators; when the hat appears on a unitary, it means conjugation by that unitary as in~\eqref{eq:Wconjug}.

In our finite-group setting, the perspective-neutral Hilbert space $\mathcal{H}_{\rm phys}\subsetneq\mathcal{H}_{\rm kin}$ is a proper subspace that does \emph{not} inherit the kinematical TPS.\footnote{This is a well-known problem even in lattice gauge theories \cite{Casini:2013rba,Donnelly:2011hn,Donnelly:2016auv}, where the non-local nature of the gauge-invariant observables, as e.g.~Wilson loops, prevents us from a straightforward factorisation into kinematical subsystems, as this will generically break gauge-invariance.} This is due to the (gauge-)invariance condition $(U^g_{1}\otimes U^g_2\otimes U^g_{S})\ket{\psi_{\rm phys}}=\ket{\psi_{\rm phys}}$ inducing a redundancy among the kinematical degrees of freedom used to describe $\mathcal{H}_{\rm phys}$. This can be seen from the fact that $\mathcal{H}_{\rm phys}$ is isomorphic to $\mathcal{H}_{i}\otimes\mathcal{H}_S$, but not to $\mathcal{H}_{\rm kin}$ (cfr.~the discussion surrounding~\eqref{eq:iReduction}).\footnote{Note that $\Hphys$ is only isomorphic (under the QRF reduction maps) to the product $\mathcal{H}_{i}\otimes\mathcal{H}_S$ provided the QRF is ideal, i.e.\ its orientation states are perfectly distinguishable (see \cite{delaHamette:2021oex, PHtrinity, Hoehn:2020epv} for the more general case).} Recall that a key idea behind the internal QRF programme is to identify the redundant degrees of freedom with those of the QRF whose perspective one wishes to ``jump'' into. While a TPS on $\mathcal{H}_{\rm kin}$ can be probed using kinematical observables from $\mathcal{A}_{\rm kin}$, a TPS on $\mathcal{H}_{\rm phys}$ has to be defined in terms of gauge-invariant observables from $\mathcal{A}_{\rm phys}$. 
Let us now make precise how ``jumping into a QRF perspective'' (for ideal QRFs) is nothing else than introducing a TPS on the perspective-neutral Hilbert space $\mathcal{H}_{\rm phys}$, which by itself does not come with an \emph{explicit} TPS (and in this sense is TPS-neutral too). We will then sharpen the observation in \cite{PHquantrel,delaHamette:2021oex,Castro-Ruiz:2021vnq} that different choices of QRF amount to inequivalent TPSs (cfr.~Figs.~\ref{Fig:subsystems3} and~\ref{Fig:subsystems4} for illustration), before exploring the ensuing QRF relativity of correlations and thermal properties in later sections in some detail.

\begin{figure*}[!t]
\centering
\begin{subfigure}[b]{0.395\textwidth}
\includegraphics[width=\textwidth]{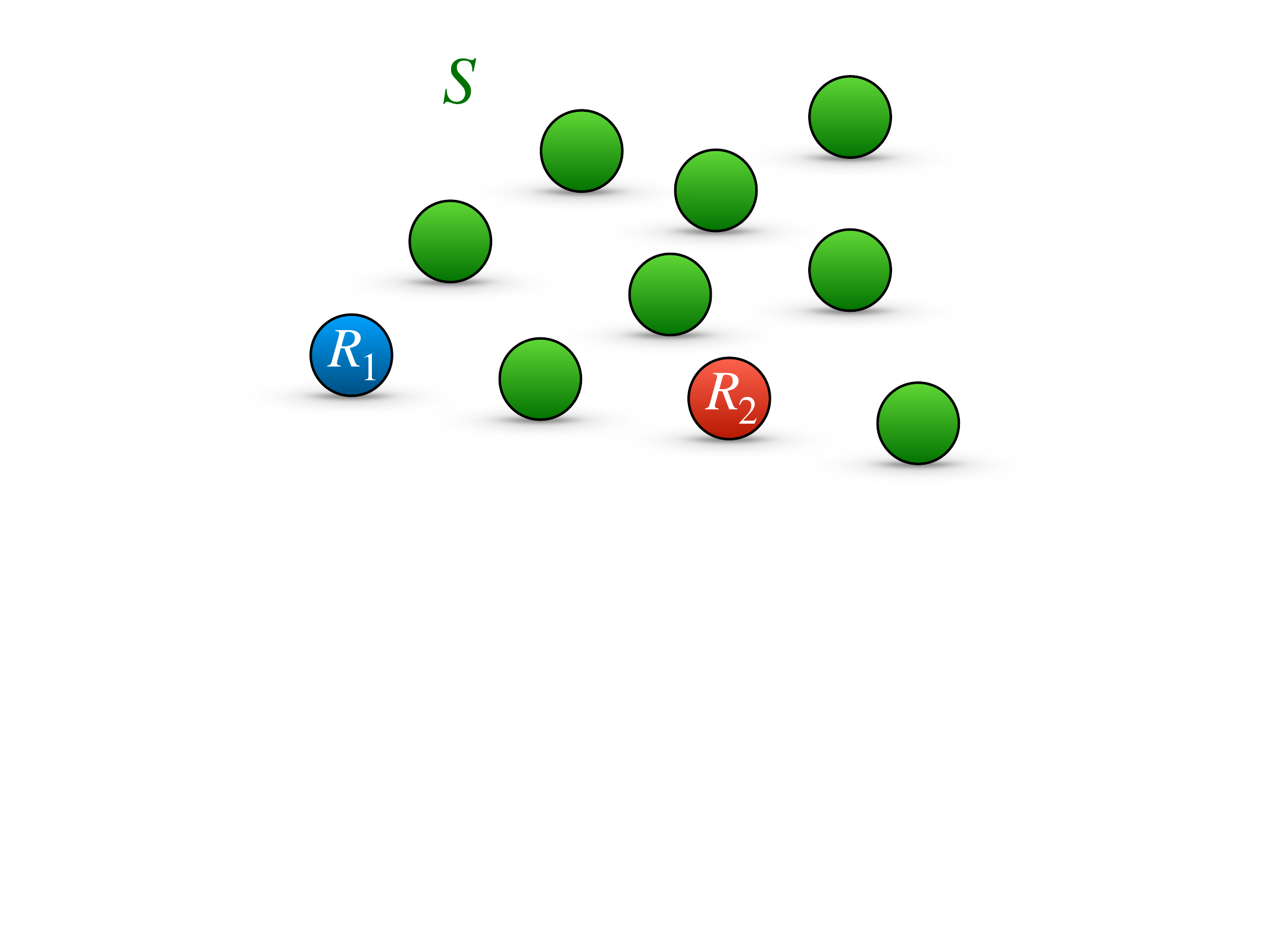}
\caption{}\label{Fig:subsystems1}
\end{subfigure}
\hspace{2cm}
\begin{subfigure}[b]{0.395\textwidth}
\includegraphics[width=\textwidth]{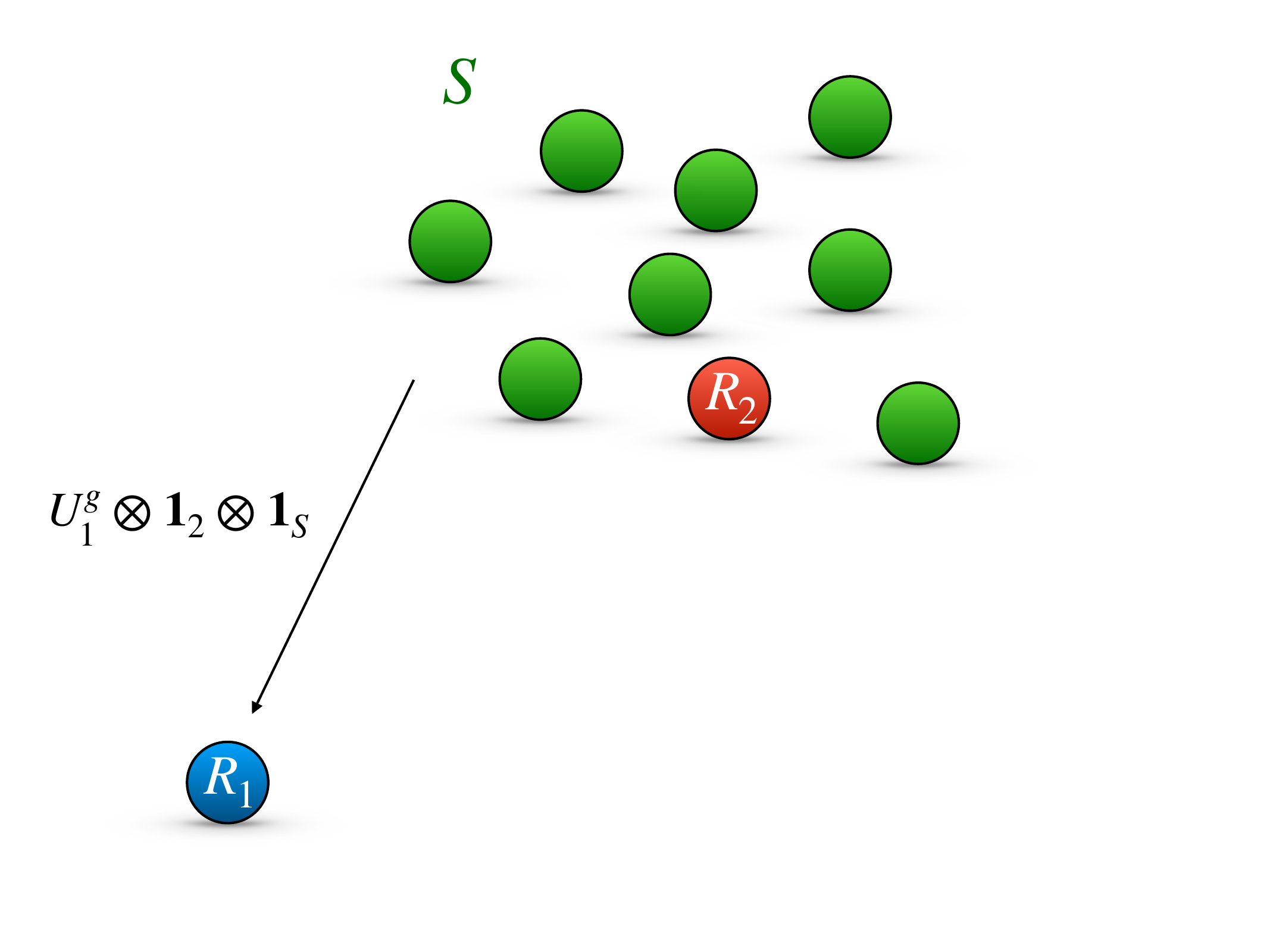}
\caption{}\label{Fig:subsystems2}
\end{subfigure}
\caption{Pictorial illustration of different notions of subsystems and why \emph{external} (kinematical) and \emph{internal} notions of correlations and thermal properties are distinct. (a) A composite subsystem $S$ (green balls) and two internal QRFs $R_1$ (blue ball) and $R_2$ (red ball) as seen from the perspective of an external (possibly fictitious) reference frame. (b) The kinematical notion of subsystem, i.e.\ the one relative to the external frame, is distinct from a relational one, i.e.\ one defined relative to an internal QRF. For example, the description of $S$ is left invariant under a reorientation of $R_1$ relative to the external frame, but changes relative to $R_1$ because the relations between $S$ and $R_1$ change. For this reason, the TPS on the kinematical Hilbert space $\mathcal{H}_{\rm kin}$ between $R_1,R_2$ and $S$ does \emph{not} carry over to the external-frame-independent physical Hilbert space $\mathcal{H}_{\rm phys}\subsetneq\mathcal{H}_{\rm kin}$.
}
\label{Fig:subsystems}
\end{figure*}

\begin{figure*}[!t]
\centering
\begin{subfigure}[b]{0.375\textwidth}
\includegraphics[width=\textwidth]{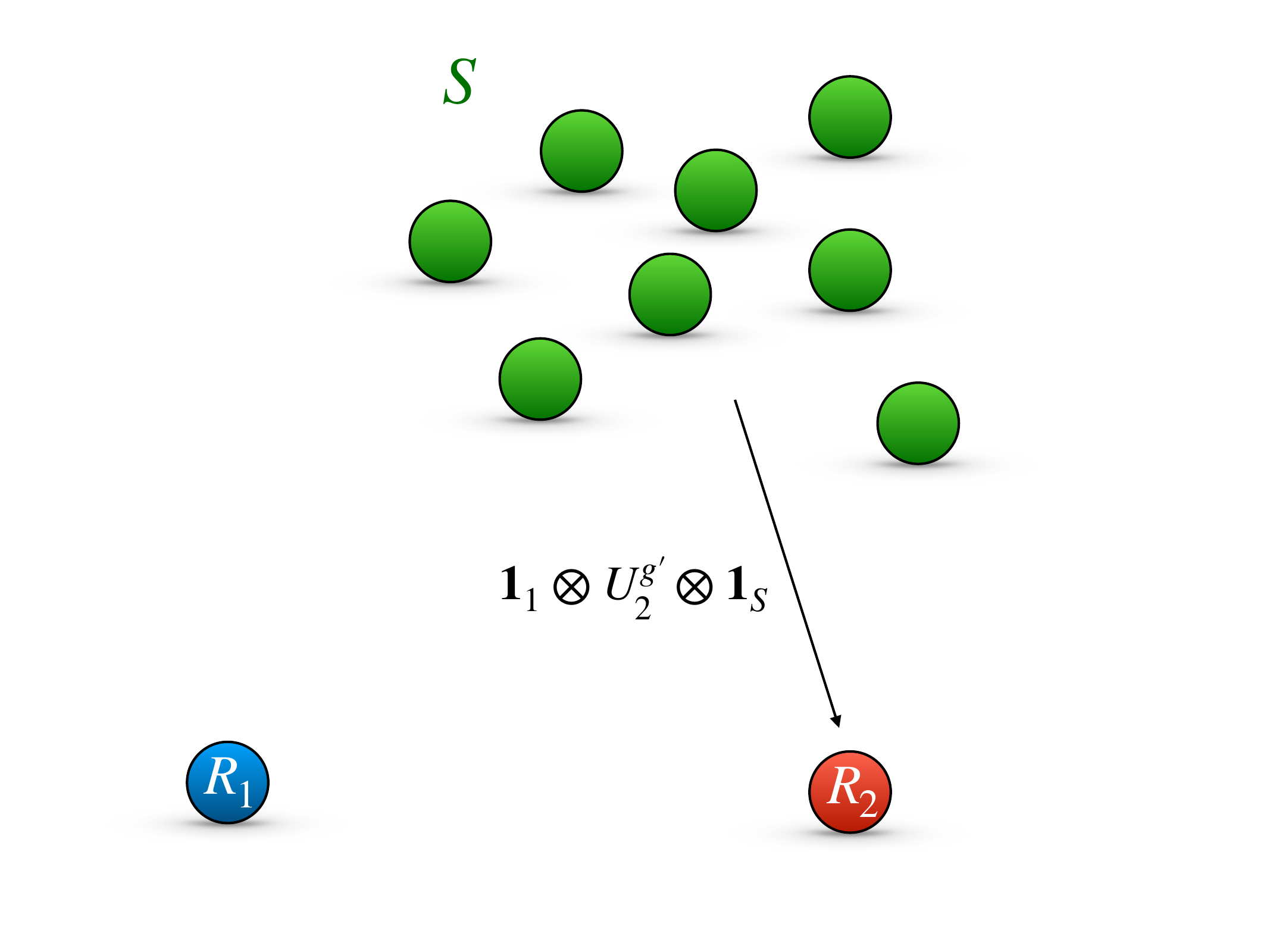}
\caption{}\label{Fig:subsystems3}
\end{subfigure}
\hspace{2cm}
\begin{subfigure}[b]{0.375\textwidth}
\includegraphics[width=\textwidth]{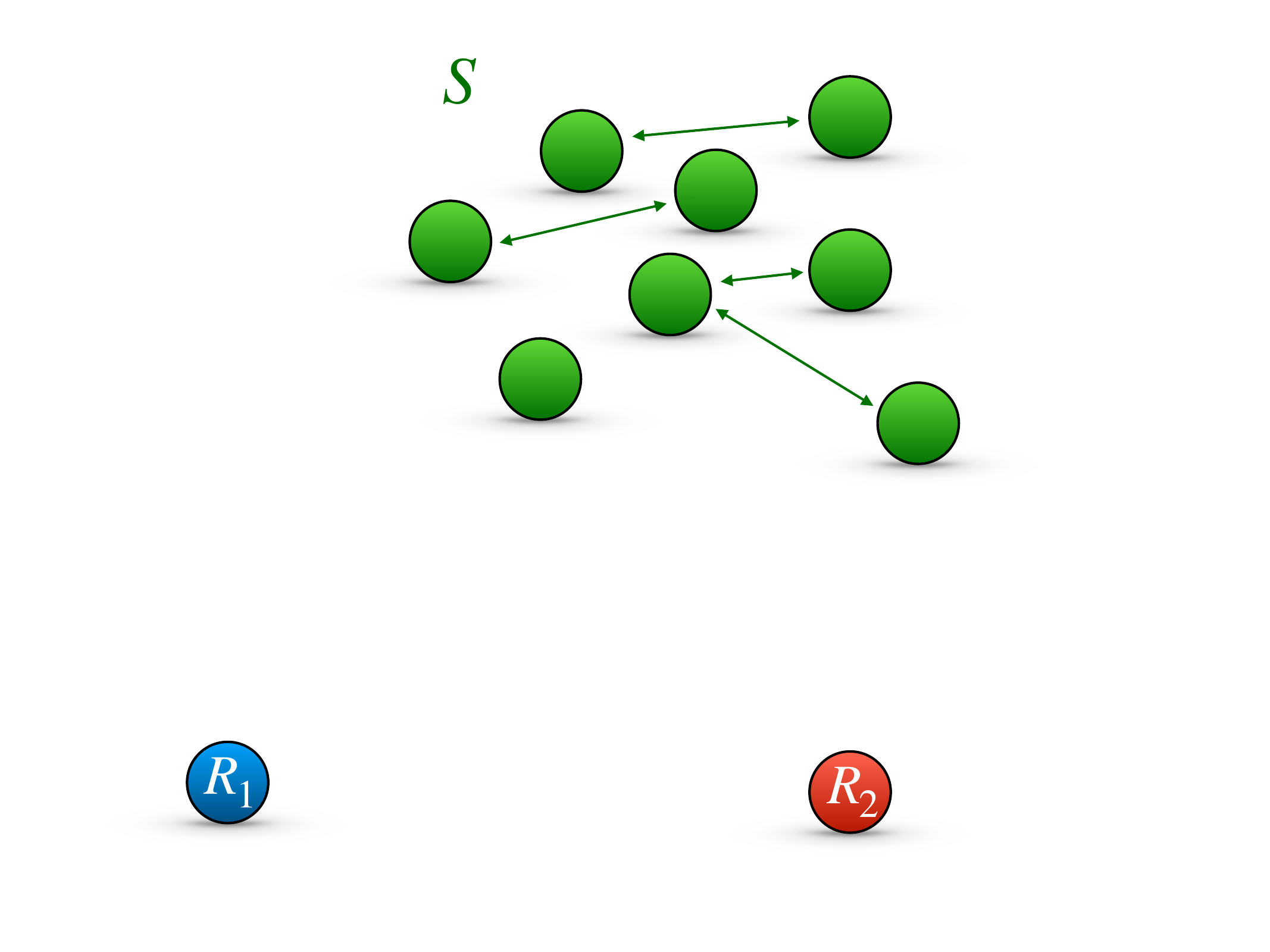}
\caption{}\label{Fig:subsystems4}
\end{subfigure}
\caption{Pictorial illustration of why \emph{internal} notions of subsystem generally depend on the choice of internal QRF. (a) The description of $S$ relative to $R_1$ is invariant under reorientations of $R_2$, but relative to $R_2$ it changes since the relations between $S$ and $R_2$ change. For this reason, the internal perspectives of $R_1$ and $R_2$ induce \emph{inequivalent} TPSs on the internal-frame-neutral $\mathcal{H}_{\rm phys}$ between $R_2,S$ and $R_1,S$, respectively. (b) Relational observables describing $S$ relative to $R_1$ are invariant under reorientations of $R_2$ and vice versa. Hence, relational observables describing $S$ relative to \emph{both} $R_1$ and $R_2$, i.e.\ residing in $\mathcal{A}^{\rm phys}_{S|R_1}\cap\mathcal{A}^{\rm phys}_{S|R_2}$, must be invariant under reorientations of \emph{both} frames. These describe $\mathcal{G}$-invariant properties of $S$ that are independent of the relations between $S$ and both internal frames and correspond to all the relational observables encoding \emph{internal} relations within $S$ (e.g., we could have chosen another internal QRF $R_3$ inside $S$). If $S$ is sufficiently complex, this subalgebra may be large, but is still a proper subalgebra of both $\mathcal{A}^{\rm phys}_{S|R_i}$, $i=1,2$, which also contain observables that depend non-trivially on the relation between $S$ and $R_i$. This is the content of Theorem~\ref{lem_TPSineqalg}.
}
\label{Fig:subsystems22}
\end{figure*}

First, let us recall the definition of a TPS on an abstract Hilbert space $\mathcal{H}$~\cite{Zanardi2001,Zanardi2004,Cotler2019} (see also \cite{Barnum2003GeneralizationBasedCoherent,barnum_subsystem-independent_2004,viola_barnum_2010}). It is well-known that, for \emph{finite-dimensional} quantum systems as here, this definition can be equivalently written in terms of Hilbert spaces or commuting subalgebras of observables. In the latter case, one identifies the subsystems via their observable subalgebras. 

\begin{definition}[\textbf{TPS}]\label{def:TPSreleq}
A TPS $\mathcal{T}$ on $\mathcal{H}$ is an equivalence class of isomorphisms (unitaries) $\mathbf{T}:\mathcal{H}\to\bigotimes_{\alpha=1}^n\mathcal{H}_\alpha$, such that $\mathbf{T}_1\sim\mathbf{T}_2$ if $\mathbf{T}_2\circ\mathbf{T}_1^{-1}$ is a product of multilocal unitaries $\otimes_\alpha U_\alpha$ and permutations of subsystem factors with equal dimension. Equivalently, a TPS $\mathcal{T}$ on $\mathcal{H}$ is a set of subalgebras $\mathcal{A}_\alpha$, $\mathcal{A}_\alpha\in\mathcal{L}(\mathcal{H})$, $\alpha=1,\cdots,n$, with the following properties: (i) subsystem independence: $[\mathcal{A}_{\alpha},\mathcal{A}_{\alpha'}]=0$ and $\mathcal{A}_\alpha\cap\mathcal{A}_{\alpha'}=\mathbb{C}\mathbf{1}$, $\forall\,\alpha\neq\alpha'$; (ii) completness:\footnote{$\bigvee_\alpha\mathcal{A}_\alpha$ denotes the minimal subalgebra of $\mathcal{L}(\mathcal{H})$ (the space of linear operators on $\mathcal{H}$) containing all the $\mathcal{A}_\alpha$, $\alpha=1,\ldots,n$.}  $\bigvee_{\alpha}\mathcal{A}_\alpha=\mathcal{L}(\mathcal{H})$.
\end{definition}
The equivalence of the two definitions follows from the fact that each equivalence class $[\mathbf{T}]$ of isomorphisms identifies a commuting set of subalgebras of $\mathcal{L}(\mathcal{H})$ with the properties (i) \& (ii), and vice versa~\cite{Zanardi2001,Zanardi2004}. 

Comparing the unitary reduction map into the internal perspective of QRF $R_i$, ${\mathcal{R}_i^g:\mathcal{H}_{\rm phys}\to\mathcal{H}_{\bar i}=}\mathcal{H}_{j}\otimes\mathcal{H}_S$, with (the Hilbert space version of) Definition~\ref{def:TPSreleq}, it is clear that we can identify it with a (representative of a) TPS, i.e.\ $\mathbf{T}_i^g=\mathcal{R}_i^g$, on the perspective-neutral Hilbert space $\mathcal{H}_{\rm phys}$.~Since $\mathcal{R}_i^{gg'}=(U^{g}_j\otimes U^{g}_S)\,\mathcal{R}_i^{g'}$, which follows from~\eqref{eq:iReduction} and noting that $(U^{g^{-1}}_i\otimes\mathds1_{j}\otimes\mathds1_S)\,\Piphys=(\mathds1_{i}\otimes U^g_j\otimes U^g_S)\,\Piphys$, we have the equivalence $\mathcal{R}_i^g\sim\mathcal{R}_i^{g'}$, $\forall\,g,g'\in\mathcal{G}$.~Henceforth, we shall call the TPS $[\mathcal{R}_i^{g_i}]$ defined by the reduction map \emph{the natural TPS in $R_i$-perspective}. 

The QRF transformation in~\eqref{eq:Vitoj}, $V_{i\to j}^{g_i,g_j}=\mathcal{R}_j^{g_j}\circ\left(\mathcal{R}_i^{g_i}\right)^{-1}=\mathbf{T}_j^{g_j}\circ\left(\mathbf{T}_i^{g_i}\right)^{-1}$, is thus nothing else than a change of TPS on $\mathcal{H}_{\rm phys}$. Identifying the abstract subsystem labels $\alpha=1$ and $\alpha=2$ in Definition~\ref{def:TPSreleq} with $R_j$ and $S$, respectively, in $R_i$-perspective, as well as with $R_i$ and $S$, respectively, in $R_j$-perspective, it is clear from the explicit form in the second line of~\eqref{eq:Vitoj} that $\mathbf{T}_j^{g_j}\circ\left(\mathbf{T}_i^{g_i}\right)^{-1}=V_{i\to j}^{g_i,g_j}$ is a nonlocal unitary for all $g_i,g_j\in\mathcal{G}$ (it is a conditional unitary).\footnote{One might contend that one is here comparing TPSs between \emph{distinct} subsystems, namely a partition between $R_j$ and $S$ (relative to $R_i$) and one between $R_i$ and $S$ (relative to $R_j$). Owing to the gauge-induced redundancies, $R_1,R_2$ and $S$ are however not independent on $\mathcal{H}_{\rm phys}$. In this sense, the two natural TPSs do correspond to different partitions of the same degrees of freedom.} Hence, we have:
\begin{lemma}\label{lem_TPSineq}
The natural TPS in $R_1$-perspective $[\mathcal{R}_1^{g_1}]$, i.e.\ the equivalence class of isomorphisms $\mathcal{H}_{\rm phys}\to\mathcal{H}_{2}\otimes\mathcal{H}_S$, and the natural TPS in $R_2$-perspective  $[\mathcal{R}_2^{g_2}]$, i.e.\ the equivalence class of isomorphisms $\mathcal{H}_{\rm phys}\to\mathcal{H}_{1}\otimes\mathcal{H}_S$, are inequivalent. 
\end{lemma}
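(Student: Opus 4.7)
By Definition~\ref{def:TPSreleq}, two TPS representatives are equivalent iff the transition isomorphism is a product of multilocal unitaries possibly composed with a permutation of equal-dimensional subsystem factors. Since $\mathcal{R}_j^{g_j}\circ(\mathcal{R}_i^{g_i})^{-1}=V_{i\to j}^{g_i,g_j}$, the claim reduces to showing that the explicit map
\[
V_{1\to 2}^{g_1,g_2}=\sum_{g\in\mathcal{G}}\ket{gg_1}_1\otimes\bra{g_2 g^{-1}}_2\otimes U_S^g
\]
from \eqref{eq:Vitoj} is \emph{not} of such a form under the natural identification of the ``other frame'' factor as subsystem $\alpha=1$ and $S$ as subsystem $\alpha=2$ on both sides. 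My plan is to exhibit a product input in $\mathcal{H}_2\otimes\mathcal{H}_S$ whose image under $V_{1\to 2}^{g_1,g_2}$ is entangled across the output partition $\mathcal{H}_1\otimes\mathcal{H}_S$, which directly rules out a multilocal-unitary representative.

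First, I would evaluate $V_{1\to 2}^{g_1,g_2}$ on the product state $\frac{1}{\sqrt{2}}\bigl(\ket{g_a}_2+\ket{g_b}_2\bigr)\otimes\ket{\psi}_S$ for two distinct $g_a,g_b\in\mathcal{G}$ (which exist because $\mathcal{G}\neq\{\mathbf{1}\}$), obtaining
\[
\tfrac{1}{\sqrt{2}}\Bigl(\ket{g_2 g_a^{-1}g_1}_1\otimes U_S^{g_2 g_a^{-1}}\ket{\psi}_S+\ket{g_2 g_b^{-1}g_1}_1\otimes U_S^{g_2 g_b^{-1}}\ket{\psi}_S\Bigr).
\]
The two $\mathcal{H}_1$-kets are orthogonal because $g_a\neq g_b$ and frame orientation states are perfectly distinguishable, so this output is a Schmidt decomposition with rank $2$ whenever the two $S$-components $U_S^{g_2 g_a^{-1}}\ket{\psi}_S$ and $U_S^{g_2 g_b^{-1}}\ket{\psi}_S$ are linearly independent; picking $\ket{\psi}_S$ generic (equivalently, not annihilated by $U_S^{g_a g_b^{-1}}-\lambda\mathds{1}_S$ for any scalar $\lambda$) guarantees this, provided the representation of $\mathcal{G}$ on $\mathcal{H}_S$ is not a collection of pure phases. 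Since an entangled pure state cannot arise from a multilocal unitary acting on a product state, $V_{1\to 2}^{g_1,g_2}$ cannot be of the form $W_{\text{frame}}\otimes W_S$.

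To finish, I would rule out the remaining possibility allowed by Definition~\ref{def:TPSreleq}, namely composition with a subsystem permutation. A permutation can only swap factors of equal dimension, and in the generic setting $\dim\mathcal{H}_S\neq|\mathcal{G}|$ there is nothing to swap beyond the trivial relabelling between the two ``other frame'' factors (which already matches the natural identification). In the degenerate case $\dim\mathcal{H}_S=|\mathcal{G}|$ where swapping the $S$-factor with the frame factor is dimensionally permitted, the same entanglement witness above applies after relabelling: Schmidt rank is invariant under permutation of the tensor factors, so the image state is entangled in any relabelling, and composing with a further multilocal unitary cannot undo that entanglement. The main subtlety I anticipate is precisely this careful handling of the permutation clause and of corner cases where $U_S^g$ acts as a global phase (in which case the TPSs actually \emph{would} coincide and the lemma should be read with a non-triviality assumption on the $S$-representation, which is the only physically meaningful setting).
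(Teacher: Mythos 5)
Your proof is correct and follows essentially the same route as the paper: the paper's argument is simply the observation that $V_{1\to 2}^{g_1,g_2}=\mathbf{T}_2^{g_2}\circ(\mathbf{T}_1^{g_1})^{-1}$ is, by the explicit form in \eqref{eq:Vitoj}, a conditional (controlled) unitary and hence nonlocal, which is exactly the fact you verify explicitly by exhibiting a product state in $\mathcal{H}_2\otimes\mathcal{H}_S$ whose image has Schmidt rank $2$. What you add beyond the paper is worthwhile: the explicit entanglement witness, the careful disposal of the permutation clause in Definition~\ref{def:TPSreleq} (correctly using that a permutation composed with multilocal unitaries still maps product states to product states), and\,---\,most substantively\,---\,the observation that if $U_S^g$ acts by pure phases for all $g$ then $V_{1\to 2}^{g_1,g_2}$ factorises and the two TPSs \emph{are} equivalent, so the lemma implicitly requires a non-trivial action of $\mathcal{G}$ on $\mathcal{H}_S$; the paper, having only assumed that $\mathcal{H}_S$ carries \emph{some} unitary representation, glosses over this degenerate case.
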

This is the reason why entanglement and superpositions are QRF relative, as noticed in \cite{GiacominiQMcovariance:2019,Vanrietvelde:2018pgb,Vanrietvelde:2018dit,PHtrinity,CastroRuizQuantumclockstemporallocalisability,delaHammetteQRFsforgeneralsymmetrygroups} and then explained in \cite{PHquantrel} and generalised in \cite{delaHamette:2021oex,Castro-Ruiz:2021vnq}.

It is instructive for later purposes to also consider the equivalent algebraic definition of TPSs in the context of QRFs.~The perspective-neutral algebra $\mathcal{A}_{\rm phys}=\mathcal{L}(\mathcal{H}_{\rm phys})$ is TPS-neutral in the same sense as $\mathcal{H}_{\rm phys}$.~Specifically, since the superoperator ${\hat{\mathcal{R}}_i^{g_i}:\mathcal{A}_{\rm phys}\to\mathcal{A}_{\bar i}=\mathcal{A}_{j}\otimes\mathcal{A}_{S}}$, $i=1,2$, from Sec.~\ref{sec:qrfphystps} is unitary, we can write every abstract element of $\mathcal{A}_{\rm phys}$ as a relational observable relative to QRF $R_1$, but \emph{also} as a relational observable relative to QRF $R_2$.~That is, for every $f_{\bar 1}\in\mathcal{A}_{\bar 1}$ and $g_1\in\mathcal{G}$ there exists a $f'_{\bar 2}\in\mathcal{A}_{\bar 2}$ and $g_2\in\mathcal{G}$ such that $O^{g_1}_{f_{\bar 1}|R_1}=O^{g_2}_{f'_{\bar 2}|R_2}$.~This means that any fixed operator in $\mathcal{A}_{\rm phys}$ does not come with an \emph{a priori} physical interpretation; rather it admits a multitude of interpretations and it is the choice of QRF that provides the ``context'' in which to interpret it.~Elements of $\mathcal{A}_{\rm phys}$ are in this sense neutral with respect to a choice of internal QRF perspective.
 
However, relative to a choice of QRF, $\mathcal{A}_{\rm phys}$ decomposes into commuting subalgebras that satisfy properties (i) \& (ii) of Definition~\ref{def:TPSreleq} and so induce a TPS. To see this, recall the reduction theorem of relational observables in~\eqref{eq:obsredthm}, $\hat{\mathcal{R}}_i^{g_i}(O^{g_i}_{f_{\bar i}|R_i}) = f_{\bar i}$. Since this map is unitary, it preserves all relevant algebraic properties. Hence, since $\mathcal{A}_{j}\otimes\mathds1_S$ and $\mathds1_{j}\otimes\mathcal{A}_S$ satisfy (i) \& (ii) as subalgebras of $\mathcal{A}_{\bar i}$ (again under the identification $\alpha=1\leftrightarrow R_j$ and $\alpha=2\leftrightarrow S$), so do $\mathcal{A}_{R_j|R_i}^{\rm phys}:=(\hat{\mathcal{R}}_i^{g_i})^\dag(\mathcal{A}_{j}\otimes\mathds1_S)=O^{g_i}_{\mathcal{A}_{j}\otimes\mathds1_S|R_i}$ and $\mathcal{A}^{\rm phys}_{S|R_i}:=(\hat{\mathcal{R}}_i^{g_i})^\dag(\mathds{1}_{j}\otimes\mathcal{A}_S)=O^{g_i}_{\mathds1_{j}\otimes\mathcal{A}_S|R_i}$ as subalgebras of $\mathcal{A}_{\rm phys}$.~The reduction theorem also makes it clear that the TPS  which the relational observable subalgebras $\mathcal{A}_{R_j|R_i}^{\rm phys}$ and $\mathcal{A}^{\rm phys}_{S|R_i}$ relative to frame $R_i$ induce on $\mathcal{A}_{\rm phys}$ (and thereby on $\mathcal{H}_{\rm phys}$) is precisely given by the natural TPS in $R_i$-perspective $[\mathcal{R}_i^{g_i}]$, as these subalgebras become local with respect to it.

We already know from Lemma~\ref{lem_TPSineq} that the two natural TPSs induced by the relational observables relative to frames $R_1$ and $R_2$ are inequivalent.~However, it is again instructive to formulate this observation in algebraic terms, especially as it admits a simple pictorial interpretation.~The key step in proving the inequivalence of the two TPSs algebraically is the following theorem, illustrated in Figs.~\ref{Fig:subsystems3} and~\ref{Fig:subsystems4}, which 
implies that the relational observables relative to $R_1$ and those relative to $R_2$ define distinct decompositions of $\mathcal{A}_{\rm phys}$ into ``other frame'' and ``system $S$'' (see Fig.\ \ref{Fig:SDOalgebras}). This is the QRF analogue of the same observation that we just made in special relativity in Sec.~\ref{sssec_SrelSR} and which entails that two inertial observers in relative motion in special relativity decompose the set of length observables differently into ``space'' and ``time'' (see Fig.~\ref{Fig:relSR}).~This is also a refinement and simplification of results of \cite{PHquantrel,delaHamette:2021oex,Castro-Ruiz:2021vnq}.
\begin{figure}[!h]
\centering\includegraphics[scale=0.45]{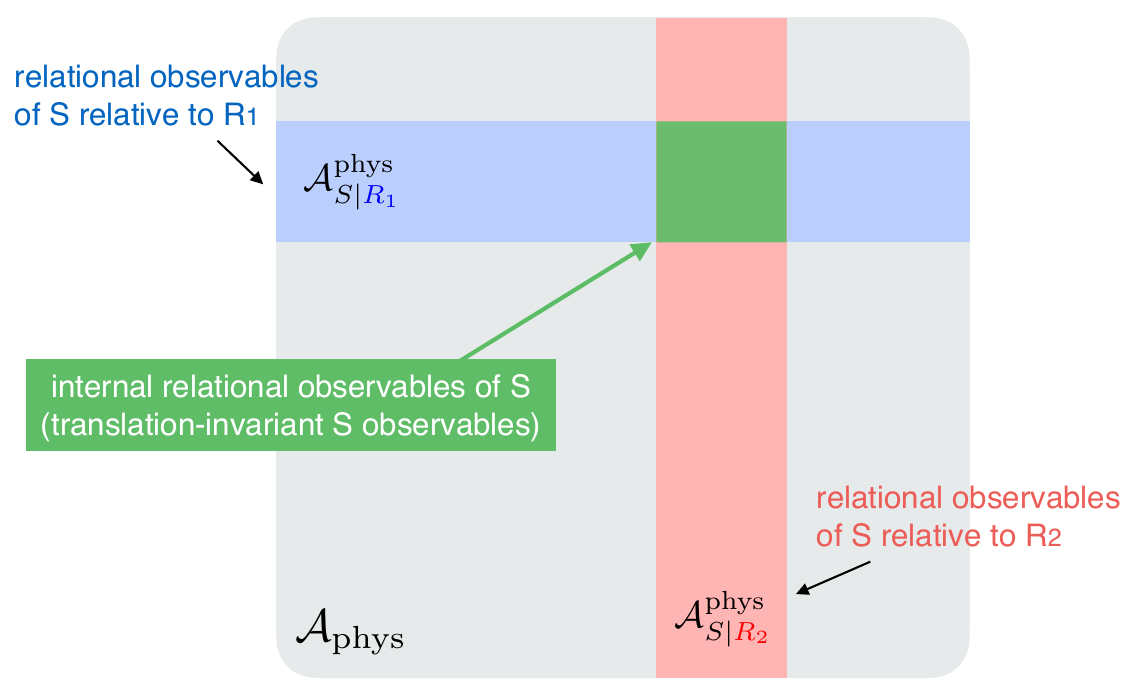}
\caption{Relational observables describing the subsystem $S$ relative to different internal frames identify non-coincident, yet non-trivially overlapping subalgebras of $\Aphys$ (cfr.~Theorem~\ref{lem_TPSineqalg}), leading to distinct decompositions of the latter into ``other frame'' and ``system $S$''.}
\label{Fig:SDOalgebras}
\end{figure}
\begin{theorem}\label{lem_TPSineqalg}
The algebras of relational observables describing solely $S$ relative to the two frames are distinct, but isomorphic subalgebras of $\mathcal{A}_{\rm phys}$, i.e.\ $\mathcal{A}_{S|R_1}^{\rm phys}\neq\mathcal{A}_{S|R_2}^{\rm phys}$. Their overlap $\mathcal{A}_{S|R_1}^{\rm phys}\cap\mathcal{A}_{S|R_2}^{\rm phys}$ is non-trivial and consists of all operators of the form $(\mathds{1}_1\otimes\mathds1_2\otimes f_S)\,\Pi_{\rm phys}=\mathds{1}_1\otimes\mathds1_2\otimes f_S\restriction{\mathcal{H}_{\rm phys}}$,\footnote{$\restriction\mathcal{H}_{\rm phys}$ denotes restriction of the domain to $\mathcal{H}_{\rm phys}$.} where $f_S\in\mathcal{A}_S$ is translation-invariant $[f_S,U^g_S]=0$, $\forall\,g\in\mathcal{G}$.~Hence, observables in the overlap commute with reorientations of \emph{both} frames.
\end{theorem}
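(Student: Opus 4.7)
The plan is to identify the overlap by a two-sided inclusion argument, using that elements of $\mathcal{A}^{\rm phys}_{S|R_j}$ are precisely those operators on $\mathcal{H}_{\rm phys}$ that become local to the $S$-factor under the unitary reduction $\hat{\mathcal{R}}_j^{g_j}$, together with the commutation properties between such local operators and frame reorientations $V_i^g = U_i^g \otimes \mathds{1}_{j S}$. The overall strategy is that an element of $\mathcal{A}^{\rm phys}_{S|R_1}$ is already pinned down to the form $O^{g_1}_{\mathds{1}_2\otimes f_S|R_1}$ for some $f_S\in\mathcal{A}_S$; membership in $\mathcal{A}^{\rm phys}_{S|R_2}$ should then force $f_S$ to be translation-invariant, because $\mathcal{A}^{\rm phys}_{S|R_2}$ must commute with reorientations $V_1^g$ of the frame $R_1$ that it treats as part of ``the rest''.

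The first concrete step is to translate $V_1^g$ into $R_1$'s perspective. Using the explicit form \eqref{eq:iReduction}, together with $V_1^g\,\Piphys = \Piphys\,V_1^g$, a short calculation with the formulas $\mathcal{R}_1^{g_1} = \sqrt{|\mathcal{G}|}(\bra{g_1}_1\otimes\mathds{1}_{2S})\Piphys$ and \eqref{Rinv} should yield $\hat{\mathcal{R}}_1^{g_1}(V_1^g) = U_2^{g^{-1}}\otimes U_S^{g^{-1}}$, i.e.~in $R_1$'s perspective a reorientation of $R_1$ manifests as the \emph{global} gauge action of $g^{-1}$ on ``the rest''. The analogous computation gives $\hat{\mathcal{R}}_2^{g_2}(V_2^g) = U_1^{g^{-1}}\otimes U_S^{g^{-1}}$. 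Since $V_1^g$ commutes with the gauge generators (the group is abelian), it descends to a well-defined operator on $\Hphys$, so these expressions are meaningful identities between elements of $\Aphys$ and their perspectival images.

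Next, for any $A\in\mathcal{A}^{\rm phys}_{S|R_1}$ there exists $f_S\in\mathcal{A}_S$ with $\hat{\mathcal{R}}_1^{g_1}(A) = \mathds{1}_2\otimes f_S$. If additionally $A\in\mathcal{A}^{\rm phys}_{S|R_2}$, then in $R_2$'s perspective $A$ is local to the $S$-factor and hence commutes with $\hat{\mathcal{R}}_2^{g_2}(V_1^g) = U_1^{g^{-1}}\otimes\mathds{1}_S$ for all $g$, which pulled back to $\Hphys$ means $[A,V_1^g]=0$. Reducing this identity into $R_1$'s perspective via the first step gives
\begin{equation}
\bigl[\mathds{1}_2\otimes f_S,\ U_2^{g^{-1}}\otimes U_S^{g^{-1}}\bigr] = U_2^{g^{-1}}\otimes\bigl[f_S,U_S^{g^{-1}}\bigr] = 0\,,
\end{equation}
for all $g\in\mathcal{G}$, i.e.\ $f_S$ is translation-invariant. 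Running the argument backwards, for any such $f_S$ one checks that $(\mathds{1}_1\otimes\mathds{1}_2\otimes f_S)\,\Piphys$ is already gauge-invariant, so no nontrivial averaging is needed, and a direct computation of $O^{g_i}_{\mathds{1}_j\otimes f_S|R_i}$ from the definition \eqref{RelDO}, collapsing the sum $\sum_g\ket{gg_i}\!\bra{gg_i}_i = \mathds{1}_i$, should yield precisely $(\mathds{1}_1\otimes\mathds{1}_2\otimes f_S)\,\Piphys$ independently of the frame label $i$ and orientation $g_i$. This produces the claimed element of the overlap, and the identity on $\Hphys$ (take $f_S=\mathds{1}_S$) confirms non-triviality.

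Finally, distinctness $\mathcal{A}^{\rm phys}_{S|R_1}\neq\mathcal{A}^{\rm phys}_{S|R_2}$ follows as soon as $\mathcal{A}_S$ contains some $f_S$ that fails to be translation-invariant (which is generic, and can be assumed in the setup): the overlap characterisation just obtained then shows $O^{g_1}_{\mathds{1}_2\otimes f_S|R_1}\notin\mathcal{A}^{\rm phys}_{S|R_2}$. Isomorphism of the two algebras is immediate, as each $\mathcal{A}^{\rm phys}_{S|R_i}$ is the image of $\mathcal{A}_S$ under the unitary $*$-homomorphism $(\hat{\mathcal{R}}_i^{g_i})^\dag$. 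The main obstacle I anticipate is the technical bookkeeping in the reduction computation $\hat{\mathcal{R}}_i^{g_i}(V_i^g)$: one must carefully track how the delta-function condition $\bra{g_i}\Piphys\ket{gg_i}_i$ selects $g^{-1}$ among the summands, and it is precisely the appearance of $g^{-1}$ acting on \emph{both} $\mathcal{H}_j$ and $\mathcal{H}_S$ (rather than on $\mathcal{H}_S$ alone) that is responsible for turning commutativity with $V_1^g$ into the translation-invariance condition on $f_S$.
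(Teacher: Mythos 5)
Your proof is correct and follows essentially the same route as the paper's: both arguments rest on the facts that relational $S$-observables relative to $R_i$ commute with reorientations of the other frame, and that the gauge constraint trades a simultaneous reorientation of both frames for an $S$-translation, which forces $[f_S,U_S^g]=0$; the converse direction (collapsing the twirl for translation-invariant $f_S$) is likewise identical. The only difference is presentational — you work in the reduced perspectives via $\hat{\mathcal{R}}_1^{g_1}(V_1^g)=U_2^{g^{-1}}\otimes U_S^{g^{-1}}$, whereas the paper works directly on $\Hphys$ using $(U_1^g\otimes U_2^g\otimes\mathds1_S)\,\Piphys=(\mathds1_1\otimes\mathds1_2\otimes U_S^{g^{-1}})\,\Piphys$ — and your explicit caveat that distinctness requires some $f_S$ with $[f_S,U_S^g]\neq 0$ is a fair point left implicit in the paper (a harmless slip aside: $\hat{\mathcal{R}}_2^{g_2}(V_1^g)=U_1^{g}\otimes\mathds1_S$ rather than $U_1^{g^{-1}}\otimes\mathds1_S$, which does not affect the argument since $g$ ranges over all of $\mcG$).
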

The proof is given in Appendix~\ref{App:TPSlocu2} and just a technical implementation of Figs.~\ref{Fig:subsystems3} and~\ref{Fig:subsystems4}. The translation-invariant $S$-observables correspond to all the \emph{internal} relations within $S$ and if $S$ is sufficiently complex they will constitute a non-trivial algebra. 

In fact, the elements in the overlap of the two algebras can alternatively be characterised as those relational observables describing $S$ which are invariant under the second (symmetry-induced) type of QRF transformation analogous to~\eqref{relcondreorient} in special relativity.~This transformation was introduced in \cite{delaHamette:2021oex} and maps the relational observables relative to $R_1$ into the corresponding ones relative to $R_2$.~Since we will not be using this second type of QRF transformation in the remainder, we explain this second characterisation in Appendix~\ref{App:relcondreorientations}. This similarly mimics the discussion of subsystem relativity in special relativity of Sec.~\ref{sssec_SrelSR}.

The theorem thus entails that $\mathcal{A}^{\rm phys}_{S|R_i}\cap\mathcal{A}_{S|R_j}^{\rm phys}\neq\mathbb{C}\mathbf1$, $i\neq j$, overlap non-trivially.~When distinct sets  $\{\mathcal{A}_\alpha\}$ and $\{\mathcal{A}_\beta'\}$ of TPS-inducing subalgebras overlap non-trivially for some $\alpha,\beta$, they induce inequivalent TPSs because they cannot both be local with respect to the same one.~This is the algebraic way to see that the natural TPSs in $R_1$- and $R_2$-perspectives are inequivalent.~The two frames thus give rise to two distinct decompositions of the total algebra $\mathcal{A}_{\rm phys}$ into observables associated with ``the other frame'' and the ``system $S$''.~While these decompositions are isomorphic, probing ``the same'' $S$ properties relative to $R_1$ and $R_2$ means probing the \emph{same} state $\ket{\psi_{\rm phys}}\in\mathcal{H}_{\rm phys}$ with generally distinct observables in $\mathcal{A}_{\rm phys}$, which will generally produce different results.

This can be seen more explicitly in the natural TPSs.~As seen above, in $R_j$-perspective $\mathcal{A}_{S|R_j}^{\rm phys}$ and $\mathcal{A}_{R_i|R_j}^{\rm phys}$ are local, with generic elements of the form $\mathds1_i \otimes A_S\in\mathds{1}_i\otimes\mathcal{A}_{S}$ and $B_i \otimes\mathds1_S\in\mathcal{A}_{i}\otimes\mathds1_S$.~However, their images under the QRF transformation in~\eqref{eq:Vitoj} are non-local in $R_i$-perspective (without further assumptions on $A_S,B_i$, see Sec.~\ref{Sec:Uinv&break}),
 \begin{align}
    &\hat{V}_{j\to i}^{g_j,g_i}\bigl(\mathds 1_{i}\otimes A_{S}\bigr)=\sum_{g\in\mathcal G}\ket{g_j g}\!\bra{g_jg}_j\otimes U^g_S A_{S}U^{g^{-1}}_S\,,\label{localSfromjtoi} \\
     &\hat{V}_{j\to i}^{g_j,g_i}\bigl(B_i \otimes\mathds 1_{S}\bigr)\notag\\
     &=\sum_{g,g'\in\mathcal G}\braket{g_i g^{-1}|B_i|g_i g'}_i\ket{g_jg}\!\bra{g_jg'^{-1}}_j\otimes U^{gg'}_S\,.\label{localifromjtoi}
 \end{align}
 Due to the unitarity of $V_{j\to i}$, the QRF transformed operators \eqref{localSfromjtoi} and \eqref{localifromjtoi} establish commuting subalgebras ${\mathcal{A}_{i|j\to i}:=\hat V_{j\to i}^{g_j,g_i}(\mathcal{A}_{i}\otimes\mathds1_S)}$ and ${\mathcal{A}_{S|j\to i}:=\hat V_{j\to i}^{g_j,g_i}(\mathds1_i\otimes\mathcal{A}_{S})}$ on $\mathcal{H}_{\bar i}$ as well. Here the subscript $j\to i$ is to emphasise that these are the representatives of the $R_i$- and $S$-algebras imported from $R_j$- into the $R_i$-perspective. In particular, these subalgebras are the reductions of the $R_j$ relational observable algebras into  the perspective of ``the other frame'' $R_i$, i.e.\ $\mathcal{A}_{i|j\to i}=\hat{\mathcal{R}}_i^{g_i}\bigl(\mathcal{A}_{R_i|R_j}^{\rm phys}\bigr)$ and $\mathcal{A}_{S|j\to i}=\hat{\mathcal{R}}_i^{g_i}\bigl(\mathcal{A}_{S|R_j}^{\rm phys}\bigr)$, which follows from~\eqref{eq:Vitoj} and~\eqref{eq:obsredthm} (we stress the distribution of $i$- and $j$-labels on the right hand sides). 
 
Later, it will at times be convenient to compare the two natural TPSs ``within a single perspective''.~For example, $\mathcal{A}_{i|j\to i}$ and $\mathcal{A}_{S|j\to i}$ identify the natural TPS of $R_j$-perspective on the Hilbert space $\mathcal{H}_{\bar i}$ of $R_i$-perspective.~Let us write the corresponding explicit tensor product as
$\mathcal H_{i|j\to i}\otimes\mathcal H_{S|j\to i}$.~Thus, we seek a new TPS isomorphism $\mathbf{T}'^{g_j}_i:=\mathbf{U}_{\ibar}^{g_i,g_j}\circ\mathcal{R}_i^{g_i}:\mathcal{H}_{\rm phys}\to\mathcal H_{i|j\to i}\otimes\mathcal H_{S|j\to i}$ with $\mathbf{U}_{\ibar}\in\mathcal{U}(\mathcal{H}_{\bar i})$.~This unitary $\mathbf{U}_{\ibar}$ has to restore the locality of the QRF-transformed subalgebras in Eqs.~\eqref{localSfromjtoi}-\eqref{localifromjtoi}, i.e.\ to disentangle them.~In other words, $\mathbf{U}_{\ibar}$ is essentially the ``inverse of the QRF transformation $V_{j\to i}$ within $R_i$-perspective''.~Comparing with~\eqref{eq:Vitoj}, it is clear that it is given by the conditional unitary\footnote{In the quantum information terminology, the unitary operation \eqref{TPSschangemap} can be understood as a controlled unitary gate implementing a shift of $S$-subsystem with control system $R_j$.~For example, in the case of a system of $N$ two-level ($\mathcal G=\mathbb Z_2$) particles, with $\{\ket{g}\}_{g\in\mathcal G}=\{\ket{0},\ket{1}\}$ being the eigenstates of the Pauli matrix $\sigma_z$ so that $U^{g=0}_S=\mathds1^{\otimes(N-2)}$ and $U_S^{g=1}=\sigma_x^{\otimes(N-2)}$, the expression \eqref{TPSschangemap} reduces for $g_i=g_j=0$ to the familiar \textsf{CNOT} gate $\ket0\!\bra0_2\otimes\mathds1^{\otimes(N-2)}+\ket1\!\bra1_2\otimes\sigma_x^{\otimes(N-2)}$.\label{CNOTfootnote}}
\begin{equation}\label{TPSschangemap}
  \mathbf{U}_{\ibar}^{g_i,g_j}:=  \sum_{g\in\mathcal G}\ket{g_ig}\!\bra{g_jg^{-1}}_j\otimes U_S^g\,.
\end{equation}
Indeed, we then have that the TPSs $[\mathbf{T}_j^{g_j}]$ and $[\mathbf{T}'^{g_j}_i]$ are equivalent because
\begin{align}
    \mathcal{I}_{j\to i}:\!&=\mathbf{T}'^{g_j}_i\circ\left(\mathbf{T}_j^{g_j}\right)^{-1}=\mathbf{U}_{\ibar}^{g_i,g_j}\,V_{j\to i}^{g_j,g_i}\label{IUVrelation}\\
    &=\sum_{g\in\mathcal{G}}\ket{g}_j\otimes\bra{g}_i\otimes\mathds1_S\,\label{eq:Ijtoi}
\end{align}
is a local unitary, namely essentially the ``identity from $R_j$- to $R_i$-perspective''; we shall sometimes call it the \emph{frame swap}.~In particular,  $\mathbf{U}_{\ibar}^{g_j,g_i}$ clearly undoes the nonlocal
transformations \eqref{localSfromjtoi}-\eqref{localifromjtoi} within $R_i$-perspective.

So far we have focused on how gauge-invariant tensor products between ``the other frame'' and ``system $S$'' depend on the choice of internal QRF.~However, if $S$ is a composite system as in Fig.~\ref{Fig:subsystems}, it may come with its own internal TPSs, given by isomorphisms $\mathbf{T}_S:\mathcal{H}_S\to\bigotimes_\alpha\mathcal{H}_\alpha$, or, equivalently, commuting subalgebras $\{\mathcal{A}_\alpha\}$ of $\mathcal{A}_S$ that obey Definition~\ref{def:TPSreleq}.~How is such an internal TPS of $S$ affected by QRF changes? First, it is a refinement of the natural TPSs above, e.g.\ $\tilde{\mathbf{T}}_i^{g_i}:=(\mathds1_j\otimes\mathbf{T}_S)\,\mathcal{R}_i^{g_i}=(\mathds1_j\otimes\mathbf{T}_S)\,\mathbf{T}_i^{g_i}$ yields $\tilde{\mathbf{T}}_i^{g_i}:\mathcal{H}_{\rm phys}\to\mathcal{H}_{j}\otimes\bigotimes_\alpha\mathcal{H}_\alpha$ and we may call it a \emph{refined natural TPS in $R_i$-perspective}.~We can thus not directly compare the internal TPSs of $S$ relative to the two frames, but only via the natural TPSs.~Since the latter are inequivalent, we have that  $\tilde{\mathbf{T}}_j^{g_j}\circ\bigl(\tilde{\mathbf{T}}_i^{g_i}\bigr)^{-1}=(\mathds1_i\otimes\mathbf{T}'_S)\,V_{i\to j}^{g_i,g_j}\,(\mathds1_j\otimes\mathbf{T}^{-1}_S)$ is necessarily a nonlocal unitary across the ``other frame'' and ``system $S$'' partition.~It thus does not matter whether $\mathbf{T}'_S\circ\mathbf{T}_S^{-1}$ is a product of local unitaries and permutations on $\mathcal{H}_S$ and we find:
\begin{corollary}\label{cor:refinedTPS}
Any refined natural TPS in $R_1$-perspective and any refined natural TPS in $R_2$-perspective are inequivalent.
\end{corollary}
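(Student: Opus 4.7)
The plan is to reduce the claim to Lemma~\ref{lem_TPSineq} by showing that any equivalence of the two refined TPSs would descend to an equivalence of the two natural TPSs, which is already ruled out. First I would write the transition explicitly as
\[
\tilde{\mathbf{T}}_j^{g_j}\circ\bigl(\tilde{\mathbf{T}}_i^{g_i}\bigr)^{-1}=(\mathds{1}_i\otimes\mathbf{T}'_S)\,V_{i\to j}^{g_i,g_j}\,(\mathds{1}_j\otimes\mathbf{T}_S^{-1})\,,
\]
using the defining relations of the refined maps together with $\mathbf{T}_j^{g_j}\circ(\mathbf{T}_i^{g_i})^{-1}=V_{i\to j}^{g_i,g_j}$ from~\eqref{eq:Vitoj}. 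The crucial structural feature is that the two outer dressings act as identity on the frame tensor factor and act only on the single block $\mathcal{H}_S$ as a whole (merely translating between internal $S$-factorisations); hence they are local with respect to the coarse frame--$S$ bipartition, regardless of whether $\mathbf{T}_S$ and $\mathbf{T}'_S$ themselves admit multilocal factorisations.

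I would then argue by contradiction. Assume the two refined TPSs are equivalent. By Definition~\ref{def:TPSreleq}, the composition above equals a product of a multilocal unitary $\bigotimes_\gamma U_\gamma$ and a permutation $\sigma$ of equal-dimensional refined factors. Coarsening both refined TPSs to bipartitions by collapsing the internal $S$-factorisations into a single $S$-block, the multilocal piece coarsens to a local unitary across the frame--$S$ cut, while $\sigma$ either stays inside $S$ (and thus also coarsens to a local operation) or swaps the frame factor with an $S$-subfactor of matching dimension (coarsening to a frame--$S$ swap, which is exactly the type of permutation admitted in the equivalence class defining the natural TPS). Absorbing the outer dressings into the $S$-local factor, one concludes that $V_{i\to j}^{g_i,g_j}$ would implement an equivalence of the two natural TPSs $[\mathcal{R}_i^{g_i}]$ and $[\mathcal{R}_j^{g_j}]$ on $\mathcal{H}_{\rm phys}$, contradicting Lemma~\ref{lem_TPSineq}.

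The main obstacle I anticipate is the permutation case in which some $\dim\mathcal{H}_\alpha$ happens to coincide with $\dim\mathcal{H}_i$ or $\dim\mathcal{H}_j$: one needs to verify that the resulting cross-bipartition swap, after coarsening and absorption of the outer dressings, really lies in the equivalence class of the natural TPS rather than outside it. This is ensured because $\mathbf{T}_S$ and $\mathbf{T}'_S$ are globally unitary on $\mathcal{H}_S$ and can therefore be absorbed into the $S$-local factor without spoiling multilocality. An alternative route that sidesteps this subtlety entirely is to use the algebraic reformulation of TPS equivalence (equivalent TPSs share the same set of local subalgebras) together with Theorem~\ref{lem_TPSineqalg}: the join of the refined $S$-subalgebras is $\mathcal{A}^{\rm phys}_{S|R_i}$ in one perspective and $\mathcal{A}^{\rm phys}_{S|R_j}$ in the other, and these are distinct by Theorem~\ref{lem_TPSineqalg}.
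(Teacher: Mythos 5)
Your proposal is correct and follows essentially the same route as the paper: the paper's argument is precisely that $\tilde{\mathbf{T}}_j^{g_j}\circ\bigl(\tilde{\mathbf{T}}_i^{g_i}\bigr)^{-1}=(\mathds1_i\otimes\mathbf{T}'_S)\,V_{i\to j}^{g_i,g_j}\,(\mathds1_j\otimes\mathbf{T}^{-1}_S)$ sandwiches the nonlocal $V_{i\to j}^{g_i,g_j}$ between maps that are local across the coarse frame--$S$ cut, so the composition remains nonlocal across that cut and the claim reduces to Lemma~\ref{lem_TPSineq}. Your additional care with the cross-bipartition permutation case, and the algebraic fallback via Theorem~\ref{lem_TPSineqalg}, go beyond what the paper spells out but do not change the argument.
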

In particular, since $S$-observables that one may use in $R_j$-perspective to probe internal correlations of $S$ generally map under QRF transformations into observables probing \emph{both} $R_j$ and $S$ in $R_i$-perspective (cfr.~\eqref{localSfromjtoi}), internal correlations of $S$ will (for the same perspective-neutral state  $\ket{\psi_{\rm phys}}\in\mathcal{H}_{\rm phys}$)  generally depend on the choice of internal QRF. We shall expound on this observation in the following sections.

\section{QRF-independent subalgebras}\label{Sec:Uinv&break}

Having established the QRF relativity of gauge-invariant subsystems (and expanded on \cite{PHquantrel,delaHamette:2021oex,Castro-Ruiz:2021vnq}), a natural question is under which precise conditions properties of ``the other frame'' and subsystem $S$ change or remain invariant under QRF transformations. To this end, we will pursue an algebraic approach, identifying families of subalgebras of gauge-invariant observables, whose elements are independent of the TPS changes induced by the QRF transformations. Since TPSs are defined up to local unitaries, this means we will be looking for observables that are up to local unitaries (and hence local basis changes) invariant under QRF changes and in this sense ``look the same'' relative to $R_1$ and $R_2$.
Clearly, the conditions characterising QRF change invariant observables will simultaneously characterize those observables that necessarily \emph{do} change under such transformations. 

In this section, we will be  studying generic operators (states and observables alike), before investigating the consequences for the (non-)invariance of the \emph{reduced} states of subsystem $S$ under QRF changes.~This will lay the groundwork for exploring the QRF relativity of subsystem correlations, dynamics and thermal properties thereafter, all of which are inextricably tied to the identification of a TPS.~Specifically, interactions in a Hamiltonian depend on the latter.

The analog in special relativity to what we are investigating here is inquiring which properties are invariant under changes of inertial frame.~There are two levels at which we can ask this question: which properties are invariant under the transformations between (i) two inertial frames in \emph{fixed} orientations, and (ii) arbitrary inertial frames, i.e.\ invariant under arbitrary Lorentz transformations?~For example, two observers with relative boost in $x$-direction agree on length measurements in $y$- and $z$-direction, while the set of all inertial observers will only agree on Lorentz scalars (cfr.~Fig.~\ref{Fig:relSR}).~In our setting, the analog of (i) is to ask for operators invariant under $V_{i\to j}^{g_i,g_j}$ for \emph{fixed} orientations $g_i,g_j$ of frames $R_i,R_j$, while (ii) asks for operators invariant under QRF transformations for arbitrary $g_i,g_j$.~For subsystem properties, (ii) is essentially answered by Theorem~\ref{lem_TPSineqalg}. Here, we will mostly focus on (i), though some results will also bear on (ii). 

\enlargethispage{1.75\baselineskip}

\subsection{TPS-invariant subalgebras}\label{Sec:TPSivsubalgebras}

Given the equivalence of the QRF transformation $V_{i\to  j}^{g_i,g_j}$ with $\mathbf{U}_{\ibar}^{g_i,g_j}$ as established in~\eqref{IUVrelation}, we will henceforth work with $\mathbf{U}_{\ibar}^{g_i,g_j}$ given by \eqref{TPSschangemap} instead.~While not necessary, it has the advantage of working within a fixed algebra $\mathcal{A}_{\ibar}$, given that $\mathbf{U}_{\ibar}^{g_i,g_j}\in\mathcal{A}_{\ibar}$.~Indeed, \eqref{IUVrelation} implies that the condition for a given operator to ``look the same'' up to local unitaries in the internal QRF perspectives of $R_1$ and $R_2$ can be equivalently phrased as follows.

\begin{lemma} \label{Lemma:tpseq}
Let $f_{\ibar} \in \mcA_{\ibar}$ be some operator relative to frame $R_i$ and $g_i,g_j \in \mcG$ be fixed orientations of frames $R_i$ and $R_j$, $j\neq i$.~Then, for some bilocal unitary $X = Y_j \otimes Z_S \in \mathcal{U}(\mcH_{\ibar})$,
\be \label{fsw}
\hat{V}_{i\to j}^{g_i,g_j}(f_{\ibar}) = \hat{\mcI}_{i\to j}\hat{X}^\dagger(f_{\ibar}) 
\ee 
if and only if $\hat{\mbfU}_{\ibar}^{g_i,g_j}(f_{\ibar})=\hat{X}^\dagger(f_{\ibar})$.
\end{lemma}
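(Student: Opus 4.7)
The plan is to reduce the claim to an algebraic manipulation of the key identity~\eqref{IUVrelation}, which already packages the relation between the QRF transformation, the frame swap, and the conditional unitary $\mathbf{U}_{\ibar}$. Explicitly, I would first take the Hermitian adjoint of~\eqref{IUVrelation} with the labels $i$ and $j$ interchanged, use the unitarity of both $V$ and $\mathbf{U}_{\ibar}$, together with the observation that, as is evident from~\eqref{eq:Ijtoi}, $\mcI_{j\to i}^{\dagger}=\mcI_{i\to j}$. This yields the clean factorisation
\begin{equation*}
V_{i\to j}^{g_i,g_j}=\mcI_{i\to j}\,\mathbf{U}_{\ibar}^{g_i,g_j}\,.
\end{equation*}

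Next, I would lift this operator identity to the level of superoperators using the conjugation convention~\eqref{eq:Wconjug}, obtaining $\hat{V}_{i\to j}^{g_i,g_j}=\hat{\mcI}_{i\to j}\circ\hat{\mathbf{U}}_{\ibar}^{g_i,g_j}$. Substituting this into~\eqref{fsw} recasts the equation to be tested as
\begin{equation*}
\hat{\mcI}_{i\to j}\bigl(\hat{\mathbf{U}}_{\ibar}^{g_i,g_j}(f_{\ibar})\bigr)=\hat{\mcI}_{i\to j}\bigl(\hat{X}^{\dagger}(f_{\ibar})\bigr)\,.
\end{equation*}
Since $\mcI_{i\to j}:\mcH_{\ibar}\to\mcH_{\jbar}$ is a unitary isomorphism, the induced superoperator $\hat{\mcI}_{i\to j}:\mcA_{\ibar}\to\mcA_{\jbar}$ is invertible, with inverse $\hat{\mcI}_{j\to i}$. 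Applying this inverse to both sides immediately yields the equivalent condition $\hat{\mathbf{U}}_{\ibar}^{g_i,g_j}(f_{\ibar})=\hat{X}^{\dagger}(f_{\ibar})$, which is precisely the second equation in the lemma, and running the argument in reverse gives the converse implication.

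No structural assumptions on $f_{\ibar}$, on the bilocal unitary $X=Y_{j}\otimes Z_{S}$, or on the group $\mcG$ are required beyond the unitarity of the maps involved; in particular, the bilocality of $X$ plays no role in the equivalence itself (it will become pertinent only when, in subsequent sections, one asks whether the two conditions define \emph{TPS-invariant} subalgebras). The only real bookkeeping obstacle is tracking the directions of the maps and the distribution of the orientation labels $g_i,g_j$ when taking the adjoint and swapping $i\leftrightarrow j$, but~\eqref{IUVrelation} has essentially done this work for us.
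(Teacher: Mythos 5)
Your proof is correct and is essentially the paper's own argument: the paper presents Lemma~\ref{Lemma:tpseq} as an immediate consequence of~\eqref{IUVrelation}, and your factorisation $V_{i\to j}^{g_i,g_j}=\mcI_{i\to j}\,\mbfU_{\ibar}^{g_i,g_j}$ followed by cancellation of the invertible superoperator $\hat{\mcI}_{i\to j}$ is precisely that observation spelled out. (A minor bookkeeping remark: taking the adjoint of~\eqref{IUVrelation} exactly as written, without first interchanging $i$ and $j$, already yields this factorisation, since $\mcI_{j\to i}^{\dagger}=\mcI_{i\to j}$ and $(V_{j\to i}^{g_j,g_i})^{\dagger}=V_{i\to j}^{g_i,g_j}$.)
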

Recall that the tensor factorisation in $\mcA_{\ibar}$ (and hence $X$) is the natural TPS in $R_i$-perspective. 

These conditions define families of  subalgebras of the total algebra $\mathcal{A}_{\ibar}$ (describing gauge-invariant operators in $R_i$-perspective) that are invariant under QRF transformation induced TPS changes.~As shown in Appendix \ref{app:subalgebraAU}, the set of operators satisfying \eqref{fsw} form a unital $*$-subalgebra of $\mathcal{A}_{\ibar}$ for any fixed bilocal unitary $X=Y_{j} \otimes Z_S$ (and pair of orientations $g_i,g_j$ which we drop for notational simplicity)
\be\label{Uinvsubalgebra}
\mcA_{\ibar}^{X} := \{ f_{\ibar} \in \mcA_{\ibar} \;|\; \hat{\mbfU}_{\ibar}^{g_i,g_j}(f_{\ibar}) = \hat{X}^{\dagger}(f_{\ibar}) \}.
\ee
We shall denote then the special case of exact invariance, i.e.\ $X=\mathds1_j\otimes\mathds1_S$, by $\mcA_{\ibar}^{\mathds1}$.~We shall sometimes refer to elements of $\mcA_{\ibar}^{X}$ as TPS- or $\mbfUX$-invariant in what follows.

Being a subalgebra of a finite-dimensional algebra, there exists an orthogonal projector $\PiUX:\mathcal{A}_{\ibar}\to\mcA_{\ibar}^{X}$ so that we can decompose
\be\label{AUXdecofAibar}
\mathcal{A}_{\ibar}=\mcA_{\ibar}^{X}\oplus\mathcal{C}^{X}_{\ibar} 
\ee 
into two Hilbert-Schmidt orthogonal pieces.~QRF transformations preserve this structure, mapping these orthogonal pieces into the corresponding ones in $R_j$-perspective
\be
\mathcal{A}_{\jbar}=\mcA_{\jbar}^{X'}\oplus\mathcal{C}^{X'}_{\jbar}\;,
\ee
where $\mcA_{\jbar}^{X'}$ is defined in the same way as \eqref{Uinvsubalgebra}, but with the $i$ and $j$ labels interchanged and
\be
X'=\hat{\mathcal I}_{i\to j}(X^{\dagger})\,.
\ee
Indeed, one can check that the corresponding projectors are related by QRF transformations as
\be
\begin{aligned}
&\PiUXj\circ\hat{V}_{i\to j}=\hat{V}_{i\to j}\circ\PiUX\;,\\ &\PiUXjperp\circ\hat{V}_{i\to j}=\hat{V}_{i\to j}\circ\PiUXperp\;,
\end{aligned}
\ee
with $\PiUXperp=\hat{\mathds1}_{\ibar}-\PiUX$, and similarly for $\PiUXjperp$.~For later, we also note the following property, proven in App.~\ref{app:subalgebraAU}.
\begin{lemma}\label{lem:pix}
 We have $\PiUX\circ\hat X\circ\hat U_{\ibar}^{g_i,g_j}=\PiUX$.   
\end{lemma}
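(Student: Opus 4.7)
The plan is to recognize that the defining condition of $\mcA_{\ibar}^X$ can be rewritten as a fixed-point condition for a single superoperator, and then to exploit a general Hilbert-space fact: the orthogonal projector onto the fixed-point subspace of a unitary map $T$ satisfies $P\circ T = P$. First I would introduce $T:=\hat{X}\circ\hat{\mbfU}_{\ibar}^{g_i,g_j}$ acting on $\mcA_{\ibar}$. Using $\hat{X}\circ\hat{X}^\dagger=\hat{\mathds{1}}_{\ibar}$ (which holds because $X$ is unitary), the defining condition in \eqref{Uinvsubalgebra}, namely $\hat{\mbfU}_{\ibar}^{g_i,g_j}(f_{\ibar})=\hat{X}^\dagger(f_{\ibar})$, is equivalent to $T(f_{\ibar})=f_{\ibar}$. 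Hence
\begin{equation}
\mcA_{\ibar}^{X}=\ker(T-\hat{\mathds{1}}_{\ibar})\,,
\end{equation}
and $\PiUX$ is by construction the orthogonal (with respect to the Hilbert--Schmidt inner product on $\mcA_{\ibar}$) projector onto this subspace.

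Next I would observe that $T$ is a Hilbert--Schmidt unitary: both $\hat{X}(\bullet)=X(\bullet)X^\dag$ and $\hat{\mbfU}_{\ibar}^{g_i,g_j}(\bullet)=\mbfU_{\ibar}^{g_i,g_j}(\bullet)(\mbfU_{\ibar}^{g_i,g_j})^\dag$ are conjugations by unitary operators on $\mcH_{\ibar}$, hence they preserve $\langle A,B\rangle_{\rm HS}=\Tr(A^\dag B)$, and their composition does too. I would then invoke the elementary linear-algebra fact that for any unitary $T$ on a finite-dimensional Hilbert space and the orthogonal projector $P$ onto $\ker(T-\mathds{1})$ one has $P\circ T = P$. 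The short justification is that, by unitarity of $T$, $\ker(T^\dag-\mathds{1})=\ker(T-\mathds{1})$, so $\mathrm{range}(T-\mathds{1})=\ker(T^\dag-\mathds{1})^\perp=\ker(T-\mathds{1})^\perp=\ker P$; therefore $P\circ(T-\mathds{1})=0$, i.e.\ $P\circ T=P$.

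Applying this directly to $T=\hat{X}\circ\hat{\mbfU}_{\ibar}^{g_i,g_j}$ and $P=\PiUX$ yields the claim
\begin{equation}
\PiUX\circ\hat{X}\circ\hat{\mbfU}_{\ibar}^{g_i,g_j}=\PiUX\,,
\end{equation}
as stated. The only mildly delicate point is the equivalence between the defining condition of $\mcA_{\ibar}^X$ and the $T$-fixed-point condition, which relies on $\hat{X}$ being invertible on all of $\mcA_{\ibar}$; this is immediate from the unitarity of $X$. No characterisation of the elements of $\mcA_{\ibar}^X$ beyond what is already in \eqref{Uinvsubalgebra} is needed, and the proof does not require knowing anything specific about the bilocal structure $X=Y_j\otimes Z_S$ or about the explicit form \eqref{TPSschangemap} of $\mbfU_{\ibar}^{g_i,g_j}$.
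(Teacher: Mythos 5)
Your proof is correct and rests on the same two ingredients as the paper's: that $\PiUX$ is the Hilbert--Schmidt orthogonal projector onto $\mcA_{\ibar}^{X}$ and that $\mcA_{\ibar}^{X}$ is precisely the fixed-point set of the HS-unitary $T=\hat{X}\circ\hat{\mbfU}_{\ibar}^{g_i,g_j}$. The paper carries this out as a direct computation in an orthonormal basis of $\mcA_{\ibar}^{X}$ (implicitly using, as you make explicit, that fixed points of a unitary are also fixed points of its adjoint), whereas you package the same mechanism as the general range/kernel statement $P\circ T=P$ for the projector onto the fixed subspace of a unitary; the two arguments are essentially identical.
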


We can think of the frame orientations $g_i,g_j$ as labelling the families of subalgebras, while the bilocal unitaries $X$ label the members within a family.~We will see below that these subalgebras can intersect non-trivially in multiple ways (besides the identity which clearly belongs to all of them).

Let us first consider whether it depends on the frame orientations if a given $f_{\ibar}$ is part of a TPS-invariant subalgebra. The following lemma shows that this is not the case: if $f_{\ibar}$ resides in a member of one family of subalgebras, then it resides in some member of all families.

\begin{lemma}\label{lem_Agg}
If $f_{\ibar}\in\mathcal{A}_{\ibar}^{X,g_i,g_j}$ for some fixed QRF orientations $g_i,g_j$ and bilocal unitary $X=Y_j\otimes Z_S$, then, for arbitrary $g_i',g'_j\in\mathcal{G}$, also $f_{\ibar}\in\mathcal{A}_{\ibar}^{X',g'_i,g'_j}$ with $X'=Y'_j\otimes Z_S'$ given by
\be
Y'_j=Y_j\,U_j^{(g'_ig'_j){}^{-1}g_ig_j}\,,\qquad Z_S'=Z_S\,U_S^{g_jg'_j{}^{-1}}\,.\nonumber
\ee
\end{lemma}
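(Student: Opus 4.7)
The plan is to reduce everything to a single structural identity relating $\mathbf{U}_{\ibar}^{g_i',g_j'}$ and $\mathbf{U}_{\ibar}^{g_i,g_j}$, and then read off the new bilocal unitary $X'$. Concretely, starting from the explicit form \eqref{TPSschangemap}, I would perform the change of summation variable $g\mapsto g\,b^{-1}$ inside $\mathbf{U}_{\ibar}^{g_i',g_j'}=\sum_g \ket{g_i'g}\!\bra{g_j'g^{-1}}_j\otimes U_S^g$, choosing $b=g_j^{-1}g_j'$, and use abelianness of $\mcG$ to collect the resulting shifts. A short calculation then yields the factorisation
\begin{equation}
\mathbf{U}_{\ibar}^{g_i',g_j'}=\bigl(U_j^{w^{-1}}\otimes U_S^{u}\bigr)\,\mathbf{U}_{\ibar}^{g_i,g_j}\,,\qquad w:=(g_i'g_j')^{-1}g_ig_j\,,\quad u:=g_j^{-1}g_j'\,.\label{eq:relUprime}
\end{equation}
This is the only real technical identity needed; I would verify it by contracting both sides on basis vectors $\ket{g}_j\otimes\ket{\phi}_S$ and checking that the three group-element exponents match termwise.

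Given \eqref{eq:relUprime}, the rest is bookkeeping. Using the hypothesis $\hat{\mathbf{U}}_{\ibar}^{g_i,g_j}(f_{\ibar})=\hat{X}^\dagger(f_{\ibar})$ with $X=Y_j\otimes Z_S$, I would compute
\begin{align*}
\hat{\mathbf{U}}_{\ibar}^{g_i',g_j'}(f_{\ibar})
&=\bigl(U_j^{w^{-1}}\otimes U_S^{u}\bigr)\,\hat{\mathbf{U}}_{\ibar}^{g_i,g_j}(f_{\ibar})\,\bigl(U_j^{w}\otimes U_S^{u^{-1}}\bigr)\\
&=\bigl(U_j^{w^{-1}}Y_j^\dagger\otimes U_S^{u}Z_S^\dagger\bigr)\,f_{\ibar}\,\bigl(Y_jU_j^{w}\otimes Z_SU_S^{u^{-1}}\bigr)\,,
\end{align*}
which is exactly $\hat{X'}{}^\dagger(f_{\ibar})$ for the bilocal unitary
\begin{equation*}
X'=Y_jU_j^{w}\otimes Z_SU_S^{u^{-1}}=\bigl(Y_j\,U_j^{(g_i'g_j')^{-1}g_ig_j}\bigr)\otimes\bigl(Z_S\,U_S^{g_jg_j'^{-1}}\bigr)\,.
\end{equation*}
This reproduces the formulas for $Y'_j$ and $Z'_S$ in the lemma statement, and manifestly $X'$ is bilocal, so $f_{\ibar}\in\mathcal{A}_{\ibar}^{X',g_i',g_j'}$.

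The main obstacle is the index shift leading to \eqref{eq:relUprime}: one must carefully track the three places where group elements appear (the $j$-factor ket, the $j$-factor bra, and the $S$-translation), and use commutativity of $\mcG$ to absorb all the constant shifts into a single \emph{left} bilocal factor rather than a two-sided product (a left-right splitting would only yield a conjugation of $f_{\ibar}$ on the wrong side and spoil the argument). Once \eqref{eq:relUprime} is established, the conjugation structure of the defining condition \eqref{Uinvsubalgebra} makes the remainder purely mechanical. As a sanity check I would confirm that setting $g_i'=g_i$, $g_j'=g_j$ gives $w=u=e$ and hence $X'=X$, recovering the hypothesis, and that the map $X\mapsto X'$ is a bijection of bilocal unitaries (so the family of subalgebras is relabelled, not collapsed, when we change the frame orientations).
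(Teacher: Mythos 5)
Your proof is correct and is essentially the paper's own argument: the paper likewise computes the bilocal unitary $\mathbf{U}_{\ibar}^{g_i,g_j}\bigl(\mathbf{U}_{\ibar}^{g_i',g_j'}\bigr)^\dag=U_j^{(g_i'g_j')^{-1}g_ig_j}\otimes U_S^{g_jg_j'^{-1}}$ (the adjoint of your identity \eqref{eq:relUprime}) and absorbs it into $X'$. The only blemish is a harmless inverse slip in your narrative of the change of summation variable (matching the bras requires $b=g_j'^{-1}g_j$, not $g_j^{-1}g_j'$), which does not affect the stated identity or the conclusion.
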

The proof is found in Appendix~\ref{app:subalgebraAU}.
In other words, if $f_{\ibar}$ is up to local unitaries invariant under the QRF change between $R_1$ in orientation $g_1$ and $R_2$ in orientation $g_2$, then it is also invariant up to (possibly different) local unitaries under QRF transformations for all other orientations.~This further justifies leaving the family labels $g_i,g_j$ implicit in the notation for the subalgebras in \eqref{Uinvsubalgebra}.

We emphasize, however, that \emph{exact} invariance under QRF transformations, i.e.\ $\hat{\mathbf{U}}_{\ibar}^{g_i,g_j}(f_{\ibar})=f_{\ibar}$ for some fixed orientations $g_i,g_j\in\mathcal{G}$, does not in general imply invariance $\hat{\mathbf{U}}_{\ibar}^{g'_i,g'_j}(f_{\ibar})=f_{\ibar}$ for the same observable, but different orientations $g'_i,g'_j$.
This is analogous to (i) in special relativity above, where inertial observers agree on e.g.\ length observables in $y$-direction when they are related by a boost in $x$-direction, but not if they are related by a boost in $y$-direction.

To provide some intution about these TPS-invariant subalgebras $\mcA_{\ibar}^{X}$, we illustrate some of their above mentioned properties in the simplest possible example. 

\begin{example}{\bf(3 qubits).}\label{Ex:AUqubits}
Suppose the two QRFs $R_1,R_2$ and the system $S$ are a qubit each and $\mcG=\mathbb{Z}_2$.~The group basis states for each qubit $\{\ket{g}\}_{g\in\mathcal G}=\{\ket{0},\ket{1}\}$ are given by the computational basis of eigenstates of $\sigma^z$ and we have $U^{g=0}_a=\mathds1_a$ and $U_a^{g=1}=\sigma^{x}_a$, $a=R_1,R_2,S$.~Let us focus on the description of $R_2S$ relative to qubit $R_1$ and, for simplicity, also take the frame orientations of both $R_1,R_2$ to be the identity, i.e.\ $g_1=0=g_2$, before and after QRF change, respectively.~In this case, \eqref{TPSschangemap} reduces to the \textsf{CNOT} gate
\be\label{CNOT} 
\mbfU_{\onebar}:=\mbfU_{\onebar}^{0,0}=\ket0\!\bra0_2\otimes\mathds1_S+\ket1\!\bra1_2\otimes\sigma^x_{S}\,.
\ee
Of the four basis vectors $\ket{00}_{2S},\ket{01}_{2S},\ket{10}_{2S},\ket{11}_{2S}$ in $\mathcal H_{2}\otimes\mathcal H_{S}$, $\ket{00}_{2S},\ket{01}_{2S}$ are left invariant under $\mbfU_{\onebar}$, while $\ket{10}_{2S},\ket{11}_{2S}$ are transformed into each other, so that $\ket{\psi_\pm}_{2S}:=\frac{1}{\sqrt{2}}\left(\ket{10}_{2S}\pm\ket{11}_{2S}\right)$ satisfies $\mbfU_{\onebar}\ket{\psi_\pm}_{2S}=\pm\ket{\psi_\pm}_{2S}$. Hence, the subalgebra that is  invariant under $\mbfU_{\onebar}$ is a 10-dimensional subalgebra of the 16-dimensional algebra $\mcA_{\onebar}$ and given by
\begin{align}
\mcA_{\onebar}^{\mathds1}=&\rm{span}\Big\{\ket{00}\!\bra{00}_{2S},\ket{00}\!\bra{01}_{2S},\ket{01}\!\bra{00}_{2S},\ket{01}\!\bra{01}_{2S}
,\nonumber\\
&\,\,\,\qquad\ket{\psi_+}\!\bra{00}_{2S},\ket{\psi_+}\!\bra{01}_{2S},\ket{00}\!\bra{\psi_+}_{2S},\ket{01}\!\bra{\psi_+}_{2S},\nonumber\\
&\qquad\,\,\,\ket{\psi_+}\!\bra{\psi_+}_{2S},\ket{\psi_-}\!\bra{\psi_-}_{2S}
\Big\}\,.\nonumber
\end{align}

We also obtain a 10-dimensional subalgebra of operators that are $\mbfU_{\onebar}$-invariant up to the bilocal unitary $X=\mathds1_2\otimes\sigma^x_S$.~To this end, observe that $\mbfU_{\onebar}\ket{10}_{2S}=\ket{11}_{2S}={(\mathds1_2\otimes\sigma^x_S)\ket{10}_{2S}}$ and $\mbfU_{\onebar}\ket{11}_{2S}=\ket{10}_{2S}={(\mathds1_2\otimes\sigma^x_S)\ket{11}_{2S}}$.~Moreover,
$\ket{\phi_\pm}_{2S}:=\frac{1}{\sqrt{2}}\left(\ket{00}_{2S}\pm\ket{01}_{2S}\right)$ satisfies $\mbfU_{\onebar}\ket{\phi_\pm}_{2S}=\pm(\mathds1_2\otimes\sigma^x_S)\ket{\phi_\pm}_{2S}=\ket{\phi_\pm}_{2S}$.~The subalgebra thus reads
\begin{align}
\mcA_{\onebar}^{\mathds1\otimes\sigma^x}\!\!=&\rm{span}\Big\{\ket{10}\!\bra{10}_{2S}, \ket{10}\!\bra{11}_{2S}, \ket{11}\!\bra{10}_{2S}, \ket{11}\!\bra{11}_{2S},\nonumber\\
&\,\,\,\qquad
\ket{\phi_+}\!\bra{10}_{2S},\ket{\phi_+}\!\bra{11}_{2S},\ket{01}\!\bra{\phi_+}_{2S},\ket{11}\!\bra{\phi_+}_{2S}\nonumber\\
&\qquad\,\,\,\ket{\phi_+}\!\bra{\phi_+}_{2S},\ket{\phi_-}\!\bra{\phi_-}_{2S}
\Big\}\,.\nonumber
\end{align}

Clearly, the two subalgebras overlap non-trivially: $\mcA_{\onebar}^{\mathds1}\cap\mcA_{\onebar}^{\mathds1\otimes\sigma^x}$ is 6-dimensional and spanned by $\ket{\psi_+}\!\bra{\psi_+}_{2S},\ket{\psi_-}\!\bra{\psi_-}_{2S},\ket{\phi_+}\!\bra{\phi_+}_{2S},\ket{\phi_-}\!\bra{\phi_-}_{2S},\ket{\psi_+}\!\bra{\phi_+}_{2S}$, $\ket{\phi_+}\!\bra{\psi_+}_{2S}$ (which contains the identity).~Together, they thus do not span the entire algebra $\mcA_{\onebar}$. 

Next, let us exhibit  observables that do not reside in any TPS-invariant subalgebra, as well as observables that illustrate Lemma~\ref{lem_Agg}.~Consider the Ising Hamiltonian with longitudinal magnetic field
\be \label{3qham}
H_{\onebar}= A \,\sigma^z_2 \otimes \mathds1_S + B\, \mathds1_2 \otimes \sigma^z_S + C\, \sigma^{z}_2 \otimes \sigma^z_S \,,
\ee
for some parameters $A,B,C\in\mathbb{R}$.~Invoking \eqref{TPSschangemap}, we have $\hat{\mbfU}_{\onebar}^{g_1,g_2}( \sigma^z_2\otimes \mathds1_S)=  (-1)^{g_1+g_2}(\sigma^z_2\otimes \mathds1_S),\; \hat{\mbfU}_{\onebar}^{g_1,g_2}( \mathds1_2 \otimes \sigma^z_S)= (-1)^{g_1}(\sigma_2^z\otimes \sigma^z_S), \; \hat{\mbfU}_{\onebar}^{g_1,g_2}( \sigma_2^z\otimes \sigma^z_S)=  (-1)^{g_2}(\mathds1_2 \otimes \sigma^z_S)$.~Setting $\hat{\mbfU}_{\onebar}^{g_1,g_2}(H_{\onebar})=\hat X^\dag(H_{\onebar})$ and noting the linear independence of the three terms, we see that the bilocal unitary $X=Y_2\otimes Z_S$ needs to satisfy
\begin{eqnarray}
\hat Y^\dag_2(\sigma_2^z)=(-1)^{g_1+g_2}\sigma_2^z\,,\qquad B\hat Z^\dag_S(\sigma_S^z)&=&(-1)^{g_2}C\sigma_S^z\nonumber\\
\text{and}\qquad\qquad C\hat Z^\dag_S(\sigma_S^z)&=&(-1)^{g_2}B\sigma_S^z\,.\nonumber
\end{eqnarray}
For $|B|\neq |C|$, this is impossible.~We conclude that in this case $H_{\onebar}$ does not reside in  $\mcA_{\onebar}^{X}$ for \emph{any} bilocal unitary $X$ and \emph{any} frame orientations $g_1,g_2$.

Let us then consider the standard case $A=B=C=1$, so that the condition can be fulfilled.~We then have
\begin{description}
\item[$g_1=g_2=0$] $H_{\onebar}$ is  $\mbfU_{\onebar}$-invariant and so $H_{\onebar}\in\mcA_{\onebar}^{\mathds1}$,
\item[$g_1=g_2=1$] $H_{\onebar}\in \mcA_{\onebar}^{\mathds1 \otimes \sigma^x}$,
\item[$g_1=0$, $g_2=1$] $H_{\onebar}\in \mcA_{\onebar}^{\sigma^x \otimes \sigma^x}$,
\item[$g_1=1$, $g_2=0$] $H_{\onebar}\in \mcA_{\onebar}^{\sigma^x \otimes \mathds1}$.
\end{description}
Note that the last three cases are related to the first  according to  Lemma~\ref{lem_Agg}.
\end{example}

The example highlights that not every operator will reside in one of the $\mcA_{\ibar}^{X}$.
Those that do not are operators whose locality structure and appearance will be altered by the QRF transformations (such as the interactions in the Ising Hamiltonian $H_{\onebar}$ above for $|B|\neq|C|$). 

On the other hand, TPS-invariant operators have the same degree of locality term-wise by construction.~Indeed, let us write any $f_{\ibar} \in \mcA_{\ibar}$ in the following  manner,
\be 
\label{fdecomp}
f_{\ibar} = f_{j} \otimes \mathds1_{S} + \mathds1_{j} \otimes f_{S} + f_{jS}\,,   
\ee
where $f_{jS}$ is fully non-local across the $jS$ factorisation.~Then for $f_{\ibar} \in \mcA_{\ibar}^{X}$, by Lemma \ref{Lemma:tpseq} we have,
\be \label{fjSwfi}
f_{\jbar} = \tilde{f}_{i} \otimes \mathds1_{S} + \mathds1_{i} \otimes \hat{Z}_{S}^\dagger (f_{S}) + \hat{\mcI}_{i\to j} \hat{X}^\dagger (f_{jS})\,,
\ee
where we have denoted 
\be \label{ftildei}
\tilde{f}_i = \sum \limits_{g,g'} \bra{g} \hat{Y}_{j}^\dagger (f_j) \ket{g'}_j \ket{g}\!\bra{g'}_i \,.
\ee
Up to the unitary local basis change $Y_j$, $\tilde f_i$  has the same matrix elements as $f_j$ did in $R_i$-perspective.
Comparing  \eqref{fdecomp} and \eqref{fjSwfi}, it is clear that the degree of locality is term-wise identical. In the coming subsections, we shall characterise when $R_j$- and $S$-local operators, such as $f_j\otimes\mathds1_S$ and $\mathds1_j\otimes f_S$, reside in some $\mcA_{\ibar}^{X}$ and provide sufficient conditions for general non-local operators thereafter. We will see shortly that the possibilities for $X$ will be rather limited and that \eqref{fjSwfi} will take a much simpler form.
 
Note that the eigenstates of any self-adjoint observable cannot all simultaneously belong to the \emph{same} $\mcA_{\ibar}^{X}$, regardless of whether this observable belongs to this (or any other) subalgebra. If they did, all states in $\mcH_{\ibar}$ would be invariant up to the same bilocal unitary $X$ under QRF changes in contradiction of Lemmas~\ref{lem_TPSineq} and Theorem~\ref{lem_TPSineqalg}. For example, the pure states residing in the spectral decomposition of any density matrix $\rho_{\ibar}$ cannot  all belong to the same TPS-invariant subalgebra.
This will become relevant in Sec.~\ref{Sec:states&entanglement}.

\subsection{TPS-invariant system operators}

Let us now provide necessary and sufficient conditions for $S$-local observables $\mathds1_j\otimes f_S$ to be up to local unitaries QRF change invariant.~This will have many uses later, e.g.\ when characterising QRF-invariant Gibbs states, or the conditions under which a closed subsystem dynamics remains closed under QRF transformation.

To this end, consider the $G$-twirl, or incoherent group averaging, over $S$-translations 
\be \label{pitdef}
\hat{\Pi}_{\mbt} (f_{\ibar}) := \frac{1}{|\mcG|}\sum_{g \in \mcG}  (\mathds1_{j}\otimes U^g_S) \, f_{\ibar} \, (\mathds1_{j}\otimes U^{g^{-1}}_S)
\ee
for any $f_{\ibar} \in \mcA_{\ibar}$. In Appendix \ref{App:algebrasplitting}, we establish the following properties: $\hat{\Pi}_{\mbt}: \mcA_{\ibar} \to \mcA_{\ibar}^\mbt$ is an orthogonal projector, whose image $\mcA_{\ibar}^{\mbt} := \hat{\Pi}_{\mbt}(\mcA_{\ibar})$ is a unital $*$-subalgebra of $\mcA_{\ibar}$. Its orthogonal complement, with respect to the Hilbert-Schmidt inner product, $\mathcal{C}_{\ibar}^{\mbt} := \hat{\Pi}_{\mbt}^{\perp}(\mcA_{\ibar})$, where $\hat{\Pi}_{\mbt}^{\perp}= \hat{\mathds1}_{\ibar} - \hat{\Pi}_{\mbt}$, is a strict subspace of $\mcA_{\ibar}$ but does not close under multiplication as an algebra. 
We may denote their elements respectively by $f_{\ibar}^{\mbt} \in \mcA_{\ibar}^{\mbt}$ and $f_{\ibar}^{\mbt\perp} \in \mathcal{C}_{\ibar}^{\mbt}$. The full algebra can be decomposed into the image and kernel of $\hat{\Pi}_{\mbt}$ as 
\be \label{algtdecomp}
\mcA_{\ibar} = \mcA_{\ibar}^{\mbt} \oplus \mathcal{C}_{\ibar}^{\mbt}\,,
\ee
where the direct sum is with respect to the Hilbert-Schmidt duality.~The subalgebra $\mcA_{\ibar}^{\mbt}$ is completely characterised by invariance under local $S$-translations.~For any $f_{\ibar}\in \mcA_{\ibar}$, it holds that
\be \label{ugscomm}
[\hat{\Pi}_{\mbt}(f_{\ibar}),\mathds1_j \otimes U^g_S] = 0\,, \;\;\; \forall g \in \mcG \,.
\ee 
In particular, $\hat{\Pi}_{\mbt}(f_{\ibar}) = f_{\ibar}$ if and only if $[f_{\ibar}, \mathds1_{j}\otimes U^g_S] = 0, \; \forall g\in \mcG$.~Thus, $\hat{\Pi}_{\mbt}$-invariance is equivalent to commutativity with all local $U^g_S$-translations.~Therefore, any $f_{\ibar} \in \mcA_{\ibar}$ can be written in terms of the $(\mathds1_j\otimes U^g_S)$-invariant and non-invariant components as $f_{\ibar} = f_{\ibar}^{\mbt} + f_{\ibar}^{\mbt\perp}$. 

Interestingly, this algebra decomposition is invariant under QRF transformations, since the projector itself is (see Appendix~\ref{App:algebrasplitting}):
\be\label{mapcomm0}
[\hat{\mbfU}_{\ibar}^{g_i,g_j}, \hat{\Pi}_{\mbt}]=0, 
\ee
or, equivalently,
\be
\hat{\Pi}_{\mbt} \circ \hat{V}^{g_i,g_j}_{i \to j} = \hat{V}^{g_i,g_j}_{i \to j} \circ \hat{\Pi}_{\mbt} \,, \label{pivcomp0} 
\ee
where the projector on the LHS is implicitly understood to act on $\mcA_{\jbar} = \hat{V}_{i \to j}^{g_i,g_j}(\mcA_{\ibar})$, while on the RHS it acts on $\mcA_{\ibar}$.

We are now ready to characterise when a local subsystem observable is invariant under QRF transformation induced TPS changes.
\begin{theorem}[$\mathbf{S}$\textbf{-operators}]\label{thm:Slocals}
Let $\mathds1_j\otimes f_S \in \mcA_{\ibar}$ be some $S$-local operator relative to frame $R_i$.~Then $\mathds1_j\otimes f_S\in\mcA_{\ibar}^{X}$ for some bilocal unitary $X$ (and some orientations $g_i,g_j\in\mcG$) if and only if it lies in the image of $\Pit$, i.e.\ if and only if $f_S$ is translation invariant, $[f_S,U_S^g]=0$, $\forall\,g\in\mcG$.~In particular, $\mathds1_j\otimes f_S\in\mcA_{\ibar}^\mbt$ resides in exactly all those subalgebras $\mcA_{\ibar}^{X}$, with $X$ of the form $X=Y_j\otimes Z_S$, where the $S$-unitary $Z_S$ commutes with $f_S$, $[Z_S,f_S]=0$, and $Y_j$ is an arbitrary $R_j$-unitary.
\end{theorem}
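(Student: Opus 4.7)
The plan is to evaluate $\hat{\mbfU}_{\ibar}^{g_i,g_j}(\mathds1_j\otimes f_S)$ in closed form using the explicit expression \eqref{TPSschangemap} and then invoke Lemma~\ref{Lemma:tpseq} to read off the constraints on $f_S$ and on $X$. First I would expand $\mbfU_{\ibar}^{g_i,g_j}(\mathds1_j\otimes f_S)(\mbfU_{\ibar}^{g_i,g_j})^\dag$: the double sum collapses via orthogonality of the group basis on $\mcH_j$ to a single sum of the shape $\sum_{g\in\mcG}\ket{g_ig}\!\bra{g_ig}_j\otimes U_S^g f_S U_S^{g^{-1}}$. This is the only genuine calculation in the proof.

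Next I would compare this expression with $\hat{X}^\dag(\mathds1_j\otimes f_S)$ for a generic bilocal $X=Y_j\otimes Z_S$. A key observation that streamlines the argument is that $Y_j$ drops out automatically ($Y_j^\dag\mathds1_jY_j=\mathds1_j$), so the RHS is forced to be $R_j$-local, $\mathds1_j\otimes Z_S^\dag f_S Z_S$. This in turn demands $R_j$-locality of the LHS; since $\{\ket{g_ig}\!\bra{g_ig}_j\}_{g\in\mcG}$ is a linearly independent family of orthogonal rank-one projectors, the $S$-factor $U_S^g f_S U_S^{g^{-1}}$ must be independent of $g$. Comparing any two values of $g$ yields $[f_S,U_S^g]=0$ for all $g\in\mcG$, i.e.\ translation invariance of $f_S$, which by \eqref{ugscomm} is equivalent to $\mathds1_j\otimes f_S\in\mathrm{Im}\,\Pit$. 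Specialising to $g=e$ then gives $Z_S^\dag f_S Z_S=f_S$, i.e.\ $[Z_S,f_S]=0$, while $Y_j$ is left unconstrained.

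The converse and the refined statement about which $\mcA_{\ibar}^{X}$ the operator belongs to are then immediate: if $[f_S,U_S^g]=0$ the formula for $\hat{\mbfU}_{\ibar}^{g_i,g_j}(\mathds1_j\otimes f_S)$ collapses to $\mathds1_j\otimes f_S$ (whence $\mathds1_j\otimes f_S\in\mcA_{\ibar}^{\mathds1}\subset\mcA_{\ibar}^{\mbt}$), and for any $Z_S$ commuting with $f_S$ and any unitary $Y_j$ one also has $\hat{X}^\dag(\mathds1_j\otimes f_S)=\mathds1_j\otimes f_S$, giving membership in exactly the advertised family of $\mcA_{\ibar}^{X}$. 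I do not expect any serious obstacle: the calculation is essentially bookkeeping in the $R_j$-group basis, with the decisive conceptual step being the recognition that $R_j$-locality of $\hat{\mbfU}_{\ibar}^{g_i,g_j}(\mathds1_j\otimes f_S)$ is forced by the bilocal form of $X$ and, via linear independence of the diagonal projectors on $\mcH_j$, immediately imposes translation invariance of $f_S$.
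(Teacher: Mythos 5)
Your proof is correct and follows essentially the same route as the paper's: both evaluate $\hat{\mbfU}_{\ibar}^{g_i,g_j}(\mathds1_j\otimes f_S)=\sum_{g}\ket{g_ig}\!\bra{g_ig}_j\otimes \hat U_S^{g}(f_S)$, note that $Y_j$ drops out so the image must be $S$-local, and use linear independence of the diagonal projectors on $\mcH_j$ to force $g$-independence of $\hat U_S^g(f_S)$, hence $[f_S,U_S^g]=0$ and $[Z_S,f_S]=0$ (the paper phrases the same step via the splitting $f_S=f_S^{\mbt}+f_S^{\mbt\perp}$). One inessential slip: the parenthetical inclusion $\mcA_{\ibar}^{\mathds1}\subset\mcA_{\ibar}^{\mbt}$ is false in general (e.g.\ $\ket{00}\!\bra{01}_{2S}$ in Example~\ref{Ex:AUqubits} lies in $\mcA_{\onebar}^{\mathds1}$ but not in the image of $\Pit$); the correct reason $\mathds1_j\otimes f_S\in\mcA_{\ibar}^{\mbt}$ is simply that $[f_S,U_S^g]=0$ together with \eqref{ugscomm}.
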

This highlights the role of the projector $\hat\Pi_\mbt$. The proof is given in Appendix~\ref{app:thmproof-S&jlocals}.~In short, $\mathds1_j\otimes f_S$ belongs to \emph{some} $\mcA_{\ibar}^{X}$ if and only if $f_S$ is translation-invariant.~We note that, at the perspective-neutral level, this corresponds precisely to the internal relational observables within $S$ that define the overlap $\mcA^{\rm phys}_{S|R_1}\cap\mcA^{\rm phys}_{S|R_2}$ of the relational observable algebras describing $S$ relative to $R_1$ and $R_2$, as characterised by Theorem~\ref{lem_TPSineqalg}.~These are thus the only $S$-local observables whose locality structure does not change under QRF transformations.

What is remarkable, however, is the fact that all of these observables are not only invariant up to local unitaries under QRF changes, but \emph{exactly} invariant regardless of the frame orientations $g_i,g_j$ (they thus characterise case (ii) from the introduction of this section for $S$-observables). That is, $S$-local observables that are $\mbfU_{X}$-invariant up to non-trivial local unitaries with which they do not commute do not exist; they all belong to $\mcA_{\ibar}^{\mathds1}$ for arbitrary $g_i,g_j$, as well as to any $\mcA_{\ibar}^{X}$ as long as they commute with $X$. Note that this is consistent with Lemma~\ref{lem_Agg}, given that $f_S$ commutes with both $Z_S$ and $U_S^g$. 

Conversely, if $\mathds1_j\otimes f_S$ is not fully translation invariant, then its image $f_{iS}=\hat{V}_{i\to j}^{g_i,g_j}(\mathds1_j\otimes f_S)$ under a QRF transformation will necessarily be non-local with non-trivial contributions from both the $R_i$- and $S$-tensor factors of the natural TPS in $R_j$-perspective.

These observations are essential generalisations of \cite[Lemma 24]{PHinternalQRF} and \cite[Corollary 4]{PHtrinity}.

\subsection{TPS-invariant ``other frame'' operators}

Next, we wish to provide necessary and sufficient conditions for ``the other frame'' observables, i.e.\ $R_j$-local observables $f_j\otimes\mathds1_S$, to be up to local unitaries QRF change invariant.~Also this result will have many uses later, such as when studying how QRF transformations can affect subsystem dynamics.

Consider the incoherent average over conditioning any operator $f_{\ibar} \in \mcA_{\ibar}$ on the ``other frame's'' orientation
\be \label{piddef}
\hat{\Pi}_{\mbd}(f_{\ibar}) := \sum_{g \in \mcG} (\ket{g}\!\bra{g}_j \otimes \mathds1_S) \, f_{\ibar} \, (\ket{g}\!\bra{g}_j \otimes \mathds1_S)\,.
\ee
The action of $\hat{\Pi}_{\mbd}$ is to diagonalise the $R_j$ tensor factor in the group basis, hence the subscript $\mbd$.\footnote{E.g., for density operators, $\hat{\Pi}_{\mbd}(\rho_{\ibar}) = \sum \limits_{g\in \mcG}\braket{g|\rho_{j}|g}_{\!j} \rho_{\ibar}^{R_j|g}$, where $\rho_{j} = \Tr_S(\rho_{\ibar})$ is the reduced state of $R_j$ and $\rho_{\ibar}^{R_j|g}=M_g \rho_{\ibar} M_g^\dagger/\Tr(M_g \rho_{\ibar} M_g^\dagger)$ with $M_g = \ket{g}\!\bra{g}_j \otimes \mathds1_S$ is the state of $\ibar=R_jS$ conditional on $R_j$ being in orientation $g$. Hence, $\hat{\Pi}_{\mbd}(\rho_{\ibar})$ is an average of conditional states weighted by the probability $\braket{g|\rho_{j}|g}_{\!j}$ that $R_j$ is in orientation $g$.\label{PidPVMaverage}}~In complete analogy to $\hat\Pi_\mbt$, we prove the following properties in Appendix \ref{App:algebrasplitting}: $\hat{\Pi}_{\mbd}: \mcA_{\ibar} \to \mcA_{\ibar}^\mbd$, is  an orthogonal projector, whose image $\mcA_{\ibar}^{\mbd} := \hat{\Pi}_{\mbd}(\mcA_{\ibar})$ is a unital $*$-subalgebra of $\mcA_{\ibar}$, while its Hilbert-Schmidt complement $\mathcal{C}_{\ibar}^{\mbd} := \hat{\Pi}_{\mbd}^{\perp}(\mcA_{\ibar})$, where $\hat{\Pi}_{\mbd}^{\perp} = \hat{\mathds1}_{\ibar} - \hat{\Pi}_{\mbd}$, is only a subspace but does not close as an algebra.~Their elements may be respectively denoted by $f_{\ibar}^{\mbd} \in \mcA_{\ibar}^{\mbd}$ and $f_{\ibar}^{\mbd\perp} \in \mathcal{C}_{\ibar}^{\mbd}$ and the full algebra can again be decomposed into the image and kernel of $\hat{\Pi}_{\mbd}$ as
\be \label{algddecomp}
\mcA_{\ibar} = \mcA_{\ibar}^{\mbd} \oplus \mathcal{C}_{\ibar}^{\mbd} \,.
\ee
It is clear that for any $f_{\ibar} \in \mcA_{\ibar}$
\be \label{pvmgcomm}
[\hat{\Pi}_{\mbd}(f_{\ibar}),\ket{g}\!\bra{g}_j \otimes \mathds1_S] = 0\,, \;\;\; \forall g \in \mcG \,.
\ee
We also have $\Pid(f_{\ibar}) = f_{\ibar} $ if and only if $[f_{\ibar}, \ket{g}\!\bra{g}_j \otimes \mathds1_S] = 0$, $\forall g\in \mcG$. Any $f_{\ibar} \in \mcA_{\ibar}$ can be written as a $R_j$-diagonal and off-diagonal piece $f_{\ibar} = f_{\ibar}^{\mbd} + f_{\ibar}^{\mbd\perp}$.

This  decomposition is also invariant under QRF transformations because the projector is (see Appendix \ref{App:algebrasplitting}):
\be \label{mapcomm}
[\hat{\mbfU}_{\ibar}^{g_i,g_j}, \hat{\Pi}_{\mbd}]=0, 
\ee
or, equivalently,
\be
\hat{\Pi}_{\mbd} \circ \hat{V}^{g_i,g_j}_{i \to j} = \hat{V}^{g_i,g_j}_{i \to j} \circ \hat{\Pi}_{\mbd} \,. \label{pivcomp}
\ee

This puts us in the position to identify necessary and sufficient conditions for ``other frame'' observables to be invariant up to local unitaries under QRF changes.
\begin{theorem}[\textbf{Frame-operators}]\label{thm:jlocals}
Let $f_j \otimes \mathds1_S \in \mcA_{\ibar}$ be some $R_j$-local operator relative to frame $R_i$.~Then $f_j\otimes\mathds1_S\in\mcA_{\ibar}^{X}$ for some bilocal unitary $X$ (and some orientations $g_i,g_j\in\mcG$) if and only if it lies in the image of $\Pid$, i.e.\ if and only if $f_j$ is diagonal in the frame orientation basis, $f_j=\sum_g f_g\ket{g}\!\bra{g}_j$.~In particular, $f_j\otimes\mathds1_S\in\mcA_{\ibar}^\mbd$ resides in exactly all those subalgebras $\mcA_{\ibar}^{X}$ with $X=Y_j\otimes Z_S$ such that $Z_S$ is an arbitrary $S$-unitary and the action of the $R_j$-unitary $Y_j$ on $f_j$ is equivalent to that of the (unitary) \emph{parity-swap operator}
\be\label{Pswap}
P_j^{g_i,g_j}:=\sum_g\ket{g_ig}\!\bra{g_jg^{-1}}_j
\ee
in the form 
\be \label{Pswap2}
\hat Y^\dag_j(f_j)=\hat{P}_j^{g_i,g_j}(f_j)\,.
\ee
\end{theorem}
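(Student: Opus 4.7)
The plan is to attack both directions by computing $\hat{\mathbf{U}}_{\ibar}^{g_i,g_j}(f_j\otimes\mathds1_S)$ directly from \eqref{TPSschangemap}, with $\Pid$ playing structurally the same role here that $\Pit$ did in Theorem~\ref{thm:Slocals}. After reindexing and exploiting abelianness of $\mathcal{G}$, the key computation reduces to the elementary identity
\begin{equation*}
\hat{\mathbf{U}}_{\ibar}^{g_i,g_j}\bigl(\ket{a}\!\bra{b}_j\otimes\mathds1_S\bigr) = \ket{g_ig_ja^{-1}}\!\bra{g_ig_jb^{-1}}_j\otimes U_S^{a^{-1}b}\,,
\end{equation*}
which exposes the structural point: every off-diagonal matrix element of $f_j$ in the group basis ($a\neq b$) is paired in the image with a non-identity $S$-factor $U_S^{a^{-1}b}$, while diagonal elements ($a=b$) automatically carry $\mathds1_S$.

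For the ``if'' direction, I would take $f_j=\sum_g f_g\ket{g}\!\bra{g}_j$ in the image of $\Pid$. Only $a=b$ terms contribute, so linearity and the key identity give $\hat{\mathbf{U}}_{\ibar}^{g_i,g_j}(f_j\otimes\mathds1_S)=\sum_g f_g\ket{g_ig_jg^{-1}}\!\bra{g_ig_jg^{-1}}_j\otimes\mathds1_S$, and a direct comparison with \eqref{Pswap} identifies the $R_j$-factor as $\hat P_j^{g_i,g_j}(f_j)$. Choosing any $X=Y_j\otimes Z_S$ with $\hat Y_j^\dag(f_j)=\hat P_j^{g_i,g_j}(f_j)$ (e.g.\ $Y_j=(P_j^{g_i,g_j})^\dag$) and $Z_S$ completely arbitrary\,---\,since $Z_S^\dag\mathds1_SZ_S=\mathds1_S$\,---\,then matches $\hat X^\dag(f_j\otimes\mathds1_S)$ with the image.

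For the ``only if'' direction, membership $f_j\otimes\mathds1_S\in\mathcal{A}_{\ibar}^X$ forces the full image $\sum_{a,b}f_{ab}\,\ket{g_ig_ja^{-1}}\!\bra{g_ig_jb^{-1}}_j\otimes U_S^{a^{-1}b}$ to lie in $\mathcal{A}_j\otimes\mathds1_S$. For a faithful $S$-representation, the operators $\ket{g_ig_ja^{-1}}\!\bra{g_ig_jb^{-1}}_j\otimes U_S^{a^{-1}b}$ are linearly independent across distinct $(a,b)$-pairs, so every off-diagonal $f_{ab}$ must vanish, placing $f_j$ in the image of $\Pid$. Matching the surviving $a=b$ terms against the required $Y_j^\dag f_j Y_j$ then reproduces exactly the characterising relation \eqref{Pswap2}, with $Z_S$ unconstrained.

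The main technical obstacle is the faithfulness step in the ``only if'' direction: when $\ker U_S\neq\{e\}$, off-diagonal $f_{ab}$ with $a^{-1}b\in\ker U_S$ would also contribute pure $\mathds1_S$-pieces, so strict diagonality of $f_j$ could conceivably be weakened to ``diagonal modulo $\ker U_S$.'' I would address this either by restricting attention to faithful $S$-representations (in line with the spirit of the examples in the paper) or by passing to the faithful quotient $\mathcal{G}/\ker U_S$; on each $\ker U_S$-coset the QRF change reduces to a purely $R_j$-local parity-swap, and that freedom can be absorbed into the choice of $Y_j$ without altering the final statement.
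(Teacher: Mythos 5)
Your proof is correct and follows essentially the same route as the paper's: both expand $\hat{\mbfU}_{\ibar}^{g_i,g_j}(f_j\otimes\mathds1_S)$ in the group basis, use linear independence to force the off-diagonal (i.e.\ $\Pid^{\perp}$) part of $f_j$ to vanish because those terms carry non-trivial $U_S$-factors, and identify the surviving diagonal part's transformation with the parity-swap operator. Your explicit flagging of the faithfulness of $U_S$ is a point the paper's proof glosses over (it simply asserts $U_S^{gg'^{-1}}\neq\mathds1_S$ whenever $g\neq g'$), so the caveat is well taken, and resolving it by restricting to faithful $S$-representations matches the paper's implicit assumption.
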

The role of the projector $\Pid$ is clarified by this result. Its proof is given in Appendix~\ref{app:thmproof-S&jlocals}. The parity-swap operator was originally introduced in \cite{GiacominiQMcovariance:2019} and appears frequently in the context of QRF transformations, e.g.\ see \cite{PHtrinity,Vanrietvelde:2018pgb,Vanrietvelde:2018dit,Hohn:2018toe,CastroRuizQuantumclockstemporallocalisability,delaHammetteQRFsforgeneralsymmetrygroups}; it encodes that the relative orientation between the two frames as seen from the perspective of $R_i$ ``changes sign'' when transforming to $R_j$-perspective. Here we generalise it slightly to arbitrary frame orientations $g_i,g_j$. It is the pure ``other frame'' part of \eqref{TPSschangemap}.

The theorem implies that any $f_j\otimes\mathds1_S\in\mcA_{\ibar}^\mbd$ resides, for instance, in $\mcA_{\ibar}^{X}$ (for arbitrary orientations $g_i,g_j$) with $X=(P_j^{g_i,g_j})^\dag\otimes Z_S$ and $Z_S$ arbitrary, but it may belong to more subalgebras.\footnote{For example, consider the qubit case $\mathcal G=\mathbb Z_2$ with frame orientations $g_1=0,g_2=1$, and the $\Pi_{\mbd}$-invariant operator $f_{\onebar}=\sigma_2^z\otimes\mathds1_S$.~In this case, $\mbfU_{\onebar}^{0,1}=\ket{0}\!\bra{1}_2\otimes\mathds1_S+\ket{1}\!\bra{0}_2\otimes\sigma_S^x$ and $P_2^{0,1}=\sigma_2^x$.~Thus, $f_{\onebar}\in\mcA_{\onebar}^{P_2^{0,1}\otimes Z_S}\cap\mcA_{\onebar}^{Y_2\otimes Z_S'}$ with arbitrary $Z_S,Z_S'$, and $Y_2=\sigma_2^y$ (cfr.~\eqref{Pswap2}).}

Notice that, in contrast to the case of TPS-invariant local $S$ observables, covered in Theorem~\ref{thm:Slocals}, which lie in $\mcA_{\ibar}^{\mathds1}$ for \emph{arbitrary} orientations, not all TPS-invariant local $R_j$-observables lie in the same subalgebra $\mcA_{\ibar}^{X}$ for some fixed bilocal unitary $X$ independent of orientations.~In particular, the above result does not imply that all $f_j\otimes\mathds1_S\in\mcA_{\ibar}^\mbd$ also belong to $\mcA_{\ibar}^{\mathds1}$.~That is, not all $R_j$-local observables that are up to local unitaries invariant under QRF transformations (mapping into $R_i$-local observables) are also exactly invariant.~However, the way in which $\mbfU_{\ibar}^{g_i,g_j}(f_j\otimes\mathds1_S)$ and $f_j\otimes\mathds1_S$ can differ is very limited: \eqref{mapcomm} and~\eqref{pivcomp} imply that also the QRF transformed operator is diagonal in the configuration basis.~Since the QRF transformation is unitary, this means that 
$\mbfU_{\ibar}^{g_i,g_j}(f_j\otimes\mathds1_S)$ and $f_j\otimes\mathds1_S$ can only differ by permutation of their eigenvalues on the diagonal in the $\mcG$-basis.~Indeed, this is what one also finds by inserting \eqref{Pswap2} into \eqref{ftildei}.

It is worth stressing that the theorem also implies that the QRF transformation $f'_{iS}=\hat V_{i\to j}^{g_i,g_j}(f_j\otimes\mathds1_S)$ of any $f_j$ that is \emph{not} diagonal in the configuration basis  necessarily results in a non-local operator with non-trivial contributions from both the $R_i$- and $S$-tensor factors of the natural TPS in $R_j$-perspective.

In analogy to the case of $S$-local observables covered by Theorem~\ref{thm:Slocals}, the relational observables $O^{g_i}_{f_j\otimes\mathds1_S|R_i}=(\hat{\mathcal{R}}_i^{g_i})^\dag(f_j\otimes\mathds1_S)$, corresponding to the TPS-invariant operators $f_j\otimes\mathds1_S\in\mcA_{\ibar}^\mbd$ at the perspective-neutral level, lie at the intersection $\mcA_{R_2|R_1}^{\rm phys}\cap\mcA_{R_1|R_2}^{\rm phys}\subset\Aphys$ of the subalgebras  describing the frame $R_2$ relative to frame $R_1$ and vice versa.
This is schematically depicted in Fig.~\ref{Fig:locframeDO}. 

\begin{figure}[!h]
\centering\includegraphics[scale=0.425]{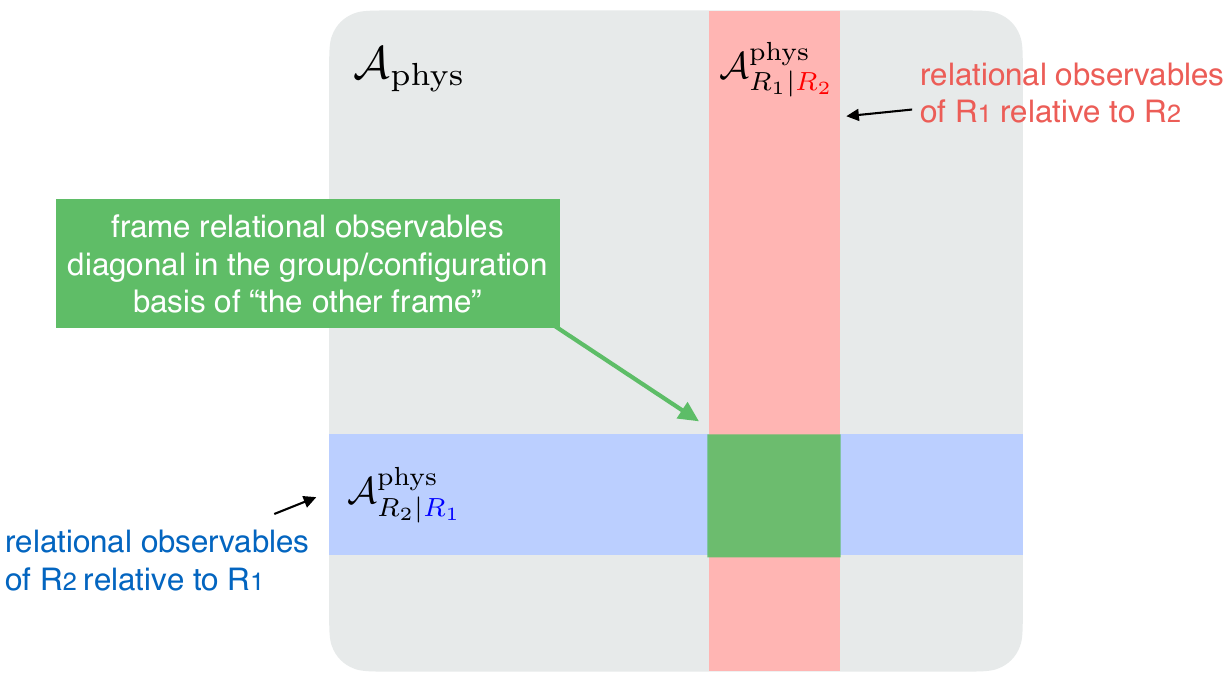}
\caption{Relational operators associated to TPS-invariant $R_j$-local operators relative to $R_i$, as characterised by Theorem~\ref{thm:jlocals} (green).~These are elements of the intersection of the distinct subalgebras of relational observables describing ``the other frame'' $R_j$ relative to $R_i$ and identify the same gauge-invariant notion of complementary d.o.f.~to $S$ in the factorisations relative to the two frames.}
\label{Fig:locframeDO}
\end{figure}

\enlargethispage{\baselineskip}

\subsection{General TPS-invariant operators}

We have characterised when purely $S$-local or $R_j$-local operators are invariant under QRF transformations. Let us now consider general non-local operators as in \eqref{fdecomp}. This will become relevant for density matrices and Hamiltonians later.

We first note that the two projectors defining the algebra decompositions \eqref{algtdecomp} and \eqref{algddecomp} commute as superoperators (see Appendix~\ref{App:algebrasplitting})
\be 
[\hat{\Pi}_{\mbt}, \hat{\Pi}_{\mbd}]=0 \,. \ee
We can thus apply the two decompositions simultaneously, resulting in four components via
\be \label{fulldecomp}
\hat{\mathds1}_{\ibar} = \hat{\Pi}_{\mbd}\circ\hat{\Pi}_{\mbt} + \hat{\Pi}_{\mbd}\circ\hat{\Pi}_{\mbt}^{\perp} + \hat{\Pi}_{\mbd}^{\perp}\circ\hat{\Pi}_{\mbt} +\hat{\Pi}_{\mbd}^{\perp}\circ\hat{\Pi}_{\mbt}^{\perp} \,.
\ee
Next, we observe that, thanks to the identity operator in one of the tensor factors, we have
\begin{align} 
\Pit(\mathds1_j\otimes f_S)&=\Pid\Pit(\mathds1_j\otimes f_S)\,,\\ \Pid(f_j\otimes\mathds1_S)&=\Pit\Pid(f_j\otimes\mathds1_S)\,.
\end{align}
Hence, Theorems~\ref{thm:Slocals} and~\ref{thm:jlocals} imply that the necessary and sufficient condition for $S$- and $R_j$-local operators to be TPS-invariant is to lie in the image of \emph{both} projectors.

The following result shows that this is also a sufficient condition for generic non-local operators.
\begin{lemma} \label{lemma:pinpd-uin}
For any $f_{\ibar} \in \mcA_{\ibar}$ and frame orientations $g_i, g_j \in \mcG$, it holds that
\be \hat{\mbfU}_{\ibar}^{g_i,g_j} \hat{\Pi}_{\mbd} \hat{\Pi}_{\mbt} (f_{\ibar}) = (\hat {P}_{j}^{g_i,g_j}\otimes\hat{\mathds1}_S)\,\hat{\Pi}_{\mbd} \hat{\Pi}_{\mbt} (f_{\ibar}) \,, 
\ee
where $P_j^{g_i,g_j}$ is the parity-swap operator \eqref{Pswap}. In other words, for any $f_{\ibar} \in \mcA_{\ibar}$, we have 
$\hat{\Pi}_{\mbd} \hat{\Pi}_{\mbt} (f_{\ibar})\in\mcA_{\ibar}^{X}$ with $X=(P_j^{g_i,g_j})^\dag\otimes\mathds1_S$.
\end{lemma}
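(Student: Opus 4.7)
The plan is to verify the identity directly on the image of $\Pid\Pit$, after giving a concrete structural description of that image. Since $\Pid$ and $\Pit$ are commuting orthogonal projectors (as stated just before~\eqref{fulldecomp}), their composition is an orthogonal projector onto the intersection of their images. By~\eqref{pvmgcomm}, $\Pid$-invariance forces an operator to be diagonal in the $R_j$ configuration basis, while by~\eqref{ugscomm}, $\Pit$-invariance forces commutativity with every $\mathds1_j\otimes U_S^h$. Combining the two, any element $A:=\Pid\Pit(f_{\ibar})$ admits the explicit form
\begin{equation*}
A=\sum_{g\in\mcG}\ket{g}\!\bra{g}_j\otimes A_g\,,\qquad [A_g,U_S^h]=0\quad\forall\,g,h\in\mcG.
\end{equation*}

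The key structural input is then the factorisation of the QRF transformation~\eqref{TPSschangemap},
\begin{equation*}
\mbfU_{\ibar}^{g_i,g_j}=(P_j^{g_i,g_j}\otimes\mathds1_S)\cdot C_{\ibar}^{g_j}\,,\qquad C_{\ibar}^{g_j}:=\sum_{g\in\mcG}\ket{g_jg^{-1}}\!\bra{g_jg^{-1}}_j\otimes U_S^g\,,
\end{equation*}
which one verifies by a direct multiplication using $\braket{g|g'}_j=\delta_{g,g'}$ and the abelian group law. Because $C_{\ibar}^{g_j}$ is diagonal in the $R_j$ group basis, its conjugation action on a single component $\ket{g}\!\bra{g}_j\otimes A_g$ of $A$ returns $\ket{g}\!\bra{g}_j\otimes U_S^{g_jg^{-1}}A_g U_S^{gg_j^{-1}}$, which collapses to $\ket{g}\!\bra{g}_j\otimes A_g$ thanks to $[A_g,U_S^h]=0$. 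Summing over $g$ gives $\hat C_{\ibar}^{g_j}(A)=A$, and the factorisation above then yields $\hat{\mbfU}_{\ibar}^{g_i,g_j}(A)=(\hat P_j^{g_i,g_j}\otimes\hat{\mathds1}_S)(A)$, which is precisely the identity claimed in the lemma.

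The subalgebra assertion follows immediately from the definition~\eqref{Uinvsubalgebra}: unitarity of $P_j^{g_i,g_j}$, verified directly from~\eqref{Pswap}, implies $\hat{X}^\dag=\hat{P}_j^{g_i,g_j}\otimes\hat{\mathds1}_S$ for $X=(P_j^{g_i,g_j})^\dag\otimes\mathds1_S$, so the identity above reads $\hat{\mbfU}_{\ibar}^{g_i,g_j}(A)=\hat X^\dag(A)$, i.e.\ $A\in\mcA_{\ibar}^{X}$. I do not anticipate any real obstacle: conceptually the lemma is the observation that $\mbfU$ splits into a parity swap on $R_j$ times a controlled $S$-translation, and the latter acts trivially by conjugation on operators that are simultaneously $R_j$-diagonal and $S$-translation-invariant; the remaining steps are abelian group book-keeping already displayed.
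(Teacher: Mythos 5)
Your proof is correct and follows essentially the same route as the paper's: a direct verification on the image of $\Pid\Pit$ using its diagonality in the $R_j$ group basis and the $S$-translation invariance of its blocks, combined with the explicit form of $\mbfU_{\ibar}^{g_i,g_j}$. Your factorisation $\mbfU_{\ibar}^{g_i,g_j}=(P_j^{g_i,g_j}\otimes\mathds1_S)\,C_{\ibar}^{g_j}$ is just a tidy repackaging of the term-by-term manipulation in the paper's computation, so there is nothing further to add.
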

See Appendix \ref{app:thmproof-S&jlocals} for the proof. 

However, residing in the image of both $\Pit,\Pid$ is \emph{not} a necessary condition for non-local observables.~Indeed, consider the following counterexample: $f_{\ibar}=\ket{g}\!\bra{g}_j\otimes f_S$ with $f_S\in\mcA_S$ arbitrary.~Hence, $f_{\ibar}$ lies in the image of $\Pid$, but not necessarily in that of $\Pit$.~Yet, we find 
\be \label{counterexample}
\hat{\mbfU}_{\ibar}^{g_i,g_j}(\ket{g}\!\bra{g}_j\otimes f_S)=\hat X^\dag(\ket{g}\!\bra{g}_j\otimes f_S)\,,
\ee 
i.e.\ $f_{\ibar}\in\mcA_{\ibar}^{X}$, where $X=U_j^{(g_ig_j)^{-1}g^2}\otimes U_S^{g_j^{-1}g}$. 

The interpretation of this example is clear.~Using \eqref{eq:obsredthm} and~\eqref{RelDO}, the corresponding relational observable at the perspective-neutral level is symmetric in $R_i$ and $R_j$,
\begin{eqnarray}
O^{g_i}_{\ket{g}\!\bra{g}_j\otimes f_S|R_i}&=&|\mcG|\,\hat{\Pi}_{\rm phys}(\ket{g_i}\!\bra{g_i}_i\otimes\ket{g}\!\bra{g}_j\otimes f_S) \nonumber\\
&=& O^{g}_{\ket{g_i}\!\bra{g_i}_i\otimes f_S|R_j}\,,\nonumber
\end{eqnarray} 
and can thus equivalently be viewed as a relational observable relative to frame $R_i$ or frame $R_j$. The reduction theorem~\eqref{eq:obsredthm} applied to both frames then explains the observation of QRF transformation invariance up to the bilocal unitary in \eqref{counterexample}.

We leave open what the precise characterising conditions are for generic operators to belong to a TPS-invariant subalgebra.~For later, we note, however, that the condition
 $\hat{\mbfU}_{\ibar}^{g_i,g_j}(f_{\ibar}) = \hat{X}^\dagger (f_{\ibar})$ reads equivalently
 \begin{eqnarray}
 (\hat{\mbfU}_{\ibar}^{g_i,g_j}-\hat X^\dag)(\hat{\Pi}_{\mbd}\hat{\Pi}_{\mbt}^{\perp} + \hat{\Pi}_{\mbd}^{\perp})(f_{\ibar})\qquad\qquad\qquad\qquad&&\label{ugly}\\
 =   (\hat{X}^\dagger - (\hat {P}_{j}^{g_i,g_j}\otimes\hat{\mathds1}_S))\,\hat{\Pi}_{\mbd}\hat{\Pi}_{\mbt}(f_{\ibar})\,.&&\nonumber
 \end{eqnarray} 

Let us illustrate these observations in a simple example, especially the terms one would miss if one restricted to the image of $\Pid\Pit$.
\begin{example}{\bf (3 qubits).}\label{Ex:AUqubits2}~We continue with Example~\ref{Ex:AUqubits} of three qubits. Consider a generic $R_2S$-observable 
\be\label{Ex:2qbobs}
A_{\onebar}=\sum_{\mu,\nu=0}^3 A_{\mu\nu}\sigma_2^\mu\otimes\sigma_S^\nu\;,
\ee
with coefficients $A_{\mu\nu}$ and the conventions $\sigma^0=\mathds1$, and $\{123\}=\{xyz\}$ for the Pauli matrices.~The four basis elements $\sigma_2^\mu\otimes\sigma_S^\nu$ with $\mu\nu=00,01,30,31$ are left invariant under conjugation by $\mbfU_{\onebar}$.~These are precisely the $\Pid\Pit$-invariant terms in \eqref{Ex:2qbobs} which, according to Lemma~\ref{lemma:pinpd-uin}, are always $\mbfU_{\onebar}$-invariant.~Their coefficients are thus unconstrained.~The remaining 12 terms are mapped pairwise into one another and must satisfy the following requirement in order to obey $\hat{\mbfU}_{\onebar}(A_{\onebar})=A_{\onebar}$: 
\begin{align}
&A_{10}=A_{11}\,,\quad A_{12}=A_{23}\,,\quad A_{13}=A_{22}\,,\nonumber\\
&A_{03}=A_{33}\,,\quad A_{02}=A_{32}\,,\quad A_{20}=A_{21}\,.\nonumber
\end{align}
These conditions are precisely those ensuring \eqref{ugly} holds with $X=\mathds1_2\otimes\mathds1_S$.~Note that for $\mathcal G=\mathbb Z_2$ we have $P_j^{0,0}=\mathds1_j$, so that the RHS of \eqref{ugly} vanishes.
\end{example}

\section{QRF-covariance of correlations and entropies}\label{Sec:states&entanglement}

As a first physically relevant application of the abstract framework developed in the previous sections, let us study the interplay between QRF transformations, (reduced) states of the subsystem $S$ at a given instant of time, and frame-induced relativity of quantum correlations among subsystems.~The study of QRF-covariance of a subsystem's properties over time will be the subject of Secs.~\ref{Sec:Dyn},~\ref{Sec:TD}.

The fact that QRF transformations affect entanglement and correlations was first demonstrated through examples in \cite{GiacominiQMcovariance:2019,Angelo_2011,Vanrietvelde:2018pgb,Vanrietvelde:2018dit,PHtrinity,CastroRuizQuantumclockstemporallocalisability,delaHammetteQRFsforgeneralsymmetrygroups} and later systematically explained in \cite{PHquantrel,delaHamette:2021oex,Castro-Ruiz:2021vnq} where the relativity of the notion of quantum subsystems has been formalised, on which we expanded in Sec.~\ref{Sec:QuantRelTPSs}.~What has been missing so far, however, is a thorough investigation into the conditions under which the description of certain properties of a subsystem $S$ remain invariant when changing from one QRF to another.~A characterisation of such conditions in turn clarifies when the same properties \emph{do} change under QRF transformations.~Clearly, different properties require different levels of (non-)invariance.~For example, in order for $S$'s correlations with its complement to remain invariant, its reduced state $\rho_S$ has to be the same up to unitaries in the different QRF perspectives.~For the internal correlations of $S$ to remain invariant requires these unitaries to further be multilocal across the internal TPS of $S$.~For all $S$ properties to be the same in the two perspectives requires $\rho_S$ to be invariant.~The algebraic analysis of the previous sections, in particular the identification of the operator subalgebras $\mcA_{\ibar}^{X}\subset\mcA_{\ibar}$ commuting with the $\mbfU_{\ibar}$-map \eqref{TPSschangemap} up to local unitaries $X=Y_j\otimes Z_S$, allows us to launch such an investigation.

\subsection{Correlations between $S$ and ``the other frame''}

Given any global density operator $\rho_{\text{phys}}\in\mathcal S(\mathcal H_{\text{phys}})$ at the perspective-neutral level, the corresponding state $\rho_{\ibar}(g_i)=\hat{\mathcal{R}}_i^{g_i}(\rho_{\text{phys}})\in\mathcal S(\mathcal H_{\ibar})$ is its representation in the natural TPSs relative to the frame $R_i$.~Due to the inequivalence of the natural TPSs in $R_1$- and $R_2$-perspective (cfr.~Lemma~\ref{lem_TPSineq}), such perspectival representations of the same global states will generically  exhibit different gauge-invariant correlation structures between the subsystem $S$ and its complement.~Generically, the ensuing reduced states $\rho_S$ of $S$ will thus be inequivalent (e.g.\ of distinct rank) relative to the two frames.~However, there are special cases when these correlation structures and subsystem states are equivalent in the two perspectives: as discussed in the previous section (cfr.~Lemma \ref{Lemma:tpseq} and the surrounding discussion), the action of the QRF transformation on elements of $\mcA_{\ibar}^{X}$ amounts to the exchange of the frame subsystems and conjugation by local unitaries. Such operators -- in particular density matrices -- will thus exhibit the same degree of locality and structure of correlations/interactions relative to the two frames.~As we shall see, however, these are not the only ones.

\subsubsection{Pure global states}

Let us start with the characterisation of the pure state case: the requirement of the total state $\rho_{\ibar}$ to belong to a subalgebra $\mcA_{\ibar}^{X}$, for some given bilocal unitary $X=Y_j\otimes Z_S$ provides us with a necessary and sufficient condition for the subsystem states relative to the two frames to be unitarily equivalent.
\begin{theorem}[\textbf{Unitarily related subsystem states}]\label{claim:purestatesrhoSZS}
Let $\rho_{\ibar}\in\mathcal S(\mathcal H_{\ibar})$ be a pure state relative to frame $R_i$ in orientation $g_i$.~Let $\rho_{\jbar}=\hat{V}_{i\to j}^{g_i,g_j}(\rho_{\ibar})\in\mathcal S(\mathcal H_{\jbar})$ be the corresponding QRF-related pure state relative to frame $R_j$ in orientation $g_j$.~Then, the $S$-subsystem states $\rho_{S|R_i}=\Tr_j(\rho_{\ibar})$ and $\rho_{S|R_j}=\Tr_i(\rho_{\jbar})$ in the two frame perspectives are unitarily related, i.e.~$\exists\, Z_S\in\mathcal U(\mathcal H_S)$ s.t.
\be\label{eq:rhoSZSpure}
\rho_{S|R_j}=\hat Z_S^{\dagger}(\rho_{S|R_i})
\ee
if and only if $\rho_{\ibar}\in\mcA_{\ibar}^{X}$ with $X=Y_j\otimes Z_S$ for some local unitary $Y_j \in \mathcal{U}(\mcH_{j})$, i.e.
\be\label{rhoibarinAU}
\hat\mbfU_{\ibar}^{g_i,g_j}(\rho_{\ibar})=(\hat Y_{j}^\dagger\otimes\hat{Z}_S^{\dagger})(\rho_{\ibar})\;.
\ee
\end{theorem}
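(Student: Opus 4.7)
~The plan is to convert the TPS-invariance condition $\rho_{\ibar}\in\mcA_{\ibar}^{X}$ with $X=Y_j\otimes Z_S$ into a statement about $\rho_{\jbar}$ by invoking Lemma~\ref{Lemma:tpseq}, which tells us that \eqref{rhoibarinAU} is equivalent to $\rho_{\jbar}=\hat{V}_{i\to j}^{g_i,g_j}(\rho_{\ibar})=\hat{\mcI}_{i\to j}(\hat Y_j^\dagger\otimes\hat Z_S^\dagger)(\rho_{\ibar})$.~The key structural fact I will exploit throughout is that the frame swap $\mcI_{i\to j}$ (cfr.~\eqref{eq:Ijtoi}) is a unitary isomorphism $\mcH_j\otimes\mcH_S\to\mcH_i\otimes\mcH_S$ that merely relabels the frame tensor factor while acting trivially on $S$;~in particular, $\Tr_i[\hat{\mcI}_{i\to j}(\sigma_{\ibar})]=\Tr_j(\sigma_{\ibar})$ for every $\sigma_{\ibar}\in\mcA_{\ibar}$.

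The \emph{only if} direction is then a short calculation.~Substituting the rewritten form of \eqref{rhoibarinAU} into $\rho_{S|R_j}=\Tr_i(\rho_{\jbar})$ and pushing the partial trace through the swap yields $\Tr_j[(\hat Y_j^\dagger\otimes\hat Z_S^\dagger)(\rho_{\ibar})]$;~cyclicity of the $R_j$-trace eliminates the unitary $Y_j$ before it is traced out, leaving $\hat Z_S^\dagger(\Tr_j(\rho_{\ibar}))=\hat Z_S^\dagger(\rho_{S|R_i})$, which is precisely \eqref{eq:rhoSZSpure}.

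The \emph{if} direction is the main substance and exploits the purity hypothesis through the uniqueness of purification.~Given \eqref{eq:rhoSZSpure}, introduce the auxiliary pure state $\tilde\rho_{\jbar}:=\hat{\mcI}_{i\to j}(\hat{\mathds1}_j\otimes\hat Z_S^\dagger)(\rho_{\ibar})$ on $\mcH_i\otimes\mcH_S$.~The same trace identity as above shows $\Tr_i(\tilde\rho_{\jbar})=\hat Z_S^\dagger(\rho_{S|R_i})=\rho_{S|R_j}=\Tr_i(\rho_{\jbar})$, so $\rho_{\jbar}$ and $\tilde\rho_{\jbar}$ are two purifications of the \emph{same} $S$-state on the \emph{same} Hilbert space $\mcH_i\otimes\mcH_S$ (the dimensions match automatically, since $\dim\mcH_i=|\mcG|\geq\mathrm{rank}(\rho_{S|R_j})$).~The uniqueness of purification therefore yields a unitary $W_i\in\mathcal{U}(\mcH_i)$ with $\rho_{\jbar}=\hat W_i(\tilde\rho_{\jbar})$.~To convert $W_i$ back into a unitary on $\mcH_j$, I transport it through the swap by setting $Y_j^\dagger:=\mcI_{j\to i}\,W_i\,\mcI_{i\to j}$:~this is well-defined and unitary (since $\mcI_{i\to j}^{-1}=\mcI_{j\to i}$ and both swaps act trivially on $S$), and by construction satisfies the intertwining relation $\hat W_i\circ\hat{\mcI}_{i\to j}=\hat{\mcI}_{i\to j}\circ\hat Y_j^\dagger$, so that $\rho_{\jbar}=\hat{\mcI}_{i\to j}(\hat Y_j^\dagger\otimes\hat Z_S^\dagger)(\rho_{\ibar})$;~a final appeal to Lemma~\ref{Lemma:tpseq} delivers \eqref{rhoibarinAU} with $X=Y_j\otimes Z_S$.~The only non-trivial input is the uniqueness of purification;~everything else is bookkeeping with the swap $\mcI_{i\to j}$ and cyclicity of the partial trace.
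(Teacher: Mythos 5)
Your proof is correct, and both directions go through. The ``only if'' direction (TPS-invariance implies \eqref{eq:rhoSZSpure}) is identical in substance to the paper's: push the partial trace through the frame swap and use invariance of $\Tr_j$ under local $R_j$-conjugation. For the ``if'' direction, however, you take a genuinely different (and arguably cleaner) route. The paper writes $\ket{\psi}_{\ibar}$ and $\ket{\psi}_{\jbar}$ explicitly in Schmidt decomposition, matches the Schmidt coefficients using the unitary relation \eqref{eq:rhoSZSpure} between the reduced states, and then constructs by hand a unitary $X_{i\to j}:\mcH_j\to\mcH_i$ mapping one set of frame Schmidt vectors onto the other (with a separate remark needed to handle degenerate spectra). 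You instead package exactly this content into the standard uniqueness-of-purification theorem applied to $\rho_{\jbar}$ and your auxiliary state $\tilde\rho_{\jbar}=\hat{\mcI}_{i\to j}(\hat{\mathds1}_j\otimes\hat Z_S^\dagger)(\rho_{\ibar})$, which purify the same $S$-state on the same space; the degenerate-spectrum subtlety is then absorbed into the cited theorem, and the transport of $W_i$ back to $\mcH_j$ via the intertwining $\hat W_i\circ\hat{\mcI}_{i\to j}=\hat{\mcI}_{i\to j}\circ\hat Y_j^\dagger$ is exactly the bookkeeping needed. It is worth noting that the paper itself points out, in the discussion immediately following the theorem, that the result reflects the unitary freedom of purifications; you have effectively promoted that remark to the proof mechanism, at the cost of invoking a slightly heavier black box than the bare Schmidt decomposition the paper uses.
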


Note that a generic pure state will not reside in any of the subalgebras $\mcA_{\ibar}^{X}$, so the unitary equivalence of subsystem states in different QRF perspectives is a rather special property (Example~\ref{Ex:Wstate} below will illustrate this).~We also stress that, in contrast to the case of TPS-invariant $S$-local operators in Theorem~\ref{thm:Slocals}, the reduced state $\rho_{S|R_i}$ need \emph{not} necessarily be exactly invariant (i.e.\ it does not necessarily commute with $Z_S$).

The proof of Theorem~\ref{claim:purestatesrhoSZS} is provided in Appendix~\ref{APP:statesandentropy}.~As a special case, we have that, for pure global states, a necessary and sufficient condition for equality of the reduced states of the subsystem $S$ relative to the two frames, i.e.\ $\rho_{S|R_1}=\rho_{S|R_2}$, is that $\rho_{\ibar}\in\mcA_{\ibar}^{X}$ for some local unitary $X=Y_j\otimes\mathds 1_S$ with $Y_j\in\mathcal U(\mathcal H_{j})$\footnote{This encompasses also the case of a non-trivial unitary $Z_S$ belonging to the isotropy group of $\rho_{S|R_i}$ in which case \eqref{rhoibarinAU} tells us that $\rho_{\ibar}\in\mcA_{\ibar}^{Y_j\otimes Z_S}\cap\mcA_{\ibar}^{Y_j\otimes\mathds1_S}$.}.~In this case, the corresponding total states $\rho_{\ibar}$ and $\rho_{\jbar}$ in the two frame perspectives constitute two distinct purifications of the subsystem state $\rho_{S|R_1}=\rho_{S|R_2}$ with the frames $R_1$ and $R_2$ acting as purifying parties. The fact that they are related by a local unitary transformation acting only on the frames $\mathcal I_{i\to j}\circ (Y_j^\dagger\otimes\mathds1_S)$ with $\mathcal I_{i\to j}$ given in~\eqref{eq:Ijtoi} (cfr.~Lemma~\ref{Lemma:tpseq}) is a reflection of the unitarity freedom of purifications \cite{NielsenQIbook2010}: two purifications of the same state are always related by a local unitary transformation between the purifying parties.\footnote{This includes also situations of interest for thermal considerations as e.g.\ thermofield double states in which case the results of Theorem~\ref{claim:purestatesrhoSZS} and Corollary~\ref{claim:Renyipure} below can be applied to the subsystem Gibbs states}

While for any perspective-neutral state $\rho_{\text{phys}}$ the total von Neumann entropy of the composite system is the same relative to the two internal QRFs,  $S_{\rm vN}[\rho_{\bar1}]=S_{\rm vN}[\rho_{\bar2}]=S_{\rm vN}[\rho_{\text{phys}}]$ (the reduction into an internal QRF perspective is unitary, see~\eqref{eq:iReduction}), the entanglement entropy of the $S$-subsystem (which is a measure of bipartite entanglement for pure global states) will generically differ, $S_{\rm vN}[\rho_{S|R_1}]\neq S_{\rm vN}[\rho_{S|R_2}]$, for the same global physical state.
However, under the conditions of Theorem~\ref{claim:purestatesrhoSZS}, the entanglement entropy in the two perspectives will coincide.~More generally, we can ask about (non-)invariance of the $\alpha$-R\'enyi entropies of the subsystem $S$ in the two frame perspectives.~As a direct consequence of Theorem~\ref{claim:purestatesrhoSZS}, we have the following result whose proof can be found in Appendix~\ref{APP:statesandentropy}.
\begin{corollary}[\textbf{QRF-invariant subsystem R\'enyi entropies}]
\label{claim:Renyipure}
Let $\rho_{\ibar}$ be a pure state relative to frame $R_i$ in orientation $g_i$.~Let $\rho_{\jbar}=\hat{V}_{i\to j}^{g_i,g_j}(\rho_{\ibar})$ be the corresponding state relative to frame $R_j$ in orientation $g_j$.~Then, the $\alpha$-R\'enyi entropies of the $S$-subsystem states $\rho_{S|R_i}=\Tr_j(\rho_{\ibar})$ and $\rho_{S|R_j}=\Tr_i(\rho_{\jbar})$ in the two frame perspectives are the same for any value of $\alpha$, i.e.
\be\label{eq:Sstaterenyi}
S_\alpha[\rho_{S|R_i}]=S_\alpha[\rho_{S|R_j}], \;\; \forall \alpha\in(0,1)\cup(1,\infty)
\ee
where
\be\label{alpharenyi}
S_\alpha[\rho_{S|\bullet}]=\frac{1}{1-\alpha}\log\bigl(\Tr_S(\rho_{S|\bullet}^\alpha)\bigr),
\ee
if and only if $\rho_{\ibar} \in \mcA_{\ibar}^{X}$ for some bilocal unitary $X=Y_j\otimes Z_S$.~In other words, the entanglement spectrum agrees in both QRF perspectives, if and only if the global state lies in a TPS-invariant subalgebra. 
\end{corollary}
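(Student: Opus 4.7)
The approach is to bootstrap Corollary~\ref{claim:Renyipure} from Theorem~\ref{claim:purestatesrhoSZS}, exploiting that unitary equivalence of two density matrices is tantamount to their having identical spectra, together with the fact that R\'enyi entropies depend on the state only through its spectrum. Thus the chain of equivalences I will establish is
\begin{equation*}
\rho_{\ibar}\in\mathcal{A}_{\ibar}^{X}\;\Longleftrightarrow\;\rho_{S|R_j}=\hat{Z}_S^{\dagger}(\rho_{S|R_i})\;\Longleftrightarrow\;S_\alpha[\rho_{S|R_i}]=S_\alpha[\rho_{S|R_j}]\;\;\forall\alpha,
\end{equation*}
where the first equivalence is Theorem~\ref{claim:purestatesrhoSZS} and the second is the only new content.

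For the sufficiency direction, assume $\rho_{\ibar}\in\mathcal{A}_{\ibar}^{X}$ with $X=Y_j\otimes Z_S$. Theorem~\ref{claim:purestatesrhoSZS} yields $\rho_{S|R_j}=\hat{Z}_S^{\dagger}(\rho_{S|R_i})$, so the two reduced states are isospectral. Since $S_\alpha[\rho]=(1-\alpha)^{-1}\log\Tr(\rho^\alpha)$ is a function of the eigenvalues alone, \eqref{eq:Sstaterenyi} follows for every admissible $\alpha$.

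For the necessity direction, assume \eqref{eq:Sstaterenyi} holds throughout $(0,1)\cup(1,\infty)$. Exponentiating gives $\Tr(\rho_{S|R_i}^{\alpha})=\Tr(\rho_{S|R_j}^{\alpha})$ for all such $\alpha$; letting $\{p_k\}$ and $\{q_k\}$ be the eigenvalue multisets of $\rho_{S|R_i}$ and $\rho_{S|R_j}$, this reads $\sum_k p_k^{\alpha}=\sum_k q_k^{\alpha}$. Specialising to integers $\alpha=2,3,\ldots,\dim\mathcal{H}_S$ and invoking Newton's identities, the elementary symmetric polynomials in the two multisets coincide, so $\rho_{S|R_i}$ and $\rho_{S|R_j}$ have the same characteristic polynomial and hence the same spectrum. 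By the spectral theorem there exists $Z_S\in\mathcal{U}(\mathcal{H}_S)$ with $\rho_{S|R_j}=\hat{Z}_S^{\dagger}(\rho_{S|R_i})$, and applying Theorem~\ref{claim:purestatesrhoSZS} in the reverse direction produces a $Y_j\in\mathcal{U}(\mathcal{H}_j)$ for which $\rho_{\ibar}\in\mathcal{A}_{\ibar}^{Y_j\otimes Z_S}$.

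The only non-routine step is the spectral reconstruction in the necessity direction, but this is a classical consequence of Newton's identities in the finite-dimensional setting adopted throughout the paper; the remainder is bookkeeping on top of Theorem~\ref{claim:purestatesrhoSZS}. Note also that the range $\alpha\in(0,1)\cup(1,\infty)$ is more than sufficient for the argument, since finitely many integer moments already suffice.
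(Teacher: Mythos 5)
Your proof is correct and follows essentially the same route as the paper's: both directions bootstrap from Theorem~\ref{claim:purestatesrhoSZS}, and the only substantive step is recovering unitary equivalence of the reduced states from equality of their power traces. The sole difference is that you reconstruct the spectrum via Newton's identities applied to the integer moments, whereas the paper invokes Specht's theorem (noting that for Hermitian operators it reduces to equality of traces of all non-negative integer powers); your version is the more elementary instantiation of the same spectral-reconstruction argument.
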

\noindent
In the limit $\alpha\to1$, R\'enyi entropy reduces to von Neumann entropy \cite{Muller-Lennert:2013} so that, for a pure total state
$\rho_{\ibar}$, its $\mbfUX$-invariance provides us with a sufficient condition for the entanglement entropy $S_{\rm EE}[\rho_{\ibar}] = S_{\rm vN}[\rho_{S|R_i}] =
-\Tr(\rho_{S|R_i}\log\rho_{S|R_i})$ to be invariant under QRF changes, that is $S_{\rm vN}[\rho_{S|R_i}] = S_{\rm vN}[\rho_{S|R_j}]$.

To illustrate the above results in a simple setting, let us consider the following example with $W$ states.~Further examples with physically relevant bipartite and multipartite entangled states and somewhat different properties, such as generalised Bell (GB) and Greenberger–Horne–Zeilinger (GHZ) states, are provided in Appendix~\ref{app:Vex}.

\begin{example}{\bf (W States)}\label{Ex:Wstate}
~Let us consider a system of $N>4$ qubits ($\mathcal G=\mathbb Z_2$) and let $\ket{\psi}_{\onebar}=\ket{\varphi}_{2}\otimes\ket{W}_{N-2}$ be a separable pure state relative to qubit $R_1$ in orientation $g_1=0=e$ (e.g., qubit $R_1$ sits in the origin).~The state $\ket{\varphi}_{2}$ of particle $R_2$ relative to $R_1$ is some superposition $\ket{\varphi}_{2}=a\ket{0}_{2}+b\ket1_{2}$, $a,b\in\mathbb C$, $|a|^2+|b|^2=1$, and the subsystem $S$ consisting of the remaining $N-2$ qubits is in a W state
\be\label{Wqubits}
\ket{W}_{N-2}=\frac{1}{\sqrt{N-2}}\bigl(\ket{10\dots0}+\ket{01\dots0}+\dots+\ket{0\dots01}\bigr).
\ee
Let us investigate the necessary and sufficient condition \eqref{rhoibarinAU} for subsystem state equivalence.~For $\mathcal G=\mathbb Z_2$, we only have $g=0,1$ with $U^{g=0}=\mathds 1$ and $U^{g=1}=\sigma_x$.~The expression \eqref{TPSschangemap} reduces for $g_1=g_2=0$ to the familiar \textsf{CNOT} gate $\mbfU_{\onebar}=\ket0\!\bra0_2\otimes\mathds1^{\otimes(N-2)}+\ket1\!\bra1_2\otimes\sigma_x^{\otimes(N-2)}$.~Thus, we have
\begin{align}
\mbfU_{\onebar}\ket{\psi}_{\onebar}&=a\ket0_2\otimes\ket{W}_{N-2}\nonumber\\
&\;+b\ket1_j\otimes\hspace{-0.75cm}\underset{\frac{1}{\sqrt{N-2}}(\ket{011\dots1}+\ket{101\dots1}+\dots+\ket{11\dots10})}{\underbrace{\sigma_x^{\otimes(N-2)}\ket{W}_{N-2}}}\,.\label{UpsiibarWstate}
\end{align}
The corresponding state $\ket{\psi}_{\twobar}=V_{1\to 2}^{g_1=0,g_2=0}\ket{\psi}_{\onebar}$ relative to $R_2$ then reads 
\be\label{psijbarWstate}
\ket{\psi}_{\twobar}=a\ket0_1\otimes\ket{W}_{N-2}+b\ket1_1\otimes\sigma_x^{\otimes(N-2)}\ket{W}_{N-2}\,.
\ee

\noindent
Clearly, for a generic superposition $a,b\neq0$, condition \eqref{rhoibarinAU} is violated and so $\rho_{\onebar}=\ket{\psi}\!\bra{\psi}_{\onebar}$ does not reside in any subalgebra $\mcA_{\onebar}^{X}$ for any bilocal unitary $X$.~In line with Theorem~\ref{claim:purestatesrhoSZS}, $\ket{\psi}_{\twobar}$ is not a separable state anymore and consequently the subsystem $S$ as seen from $R_2$ is now in a mixed state
\begin{align}\label{Wex:Sjpersp}
\rho_{S|R_2}&=|a|^2\ket W\bra W_{N-2}\nonumber\\
&\;+|b|^2\sigma_x^{\otimes(N-2)}\ket{W}\bra{W}_{N-2}\sigma_x^{\otimes(N-2)}\;.
\end{align}
The subsystem states are thus clearly not unitarily equivalent.~In particular, in agreement with Corollary~\ref{claim:Renyipure}, the $\alpha$-R\'enyi entropy of the $S$ subsystem in the two perspective differs:~in $R_1$ perspective, $\rho_{S|R_2}=\ket{W}\!\bra{W}_{N-2}$ is pure and so $S_\alpha[\rho_{S|R_1}]=0$ for all $\alpha$.~By contrast, in $R_2$-perspective, we have $S_\alpha[\rho_{S|R_2}]=\frac{1}{1-\alpha}\log\left(|a|^{2\alpha}+|b|^{2\alpha}\right)$ and $S_{\rm vN}[\rho_{S|R_2}]=-|a|^2\log|a|^2-|b|^2\log|b|^2$ for the entanglement entropy.

However, condition~\eqref{rhoibarinAU} \emph{is} satisfied for two special cases.~First, for $b=0$,~i.e.~when particle $R_2$ is in the same position as particle $R_1$ ($\ket{\varphi}_2=a\ket0_2$,~$a\in\mathbb C$,~$|a|^2=1$), only the first line of \eqref{UpsiibarWstate} survives and condition \eqref{rhoibarinAU} is met with $Y_2=\mathds 1_2$,~$Z_S=\mathds1^{\otimes(N-2)}$ and so $\rho_{\onebar}\in\mcA_{\onebar}^{\mathds1}$.~Correspondingly, the total state is (up to the exchange of $1$ and $2$ labels) invariant under the QRF transformation (cfr.~\eqref{psijbarWstate} with $b=0$) and $S$ is in a $W$ state in both perspectives (cfr.~\eqref{Wex:Sjpersp} with $b=0$).~Second, for $a=0$,~that is when particle $R_2$ is in opposite but still definite orientation relative to particle $R_1$ ($\ket{\varphi}_2=b\ket1_2$,~$b\in\mathbb C$,~$|b|^2=1$), the total state $\rho_{\onebar}$ belongs to $\mcA_{\onebar}^{X}$ with $X=\mathds 1_2\otimes\sigma_x^{\otimes(N-2)}$ (cfr.~\eqref{UpsiibarWstate} for $a=0$).~The state remains separable under QRF change and we have $\rho_{S|R_2}=\sigma_x^{\otimes(N-2)}\ket{W}\!\bra{W}_{N-2}\sigma_x^{\otimes(N-2)}$ (cfr.~\eqref{Wex:Sjpersp} with $a=0$), so the subsystem states in the two perspectives are unitarily related.~In agreement with Corollary~\ref{claim:Renyipure}, we find that all R\'enyi entropies and the entanglement entropy agree in the two perspectives: indeed, since the subsystem states are pure in both perspectives when $a=0$ or $b=0$ they all vanish.
\end{example}

\enlargethispage{\baselineskip}

\subsubsection{Mixed global states}

Consider the decomposition of a mixed state into a probabilistic mixture of pure states, $\rho_{\ibar}=\sum_{A}p_A\rho_{\ibar}^A$, $(\rho^A_{\ibar})^2=\rho^A_{\ibar}$.~The nonlocality of the QRF transformation $V_{i\to j}$ does not affect the classical probabilities in the ensemble, but it might alter the quantum correlations of the pure states therein, $\rho_{\jbar}=\sum_{A}p_A\hat{V}_{i\to j}(\rho_{\ibar}^A)=:\sum_{A}p_A\rho_{\jbar}^A$.~Whether or not the correlations between the subsystem $S$ and its complement will be the same in the two frame perspectives thus depends on how the correlation structure in the pure states $\rho_{\ibar}^A$ transforms.~The high nonuniqueness of pure state decompositions of mixed states \cite{Hughston93,Nielsen2000} renders this case a challenging one.\footnote{Also, the Schmidt decomposition, underlying the proof of Theorem~\ref{claim:purestatesrhoSZS} in App.~\ref{APP:statesandentropy}, is somewhat less powerful for mixed states; e.g.\ its relationship with entanglement is weaker.}~Nevertheless, the linearity of the partial trace and the QRF transformation imply that Theorem~\ref{claim:purestatesrhoSZS} and Corollary~\ref{claim:Renyipure} have an immediate consequence for mixed states:
\begin{corollary}\label{claim:rhoSZSmixed}
Let $\rho_{\ibar}$ be a generic rank $n$ density operator relative to frame $R_i$ in orientation $g_i\in\mathcal G$, and let $\rho_{\jbar}=\hat{V}_{i\to j}^{g_i,g_j}(\rho_{\ibar})$ be the corresponding state relative to frame $R_j$ in orientation $g_j$.~Then, the reduced $S$-subsystem states $\rho_{S|R_i}=\Tr_j(\rho_{\ibar})$ and $\rho_{S|R_j}=\Tr_i(\rho_{\jbar})$ relative to frames $R_i$ and $R_j$ are unitarily related, that is
\be\label{eq:rhoSZS}
\rho_{S|R_j}=\hat Z_S^{\dagger}(\rho_{S|R_i}) 
\ee
for some $Z_S\in\mathcal U(\mathcal H_S)$, if the total state $\rho_{\ibar}$ can be decomposed into a mixture of $N\geq n$ pure states $\rho_{\ibar}^A$
\be\label{eq:PSdecomp}
\rho_{\ibar}=\sum_{A=1}^N p_A\rho_{\ibar}^A
\ee
with $\sum_A p_A = 1$ and $p_A >0$,~such that $\rho_{\ibar}^A\in\mcA_{\ibar}^{X^{A}}$ with $X^{A}=Y_j^{A}\otimes Z_S$, for some (possibly different) local unitaries $Y_j^{A}\in\mathcal U(\mathcal H_{j})$, i.e.~such that
\be\label{rhoAibarinAU}
\hat\mbfU_{\ibar}^{g_i,g_j}(\rho_{\ibar}^A)=( \hat{Y}_{j}^{A\dagger}\otimes\hat{Z}_S^{\dagger})(\rho_{\ibar}^A).
\ee
In this case, the R\'enyi entropies are the same in both perspectives, $S_\alpha[\rho_{S|R_1}]=S_\alpha[\rho_{S|R_2}]$, $\forall \alpha\in(0,1)\cup(1,\infty)$.
\end{corollary}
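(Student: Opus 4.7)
The plan is to reduce the mixed-state statement to the pure-state result of Theorem~\ref{claim:purestatesrhoSZS}, then to linearity of the partial trace and of $\hat{V}_{i\to j}$. The crucial structural feature to exploit is that the hypothesis imposes a \emph{common} $Z_S$ across all pure components, while allowing the frame-local unitaries $Y_j^A$ to vary with $A$. This is exactly what is needed to factor $\hat{Z}_S^\dagger$ out of the sum after tracing over the frame.

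First I would apply the QRF transformation to the convex decomposition \eqref{eq:PSdecomp}, using linearity of $\hat{V}_{i\to j}^{g_i,g_j}$ to write $\rho_{\jbar}=\sum_A p_A\,\rho_{\jbar}^A$ with $\rho_{\jbar}^A:=\hat{V}_{i\to j}^{g_i,g_j}(\rho_{\ibar}^A)$, noting that unitarity preserves purity. Next, for each $A$, the hypothesis $\rho_{\ibar}^A\in\mcA_{\ibar}^{X^A}$ with $X^A=Y_j^A\otimes Z_S$ places us squarely in the setting of Theorem~\ref{claim:purestatesrhoSZS} applied to the pure state $\rho_{\ibar}^A$. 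Hence, denoting $\rho_{S|R_i}^A:=\Tr_j(\rho_{\ibar}^A)$ and $\rho_{S|R_j}^A:=\Tr_i(\rho_{\jbar}^A)$, that theorem delivers
\begin{equation}
\rho_{S|R_j}^A=\hat Z_S^{\dagger}\bigl(\rho_{S|R_i}^A\bigr)\qquad\forall\, A,\nonumber
\end{equation}
with \emph{the same} $Z_S$ for every $A$ by assumption.

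Then I would combine this with the linearity of the partial traces $\Tr_i,\Tr_j$ and of the conjugation superoperator $\hat Z_S^\dagger$:
\begin{equation}
\rho_{S|R_j}=\sum_A p_A\,\rho_{S|R_j}^A=\sum_A p_A\,\hat Z_S^\dagger\bigl(\rho_{S|R_i}^A\bigr)=\hat Z_S^\dagger(\rho_{S|R_i}),\nonumber
\end{equation}
which is \eqref{eq:rhoSZS}. The Rényi entropy statement then follows immediately, since unitary conjugation preserves the spectrum and $S_\alpha$ in \eqref{alpharenyi} depends only on $\Tr_S(\rho_{S|\bullet}^\alpha)=\sum_k\lambda_k^\alpha$, where $\{\lambda_k\}$ are the eigenvalues of the reduced state; the limit $\alpha\to 1$ additionally recovers the von~Neumann entropy.

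The main subtlety (rather than a genuine obstacle) is to appreciate why this direction is only a \emph{sufficient} condition and not an equivalence as in the pure case. The non-uniqueness of convex decompositions of mixed states means that the condition $\rho_{\ibar}^A\in\mcA_{\ibar}^{X^A}$ with a uniform $Z_S$ need not hold for every pure ensemble realising a given $\rho_{\ibar}$; the hypothesis merely asks for the existence of \emph{some} such decomposition. Consequently, the iff of Theorem~\ref{claim:purestatesrhoSZS} cannot be upgraded componentwise to an iff for $\rho_{\ibar}$ itself, which is why the corollary is phrased as an implication rather than an equivalence.
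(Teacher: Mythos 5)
Your proof is correct and follows exactly the paper's route: the paper presents this corollary as an "immediate consequence" of Theorem~\ref{claim:purestatesrhoSZS} via linearity of the partial trace and of $\hat{V}_{i\to j}$, which is precisely your argument of applying the pure-state theorem componentwise (exploiting the common $Z_S$) and summing. Your closing remark on why the statement is only a sufficient condition also matches the paper's own discussion of the non-uniqueness of pure-state decompositions.
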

This encompasses the situation that the global state $\rho_{\ibar}$ (e.g., the completely mixed state) lies itself in one (or more) of the subalgebras $\mcA_{\ibar}^{X}$.~An example illustrating this and satisfying the conditions \eqref{eq:PSdecomp},~\eqref{rhoAibarinAU} is discussed in Example~\ref{Ex:mixWGHZ} in Appendix~\ref{app:Vex}.

Note that not every pure state decomposition of $\rho_{\ibar}$ will satisfy the condition of the corollary even if one of them does (Example~\ref{ex:AUqubits3} below will demonstrate this point).~It is plausible that the existence of a pure state decomposition of the global state that is a mixture of states from (possibly different) $\mcA_{\ibar}^{X}$ is also a necessary condition for the subsystem states to be unitarily related in the two perspectives.~We have, however, not shown this and leave this question for future work.\footnote{This would require proving existence of a decomposition such that unitary equivalence \eqref{eq:rhoSZS} of the $S$-reduced states implies the (generally mixed) individual $\rho_{S|R_j}^A=\Tr_i(\rho_{\jbar}^A)$ in the subsystem state decomposition to be also unitarily related to $\rho_{S|R_i}^A=\Tr_j(\rho_{\ibar}^A)$ for all $A$.}

\begin{example}\label{ex:AUqubits3}{\bf (3 qubits).}
Consider again the case of three qubits as in Examples~\ref{Ex:AUqubits} and~\ref{Ex:AUqubits2}.~Let the state of $R_2S$ relative to qubit $R_1$  be a maximally mixed state $\rho_{\onebar}=\frac{1}{4}\mathds1_2\otimes\mathds1_S$, so that $\rho_{\onebar}\in\mcA_{\onebar}^{X}$ for any  $g_1,g_2\in\mathcal G$ and any bilocal unitary $X$. The $S$-subsystem state is the same in both  perspectives, namely $\rho_{S|R_i}=\frac{1}{2}\mathds1_S$, $i=1,2$.~Restricting again to $\mathbf{U}_{\onebar}^{g_1=0,g_2=0}=\mathbf{U}_{\onebar}$ (i.e.\ the \textsf{CNOT}), the pure state decomposition of $\rho_{\onebar}$ in the  computational basis
\be\label{maxmixpsdecomp1}
\rho_{\onebar}=\frac{1}{4}\left(\ket{00}\!\bra{00}_{\onebar}+\ket{01}\!\bra{01}_{\onebar}+\ket{10}\!\bra{10}_{\onebar}+\ket{11}\!\bra{11}_{\onebar}\right)\,,
\ee
 does not satisfy the conditions \eqref{eq:PSdecomp},~\eqref{rhoAibarinAU} of Cor.~\ref{claim:rhoSZSmixed} ($\ket{00}\!\bra{00}_{\onebar}, \ket{01}\!\bra{01}_{\onebar}\in\mcA_{\onebar}^{X}$ with $X=\mathds1_2\otimes\mathds1_S$, while $\ket{10}\!\bra{10}_{\onebar}, \ket{11}\!\bra{11}_{\onebar}\in\mcA_{\onebar}^{X}$ with $X=\mathds1_2\otimes\mathds\sigma^x_S$, see Example~\ref{Ex:AUqubits}).~$\rho_{\onebar}$ can however be  rewritten as
\begin{align}
\rho_{\onebar}=\frac{1}{4}\bigl(\ket{\psi_+}\!\bra{\psi_+}_{\onebar}+\ket{\psi_-}\!\bra{\psi_-}_{\onebar}+\ket{\phi_+}\!\bra{\phi_+}_{\onebar}+\ket{\phi_-}\!\bra{\phi_-}_{\onebar}\bigr)\,,\nonumber
\end{align}
Unlike \eqref{maxmixpsdecomp1}, this decomposition  does satisfy the conditions \eqref{eq:PSdecomp},~\eqref{rhoAibarinAU} with all bilocal unitaries $X^A$ equal to the identity (cfr.~Example~\ref{Ex:AUqubits}).
\end{example}

\subsubsection{Subsystem state invariance and translation invariance}\label{Subsec:statesPit}

In Sec.~\ref{Sec:Uinv&break} we saw that $S$-translation invariance is equivalent to exact $\mbfU_{\ibar}$-invariance for local $S$ operators (cfr.~Theorem~\ref{thm:Slocals}), but this is not necessarily true for general operators.~Coming back to the discussion about Theorem~\ref{claim:purestatesrhoSZS}, let us now investigate the situation for states.~Recalling the expression~\eqref{eq:Vitoj} of the QRF transformation, it is easy to see that 
\begin{align}
    \rho_{S|R_j}&=\tr_i(\hat{V}^{g_i,g_j}_{i\to j}(\rho_{\ibar}))\nonumber\\
    &=\sum_{g\in\mathcal G}U_S^g\braket{g_jg^{-1}|\rho_{\ibar}|g^{-1}g_j}_jU_S^{g^{-1}}\nonumber\\
    &=\rho_{S|R_i}\;,\label{rhoSUgSinv}
\end{align}
for $S$-translation invariant states, $[\mathds 1_{j}\otimes U^g_S, \rho_{\ibar}]=0$ for any $g\in\mathcal G$, i.e.\ $\rho_{\ibar}\in\mcA_{\ibar}^{\mbt}$.~Hence, $S$-translation invariance is a sufficient condition for the invariance of the reduced subsystem states under QRF transformations.~Indeed, Lemma~\ref{claim:Uinv&tinvpure} below establishes that $U_S^g$-invariance of pure states implies their $\mbfU_{\ibar}$-invariance up to local unitaries and so links with Theorem~\ref{claim:purestatesrhoSZS}; but the converse does not hold.~We refer to App.~\ref{APP:statesandentropy} for the proof.

\begin{lemma}\label{claim:Uinv&tinvpure}
Let $\rho_{\ibar}$ be any $S$-translation invariant pure state relative to frame $R_i$ in orientation $g_i$, i.e.\ $[\mathds 1_{j}\otimes U^g_S, \rho_{\ibar}]=0$ for any $g\in\mathcal G$ (equivalently, $\rho_{\ibar}\in\mcA_{\ibar}^{\mbt}$).~Then, $\rho_{\ibar}\in\mcA_{\ibar}^{X}$ with bilocal unitary $X=Y_j\otimes\mathds1_S$,
\be\label{YjUgSinv}
Y_j(g_i,g_j)=\sum_{g\in\mathcal G}\overline{q(g)}\ket{g^{-1}g_j}\!\bra{g_ig}_j\,,
\ee
and $q(g)\in\mathbb C$, $|q(g)|^2=1$.~The converse is however not true, i.e.\ $\rho_{\ibar}\in\mathcal{A}_{\ibar}^X$ does not imply $\rho_{\ibar}\in\mcA_{\ibar}^{\mbt}$.
\end{lemma}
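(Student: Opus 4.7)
The plan is to reduce $S$-translation invariance of the pure state $\rho_{\ibar}=\ket{\psi}\!\bra{\psi}_{\ibar}$ to a character eigenvalue equation on $\ket{\psi}_{\ibar}$, then to directly compute the action of $\mathbf{U}_{\ibar}^{g_i,g_j}$ on that vector and match the result to $(Y_j^{\dagger}\otimes\mathds 1_S)\ket{\psi}_{\ibar}$ with the $Y_j$ prescribed in~\eqref{YjUgSinv}. The counterexample for the converse will be exhibited in the $3$-qubit setting of Example~\ref{Ex:AUqubits}.

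\textbf{Step 1 (eigenvalue reduction).} Since $\rho_{\ibar}$ is rank one, its range $\mathbb C\ket{\psi}_{\ibar}$ is a $1$-dimensional invariant subspace of the unitary $\mathds1_j\otimes U_S^g$ for every $g$. Hence $(\mathds 1_j\otimes U_S^g)\ket{\psi}_{\ibar}=q(g)\ket{\psi}_{\ibar}$ with $|q(g)|=1$, and the group law forces $q(gh)=q(g)q(h)$, so $q:\mathcal G\to \mathrm{U}(1)$ is a character. Expanding $\ket{\psi}_{\ibar}=\sum_h \ket{h}_j\otimes\ket{\psi_h}_S$, this equation reads $U_S^g\ket{\psi_h}_S=q(g)\ket{\psi_h}_S$ for every $h$ in the support.

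\textbf{Step 2 (direct computation).} Inserting \eqref{TPSschangemap} yields
\begin{align}
\mathbf{U}_{\ibar}^{g_i,g_j}\ket{\psi}_{\ibar}
&=\sum_{g\in\mathcal G}\ket{g_ig}_j\otimes U_S^g\ket{\psi_{g_jg^{-1}}}_S \notag\\
&=\sum_{g\in\mathcal G}q(g)\,\ket{g_ig}_j\otimes\ket{\psi_{g_jg^{-1}}}_S.
\end{align}
Reindexing by $h:=g_jg^{-1}$ (using that $\mathcal G$ is abelian), the right-hand side becomes $\sum_h q(g_jh^{-1})\ket{g_ig_jh^{-1}}_j\otimes\ket{\psi_h}_S$, which is exactly $(Y_j^{\dagger}\otimes\mathds 1_S)\ket{\psi}_{\ibar}$ for
\begin{equation}
Y_j^{\dagger}=\sum_{h\in\mathcal G}q(g_jh^{-1})\,\ket{g_ig_jh^{-1}}\!\bra{h}_j.\notag
\end{equation}
Substituting $g:=g_jh^{-1}$ and using $|g_jh^{-1}|=|g|$ gives $Y_j^{\dagger}=\sum_{g}q(g)\ket{g_ig}\!\bra{g^{-1}g_j}_j$, i.e.\ $Y_j$ has precisely the form \eqref{YjUgSinv}. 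Unitarity $Y_j^{\dagger}Y_j=Y_jY_j^{\dagger}=\mathds 1_j$ follows from $|q(g)|=1$ and the bijectivity of $h\mapsto g_ig_jh^{-1}$ on $\mathcal G$. Conjugating the vector identity then gives $\hat{\mathbf{U}}_{\ibar}^{g_i,g_j}(\rho_{\ibar})=(\hat{Y}_j^{\dagger}\otimes\hat{\mathds 1}_S)(\rho_{\ibar})$, so $\rho_{\ibar}\in\mathcal A_{\ibar}^{X}$ with $X=Y_j\otimes\mathds 1_S$.

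\textbf{Step 3 (failure of the converse).} In the $3$-qubit setting of Example~\ref{Ex:AUqubits} with $\mathcal G=\mathbb Z_2$ and $g_1=g_2=0$, take $\rho_{\onebar}=\ket{00}\!\bra{00}_{\onebar}$. From \eqref{CNOT} we have $\mathbf{U}_{\onebar}\ket{00}_{\onebar}=\ket{00}_{\onebar}$, so $\rho_{\onebar}\in\mathcal A_{\onebar}^{\mathds 1}\subset\bigcup_{X}\mathcal A_{\onebar}^{X}$ with $X=Y_2\otimes\mathds 1_S$ and $Y_2=\mathds 1_2$. However, $(\mathds 1_2\otimes\sigma^x_S)\ket{00}_{\onebar}=\ket{01}_{\onebar}\neq\ket{00}_{\onebar}$, so $\rho_{\onebar}\notin\mathcal A_{\onebar}^{\mbt}$, refuting the converse.

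The argument is essentially a one-line reduction followed by bookkeeping; the only delicate point is matching conventions so that the explicit change of summation variable reproduces the particular form \eqref{YjUgSinv} given in the statement, rather than just an arbitrary unitary equivalent to it.
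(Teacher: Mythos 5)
Your proof is correct and takes essentially the same route as the paper: the reduction of $S$-translation invariance of a pure state to the phase eigenvalue equation $(\mathds 1_j\otimes U_S^g)\ket{\psi}_{\ibar}=q(g)\ket{\psi}_{\ibar}$ followed by a direct computation with $\mathbf{U}_{\ibar}^{g_i,g_j}$ from~\eqref{TPSschangemap} is exactly the paper's argument, merely carried out at the level of vector components rather than by conjugating $\rho_{\ibar}$ directly. The only cosmetic difference is that you supply a self-contained $3$-qubit counterexample for the converse ($\ket{00}\!\bra{00}_{\onebar}\in\mcA_{\onebar}^{\mathds1}$ but not translation-invariant), whereas the paper points back to Examples~\ref{Ex:Wstate} and~\ref{Ex:nontranslinv}; both are valid.
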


Note that, since all elements of $\mcA_{\ibar}^{\mbt}$ are $U_S^g$-invariant, Lemma~\ref{claim:Uinv&tinvpure} also tells us that pure states $\rho_{\ibar}\in\mcA_{\ibar}^{\mbt}$ lie in all subalgebras $\mcA_{\ibar}^{X}$ with  $X=Y_j\otimes U_S^g$ and $Y_j$ as in \eqref{YjUgSinv}, for any $g\in\mathcal G$ too.

$U_S^g$-invariance thus provides us with a sufficient -- but not necessary -- condition for equality of $S$-subsystem states in the two perspectives.~The following example illustrates this observation.
\begin{example}\label{Ex:nontranslinv}
Suppose the total state relative to frame $R_1$ is of the form $\rho_{\ibar}=\ket{g_2}\!\bra{g_2}_2\otimes\rho_S$, for a generic $\rho_S$, so that frame $R_2$ is in a definite orientation relative to $R_1$.~Then, from the second line in \eqref{rhoSUgSinv}, it is clear that the state of $S$ is the same in both frame perspectives even when $[\rho_S,U_S^g]\neq0$ for at least one $g\in\mathcal{G}$.~Indeed, such a state satisfies \eqref{rhoibarinAU} with $Y_2=U_2^{g_1^{-1}g_2}$ and $Z_S=\mathds1_S$ as can be easily checked by using the expression \eqref{TPSschangemap} of $\mbfU_{\onebar}^{g_1,g_2}$.~Correspondingly, the QRF-transformed state relative to $R_2$ reads $\rho_{\twobar}=\ket{g_1}\!\bra{g_1}_1\otimes\rho_S$.~This was the case, for instance, in Example~\ref{Ex:Wstate} for the special cases $a=0$ and $b=0$ where $\rho_S=\ket{W}\!\bra{W}_{N-2}$ was in fact not invariant under conjugation by $U_g^{\otimes(N-2)}$ for any $g$ (cfr.~\eqref{Wqubits} and second line of~\eqref{UpsiibarWstate}).
\end{example}

Conversely, a translation-invariant reduced subsystem state $\rho_{S|R_i}$ does not imply a $S$-translation-invariant global state $\rho_{\ibar}$ from which it results via partial trace.~This is because terms such as $(\ket{g}\!\bra{g'}_j+\ket{g'}\!\bra{g}_j)\otimes\bar{\rho}_S$, with $g\neq g'$ and $[\bar{\rho}_S,U_S^{\bar{g}}]\neq0$ for at least one $\bar{g}\in\mcG$, disappear under the partial trace.~However, owing to the positivity of states, it does imply $S$-translation invariance of $\Pid(\rho_{\ibar})$.
Invoking this observation, we characterise when QRF change induced subsystem state transformations are independent of the global state. 

\begin{theorem}[\bf QRF-invariant subsystem states]\label{thm_globindep}
    Given some subsystem state $\rho_{S|R_i}\in\mcS(\mcH_S)$ in $R_i$-perspective, its counterpart in $R_j$-perspective ($i\neq j$), $\rho_{S|R_j}=\Tr_i\bigl(\hat{V}_{i\to j}^{g_i,g_j}(\rho_{\ibar})\bigr)$, is \emph{independent of the global state} $\rho_{\ibar}$ from which $\rho_{S|R_i}$ originates, if and only if $\rho_{S|R_i}$ is translation-invariant. In this case, we further have exact QRF-invariance, $\rho_{S|R_j}=\rho_{S|R_i}$.
\end{theorem}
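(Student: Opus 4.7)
My plan is to prove the biconditional in two directions, using the explicit formula for the $R_j$-perspective reduction given in~\eqref{rhoSUgSinv} together with the structural observation made in the paragraph just preceding the theorem about $\Pid(\rho_{\ibar})$.

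For necessity, I would probe the assumed global-state independence with a one-parameter family of product global states $\rho_{\ibar}^{(h)}:=\ket{h}\!\bra{h}_j\otimes\rho_{S|R_i}$ indexed by $h\in\mcG$. Each satisfies $\Tr_j\rho_{\ibar}^{(h)}=\rho_{S|R_i}$, so by hypothesis all the corresponding $\rho_{S|R_j}^{(h)}$ must coincide. Only one orientation-diagonal block $\sigma_{gg}^S:=\bra{g}_j\rho_{\ibar}^{(h)}\ket{g}_j$ is non-zero, namely $\sigma_{hh}^S=\rho_{S|R_i}$, so~\eqref{rhoSUgSinv} collapses to $\rho_{S|R_j}^{(h)}=U_S^{g_jh^{-1}}\rho_{S|R_i}U_S^{hg_j^{-1}}$. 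Demanding agreement across all $h$ amounts to $[\rho_{S|R_i},U_S^k]=0$ for every $k\in\mcG$, which is precisely translation-invariance; evaluating the common value at, say, $h=g_j$ immediately yields $\rho_{S|R_j}=\rho_{S|R_i}$ as well.

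For sufficiency, I would start from translation-invariance of $\rho_{S|R_i}$ and invoke the claim recalled in the preceding paragraph: positivity of $\rho_{\ibar}$ together with translation-invariance of $\Tr_j\rho_{\ibar}=\rho_{S|R_i}$ transports to $S$-translation-invariance of $\Pid(\rho_{\ibar})$, i.e., each orientation-diagonal block $\sigma_{gg}^S$ individually commutes with every $U_S^k$. Since~\eqref{rhoSUgSinv} depends on $\rho_{\ibar}$ only through these diagonal blocks (the off-diagonal ones are annihilated by $\Tr_i\circ\hat{V}_{i\to j}^{g_i,g_j}$), the conjugations $U_S^{g_jg^{-1}}(\cdot)U_S^{gg_j^{-1}}$ act trivially on each $\sigma_{gg}^S$, and the sum collapses to $\rho_{S|R_j}=\sum_g\sigma_{gg}^S=\Tr_j\rho_{\ibar}=\rho_{S|R_i}$. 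This simultaneously delivers independence of the global state, in the sense that the answer depends on $\rho_{\ibar}$ only through its fixed marginal $\rho_{S|R_i}$, and the exact QRF-invariance $\rho_{S|R_j}=\rho_{S|R_i}$.

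The main obstacle lies in the preamble step of transferring $S$-translation-invariance from the marginal $\rho_{S|R_i}$ to each diagonal block $\sigma_{gg}^S$. This does not follow from positivity of the diagonal blocks alone, since in isolation any positive partition $\rho_{S|R_i}=\sum_g\sigma_{gg}^S$ is admissible; one must genuinely exploit the full block-operator positivity of $(\sigma_{gg'}^S)_{g,g'\in\mcG}$ stemming from $\rho_{\ibar}\geq 0$, playing the Cauchy--Schwarz-type constraints linking off-diagonal and diagonal blocks against the translation-invariance of $\sum_g\sigma_{gg}^S$. This is the delicate point on which the sufficiency direction hinges, and it is the step where I would expect the proof to require the most care.
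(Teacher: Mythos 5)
Your necessity argument is sound and is essentially the paper's: the paper probes global-state independence with convex mixtures of states of the form $\ket{g_jh^{-1}}\!\bra{g_jh^{-1}}_j\otimes\rho_{S|R_i}$ and shows that different weights yield different $\rho_{S|R_j}$ unless $\rho_{S|R_i}$ is translation-invariant, whereas you compare the members of that one-parameter family directly; either way one extracts $[\rho_{S|R_i},U_S^k]=0$ for all $k\in\mcG$ together with $\rho_{S|R_j}=\rho_{S|R_i}$ at the common value.

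The sufficiency direction, however, breaks at exactly the step you flagged, and the break cannot be repaired by a more refined positivity argument. The transfer of translation-invariance from the marginal $\rho_{S|R_i}=\sum_g\sigma^S_{gg}$ (with $\sigma^S_{gg}:=\bra{g}_j\rho_{\ibar}\ket{g}_j$) to the individual diagonal blocks is false: take $\mcG=\mathbb{Z}_2$ with all three systems qubits and $\rho_{\ibar}=\tfrac{1}{2}\ket{0}\!\bra{0}_j\otimes\ket{0}\!\bra{0}_S+\tfrac{1}{2}\ket{1}\!\bra{1}_j\otimes\ket{1}\!\bra{1}_S$. This state is positive and already block-diagonal in the frame orientation basis, so the Cauchy--Schwarz constraints linking off-diagonal to diagonal blocks that you hoped to exploit are vacuous; its marginal $\tfrac{1}{2}\mathds{1}_S$ is translation-invariant, yet $\sigma^S_{00}=\tfrac{1}{2}\ket{0}\!\bra{0}_S$ does not commute with $\sigma^x_S$. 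Feeding this state into \eqref{rhoSUgSinv} with $g_j=e$ gives $\rho_{S|R_j}=\ket{0}\!\bra{0}_S\neq\tfrac{1}{2}\mathds{1}_S$, whereas the maximally mixed global state with the same $S$-marginal gives $\rho_{S|R_j}=\tfrac{1}{2}\mathds{1}_S$. Hence translation-invariance of $\rho_{S|R_i}$ does not by itself yield global-state independence, and the ``if'' direction fails as you (and the paper) read it. Note that the paper's own proof of this direction rests on precisely the same unestablished transfer claim (``owing to the positivity of states, it does imply $S$-translation invariance of $\Pid(\rho_{\ibar})$'') and is refuted by the same example; a correct sufficiency statement would require the stronger hypothesis that each block $\sigma^S_{gg}$ of $\Pid(\rho_{\ibar})$ is itself translation-invariant, at which point your collapse of the sum in \eqref{rhoSUgSinv} goes through.
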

The proof is provided in App.~\ref{APP:statesandentropy}. The intuition behind this result is clear: subsystem states that are translation-invariant essentially correspond to relational descriptions of $S$ relative to an $S$-internal QRF (cfr.\ the discussion surrounding Theorem~\ref{lem_TPSineqalg}). These are thus independent of the ``external'' (to $S$) frames $R_1,R_2$ and specifically their quantum states whose information sits inside the global $\rho_{\ibar}$. Conversely, non-translation-invariant subsystem states correspond to descriptions of $S$ that are not fully relational without reference to at least one of $R_1,R_2$  and thereby do depend on their relative state. This result will become useful when characterising QRF-invariant Gibbs states later.

\subsection{Internal $S$ correlations}\label{ssec:intScorr}

The above observations pertain to the QRF relativity or invariance of subsystem states and correlations between $S$ and the ``other frame'' subsystem, but \emph{a priori} not to internal correlations of $S$.~It is, however, straightforward to refine the results such that they encompass also the QRF covariance of internal $S$ correlations.~The local unitaries $Z_S$ entering the above results are local with respect to the natural TPS $\mathbf T_i^{g_i}:\Hphys\to\mathcal H_{j}\otimes\mathcal H_S$ in $R_i$-perspective but are not necessarily local with respect to a refined TPS in which $S$ comes to be a composite system itself carrying an internal TPS $\mathbf T_S:\mathcal H_S\to\bigotimes_\gamma\mathcal H_\gamma$ (cfr.~discussion surrounding Corollary~\ref{cor:refinedTPS} in Sec.~\ref{Sec:QuantRelTPSs}).~It is thus clear that internal correlations of $S$ coincide -- and internal subsystem states of $S$ obtained upon tracing out part of it are unitarily equivalent -- in the two QRF perspectives, whenever $Z_S$ in the above results is a product of fully local unitaries in the internal TPS of $S$ and permutations of internal subsystem factors with equal dimension in $R_i$-perspective, and change otherwise. If $Z_S$ is only local with respect to part of the internal TPS, only some of the internal subsystem states and correlations will be equivalent in the two perspectives.~Similarly, if no subsystem unitary $Z_S$ exists and so $\rho_{\ibar}$ does not reside in any subalgebra $\mcA_{\ibar}^{X}$ (if it is pure) or can be written as a mixture of states that do for $X=Y_j\otimes Z_S$ with possibly distinct $Y_j$, then generically at least part of the internal correlation structure of $S$ will be contingent on the choice of QRF perspective.\footnote{There could still exist some subsystem of $S$ whose correlations with its complement inside $S$ might be equivalent in the two perspectives.}

The following examples will illustrate these observations.

\begin{example}[{\bf Invariant internal $S$ correlations}]\label{Ex:fullW} In Example~\ref{Ex:Wstate} (special cases $a=0$ or $b=0$), as well as Examples~\ref{Ex:GBS}--\ref{Ex:mixWGHZ} discussed in App.~\ref{app:Vex}, $S$ is a multi-particle system and in each case $Z_S$ happens to be a multilocal unitary so that the internal correlations of $S$ are left invariant by the QRF transformation.

An illustration of a global state that resides in a TPS-invariant algebra $\mathcal{A}_{\onebar}^X$ with $X$ involving local permutations within $S$ is further provided in Example~\ref{Ex:permut}.~Furthermore, Example~\ref{Ex:BSinternaltoS} demonstrates the case of a non-TPS-invariant state that changes the entanglement between $S$ and ``the other frame'', but leaves internal entanglement invariant.~Both of these are found in App.~\ref{app:Vex}.
\end{example}

\begin{example}[{\bf Changing internal $S$ correlations}]\label{Ex:totalGHZ}
We shall now provide two examples respectively illustrating a change of classical and quantum internal $S$-correlations.

Consider a discrete translation invariant $\mathbb Z_n$-system of $N>4$ particles.~Let the total $R_2S$ state relative to particle $R_1$ in orientation $g_1=e$ be a $(N-1)$-particle GHZ state
\be
\ket\psi_{\onebar}=\ket{GHZ}_{\onebar}=\frac{1}{\sqrt{|\mathcal G|}}\sum_{g\in\mathcal G}\ket{g}^{\otimes(N-1)}\;.
\ee
Then, setting also $g_2=e$,
\be
\mbfU_{\onebar}\ket\psi_{\onebar}=\Bigl(\frac{1}{\sqrt{|\mathcal G|}}\sum_{g\in\mathcal G}\ket{g}_{2}\Bigr)\otimes\ket{0}^{\otimes(N-2)}
\ee
i.e., $\rho_{\onebar}=\ket\psi\!\bra\psi_{\onebar}$ does not belong to any of the subalgebras $\mcA_{\onebar}^{X}$ and the global state $\ket\psi_{\twobar}=V_{1\to 2}^{e,e}\ket\psi_{\onebar}=\mathcal I_{1\to 2}\mbfU_{\onebar}\ket\psi_{\onebar}$ relative to particle $R_2$ is a product one.~Compatibly with Theorem~\ref{claim:purestatesrhoSZS} and Corollary~\ref{claim:Renyipure}, the $S$-reduced states relative to the two frames are $\rho_{S|R_1}=\frac{1}{|\mathcal G|}\sum_{g\in\mathcal G}\ket{g}\!\bra{g}^{\otimes(N-2)}$ and $\rho_{S|R_2}=\ket{0}\!\bra{0}^{\otimes(N-2)}$, respectively, and their $\alpha$-R\'enyi entropies are different.~Note that in this case not only the correlations between $S$ and the ``other frame'' are modified by the change of frame, but also the \emph{total} correlations within the subsystem $S$ itself.~Specifically, as in a GHZ state the particles in each pair are classically correlated but not entangled \cite{Coffman:1999jd, Dur:2000zz}, the $S$-reduced state is not entangled in both perspectives, but classical correlations among the $N-2$ particles constituting $S$ are present in $R_1$-perspective, not however in $R_2$-perspective.

Consider now the qubits setup of Example~\ref{Ex:Wstate} for $a,b\neq0$ in which case the global state does not belong to any TPS-invariant subalgebra (cfr.~\eqref{UpsiibarWstate}).~The global state $\rho_{\bar 1}$ is a separable pure state and the $S$-subsystem is in a W state \eqref{Wqubits} for which both monogamy and strong monogamy inequalities are saturated (zero residual tangle) \cite{Coffman:1999jd,monogamyWstate1,monogamyWstate2}.~Thus, in $R_1$-perspective, the entanglement between any of the $N-2$ particles in $S$ and the remaining $N-3$ particles as quantified e.g.~by the 1-tangle is completely characterised in terms of bipartite entanglement between that particle and each of the other particles in $S$.~In $R_2$-perspective, for $a,b\neq0$ the global pure state $\rho_{\bar2}$ is entangled and the $S$-subsystem state \eqref{Wex:Sjpersp} is a mixture of a W and a flipped W state.~As discussed in \cite{3tangleWflippedW} for the case of three qubits $S=(ABC)$, even though the residual 3-tangle vanishes for a mixture of W and flipped W states, the 1-tangle between one of the qubits, say $A$, and the other two qubits $(BC)$ is larger than the sum of the squared concurrence of $(AB)$ and $(AC)$.~Thus, in this example not only the external $S$-entanglement is changed by the change of frame but also the quantum correlations internal to $S$.~In $R_2$-perspective, the entanglement between one of the particles in $S$ and the remaining ones is in fact not entirely determined by the bipartite entanglement between that particle and each of the other particles in $S$.
\end{example}

\section{QRF-covariance of quantum dynamics}\label{Sec:Dyn}

How do \emph{thermal} features of a dynamical system depend on the choice of reference frames?~There are well-known examples from spacetime thermodynamics in which thermality is perspectival.~The paradigmatic example are Rindler observers with different accelerations who observe different temperatures.~In these examples, detectors are modelled as internal quantum systems that couple to other degrees of freedom and undergo their own dynamics, not however the different reference frames themselves (though see  \cite{Barbado:2020snx,Foo:2020xqn,Foo:2020dzt}).~Here we will not directly deal with spacetime physics.~But we will ask whether treating reference frames as internal physical systems also leads to a relativity of thermality.~More precisely, how do dynamical and thermal aspects of a physical subsystem of interest $S$ depend on the choice of the QRF subsystems $R_1$ or $R_2$ in our context? We address this question in the rest of this work.

It will be important to distinguish between the notions of closed, isolated or open for a \textit{quantum} system versus the same notions for a \textit{thermodynamic} system, which we recall here for convenience (see for instance \cite{Breuer:2002pc,lebon2008understanding}).~A closed quantum system is one whose dynamics is given by the Liouville-von Neumann equation of motion (e.o.m.) of the form \eqref{vNeom}, as generated by some Hamiltonian that may be time dependent.~An isolated quantum system is one whose Hamiltonian is further time independent.~An open quantum system is one whose e.o.m.~is not given by a Liouville-von Neumann equation of the form \eqref{vNeom}, but has additional non-trivial terms that are typically associated with interactions and dissipation.~A closed thermodynamic system is one which only exchanges energy, but not matter with its surroundings.~An isolated thermodynamic system is one which neither exchanges  matter nor energy with its surroundings.~An open thermodynamic system is one which exchanges both matter and energy with its surroundings.

In this section, we will encounter closed and open quantum systems while studying the total and subsystem dynamics in different QRF perspectives.~We return to the question of energy exchange, i.e.~isolated and closed thermodynamic systems, in the next Sec.~\ref{Sec:TD}.~We will not consider matter exchange.

\subsection{Setup}\label{ssec:setupdyn}

Let us take the \emph{total} physical system at the perspective-neutral level to be an isolated quantum system, with dynamics given by some time-independent Hamiltonian $H_{\rm phys}\in \Aphys$ and states $\rho_{\rm phys}(t) \in \mcS(\Hphys)$ at time $t$.\footnote{In the present work, we consider systems subject to a true Hamiltonian.  However, reparametrisation invariant systems and temporal QRFs can be treated similarly \cite{Hohn:2018toe,PHtrinity,Hoehn:2020epv,CastroRuizQuantumclockstemporallocalisability,Hohn:2018iwn}. We thus expect the QRF dependence of subsystem dynamics and thermality observed below to hold similarly in that case.} 

\begin{figure*}[!t]
\centering
\includegraphics[height=6.0cm,width=0.815\textwidth]{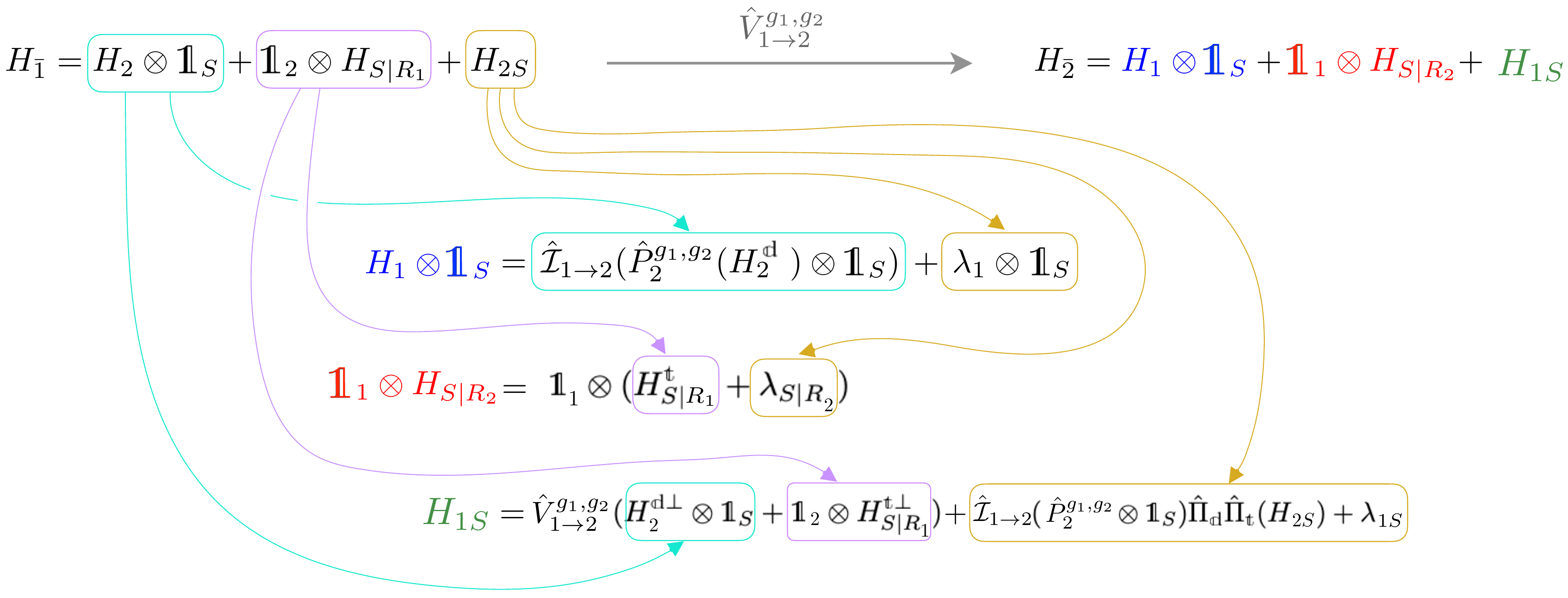}
\caption{Schematic summary of the various contributions \eqref{gla1}--\eqref{intsourcelocal2} to the QRF transformation of the Hamiltonian between the two frame perspectives.}
\label{Fig:QRFHam}
\end{figure*}

The total system as described relative to  internal reference frame $R_i$, say, is then also isolated, since the Schr\"odinger reduction maps \eqref{eq:iReduction} are invertible and time-independent.~The perspectival equation of motion is of standard Liouville-von Neumann form
\be\label{vNeom}
i  \, \dot{\rho}_{\ibar}(t) = [H_{\ibar},\rho_{\ibar}(t)]\,,
\ee
where $\rho_{\ibar}(t) = \hat{\mathcal{R}}_{i}^{g_i} (\rho_{\rm phys}(t))$, $H_{\ibar} = \hat{\mathcal{R}}_{i}^{g_i} (H_{\rm phys})$, and we work in units of $\hbar=1$.~Its solutions $\rho_{\ibar}(t) = e^{-i H_{\ibar}t}\rho_{\ibar}(0) e^{iH_{\ibar}t}$ describe unitary evolution of $\ibar=R_jS$ relative to QRF $R_i$ in orientation $g_i$.~The  Hamiltonian $H_{\ibar}$ can be generically written as (cfr.~\eqref{fdecomp})
\be \label{hibar}
 H_{\ibar} = H_{j} \otimes \mathds1_S + \mathds1_j \otimes H_{S|R_i} + H_{jS}\,,
\ee
where $H_{jS}$ is the interaction between the other frame and the subsystem, and thus is fully non-local across $R_j$ and $S$ degrees of freedom. 

Now in the $R_j$-perspective $(j \neq i)$, the \emph{global} dynamics is still isolated (since the QRF transformation is unitary) and the solutions are similarly given by $\rho_{\jbar}(t) = e^{-i H_{\jbar}t}\rho_{\jbar}(0) e^{iH_{\jbar}t}$ with initial state $\rho_{\jbar}(0)=\hat{V}_{i\to j}^{g_i,g_j}(\rho_{\ibar}(0))$. Writing the Hamiltonian $H_{\jbar} = \hat{V}_{i\to j}^{g_i,g_j} (H_{\ibar})$ similarly in the form \eqref{hibar},
\be\label{eq:jbarHam}
H_{\jbar} =H_i\otimes\mathds1_S+\mathds1_i\otimes H_{S|R_j}+H_{iS}\,,
\ee 
it is interesting to inquire how the local and interaction contributions in $R_j$-perspective are related to the corresponding ones in $R_i$-perspective via QRF transformations.~Owing to the subsystem relativity established in Sec.~\ref{Sec:QuantRelTPSs}, the degree of locality of an operator may change and interaction terms can be relative, i.e.~QRF transformations can map free Hamiltonians into interacting ones and vice versa. 

Invoking what we have learned from the algebraic decompositions in Sec.~\ref{Sec:Uinv&break}, the answer is as follows and schematically illustrated in Fig.~\ref{Fig:QRFHam}:
\begin{align}
H_i\otimes\mathds1_S &= \hat{\mathcal{I}}_{i\to j}  (\hat{P}_{j}^{g_i,g_j}\left(H_{j}^{\mbd}) \otimes \mathds1_S\right) +\lambda_i\otimes\mathds1_S\label{gla1}\\
\mathds1_i\otimes H_{S|R_j}&=\mathds1_i\otimes (H^{\mbt}_{S|R_i}+\lambda_{S|R_j})\label{gla2}\\
H_{iS}&= \hat{V}_{i\to j}^{g_i,g_j}(H_{j}^{\mbd\perp}\otimes \mathds1_S + \mathds1_j\otimes H_{S|R_i}^{\mbt\perp}) \nonumber \\
&\;\;\;\;\;\;+ \hat{\mathcal{I}}_{i\to j}(\hat{P}_j^{g_i,g_j}\otimes\mathds1_S)\Pid \Pit (H_{jS}) + \lambda_{iS}\,.\label{newtotint}
\end{align}
Here we used the following convenient shorthand notation (in $R_i$-perspective)
\begin{align}
    H_{j}^{\mbd} \otimes \mathds1_S&:=\Pid(H_j\otimes\mathds1_S)\label{Hd5}\\
     H_{j}^{\mbd\perp} \otimes \mathds1_S&:=\Pid^\perp(H_j\otimes\mathds1_S)\label{Hdperp}\\
   \mathds1_j\otimes H^{\mbt}_{S|R_i} &:=\Pit(\mathds1_j\otimes H_{S|R_i})\label{Ht5}\\
   \mathds1_j\otimes H^{\mbt\perp}_{S|R_i} &:=\Pit^\perp(\mathds1_j\otimes H_{S|R_i})\,,\label{Htperp}
\end{align}
recalling from \eqref{piddef} and \eqref{pitdef} that $\Pid$ and $\Pit$ only act non-trivially on the $R_j$- and $S$-tensor factor, respectively. Moreover, the $\lambda$-terms come from decomposing the following
\begin{align} \label{intsourcelocal2}
\hat{V}_{i \to j}^{g_i,g_j}&(\Pid \Pit^{\perp} + \Pid^{\perp})(H_{jS}) =: 
\lambda_{i}\otimes \mathds1_S + \mathds1_i \otimes \lambda_{S|R_j} + \lambda_{iS}
\end{align}
into $R_i$-local, $S$-local and $R_iS$-interactions terms.

A few explanations are in place: the first term on the RHS of \eqref{gla1} is implied by Theorem~\ref{thm:jlocals} and Lemma~\ref{Lemma:tpseq} and is also diagonal in the frame orientation basis.~The first term on the RHS of \eqref{gla2} follows from Theorem~\ref{thm:Slocals} and Lemma~\ref{Lemma:tpseq} and is also translation-invariant.~Similarly, in \eqref{newtotint}, Theorems~\ref{thm:Slocals} and~\ref{thm:jlocals} entail that the term ${\hat{V}_{i\to j}^{g_i,g_j}(H_{j}^{\mbd\perp}\otimes \mathds1_S + \mathds1_j\otimes H_{S|R_i}^{\mbt\perp})}$ in the first line (if non-vanishing) necessarily contributes to the interactions between $R_i$ and $S$ in $R_j$-perspective.~Furthermore, that ${\hat{\mathcal{I}}_{i\to j}(\hat{P}^{g_i,g_j}_j\otimes\mathds1_S)\Pid \Pit (H_{jS})}$ likewise is necessarily an $R_iS$-interaction in $R_j$-perspective (if non-vanishing) is a consequence of Lemmas~\ref{lemma:pinpd-uin} and~\ref{Lemma:tpseq}.~Lastly, the last remaining piece in \eqref{intsourcelocal2}, depending on the precise form of $H_{jS}$, may or may not contribute free and/or interacting pieces.~This depends on whether $(\Pid \Pit^{\perp} + \Pid^{\perp})(H_{jS})$ is part of a TPS-invariant subalgebra, which by \eqref{ugly} is possible.~For example, the two-qubit interaction ${H_{2S}=\sigma_2^z\otimes\sigma_S^z}$ turns for $g_1=g_2=0$ into a free term for $S$, ${\hat{V}_{1 \to 2}^{0,0}(\Pid \Pit^{\perp} + \Pid^{\perp})(H_{2S}) =  \mathds1_1 \otimes \sigma_S^z}$ because $(\Pid \Pit^{\perp} + \Pid^{\perp})(H_{2S})=H_{2S}$ does not reside in a TPS-invariant subalgebra (cfr.\ Example~\ref{Ex:AUqubits2}).

As we shall demonstrate in the following, the QRF-dependence of the Hamiltonian and the interactions it contains have immediate consequences for the type of subsystem dynamics (closed vs.\ open), as well as for the type of thermodynamic system (isolated vs.\ closed) in the different frame perspectives.~In particular, we notice that the QRF-transformation can turn a non-interacting or weakly interacting Hamiltonian, say $H_{\ibar} = H_{j} \otimes \mathds1_S + \mathds1_j \otimes H_{S|R_i} + \epsilon H_{jS}$ with $\epsilon\ll1$, into a strongly interacting one, owing to the transformation properties \eqref{gla1}--\eqref{intsourcelocal2} of the individual local and interaction terms.

\subsection{Global dynamics} \label{subsec:globaldyn}

\begin{figure*}[!t]
\centering
\includegraphics[height=6.0cm,width=0.815\textwidth]{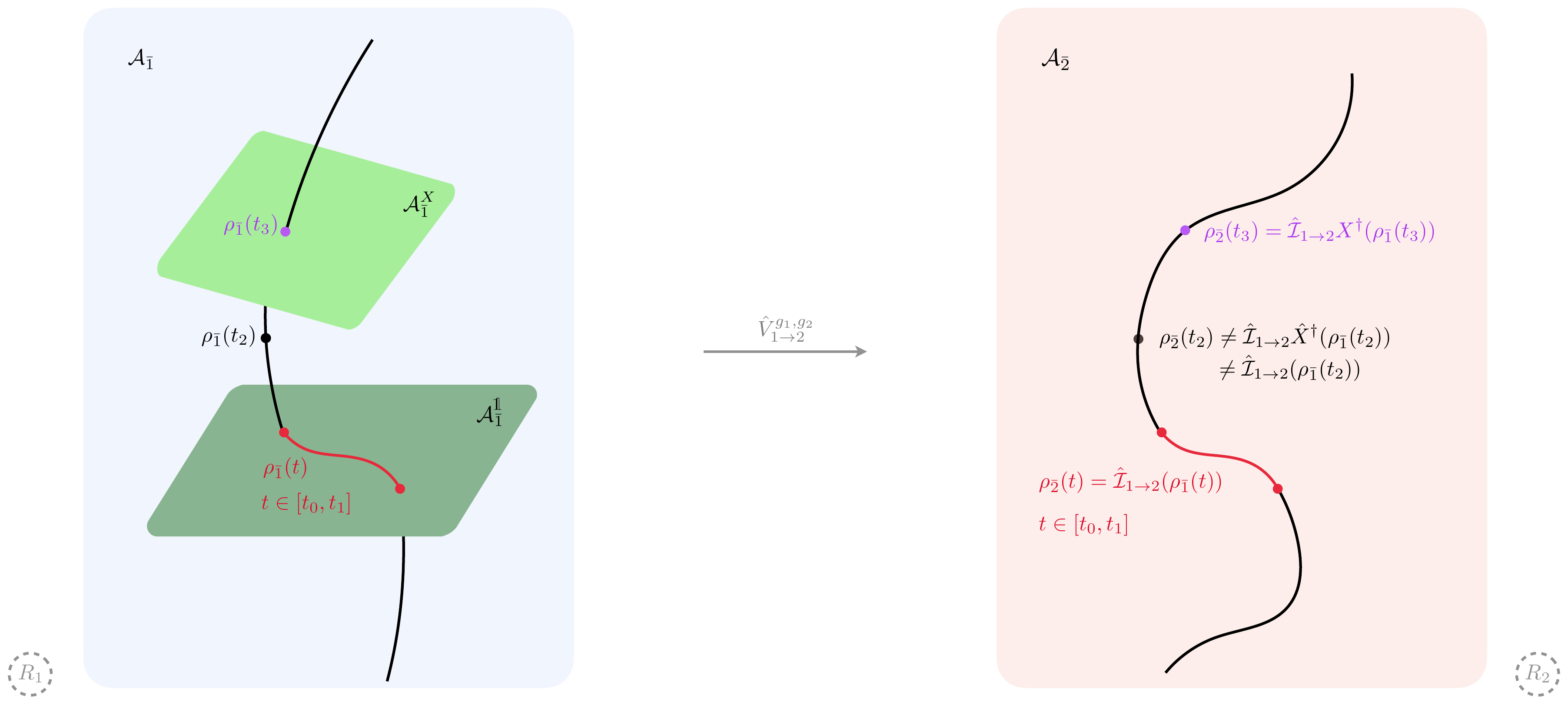}
\caption{Pictorial representation of the dynamical evolution of the global system relative to the frames $R_1$ (left) and $R_2$ (right).~At different times along the dynamical trajectory, the state $\rho_{\bar1}(t)$ relative to $R_1$ may enter different TPS-invariant subalgebras.~If at some instant $t_3$, $\rho_{\bar1}(t_3)\in\mcA_{\bar1}^{X}$ for some bilocal unitary $X$, then as per Lemma \ref{Lemma:tpseq} $\rho_{\bar2}(t_3)$ differs from $\rho_{\bar1}(t_3)$ only by local unitary conjugation and frame swap (shown in purple).~If $\rho_{\bar1}(t)$ lies in the subalgebra $\mcA_{\bar1}^{\mathds1}$ of operators commuting with $\mbfU_{\bar1}^{g_1,g_2}$ during some interval $[t_0,t_1]$, the states in the two perspectives appear identical up to frame swap (shown in red).~When the global state is outside of any of the TPS-invariant subalgebras, say for $t_2 \in (t_1,t_3)$, the QRF transformation acts non-locally and the trajectory appears qualitatively differently in the two perspectives (e.g.\ interacting vs.\ free).}
\label{Fig:AUXdyn}
\end{figure*}

Before exploring how the subsystem dynamics behaves under QRF changes, let us briefly investigate the same for the global dynamics.~From the preceding sections, we can already anticipate that, in order for the two QRF perspectives to agree (possibly up to local unitaries) or disagree on the description of the dynamics, it matters how the global Hamiltonian $H_{\ibar}$ and/or initial state $\rho_{\ibar}(0)$ relate to the various TPS-invariant subalgebras $\mcA_{\ibar}^X$.~Clearly, a variety of cases can occur: either of them could
\begin{itemize}
    \item[(1)] belong to a single (possibly distinct) TPS-invariant subalgebra, or
    \item[(2)] be decomposed into elements of multiple such subalgebras, or 
    \item[(3)] satisfy neither of the two.
\end{itemize}
Let us examine the consequences of such cases for the relation of the dynamics in the two perspectives.

We begin by inquiring when the state trajectory $\rho_{\ibar}(t)$ resides in a TPS-invariant subalgebra $\mcA_{\ibar}^X$ for some time interval $[t_0,t_1]$, so that, up to the bilocal unitary, both QRF perspectives on the dynamics agree (see Fig.~\ref{Fig:AUXdyn}).~To this end, it will turn out useful to consider a second Hamiltonian in $R_i$-perspective, namely 
\be\label{defHimport}
H_{j \to i}:=\hat{X} \hat{\mathcal{I}}_{j \to i}(H_{\jbar})=\hat X\hat U_{\ibar}^{g_i,g_j}(H_{\ibar})\,,
\ee
which, up to frame swap and $X$-conjugation, is identical to $H_{\jbar}$ and we accordingly call it the imported Hamiltonian from $R_j$-perspective.~Lemma~\ref{lem:pix} implies that their components inside $\mcA_{\ibar}^X$ agree,
\be\label{eq:hagree}
\PiUX(H_{\ibar})=\PiUX(H_{j\to i})\,,
\ee 
but the previous discussion in this section already tells us that they will generically differ outside of it $\hat{\Pi}_{\ibar}^{X\perp}(H_{\ibar})\neq\hat{\Pi}_{\ibar}^{X\perp}(H_{j\to i})$. For example, suppose $H_{\ibar}$ is free and of the form $H_{\ibar}=H_j^{\mbd\perp}\otimes\mathds1_S+\mathds1_j\otimes H_S^{\mbt\perp}$, where we used \eqref{Hdperp} and~\eqref{Htperp}, then $H_{j\to i}=\hat{X}\hat{\mbfU}_{\ibar}^{g_i,g_j}(H_j^{\mbd\perp}\otimes\mathds1_S+\mathds1_j\otimes H_{S|R_i}^{\mbt\perp})$ is a pure interaction term. We then have the following simple result, proved in Appendix~\ref{app:dyn}.
\begin{lemma} \label{lemma:finitetimeinvsoln}
Consider a Hamiltonian $H_{\ibar} \in \mcA_{\ibar}$ and an initial state $\rho_{\ibar}(t_0) \in \mathcal{S}(\mcH_{\ibar})$.~Then, $\rho_{\ibar}(t) \in \mcA_{\ibar}^{X}, \;\; \forall t \in[t_0,t_1]
$, for some bilocal unitary $X$ (and frame orientations $g_i,g_j \in \mcG$), if and only if the initial state resides in this subalgebra, $\rho_{\ibar}(t_0)\in\mcA_{\ibar}^X$, and
\be \label{commAUXt2}
[H_{\ibar}-H_{j \to i},\rho_{\ibar}(t)]=0 , \;\; \forall t\in [t_0,t_1]\,,
\ee
which is equivalent to $[\hat{\Pi}_{\ibar}^{X\perp}(H_{\ibar}-H_{j\to i}),\rho_{\ibar}(t)]=0$ for $t\in [t_0,t_1]$.
\end{lemma}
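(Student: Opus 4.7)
\textbf{Proof proposal for Lemma \ref{lemma:finitetimeinvsoln}.}

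The plan is to encode membership in $\mcA_{\ibar}^X$ as the fixed-point equation $\hat X\hat{\mathbf U}_{\ibar}^{g_i,g_j}(f_{\ibar})=f_{\ibar}$ (equivalent to \eqref{Uinvsubalgebra}), differentiate this condition along the Liouville--von Neumann flow, and exploit the fact that $\hat X\hat{\mathbf U}_{\ibar}^{g_i,g_j}$ is conjugation by the composite unitary $X\,\mathbf U_{\ibar}^{g_i,g_j}$ and therefore a $*$-automorphism which preserves commutators. Under this automorphism the Hamiltonian $H_{\ibar}$ is, by definition \eqref{defHimport}, mapped to $H_{j\to i}$, so the ``imported'' Hamiltonian is exactly the object needed to track deviations from the fixed-point condition over time.

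For necessity, assume $\rho_{\ibar}(t)\in\mcA_{\ibar}^X$ on $[t_0,t_1]$. Differentiating $\hat X\hat{\mathbf U}_{\ibar}^{g_i,g_j}(\rho_{\ibar}(t))=\rho_{\ibar}(t)$ and inserting \eqref{vNeom} yields
\begin{equation}
[H_{\ibar},\rho_{\ibar}(t)]=\hat X\hat{\mathbf U}_{\ibar}^{g_i,g_j}([H_{\ibar},\rho_{\ibar}(t)])=[H_{j\to i},\hat X\hat{\mathbf U}_{\ibar}^{g_i,g_j}(\rho_{\ibar}(t))]=[H_{j\to i},\rho_{\ibar}(t)],
\end{equation}
where the first equality uses the automorphism property and the last uses the fixed-point condition once more. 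This is precisely \eqref{commAUXt2}, and $\rho_{\ibar}(t_0)\in\mcA_{\ibar}^X$ is immediate. For sufficiency, introduce the ``defect''
\begin{equation}
\Delta(t):=\rho_{\ibar}(t)-\hat X\hat{\mathbf U}_{\ibar}^{g_i,g_j}(\rho_{\ibar}(t)),
\end{equation}
which by hypothesis satisfies $\Delta(t_0)=0$. Using \eqref{vNeom} and the same automorphism property,
\begin{equation}
\dot\Delta(t)=-i[H_{\ibar},\rho_{\ibar}(t)]+i[H_{j\to i},\hat X\hat{\mathbf U}_{\ibar}^{g_i,g_j}(\rho_{\ibar}(t))]=-i[H_{\ibar}-H_{j\to i},\rho_{\ibar}(t)]-i[H_{j\to i},\Delta(t)].
\end{equation}
The first bracket vanishes by the assumption \eqref{commAUXt2}, so $\Delta(t)$ obeys the linear homogeneous ODE $\dot\Delta=-i[H_{j\to i},\Delta]$ with $\Delta(t_0)=0$. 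By uniqueness of solutions, $\Delta(t)\equiv 0$ on $[t_0,t_1]$, i.e.\ $\rho_{\ibar}(t)\in\mcA_{\ibar}^X$ throughout.

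To establish the equivalence with the $\hat\Pi_{\ibar}^{X\perp}$ version of \eqref{commAUXt2}, I invoke Lemma~\ref{lem:pix}: $\hat\Pi_{\ibar}^{X}\circ\hat X\circ\hat{\mathbf U}_{\ibar}^{g_i,g_j}=\hat\Pi_{\ibar}^{X}$, whence $\hat\Pi_{\ibar}^{X}(H_{j\to i})=\hat\Pi_{\ibar}^{X}(H_{\ibar})$ (this is \eqref{eq:hagree}). Consequently the difference $H_{\ibar}-H_{j\to i}$ lies entirely in the orthogonal complement, $H_{\ibar}-H_{j\to i}=\hat\Pi_{\ibar}^{X\perp}(H_{\ibar}-H_{j\to i})$, and the two commutator conditions coincide verbatim.

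The only delicate step is the self-consistency of the sufficiency argument: \eqref{commAUXt2} is imposed along the actual solution trajectory $\rho_{\ibar}(t)$, not just at $t_0$, so one must check that this does not hide a circularity. It does not: \eqref{commAUXt2} is a condition one verifies on a given curve $\rho_{\ibar}(t)$ determined by $H_{\ibar}$ and $\rho_{\ibar}(t_0)$, and the Picard--Lindel\"of uniqueness for $\Delta$ applies regardless of how one knows that \eqref{commAUXt2} holds. This is the main conceptual point to state clearly, but technically it is routine.
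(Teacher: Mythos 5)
Your proof is correct. It rests on the same two ingredients as the paper's own argument in App.~\ref{app:dyn}: the characterisation of membership in $\mcA_{\ibar}^{X}$ as the fixed-point condition $\hat{X}\hat{\mbfU}_{\ibar}^{g_i,g_j}(\rho_{\ibar})=\rho_{\ibar}$ (equivalent to \eqref{Uinvsubalgebra}), and the identity $\hat{X}\hat{\mbfU}_{\ibar}^{g_i,g_j}(H_{\ibar})=H_{j\to i}$ from \eqref{defHimport}, with the final $\hat{\Pi}_{\ibar}^{X\perp}$ reformulation obtained in both cases from \eqref{eq:hagree}. The packaging, however, is genuinely different. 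The paper works at the integrated level: it conjugates $X\mbfU_{\ibar}^{g_i,g_j}$ through the Heisenberg evolution, reduces the hypothesis to $[\rho_{\ibar}(t_0),e^{iH_{\ibar}(t-t_0)}e^{-iH_{j\to i}(t-t_0)}]=0$, i.e.\ to the statement that the same trajectory is also generated by $H_{j\to i}$, and only then differentiates; the converse is asserted to follow ``along similar lines'' and is not written out. You instead differentiate the fixed-point condition directly for necessity and, for sufficiency, run a uniqueness argument on the defect $\Delta(t)=\rho_{\ibar}(t)-\hat{X}\hat{\mbfU}_{\ibar}^{g_i,g_j}(\rho_{\ibar}(t))$, which obeys the homogeneous linear equation $\dot\Delta=-i[H_{j\to i},\Delta]$ once \eqref{commAUXt2} is imposed. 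Your sufficiency direction is therefore more explicit than the paper's, and the defect-ODE mechanism makes transparent why the commutator condition must be imposed along the whole trajectory rather than only at $t_0$ — the point you correctly flag as the only delicate step. What the paper's integrated version buys is the intermediate identity $\rho_{\ibar}(t)=e^{-iH_{j\to i}(t-t_0)}\rho_{\ibar}(t_0)e^{iH_{j\to i}(t-t_0)}$, which is reused later (e.g.\ around \eqref{energyUXinvstate} and in Lemma~\ref{lemma:invTDrates}); but this also follows immediately from your version by combining \eqref{commAUXt2} with the Liouville--von Neumann equation, so nothing is lost.
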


In other words, in order for the dynamical trajectory to lie in some TPS-invariant subalgebra $\mcA_{\ibar}^{X}$ for some interval $[t_0,t_1]$, the Hamiltonian and this state must be such that,  over that interval, its evolution cannot distinguish between the true Hamiltonian $H_{\ibar}$ relative to frame $R_i$ and the imported one $H_{j \to i}$.~Even though $H_{\ibar}$ is \emph{not} necessarily $\mbfUX$-invariant, it behaves as though it were as far as the perspectival description of the global dynamics is concerned.~In such cases, the correlation structure between the subsystem and its complement is preserved under the QRF transformation during the interval $[t_0,t_1]$ (cfr.~Lemma \ref{Lemma:tpseq}).~Clearly, in the special case that also $H_{\ibar}\in\mcA_{\ibar}^X$, condition~\eqref{commAUXt2} is satisfied trivially and time evolution is just an automorphism of this subalgebra. 

Let us now illustrate Lemma~\ref{lemma:finitetimeinvsoln} with an example that yields a $\mbfUX$-invariant trajectory despite the Hamiltonian, depending on a choice of parameters, either not belonging to any TPS-invariant subalgebra or being $\mbfU_{\!X'}$-invariant with $X\neq X'$.

\begin{example}[\bf TPS-invariant trajectory]\label{ex:hamnon}
Consider a Hamiltonian $H_{\onebar} = \sum_{a}E_a \ket{E_a}\bra{E_a}_{\onebar}$ such that $\ket{E_b}\bra{E_b}_{\onebar} \in \mcA_{\onebar}^{X}$ for some $b$ and $X$.\footnote{We recall that the eigenstates of the Hamiltonian cannot all belong to the same subalgebra $\mcA_{\onebar}^{X}$ as otherwise any state in $\mathcal H_{\onebar}$ would be $\mbfUX$-invariant.~This cannot be the case due to the inequivalence of the TPSs relative to the two frames (cfr.~Lemma~\ref{lem_TPSineq}).}~The full $H_{\onebar}$ may or may not belong to this or any other invariant subalgebra.~Now consider an initial state $\rho_{\onebar}(0)= \ket{E_b}\bra{E_b}_{\onebar}\in\mcA_{\onebar}^X$ (or some $\ket{\psi}\bra{\psi}_{\onebar}$, with $\psi$ a superposition of eigenstates belonging to the same $\mcA_{\onebar}^{X}$).~Eq.~\eqref{commAUXt2} is satisfied since $\ket{E_b}_{\onebar}$ is also an eigenstate of $H_{2\to 1}=\hat{X}\hat{\mbfU}_{\onebar}^{g_1,g_2}(H_{\onebar})$, i.e.~$H_{2\to 1}\ket{E_b}\bra{E_b}_{\onebar} = E_b \ket{E_b}\bra{E_b}_{\onebar}$.~In this case, the state is stationary and thus obviously remains in $\mcA_{\onebar}^X$ at all times.

As a concrete example, consider a system of $N=3$ qubits $R_1R_2S$ and frame orientations $g_1=g_2=0$ before and after QRF change, respectively.~Recall that in this case the QRF transformation becomes a \textsf{CNOT}, $\mbfU_{\onebar}^{0,0}=\ket{0}\!\bra0_{2}\otimes\mathds1_S+\ket{1}\!\bra1_{2}\otimes\sigma_S^x \equiv \mbfU_{\onebar}$.~Let the Hamiltonian of $R_2S$ relative to $R_1$ be the following $ZZ$-interacting Hamiltonian
\be\label{HamBJZZ}
H_{\bar 1}=B(\sigma_2^z\otimes\mathds1_S+\mathds1_2\otimes\sigma_S^z)+2J\sigma_2^z\otimes\sigma_S^z\,.
\ee
For $B=2J$, the Hamiltonian \eqref{HamBJZZ} belongs to $\mcA_{\bar 1}^{\mathds1}$, otherwise it does not belong to any TPS-invariant subalgebra for any $X$ (cfr.~\eqref{3qham} in Example~\ref{Ex:AUqubits}).~Consider the initial state
\be\label{ex:2qex2}
\ket{\psi(0)}_{\bar{1}} = \ket{1}_{2}\otimes(a\ket{0}_{S}+b\ket{1}_{S})
\ee
for some $a,b\in\mathbb C$, which satisfies $\rho_{\bar 1}(0)=\ket{\psi(0)}\bra{\psi(0)}_{\bar{1}}\in\mcA_{\bar 1}^{X}$ with $X=\mathds 1_2\otimes\sigma_S^x$ (cfr.~Example~\ref{Ex:AUqubits}).~Hence, even for $B=2J$, the Hamiltonian and initial state do not share the same subalgebra.~However, \eqref{ex:2qex2} is a linear combination of the eigenstates $\ket{10}_{\bar 1}$ and $\ket{11}_{\bar 1}$ of the Hamiltonian \eqref{HamBJZZ}, with eigenvalues $-2J$ and $2(J-B)$, and we have that $X\mbfU_{\onebar}\ket{10}_{\bar 1}=\ket{10}_{\bar 1}$ and $X\mbfU_{\onebar}\ket{11}_{\bar 1}=\ket{11}_{\bar 1}$.~The state $\ket{\psi(t)}_{\bar{1}}$ at any later time is thus given by
\be\label{ex:2qex2globev}
\ket{\psi(t)}_{\bar{1}}=\ket{1}_{2}\otimes(e^{2iJt}a\ket{0}_S+e^{-2i(J-B)t}b\ket{1}_S)
\ee
and so $\rho_{\bar 1}(t)\in\mcA_{\bar 1}^{X}$ with $X=\mathds 1_2\otimes\sigma_S^x$ also at all later times.~Indeed, condition~\eqref{commAUXt2} of Lemma~\ref{lemma:finitetimeinvsoln} is satisfied for all later times, regardless of the parameter values $B$ and $J$, with imported Hamiltonian given by
\be\label{XUHamBJZZ}
H_{2\to1}=B\sigma_2^z\otimes\mathds1_S-2J\mathds1_2\otimes\sigma_S^z-B\sigma_2^z\otimes\sigma_S^z.
\ee
\end{example}

In App.~\ref{app:dyn}, we describe two further illustrations of Lemma~\ref{lemma:finitetimeinvsoln}. Example~\ref{Ex:HrhoAUX} invokes a QRF-invariant Hamiltonian to generate trajectories that are  automorphisms of a TPS-invariant subalgebra or are not TPS-invariant. In Example \ref{ex:inoutAUX}, dynamical trajectories oscillate between different TPS-invariant subalgebras, so that the two QRF perspectives agree only periodically on the correlations between the subsystem $S$ and its complement. 

\subsection{Subsystem dynamics}\label{Subsec:subsysdyn}

Let us now investigate to which extent the dynamical description of the subsystem of interest will differ (or not) in the two frame perspectives.~For instance, we can ask under which conditions a subsystem that appears closed in $R_1$-perspective can appear open in $R_2$-perspective as a direct consequence of the change of reference frame.~Specifically, a subsystem that is uncorrelated with its complement under dynamical evolution in one perspective can exhibit non-trivial dynamical correlations in another.~This is a direct physical consequence of the subsystem relativity established in Sec.~\ref{Sec:QuantRelTPSs} and the ensuing frame dependent degrees of locality (thus, the interaction structure) of the Hamiltonian and of correlations in the total system as discussed in Sections \ref{Sec:Uinv&break}, \ref{Sec:states&entanglement} and \ref{subsec:globaldyn}. 

Clearly, as we will now see in detail, the subsystem equations of motion and solutions, will thereby generically also differ in different perspectives.~As the subsystem equations of motion will generically depend on the global state, we can inquire about the \emph{type} of subsystem dynamics (closed or open) and its frame dependence at two levels:
\begin{itemize}
    \item[(A)] the full space of solutions, or
    \item[(B)] \emph{specific} solutions.
\end{itemize}
Indeed, in a space of solutions of an open subsystem, there will typically exist solutions for which the subsystem is effectively closed.~As we will see in Sec.~\ref{Sec:qrfdepopenclosed} below, whether a subsystem is closed or open at either level generically depends on the frame choice.~We will use this in Sec.~\ref{Sec:dynrhoSandentropies} to further study the frame dependence of correlations between $S$ and the ``other frame'' over time, extending the instantaneous results of Sec.~\ref{Sec:states&entanglement} to the dynamical setting.

\subsubsection{Subsystem equations of motion}\label{sssec:subeom}

To write down the general form of the subsystem equations of motion, let us write any global state $\rho_{\ibar}(t)$ at any time $t$ in the following useful way (see e.g.~\cite{Rivas_2012,rezakhani} and references therein)
\be \label{rhocorr}
 \rho_{\ibar}(t) = \rho_{j}(t) \otimes \rho_{S|R_i}(t) + \omega_{jS}(t)\,, 
\ee
where $\rho_j(t) := \Tr_S \rho_{\ibar}(t)$ and $\rho_{S|R_i}(t) := \Tr_j \rho_{\ibar}(t)$ are the reduced states relative to $R_i$, and $\omega_{jS}(t) :=  \rho_{\ibar}(t) - \rho_j(t) \otimes \rho_{S|R_i}(t)$ is the part encoding correlations (both classical and quantum).~Notice that by construction, we have
\be\label{trcorr}
\Tr_j \omega_{jS}(t) = 0 = \Tr_S \omega_{jS}(t)\,. 
\ee
The QRF transformed state can be written similarly as
\be 
 \rho_{\jbar}(t) = \hat{V}_{i\to j}^{g_i,g_j}(\rho_{\ibar}(t)) =: \rho_{i}(t) \otimes \rho_{S|R_j}(t) + \omega_{iS}(t) \,,\label{rhocorrj}
\ee
where $\rho_i(t) := \Tr_S \rho_{\jbar}(t)$ and $\rho_{S|R_j}(t) := \Tr_i \rho_{\jbar}(t)$ are the respective reduced states relative to $R_j$, and $\omega_{iS}(t):=\rho_{\jbar}(t) - \rho_i(t) \otimes \rho_{S|R_j}(t)$ is the traceless part (satisfying equations analogous to \eqref{trcorr}) encoding correlations.~We should stress that $\rho_i(t) \otimes \rho_{S|R_j}(t)$ and $\omega_{iS}(t)$ in the $R_j$-perspective are generically not the QRF transform respectively of $\rho_j(t) \otimes \rho_{S|R_i}(t)$ and $\omega_{jS}(t)$.

Let us consider a possibly interacting, time-independent Hamiltonian $H_{\ibar}$ written in the form \eqref{hibar}.~The subsystem equation of motion in the $R_i$-perspective ($i=1,2$) is
\begin{align}
i  \, \dot{\rho}_{S|R_i}(t) &= \Tr_j [H_{\ibar},\rho_{\ibar}(t)] \label{sieom2} \\
&= \big[H_{S|R_i} + \tilde{H}_{S|R_i}(t), \rho_{S|R_i}(t) \big] \nonumber \\
& \;\;\; + \Tr_j \!\big[H_{jS},\omega_{jS}(t) \big] \,,   \label{sieom}
\end{align}
where we have denoted,
\be\label{eq:HStilde}
    \tilde{H}_{S|R_i}(t) = \Tr_j\!\big(H_{jS} (\rho_{j}(t) \otimes \mathds1_S) \big).
\ee
Eq.~\eqref{sieom} is the exact, general form of the subsystem equation of motion relative to frame $R_i$ ($i=1,2$), encompassing both closed and open dynamics, respectively, when interactions are zero or not in the total Hamiltonian.~Starting from \eqref{sieom}, we can thus inquire about the frame dependence of the \emph{type} of dynamics that the subsystem undergoes.~To this aim, let us look more carefully at the structure of the various terms on the RHS of \eqref{sieom}. 

The ``other frame''-local Hamiltonian $H_{j} \otimes \mathds1_S$ of course does not play any role for subsystem $S$ dynamics in the old $R_i$-perspective.~However, upon QRF transforming to the new $R_j$-perspective $(j=1,2, j\neq i)$, its $\Pid^{\perp}(H_{j} \otimes \mathds1_S)$ component \emph{does} play a non-trivial role for the $S|R_j$ dynamics,
since it contributes to the interactions $H_{iS}$ as shown in \eqref{newtotint}.~Together, this also means that $\Pid(H_{j} \otimes \mathds1_S)$ is the only component of $H_{\bar i}$ that does not have any role to play for the dynamics of the subsystem $S$ in any of the two perspectives.

The first contribution to the commutator on the RHS of \eqref{sieom} represents unitary subsystem dynamics.~It depends only on $H_{S|R_i}$, hence on interactions internal to the subsystem and not on those between $S$ and its complement.~The second contribution involves $\tilde{H}_{S|R_i}(t)$, which can be understood as a contribution to the subsystem Hamiltonian \emph{induced} by the dynamical coupling of $S$ with the other frame $R_j$ degrees of freedom in the $R_i$-perspective.~Its expression \eqref{eq:HStilde} is clearly state-dependent; only when the total Hamiltonian is non-interacting in a given perspective (hence also $\tilde{H}_{S|R_i}(t)=0$), does the \emph{full} subsystem solution space undergo unitary (hence closed) dynamics in that perspective.~On the contrary, in the presence of interactions, as will be illustrated shortly below, an open subsystem may admit specific solutions that undergo an \emph{effective} unitary (hence closed) evolution (see for instance Example \ref{Ex:closedHint} below).

The last term in \eqref{sieom} is responsible for a genuine non-unitary open dynamics, as expected from the theory of open quantum dynamics and quantum thermodynamics \cite{Rivas_2012,Breuer:2002pc,e15062100,binder2019thermodynamics,Goold_2016,vinanders}.~In other words, this term cannot be brought to a closed commutator form like that of the first term in \eqref{sieom}, nor does it identically compute to zero in general for an interacting global dynamics.~Under a Markovian approximation for example, this term would represent the Lindblad dissipator. 

\subsubsection{QRF-relative closed and open subsystems}\label{Sec:qrfdepopenclosed}
 
We can now proceed to utilise the algebraic results of Sec.~\ref{Sec:Uinv&break} to characterise the conditions under which the $S$-subsystem dynamics will (or will not) be of the same type in the two perspectives.~As we have done previously, our strategy will be to first characterise when the subsystem dynamics appears the same in the two perspectives, which will then guide us as to when it appears differently in these perspectives.~Specifically, we inquire when a closed subsystem in one perspective will remain closed under a change of reference frame, i.e.~when no interactions are present between $S$ and the other frame in \emph{both} perspectives (so that it is closed for all solutions).~This is characterised by the following simple result, for which we recall definitions~\eqref{Hd5}--\eqref{Htperp} (see App.~\ref{app:dyn} for the proof).
\begin{lemma}[\bf Closed-to-closed \& closed-to-open]\label{lemma:closed2closed}
Let the subsystem $S$ be \emph{dynamically closed} relative to frame $R_i$ (in some orientation $g_i$), i.e.~the total Hamiltonian $H_{\ibar}=H_{j}\otimes\mathds1_S+\mathds1_j\otimes H_{S}$ relative to $R_i$ is non-interacting.~Then, the subsystem $S$ is \emph{dynamically closed} also relative to frame $R_j$ (in any orientation $g_j$) if and only if $H_{\ibar}$ is of the form
\be\label{eq:invHclosed2}
H_{\ibar}=H_j^{\mbd}\otimes\mathds1_S+\mathds1_j\otimes H_S^{\mbt}\,.
\ee
In other words, $S$ is \emph{dynamically open} relative to frame $R_j$ if and only if at least one of the following holds
\be\label{cl2opconditions}
H_{j}^{\mbd\perp}\otimes\mathds1_S\neq0\,,\quad\mathds1_j\otimes H_{S}^{\mbt\perp}\neq0\,.
\ee
\end{lemma}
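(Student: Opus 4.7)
The plan is to feed the non-interacting Hamiltonian $H_{\ibar}=H_j\otimes\mathds1_S+\mathds1_j\otimes H_S$ (so $H_{jS}=0$) into the general QRF transformation rules \eqref{gla1}--\eqref{intsourcelocal2}. With $H_{jS}=0$, the source-like pieces in \eqref{intsourcelocal2} vanish identically, $\lambda_i=\lambda_{S|R_j}=\lambda_{iS}=0$, and \eqref{newtotint} collapses to
\[
H_{iS}=\hat V_{i\to j}^{g_i,g_j}\bigl(H_j^{\mbd\perp}\otimes\mathds1_S+\mathds1_j\otimes H_{S|R_i}^{\mbt\perp}\bigr).
\]
Since $\hat V_{i\to j}^{g_i,g_j}$ is a unitary superoperator, the subsystem $S$ is dynamically closed in $R_j$-perspective (i.e.\ $H_{iS}=0$) if and only if its argument vanishes.

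The next step is to argue that the vanishing of this sum forces each summand to vanish separately. The crucial input is that both $\Pid^\perp$ and $\Pit^\perp$ produce traceless operators when applied respectively to $R_j$-local and $S$-local inputs: $\Pid$ diagonalises the $R_j$-factor in the $\mcG$-basis and is trace-preserving, while $\Pit$ is a unital, trace-preserving $G$-twirl on $S$, so their orthogonal complements land in traceless subspaces. Partial-tracing the vanishing sum over $S$ then yields $\dim(\mcH_S)\,H_j^{\mbd\perp}+\Tr(H_{S|R_i}^{\mbt\perp})\,\mathds1_j=\dim(\mcH_S)\,H_j^{\mbd\perp}=0$, so $H_j^{\mbd\perp}=0$; the symmetric partial trace over $R_j$ then forces $H_{S|R_i}^{\mbt\perp}=0$. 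Invoking Theorems~\ref{thm:jlocals} and~\ref{thm:Slocals}, this is precisely the statement that $H_j$ is diagonal in the orientation basis and $H_S$ is translation-invariant, i.e.\ $H_{\ibar}$ is of the form \eqref{eq:invHclosed2}.

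The converse direction is immediate: if $H_j=H_j^{\mbd}$ and $H_S=H_S^{\mbt}$, then by definition $H_j^{\mbd\perp}=0$ and $H_{S|R_i}^{\mbt\perp}=0$, and substituting back into \eqref{newtotint} together with $H_{jS}=0$ gives $H_{iS}=0$. Equations \eqref{gla1} and \eqref{gla2} further confirm that the resulting $H_{\jbar}$ decomposes purely into $R_i$- and $S$-local terms. The contrapositive statement, that $S$ is dynamically open relative to $R_j$ iff at least one of the conditions \eqref{cl2opconditions} holds, then follows at once.

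The only delicate step is the ``separate vanishing of each summand'' argument: one must make sure that the tracelessness of $\Pid^\perp(\bullet)$ on $R_j$-local inputs and of $\Pit^\perp(\bullet)$ on $S$-local inputs really does rule out an identity-proportional cancellation between the two tensor terms. Once that is pinned down, the remainder of the proof is a direct substitution into the QRF transformation rules \eqref{gla1}--\eqref{intsourcelocal2}.
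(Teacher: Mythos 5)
Your proof is correct and follows essentially the same route as the paper's: substitute $H_{jS}=0$ into the transformation rules \eqref{gla1}--\eqref{intsourcelocal2}, use the unitarity of $\hat V_{i\to j}^{g_i,g_j}$ together with Theorems~\ref{thm:Slocals} and~\ref{thm:jlocals} to conclude that $S$ is closed in $R_j$-perspective iff $H_j^{\mbd\perp}\otimes\mathds1_S+\mathds1_j\otimes H_S^{\mbt\perp}=0$, and then read off \eqref{eq:invHclosed2}. Your explicit partial-trace/tracelessness argument showing that this sum can only vanish if each summand vanishes separately is a step the paper's proof leaves implicit, and it is correct: both $\Pid$ and $\Pit$ are trace-preserving, so their orthogonal complements are traceless and no identity-proportional cancellation between the two tensor factors is possible.
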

Since we are considering only time-independent total Hamiltonians, Lemma~\ref{lemma:closed2closed} characterises the physical scenarios wherein the subsystem is dynamically isolated in both perspectives and thereby also when it is isolated in one, but open in the other.~Due to Theorems~\ref{thm:Slocals}~and~\ref{thm:jlocals}, condition \eqref{eq:invHclosed2} is equivalent to having a non-interacting $H_{\ibar}$ (i.e.~$H_{jS}=0$) belonging to the subalgebra $\mcA_{\ibar}^{X}$ with $X=P_j^{g_i,g_j}\otimes \mathds1_S$.~Free Hamiltonians belonging to this subalgebra thus fully characterise the physical scenario of having a ``closed-to-closed quantum subsystem'' under the QRF transformation.~This is clearly a strong constraint and for generic free Hamiltonians in $R_i$-perspective at least one of~\eqref{cl2opconditions} will hold, such that the closed subsystem will appear open in $R_j$-perspective ($i\neq j$).

We emphasise that this QRF relativity of dynamical openness or closure of $S$ is not a gauge artefact, which one might at first conclude upon recalling that ``jumping into the perspective'' of QRF $R_i$ is associated with a reduction map $\mathcal{R}_{i}^{g_i}$, given in~\eqref{eq:iReduction}, that really is a gauge-fixing of $R_i$ into orientation $g_i$.~The QRF transformation $V_{i\to j}^{g_i,g_j}$ in~\eqref{eq:Vitoj} is thus associated with a change of gauge.~However, we are gauge fixing manifestly gauge-invariant states in $\Hphys$ and observables in $\Aphys$ and, using the invertibility of the reduction maps, all these statements could be equivalently written at the manifestly gauge-invariant level without any gauge-fixing at all.~In particular, recall from Sec.~\ref{Sec:QuantRelTPSs} (and especially Theorem~\ref{lem_TPSineqalg}) that the relational observables relative to frames $R_1$ and $R_2$ induce two \emph{inequivalent} factorisations of $\Hphys$ and $\Aphys$, thereby leading to two inequivalent \emph{gauge-invariant} relational notions of subsystem $S$ and ``the other frame''.~It is these inequivalent gauge-invariant definitions of $S$ that need not agree on whether $S$ is open or closed with respect to the perspective-neutral Hamiltonian $H_{\rm phys}=(\hat{\mathcal{R}}_i^{g_1})^\dag(H_{\ibar})$.~Recall also from Sec.~\ref{Sec:QuantRelTPSs} that the reduction maps $\mathcal{R}_i^{g_i}$ are nothing but a TPS on the gauge-invariant $\Hphys$ and $\Aphys$ that happen to be ``aligned'' with a frame choice (which by itself is not a choice of gauge), in the sense that the relational observables describing $R_j$ and $S$ relative to $R_i$ each appear local in that TPS, and vice versa.

In the present context, condition~\eqref{eq:invHclosed2} is equivalent to the two contributions to $H_{\rm phys}$, namely ${(\hat{\mathcal{R}}_i^{g_i})^\dag(H_j^\mbd\otimes\mathds1_S)}$ and ${(\hat{\mathcal{R}}_i^{g_i})^\dag(\mathds1_j\otimes H^\mbt_S)}$, lying in the overlaps ${\mathcal{A}^{\rm phys}_{R_2|R_1}\cap\mathcal{A}^{\rm phys}_{R_1|R_2}}$ in Fig.~\ref{Fig:locframeDO} and ${\mathcal{A}^{\rm phys}_{S|R_1}\cap\mathcal{A}^{\rm phys}_{S|R_2}}$ in Fig.~\ref{Fig:SDOalgebras}, respectively.~This follows from Theorems~\ref{lem_TPSineqalg}--\ref{thm:jlocals} and provides a gauge-invariant characterisation of a Hamiltonian that is free with respect to both QRFs.

Let us illustrate the above discussions with the following examples.

\begin{example}\label{ex:closedvsopen}
Consider again the three-qubit total system $R_1R_2S$ and the following two example Hamiltonians
\begin{align}
    &(1)\, H_{\bar 1}=\sigma_2^z\otimes\mathds1_S+\mathds1_2\otimes\sigma_S^x,\label{invHfree}\\
    &(2)\, H_{\bar 1}=\sigma_2^z\otimes\mathds1_S+\mathds1_2\otimes\sigma_S^z.\label{noninvHfree}
\end{align}
Clearly, both Hamiltonians lead to unitary dynamics for $S$ relative to $R_1$.~Using the defining equations \eqref{pitdef} and \eqref{piddef} for the projectors, it is straightforward to check that the Hamiltonian \eqref{invHfree} satisfies \eqref{eq:invHclosed2}, while \eqref{noninvHfree} satisfies \eqref{cl2opconditions}.~The corresponding Hamiltonians $H_{\bar 2}=\hat{V}^{0,0}_{1\to2}(H_{\bar 1})$ (with $g_1=g_2=0$) relative to $R_2$ are
\begin{align}
    &(1)\, H_{\bar 2}=\sigma_1^z\otimes\mathds1_S+\mathds1_1\otimes\sigma_S^x,\nonumber\\
    &(2) \, H_{\bar 2}=\sigma_1^z\otimes\mathds1_S+\sigma_1^z\otimes\sigma_S^z,\nonumber
\end{align}
using $V_{1\to2}^{0,0}=\ket0_1\otimes\bra0_2\otimes\mathds1_S+\ket1_1\otimes\bra1_2\otimes\sigma_S^x$.~Thus, we see that no $R_1S$ interactions are present in $R_2$-perspective for $(1)$, while they are for $(2)$.~That is, $S$ in the new $R_2$-perspective is closed and open in cases $(1)$ and $(2)$, respectively.
\end{example}

In the presence of interactions and dynamical correlations in the total system ($H_{jS}\neq0, H_{iS} \neq 0$),~\eqref{sieom} shows that the subsystem $S$ is typically open relative to both $R_i$ and $R_j$ perspectives.~However, some specific subsystem states can certainly evolve unitarily.~Depending on the interplay between the Hamiltonian and the state preparation, the RHS of~\eqref{sieom} can reduce to a commutator, so that the evolution becomes effectively closed.~From~\cite{Rodriguez_Rosario_2011} it follows that the necessary and sufficient condition for effectively closed dynamics in $R_i$-perspective is 
\be
[\mathds1_j\otimes\rho_{S|R_i},\rho_{\ibar}]=0\,,\nonumber
\ee
which is equivalent to the subsystem state commuting with the correlation part of the global state
\be\label{necsuffeffclosed}
[\mathds1_j\otimes\rho_{S|R_i},\omega_{jS}]=0\,.
\ee
We leave open the question of characterising the precise conditions under which effectively closed dynamics remains invariant under QRF changes, i.e.\ under which conditions \eqref{necsuffeffclosed} is obeyed in \emph{both} perspectives and so holds also with $i$ and $j$ labels interchanged.~Coming back to~Lemma~\ref{lemma:finitetimeinvsoln}, we can, however, provide a sufficient condition for this to occur, namely that~\eqref{commAUXt2} holds.~This is the necessary and sufficient condition for a state to reside in a \emph{fixed} TPS-invariant algebra during an interval, in which then $\rho_{S|R_j}=\hat{Z}_S^\dag(\rho_{S|R_i})$ with time independent unitary $Z_S$.~Hence, ${\dot{\rho}_{S|R_2}=\hat{Z}_S^\dag\left(\dot{\rho}_{S|R_1}\right)}$ and $S$ undergoes effectively open or closed dynamics in $R_2$-perspective, whenever it respectively does so also in $R_1$-perspective.~Let us illustrate this with a simple example.
\begin{example}{\bf (Effectively isolated in both perspectives)} \label{Ex:closedHint}
We continue with Example \ref{ex:hamnon}.~Recall that the interacting Hamiltonian relative to $R_1$ is ${H_{\bar 1}=B(\sigma_2^z\otimes\mathds1_S+\mathds1_2\otimes\sigma_S^z)+2J\sigma_2^z\otimes\sigma_S^z}$.~The corresponding Hamiltonian in  $R_2$-perspective is also interacting, ${H_{\bar 2}=B\sigma_1^z\otimes\mathds1_S+2J\mathds1_1\otimes\sigma_S^z+B\sigma_1^z\otimes\sigma_S^z}$.~The global states at any time $t$ in the two perspectives are given by
\begin{align}
\ket{\psi(t)}_{\bar{1}}&=\ket{1}_{2}\otimes(e^{2iJt}a\ket{0}_S+e^{-2i(J-B)t}b\ket{1}_S),\nonumber\\
\ket{\psi(t)}_{\bar{2}}&=\ket{1}_{1}\otimes(e^{2iJt}a\ket{1}_S+e^{-2i(J-B)t}b\ket{0}_S)\,,\nonumber
\end{align}
and so separable in both perspectives.~In particular, $\omega_{2S}(t)=0=\omega_{1S}(t)$, so that~\eqref{necsuffeffclosed} is trivially fulfilled in both perspectives.~Further, we have $\rho_{S|R_2}(t)=\sigma_S^x\rho_{S|R_1}(t)\sigma_S^x$, such that the third term in the subsystem equation of motion~\eqref{sieom} vanishes.~Moreover, the expression \eqref{eq:HStilde} yields $\tilde{H}_{S|R_1}(t)=-2J\sigma_S^z$ and $\tilde{H}_{S|R_2}(t)=-B\sigma_S^z$.~The evolution of the $S$-subsystem state is effectively isolated in both perspectives, generated by the following time-independent \emph{effective} Hamiltonians $H_{S|R_1} + \tilde{H}_{S|R_1}(t)=(B-2J)\sigma_S^z\equiv-(H_{S|R_2} + \tilde{H}_{S|R_2}(t))$.
\end{example}

\subsubsection{QRF-relative subsystem states and entropies}\label{Sec:dynrhoSandentropies}

Even when the subsystem $S$ is (either effectively or exactly) closed in both perspectives, it is not necessarily the case that the subsystem entropies agree dynamically.~Let us combine our earlier results to briefly discuss conditions for the correlations between $S$ and its complement to remain invariant or change under QRF transformations for individual solutions to the equations of motion.  

The proof of Corollary~\ref{claim:Renyipure} implies that the $\alpha$-R\'enyi (and hence also entanglement) entropies are the same in both perspectives during some time interval, i.e.\ $S_\alpha[\rho_{S|R_1}(t)]=S_\alpha[\rho_{S|R_2}(t)]$,~$\forall\, t\in [t_0,t_1]$ and $\forall\, \alpha\in(0,1)\cup(1,\infty)$, if and only if 
\be\label{alltrhoSZSpure}
\rho_{S|R_2}(t)=\hat Z_S(t)^{\dagger}(\rho_{S|R_1}(t)),\;\; \forall t \in[t_0,t_1]
\ee
with a possibly $t$-dependent unitary $Z_S$.~In the special case of pure global states, Theorem~\ref{claim:purestatesrhoSZS} tells us that this is true if and only if $\rho_{\ibar}(t)\in\mcA_{\ibar}^{X(t)}$, $\forall t \in[t_0,t_1]$, i.e.\ if and only if the global state resides in some (possibly $t$-dependent) TPS-invariant subalgebra at all times during the pertinent interval.~For mixed states, Corollary~\ref{claim:rhoSZSmixed} entails that a sufficient condition for~\eqref{alltrhoSZSpure} to hold is that $\rho_{\ibar}(t)$ admits a decomposition into pure states, each of which lies in some $\mcA_{\ibar}^{X^A(t)}$ for all times in the interval with $X^A(t)= Y^A_j(t)\otimes Z_S(t)$, where $Y^A_j(t)$ may depend on that state.

If we restrict further to the special case that $X$ is independent of time, Lemma~\ref{lemma:finitetimeinvsoln} provides necessary and sufficient conditions for a (pure or mixed) state to reside in a TPS-invariant algebra during an interval, in which case the subsystem entropies then agree in both perspectives.~Specifically, it provides the necessary and sufficient conditions for the pure states in the above decomposition to reside in a fixed TPS-invariant subalgebra.~As already emphasised in Sec.~\ref{Sec:states&entanglement} for the instantaneous setting, a generic pure state will not belong to any $\mcA_{\ibar}^{X}$.~Even if it does initially at $t_0$, it will not remain inside it unless condition~\eqref{commAUXt2} is satisfied, which also depends on the Hamiltonian.~Therefore, the unitary equivalence of subsystem states in different QRF perspectives~\eqref{alltrhoSZSpure} over extended intervals is a rather special property and we may anticipate that for generic dynamical trajectories, the subsystem entropies will not coincide in the two perspectives.

We illustrate these observations with three examples in App.~\ref{app:dyn}.~Example~\ref{Ex:subsystemdyn} uses a QRF-invariant Hamiltonian to generate example trajectories that do or do not feature the same subsystem entropies in both perspectives, depending on whether they reside in a TPS-invariant subalgebra.~Example~\ref{inandout:Sdyn} exhibits global pure state trajectories oscillating between distinct TPS-invariant subalgebras, so that the two QRFs only agree periodically on all subsystem entropies and disagree otherwise.~Finally, Example~\ref{ex:mixedstatedynamics} demonstrates the same for global mixed states.

\section{QRF-covariance of thermodynamics}\label{Sec:TD}

The tools and results developed in the previous sections allowed us to systematically study the dependence of observables, states, correlations, and dynamics on the choice of internal QRF.~All of these are prerequisite for the investigation of the QRF-dependence of thermodynamical properties of physical subsystems, to which this last section is devoted.

To start, a few words about the framework for our analysis are in order.~Let us emphasise that our discussions in this work are not restricted to systems of macroscopic sizes, for which the standard phenomenological thermodynamical setup and interpretations usually apply.~In other words, we have allowed for quantum systems of microscopic sizes,~e.g.~a small number of qubits.~This is the realm of quantum thermodynamics (see for instance \cite{binder2019thermodynamics,Goold_2016,e15062100,vinanders}, and references therein).~It broadly aims to extend the principles of thermodynamics to quantum microscopic systems, and derive the emergence of standard phenomenological thermodynamics from an underlying quantum theory.

Quantum thermodynamics does not rely on conventional and restrictive notions like macroscopic reservoirs, thermal baths, weak couplings and close-to-equilibrium scenarios, thus being particularly valuable for finite-time and non-equilibrium thermodynamics \cite{binder2019thermodynamics,Goold_2016,e15062100,vinanders}.~In the present setting of QRFs, where even a single particle can serve as an internal reference frame, the complement of the subsystem of interest $S$ in the chosen frame perspective can then also consist of a single particle (the ``other frame'').~Moreover, quantum thermodynamics has a rich interplay with quantum information theory, to connect quantum properties of the system with (possibly generalised) thermodynamical ones.~The framework of quantum thermodynamics is thus naturally suited for the present setup.

We begin with a discussion relevant for equilibrium scenarios and thermostatics in Sections~\ref{Sec:thermostatics} and~\ref{Sec:thermostatics2}.~We then move to a discussion of generic thermodynamical processes in Sec.~\ref{Sec:tdynproc}.

\subsection{Subsystem observable averages}\label{Sec:thermostatics}

In the case when $S$ is of  macroscopic size and in thermodynamic equilibrium, the specification of certain expectation values can constitute the specification of a thermodynamic macrostate.~A thermodynamic macrostate is a collection of thermodynamically relevant observable averages (like energy, volume, particle number, etc.), of operationally accessible observables.~In this case, the following results can be understood as statements about thermodynamic macrostates of $S$ described relative to different QRFs.

\begin{lemma}\label{Lem:avgs}
Let $A_S \in \mcA_{S}$ be some $S$-local operator.~Then, for \emph{any} global state $\rho_{\ibar} \in \mathcal{S}(\mcH_{\ibar})$,
\be \label{eqavgs}
\Tr(\rho_{S|R_j} A_S) = \Tr(\rho_{S|R_i} A_S)\,,
\ee
if and only if ${[A_S,U^g_S] = 0, \, \forall g\in \mcG}$.~Here $\rho_{S|R_j} = \Tr_i(\rho_{\jbar})$, $\rho_{S|R_i} = \Tr_j(\rho_{\ibar})$,~and $\rho_{\jbar} = \hat{V}_{i \to j}^{g_i,g_j}(\rho_{\ibar})$ for some frame orientations $g_i,g_j \in \mcG$.
\end{lemma}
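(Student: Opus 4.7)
The proof will follow directly from the explicit formula for the reduced state $\rho_{S|R_j}$ in terms of the global state $\rho_{\ibar}$, already worked out in the excerpt. Specifically, combining the QRF transformation \eqref{eq:Vitoj} with the partial trace over $R_i$ (as done in the derivation of \eqref{rhoSUgSinv}), one obtains
\begin{equation}
\rho_{S|R_j} \;=\; \sum_{g\in\mcG} U_S^g\,\sigma_g\,U_S^{g^{-1}}, \qquad \sigma_g \;:=\; {}_j\!\bra{g_j g^{-1}}\rho_{\ibar}\ket{g_j g^{-1}}_j,
\end{equation}
where, by the bijectivity of $g\mapsto g_j g^{-1}$ on $\mcG$, one also has $\sum_g \sigma_g = \Tr_j \rho_{\ibar} = \rho_{S|R_i}$. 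Taking the trace with $A_S$ then yields the key identity $\Tr(\rho_{S|R_j}A_S) - \Tr(\rho_{S|R_i}A_S) = \sum_g \Tr\bigl(\sigma_g\,[U_S^{g^{-1}}A_S U_S^g - A_S]\bigr)$, using the cyclicity of the trace. The entire lemma is then an analysis of when this sum vanishes.

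For the \emph{if} direction, the hypothesis $[A_S,U_S^g]=0$ for all $g\in\mcG$ makes the bracket $U_S^{g^{-1}}A_S U_S^g - A_S$ identically zero term by term, so equality \eqref{eqavgs} holds irrespective of the $\sigma_g$'s (hence irrespective of $\rho_{\ibar}$).

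For the \emph{only if} direction, the plan is to specialise the global state so as to localise the sum onto a single value of $g$, thereby isolating the commutation condition for each $h\in\mcG$ separately. Concretely, for arbitrary $h\in\mcG$ and arbitrary $\rho_S\in\mcS(\mcH_S)$, choose the product state $\rho_{\ibar} = \ket{g_j h^{-1}}\!\bra{g_j h^{-1}}_j \otimes \rho_S$. Using the ideality condition $\braket{g|g'}_j = \delta_{g,g'}$, one finds $\sigma_g = \delta_{g,h}\,\rho_S$, and hence the assumed equality \eqref{eqavgs} reduces to $\Tr\bigl(\rho_S\,[U_S^{h^{-1}}A_S U_S^h - A_S]\bigr)=0$. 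Since this must hold for \emph{every} density operator $\rho_S$, by non-degeneracy of the Hilbert--Schmidt pairing one concludes $U_S^{h^{-1}}A_S U_S^h = A_S$, i.e.\ $[A_S,U_S^h]=0$, for all $h\in\mcG$.

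There is no real obstacle here: once the expression for $\rho_{S|R_j}$ recalled above is in hand, both implications are one-line computations. The only small subtlety is the choice of sharp product test states in the \emph{only if} direction, which exploits the perfect distinguishability of the ideal frame orientations to diagonally localise the sum over $g$ and thereby isolate each commutator condition individually; for non-ideal frames this step would require more care. One may also note that Lemma~\ref{Lem:avgs} is consistent with, and a natural integrated version of, Theorem~\ref{thm:Slocals}: the observables $A_S$ that are exactly translation-invariant are precisely the $S$-local operators that sit in $\mcA_{\ibar}^{\mathds1}$ for all orientations $g_i,g_j$, which is the algebraic reason why their expectation values are QRF-invariant for arbitrary global states.
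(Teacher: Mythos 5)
Your proof is correct, and it takes a genuinely different (though closely related) route from the paper's. The paper works in the Heisenberg picture: using cyclicity and the unitarity of $\mathcal{I}_{i\to j}$, it rewrites $\Tr(\rho_{S|R_j}A_S)$ as $\Tr_{\ibar}\bigl(\rho_{\ibar}\,(\hat{\mbfU}_{\ibar}^{g_i,g_j})^\dagger(\mathds1_j\otimes A_S)\bigr)$, so that validity of \eqref{eqavgs} for \emph{all} global states becomes, by non-degeneracy of the trace pairing, the single operator equation $(\hat{\mbfU}_{\ibar}^{g_i,g_j})^\dagger(\mathds1_j\otimes A_S)=\mathds1_j\otimes A_S$; this exact $\mbfU$-invariance is then equivalent to $[A_S,U_S^g]=0$ by Theorem~\ref{thm:Slocals}. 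You instead stay in the Schr\"odinger picture, use the explicit conditional-state formula $\rho_{S|R_j}=\sum_g U_S^g\,\sigma_g\,U_S^{g^{-1}}$ from \eqref{rhoSUgSinv} together with $\sum_g\sigma_g=\rho_{S|R_i}$, and for the \emph{only if} direction localise the sum with sharp product test states $\ket{g_jh^{-1}}\!\bra{g_jh^{-1}}_j\otimes\rho_S$, extracting each commutator condition separately. The paper's argument is shorter because it leans on the already-proven Theorem~\ref{thm:Slocals}; yours is self-contained and more elementary, at the price of exploiting ideality ($\braket{g|g'}_j=\delta_{g,g'}$) explicitly in the choice of test states -- a dependence you correctly flag. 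One minor point of hygiene: in the last step you invoke ``non-degeneracy of the Hilbert--Schmidt pairing'' against density operators; since $A_S$ need not be Hermitian, the cleanest justification is that $\braket{\psi|B|\psi}=0$ for all $\ket{\psi}$ forces $B=0$ on a complex Hilbert space (density matrices span all of $\mcA_S$ over $\mathbb{C}$), which is what you need and is true. Your closing remark that the lemma is the ``integrated version'' of Theorem~\ref{thm:Slocals} is exactly the observation on which the paper's own proof rests.
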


We refer to App.~\ref{app:thermo} for the proof.~In light of Theorem~\ref{thm:Slocals} in Sec.~\ref{Sec:Uinv&break}, we recall that the condition of translation invariance of $A_S$ is equivalent to: $\mathds1_j \otimes A_S \in \mcA_{\ibar}^{X}$, with some bilocal unitary $X=Y_j \otimes Z_S$ such that $[Z_S,A_S]=0$.

Lemma \ref{Lem:avgs} determines sufficient conditions for when a coarse-grained, effective description of a subsystem in thermodynamic equilibrium, encoded in a thermodynamic macrostate, appears the same relative to the frames $R_1$ and $R_2$.~Suppose we probe the subsystem in $R_1$-perspective by a finite set $\{\mathds1_2 \otimes A_S\}$ of $U_S^g$-invariant local $S$ observables and that these encompass all thermodynamically relevant ones in a given context.~The macrostate description in the $R_1$-perspective is given by the expectation values $\{\braket{A_S}_{\!\rho_{S|R_1}}\}$.~Being a coarse-grained description, these averages correspond to some unspecified underlying microstate $\rho_{S|R_1}$.~Lemma \ref{Lem:avgs} guarantees that the thermodynamic macrostate of the subsystem is described to be the same relative to both $R_1$- and $R_2$-perspectives.~From Theorem~\ref{thm:Slocals}, we further know that such $S$-local observables are the \emph{only} TPS-invariant local $S$-observables.

More generally, denoting the $G$-twirl over $\mcG$ for $S$ by $\Pit^S(\bullet) := \frac{1}{|\mcG|}\sum \limits_{g \in \mcG} U_S^g(\bullet) U^{g^{-1}}_S$, which is an orthogonal projector \cite{PHQRFassymmetries,PHinternalQRF}, we note that for any subsystem state $\rho_{S|R_i}$ and any translation-invariant $A_S$ we have
\be
\Tr\left(\rho_{S|R_i}A_S\right)=\Tr\left(\Pit^S(\rho_{S|R_i})A_S\right)\,.\nonumber
\ee
Eq.~\eqref{eqavgs} thus becomes
\be \label{eqavgs2}
\Tr(\Pit^S(\rho_{S|R_j}) A_S) = \Tr(\Pit^S(\rho_{S|R_i}) A_S) \,.
\ee
Clearly, when probing $S$ only with translation-invariant local observables, it is not possible to reconstruct $\Pit^{S\perp}$-components of the state in any of the perspectives; translation-invariant $S$-observables are tomographically incomplete for subsystem states $\rho_{S|R_i}$. However, they \emph{are} tomographically complete for the $\Pit^S(\rho_S)$-components of subsystem states.~For any state $\rho_S$, $\Pit^S(\rho_S)$ lies in the cone of positive semidefinite matrices in the space $\mcA_{S}^{\mbt}$ of translation-invariant $S$-observables.~As such, every $\Pit^S(\rho_S)$ can be decomposed into an orthonormal basis of $\mcA_{S}^{\mbt}$ with respect to the Hilbert-Schmidt inner product with coefficients given by the expectation values.\footnote{Note that these need not necessarily be normalised.}
Hence, one can reconstruct $\Pit^S(\rho_{S|R_2})$ from the knowledge only of a quorum $\{\braket{A_S}_{\rho_{S|R_1}}\}$ of expectation values of translation-invariant observables in $R_1$-perspective.

For the purposes of thermodynamics, where both the subsystem $S$ and its complement are involved, it is of interest to look also at observable averages of the ``other frame''.~The closest to a thermodynamic limit for the ``other frame'' subsystem is the limit of large group cardinality $|\mcG|$, in which case its Hilbert space has a large dimension. This permits us to speak about coarse-grained state descriptions and the following provides sufficient conditions for both QRFs to agree on such macrostate descriptions of $S$'s complement.
\begin{lemma} \label{lemma:Rjavgs}
Let $A_j  \in \mcA_{j}$ be some $R_j$-local operator.~Then, for \emph{any} global state $\rho_{\ibar} \in \mathcal{S}(\mcH_{\ibar})$
\be \label{eq:Rjavgs}
\Tr_j \bigl(\rho_j A_j\bigr) = \Tr_i \bigl(\rho_i A_i\bigr)\,, 
\ee
if and only if $A_j$ is diagonal in the group basis,~i.e.~if and only if $\Pid^\perp(A_j \otimes\mathds 1_S)=0$.~Here, $\rho_{j} = \Tr_S(\rho_{\ibar})$, $\rho_{i} = \Tr_S(\rho_{\jbar})$, $\rho_{\jbar} = \hat{V}_{i \to j}^{g_i,g_j}(\rho_{\ibar})$, and $A_i\otimes\mathds1_S=\hat{\mathcal I}_{i\to j}(\hat{P}_j^{g_i,g_j}(A_j)\otimes\mathds1_S)$, where $P_j^{g_i,g_j}$ is the parity-swap operator \eqref{Pswap},~and $\mathcal I_{i\to j}$ is the frame swap given in \eqref{eq:Ijtoi}.
\end{lemma}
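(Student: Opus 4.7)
The plan is to recast the ``for all $\rho_{\ibar}$'' identity of expectation values as a single operator identity on $\mcA_{\ibar}$, and then to recognise that identity as an instance of the $\Pid$-invariance characterisation provided by Theorem~\ref{thm:jlocals}. This way the lemma becomes a direct corollary of results already proved, with no new computational work beyond carefully tracking the maps $V_{i\to j}$, $\mathcal{I}_{i\to j}$ and $\mathbf{U}_{\ibar}^{g_i,g_j}$.

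First, I would exploit the unitarity of $V_{i\to j}^{g_i,g_j}$ and cyclicity of the trace to write
\begin{equation*}
\Tr_i(\rho_i A_i)=\Tr\!\bigl(\rho_{\jbar}\,(A_i\otimes\mathds1_S)\bigr)=\Tr\!\bigl(\rho_{\ibar}\,\hat V_{j\to i}^{g_j,g_i}(A_i\otimes\mathds1_S)\bigr),
\end{equation*}
so that \eqref{eq:Rjavgs} for all $\rho_{\ibar}\in\mcS(\mcH_{\ibar})$ becomes equivalent, by linearity and self-adjointness arguments, to the operator identity
\begin{equation*}
A_j\otimes\mathds1_S=\hat V_{j\to i}^{g_j,g_i}(A_i\otimes\mathds1_S).
\end{equation*}
This removes the state from the discussion entirely.

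Next, I would simplify the right-hand side using the hypothesis $A_i\otimes\mathds1_S=\hat{\mathcal{I}}_{i\to j}(\hat{P}_j^{g_i,g_j}(A_j)\otimes\mathds1_S)$ together with relation \eqref{IUVrelation}, $\mathcal{I}_{j\to i}=\mathbf{U}_{\ibar}^{g_i,g_j}V_{j\to i}^{g_j,g_i}$, and the fact that $\mathcal{I}_{i\to j}=\mathcal{I}_{j\to i}^{-1}$. A direct substitution gives $\hat V_{j\to i}^{g_j,g_i}\circ\hat{\mathcal{I}}_{i\to j}=(\hat{\mathbf U}_{\ibar}^{g_i,g_j})^\dag$, so the operator identity above reduces to
\begin{equation*}
\hat{\mathbf U}_{\ibar}^{g_i,g_j}(A_j\otimes\mathds1_S)=\hat{P}_j^{g_i,g_j}(A_j)\otimes\mathds1_S.
\end{equation*}
This is precisely the defining condition \eqref{Pswap2} for $A_j\otimes\mathds1_S$ to belong to $\mcA_{\ibar}^X$ with $X=(P_j^{g_i,g_j})^\dag\otimes\mathds1_S$.

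Finally, I would invoke Theorem~\ref{thm:jlocals} in both directions. For the ``if'' direction, if $A_j$ is diagonal in the group basis, then $A_j\otimes\mathds1_S\in\mcA_{\ibar}^{\mbd}$ and the theorem guarantees that it belongs to $\mcA_{\ibar}^X$ for $X=(P_j^{g_i,g_j})^\dag\otimes Z_S$ with \emph{any} $Z_S$; choosing $Z_S=\mathds1_S$ yields the displayed identity. For the ``only if'' direction, the identity says $A_j\otimes\mathds1_S$ lies in \emph{some} TPS-invariant subalgebra $\mcA_{\ibar}^X$, and the theorem forces $A_j$ to be diagonal in the group basis, equivalently $\Pid^\perp(A_j\otimes\mathds1_S)=0$.

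The only real subtlety I foresee is the bookkeeping in the second step: the maps $\mathcal{I}_{i\to j}$, $V_{j\to i}^{g_j,g_i}$, $\mathbf{U}_{\ibar}^{g_i,g_j}$, and $P_j^{g_i,g_j}$ all carry index patterns that must be swapped consistently, and one must not confuse the Hilbert spaces $\mcH_{\ibar}$ and $\mcH_{\jbar}$ on which they act. Beyond that, the argument is a one-line reduction to Theorem~\ref{thm:jlocals} and should occupy only a short paragraph in Appendix~\ref{app:thermo}, in complete parallel to the proof of Lemma~\ref{Lem:avgs}, which similarly reduces to Theorem~\ref{thm:Slocals}.
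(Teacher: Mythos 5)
Your proposal is correct and follows essentially the route the paper intends: the paper omits the proof of Lemma~\ref{lemma:Rjavgs}, stating it follows the same steps as Lemma~\ref{Lem:avgs}, whose proof is precisely your strategy of converting the all-states expectation-value equality into an operator identity via $\hat{V}_{j\to i}^{g_j,g_i}\circ\hat{\mathcal I}_{i\to j}=(\hat{\mathbf U}_{\ibar}^{g_i,g_j})^\dagger$ and then invoking the relevant TPS-invariance theorem (Theorem~\ref{thm:jlocals} here, in place of Theorem~\ref{thm:Slocals} there). The index bookkeeping in your second step checks out.
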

The proof of Lemma~\ref{lemma:Rjavgs} follows similar steps to that of Lemma~\ref{Lem:avgs} and is thus omitted.~We recall from Theorem~\ref{thm:jlocals} that any such $R_j$-local observable $A_j\otimes\mathds1_S\in\mcA_{\ibar}^\mbd$ in the image of $\Pid$ resides in the TPS-invariant subalgebra $\mcA_{\ibar}^{X}$ with $X=(P_j^{g_i,g_j})^\dag\otimes Z_S$ for arbitrary unitary $Z_S$.~These are the local observables probing the ``other frame'' subsystem which are invariant under QRF-transformation up to the action of the parity-swap operator.~Lemma~\ref{lemma:Rjavgs} is thus stating that the frame $R_2$ being probed in the $R_1$-perspective by $R_2$-local observables in the image of $\Pid$, is operationally equivalent to the frame $R_1$ being probed in the $R_2$-perspective by the parity-swapped observables.~Moreover, similarly to the above discussion for $S$-local observables, for any state $\rho_{\ibar}\in \mcA_{\ibar}$ and operator $A_j\otimes \mathds1_S \in \mcA_{\ibar}^{\mbd}$, \eqref{eq:Rjavgs} can be written as
\be
\Tr_j\bigl(\Pid^j(\rho_{j})A_j\bigr)=\Tr_i\bigl(\Pid^i(\rho_{i})A_i\bigr)\,,
\ee
where we have denoted, $\Pid^j(\rho_{j})=\sum_{g\in\mathcal G}\bra{g}\rho_{j}\ket{g}_j\ket{g}\!\bra{g}_j$, and similarly for $\Pid^i(\rho_{i})$.~Therefore, when the ``other frame'' subsystem is probed only with $\Pid$-invariant local observables, then also only the $\Pid$-components of the subsystem state can be reconstructed in any of the two perspectives from the knowledge of their expectation values.

\subsection{QRF-relative subsystem equilibrium and temperature}\label{Sec:thermostatics2}

In the above discussion, the expectation values did not necessarily refer to stationary properties, as appropriate for a subsystem in thermal equilibrium. 
Let us now focus on $S$ being in thermal equilibrium relative to one QRF and explore what happens to it under QRF transformations. This brings us to the class of Gibbs states, which are stationary and maximise the von Neumann entropy for a given average energy or temperature.

To begin with, suppose $\rho_{S|R_i}$ is some stationary subsystem state in $R_i$-perspective. In order for the reduced state of $S$ relative to $R_j$ to be equally entropic and stationary, it follows from the discussion in Sec.~\ref{Sec:dynrhoSandentropies} that we must have $\rho_{S|R_j}=\hat{Z}_S^\dag(\rho_{S|R_i})$ with $t$-independent unitary $Z_S$. Provided the stationary $\rho_{S|R_i}$ originates from a global pure state $\rho_{\ibar}$,
Theorem~\ref{claim:purestatesrhoSZS} informs us that this is true if and only if $\rho_{\ibar}$ lives in a fixed TPS-invariant subalgebra $\mcA_{\ibar}^X$ for the considered duration, a case further characterised by Lemma~\ref{lemma:finitetimeinvsoln}. For a global mixed state instead, Corollary~\ref{claim:rhoSZSmixed} offers sufficient conditions.

For the subclass of Gibbs states, the situation is more restrictive.~Even if ${\rho_{S|R_j}=\hat{Z}_S^\dag(\rho_{S|R_i})}$ holds for a Gibbs state $\rho_{S|R_i}=\frac{1}{Z}e^{-\beta H_{S|R_i}}$ with $t$-independent $Z_S$, then $\rho_{S|R_j}$ will be stationary, but not Gibbs, unless the subsystem Hamiltonians are equally related, $H_{S|R_j}=\hat{Z}_S^\dag(H_{S|R_i})$. From~\eqref{gla2} we see that this requires
\be\label{Gibbscond}
H_{S|R_j}=H^\mbt_{S|R_i}+\lambda_{S|R_j}=\hat{Z}_S^\dag(H_{S|R_i})\,.
\ee 
We note that at this stage the subsystem Hamiltonians could be the bare or even effective (global state- and interaction-dependent) ones.~Clearly, \eqref{Gibbscond} is a necessary condition for a subsystem Gibbs state to map to a (unitarily related) subsystem Gibbs state under QRF transformations. But it is not sufficient, as we shall illustrate below in Example~\ref{ex:negbq}.  Generally, it depends on the global state whether a Gibbs state in $R_i$-perspective with Hamiltonian obeying~\eqref{Gibbscond} maps into a Gibbs state also in $R_j$-perspective.~However, there are subsystem Gibbs states which map \emph{independently of the global state} into equally entropic Gibbs states in the other perspective.~These are Gibbs relative to the bare subsystem Hamiltonian $H_S$\footnote{There are situations when the bare subsystem Hamiltonian is appropriate for defining subsystem equilibrium, e.g.\ when setting up initial states or when the interaction term commutes with the subsystem one.} and their characterisation directly follows from Theorem~\ref{thm_globindep}.
\begin{corollary}[\textbf{QRF-invariant Gibbs states}]\label{cor_QRFinvGibbs}
Let $\rho_{S|R_i}=\frac{1}{Z}e^{-\beta H_S}$ be a subsystem Gibbs state relative to frame $R_i$.~Under QRF transformations it maps into an \emph{equally entropic} Gibbs subsystem state relative to $R_j$ (${i\neq j}$) \emph{for all global states} $\rho_{\ibar}\in\mcS(\mcH_{\ibar})$ such that $\rho_{S|R_i}=\Tr_j(\rho_{\ibar})$, if and only if $H_S=H_S^\mbt$ is translation-invariant and $\lambda_{S|R_j}=0$.~In this case, we further have exact QRF invariance, i.e.\ $\rho_{S|R_j}=\rho_{S|R_i}$.
\end{corollary}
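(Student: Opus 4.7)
The plan is to derive the corollary essentially as an application of Theorem~\ref{thm_globindep} to the Gibbs ansatz, supplemented by a short algebraic analysis of what it means for $\rho_{S|R_j}$ to again be a Gibbs state with respect to the bare $R_j$-perspective subsystem Hamiltonian $H_{S|R_j}$, which is determined by $H_{\ibar}$ via the decomposition~\eqref{gla2} and is hence independent of the global state $\rho_{\ibar}$ from which the reduced state is obtained.

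For the forward direction, I would first observe that $H_{S|R_j} = H_{S|R_i}^{\mbt} + \lambda_{S|R_j}$ is fixed by the chosen total Hamiltonian $H_{\ibar}$ and does not depend on $\rho_{\ibar}$, so requiring $\rho_{S|R_j}$ to be a Gibbs state $e^{-\beta' H_{S|R_j}}/Z'$ with the same von Neumann entropy as $\rho_{S|R_i}$ uniquely fixes $\beta'$ and hence $\rho_{S|R_j}$, independently of the choice of compatible global state. Theorem~\ref{thm_globindep} then immediately gives translation invariance of $\rho_{S|R_i}=e^{-\beta H_S}/Z$, which, for $\beta\neq 0$, is equivalent to $[U_S^g,H_S]=0$ for all $g\in\mcG$ (since $[U_S^g,e^{-\beta H_S}]=0$ iff $U_S^g H_S U_S^{g\dagger}=H_S$, by analyticity of the exponential), i.e.\ $H_S=H_S^{\mbt}$, together with the exact equality $\rho_{S|R_j}=\rho_{S|R_i}$. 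To extract the condition on $\lambda_{S|R_j}$, I would substitute the two Gibbs forms into this equality: $e^{-\beta'(H_S+\lambda_{S|R_j})}/Z' = e^{-\beta H_S}/Z$. Equality of these two positive operators forces them to share eigenspaces, so that $[H_S,\lambda_{S|R_j}]=0$; simultaneous diagonalisation then yields, eigenvalue by eigenvalue, $\beta'=\beta$ and $\lambda_{S|R_j}\propto\mathds1_S$. Under the conventional decomposition~\eqref{intsourcelocal2} into traceless local pieces (where any identity component of $\lambda_{S|R_j}$ is absorbed into the overall partition functions / $\lambda_i$-part), this gives $\lambda_{S|R_j}=0$.

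For the backward direction, if $H_S$ is translation invariant and $\lambda_{S|R_j}=0$, then $\rho_{S|R_i}=e^{-\beta H_S}/Z$ commutes with all $U_S^g$, so Theorem~\ref{thm_globindep} yields $\rho_{S|R_j}=\rho_{S|R_i}$ for every compatible global state. Moreover,~\eqref{gla2} reduces to $H_{S|R_j}=H_S^{\mbt}=H_S$, so $\rho_{S|R_j}=e^{-\beta H_{S|R_j}}/Z$ is a Gibbs state in $R_j$-perspective with the same inverse temperature and the same entropy, confirming the exact QRF-invariance claim.

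The main obstacle will be the forward step from $\rho_{S|R_j}=\rho_{S|R_i}$ to $\lambda_{S|R_j}=0$: one has to use that $H_{S|R_j}$ is determined by $H_{\ibar}$ alone and that two Gibbs states coincide only when their Hamiltonians differ by a multiple of the identity, which is where the diagonalisation argument is essential. Everything else is a direct invocation of Theorem~\ref{thm_globindep} together with the equivalence between translation invariance of a Gibbs state and of its Hamiltonian.
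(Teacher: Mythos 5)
Your overall route is the paper's: the corollary is obtained there by combining Theorem~\ref{thm_globindep} with the decomposition~\eqref{gla2}, and both your backward direction and your reduction of the ``for all global states'' clause to Theorem~\ref{thm_globindep} are sound. (The claim that equal entropy ``uniquely fixes $\beta'$'' is not quite right, since the entropy of a Gibbs family is generically two-to-one in the inverse temperature; but the conclusion you actually need\,---\,that $\rho_{S|R_j}$ is independent of the global state\,---\,still follows, because $\rho_{\ibar}\mapsto\rho_{S|R_j}$ is affine on the convex set of compatible global states and an affine map into a finite set is constant.)

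The genuine gap is in the last step of the forward direction. From $e^{-\beta'(H_S+\lambda_{S|R_j})}/Z'=e^{-\beta H_S}/Z$ you may take logarithms to obtain $\beta'\lambda_{S|R_j}=(\beta-\beta')H_S+c\,\mathds1_S$, but this does \emph{not} yield $\beta'=\beta$ ``eigenvalue by eigenvalue'': for every $\beta'\neq0$ there is a solution $\lambda_{S|R_j}=(\beta/\beta'-1)H_S+(c/\beta')\mathds1_S$, so the operator equality alone is compatible with a nonzero $\lambda_{S|R_j}$ proportional to $H_S$ together with a rescaled temperature (exactly the $\mu\neq1$ phenomenon of Example~\ref{ex:negbq}). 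To close the gap you must use a structural property of $\lambda_{S|R_j}$ that your argument never invokes: applying $\Pid\Pit$ to~\eqref{intsourcelocal2} and using~\eqref{pivcomp0} and~\eqref{pivcomp} annihilates the left-hand side, and since $\Pid\Pit$ preserves the splitting into local and fully non-local sectors, it follows that $\Pit(\lambda_{S|R_j})\propto\mathds1_S$, i.e.\ $\lambda_{S|R_j}$ has no traceless translation-invariant component. Since $H_S=H_S^\mbt$ is entirely translation-invariant, applying $\Pit$ to $\beta'\lambda_{S|R_j}=(\beta-\beta')H_S+c\,\mathds1_S$ then forces $\beta'=\beta$ (for $H_S\not\propto\mathds1_S$) and hence $\lambda_{S|R_j}=0$ under the traceless convention. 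This is the precise content of the paper's remark that a nonzero $\lambda_{S|R_j}$ makes the exponent differ from the complete subsystem Hamiltonian in $R_j$-perspective.
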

This corresponds to the special case that the subsystem Hamiltonians are mapped into one another under QRF transformations. It is compatible with Lemma~\ref{Lem:avgs} above in that, maximising the subsystem entropy under the constraint $\braket{H_S}_{\rho_{S|R_i}}=$ constant, where $H_S$ is translation-invariant, yields the same result in the perspectives of both $R_1$ and $R_2$.\footnote{Similar arguments can also be applied to \emph{generalised Gibbs states} defined in terms of a set of (possibly non-commuting) observables as considered for instance, in quantum thermodynamics \cite{Halpern2016}, reparametrisation-invariant systems \cite{Chirco:2013zwa,Chirco:2015bps,Kotecha:2018gof,Chircomultisympl,Chirco:2019tig,Kotecha:2020oxz}, and quantum gravity \cite{Kotecha:2020oxz,Kotecha:2018gof,Chirco:2018fns,Chirco:2019kez}.~As long as we consider macrostates of the subsystem of interest to be specified by the expectation values of a finite set of $\Pit$-invariant (hence $\mbfU_{\ibar}$-invariant) $S$-local observables $\mathds1_j\otimes A_S^a$, then maximisation of the von Neumann entropy, under the set of constraints $\braket{A_S^a}=$ constant, will yield the same generalised Gibbs states $\rho_{S|R_1}(\beta_a)=\frac{1}{Z(\beta_a)}e^{-\sum_a\beta_a A^a_S}=\rho_{S|R_2}(\beta_a)$ relative to the internal reference frames $R_1$ and $R_2$.}~Translation-invariant Gibbs states can be viewed as describing thermal equilibria of $S$ defined relative to an $S$-internal QRF; these are already fully relational without reference to an ``external'' QRF $R_1$ or $R_2$ and thus independent of their state information contained in $\rho_{\ibar}$. If the $S$-local term coming from the QRF transformation of the interaction $H_{jS}$ in $R_i$-perspective does not vanish, $\lambda_{S|R_j}\neq 0$, one would still obtain ${\rho_{S|R_j}=\frac{1}{Z}e^{H^\mbt_{S|R_i}}=\rho_{S|R_i}}$, but in $R_j$-perspective this would not constitute a Gibbs state since the exponent would not be the complete subsystem Hamiltonian in that perspective.

When the subsystem Hamiltonians are not translation-invariant, and thus do not transform into one another under QRF changes, it is still possible for \emph{some} global states that a subsystem Gibbs state  from $R_1$-perspective maps to an equally entropic Gibbs state in $R_2$-perspective when condition~\eqref{Gibbscond} is fulfilled. In this case, the Gibbs states in the two perspectives need not be equal. For instance, Example~\ref{ex:negbq} (for $\mu=1$) and the discussion below will show that the QRF transformation can induce a sign flip in the temperature of Gibbs states. Furthermore, it is also possible that the subsystem Hamiltonians $H_{S|R_1}$ and $H_{S|R_2}$ have distinct spectra in the two perspectives, in which case their corresponding Gibbs states can also carry \emph{distinct entropy}. Their relation is then non-unitary and, as we shall see, there  exist  circumstances when these map into one another under QRF changes. We will also illustrate this case in Example~\ref{ex:negbq} (for $\mu\neq1$), where this will also lead to a rescaling of the temperature.

In any case, it is clearly a special situation that a Gibbs state of one perspective maps into a Gibbs state of the other. In particular, we can anticipate that a violation of the condition of translation-invariance of the subsystem Hamiltonian in Corollary~\ref{cor_QRFinvGibbs} generically leads to Gibbs states in one perspective, which are mapped to non-equilibrium subsystem states in the other. Let us illustrate this in a simple example.

\begin{example}[\textbf{QRF-relative thermal equilibrium}]\label{ex:relequilibrium}
Consider again three qubits $R_1R_2S$ and suppose in $R_1$-perspective the global Hamiltonian and initial state are given by ${H_{\onebar}=a\sigma_2^x\otimes\mathds1_S+b \mathds1_2\otimes\sigma_S^z}$, with $a,b\neq0$, and ${\rho_{\onebar}(0)=\ket{0}\!\bra{0}_2\otimes\frac{1}{Z}e^{-\beta H_{S|R_1}}}$, where ${H_{S|R_1}=b\sigma_S^z}$. Clearly, we have that \be
\rho_{S|R_1}(t)=\frac{1}{Z}e^{-\beta H_{S|R_1}}
\ee
is an equilibrium Gibbs state at all times.

Let us now fix the frame orientations before and after QRF transformations to be $g_1=g_2=e$.~Then, \eqref{counterexample} tells us that the initial state transforms into $R_2$-perspective as ${\rho_{\twobar}(0)=\ket{0}\!\bra{0}_1\otimes\frac{1}{Z}e^{-\beta H_{S|R_1}}}$, while from Example~\ref{Ex:AUqubits2} we can read off that the Hamiltonian in $R_2$-perspective is a pure interaction piece ${H_{\twobar}=a\sigma_1^x\otimes\sigma_S^x+b\sigma_1^z\otimes\sigma_S^z}$ (thus illustrating the closed-to-open transition of Lemma~\ref{lemma:closed2closed}).~The second contribution to the Hamiltonian commutes with $\rho_{\twobar}(0)$ and we find the time-dependent subsystem state
\be 
\rho_{S|R_2}(t)=\frac{\cos^2at}{Z}e^{-\beta H_{S|R_1}}+\frac{\sin^2at}{Z}e^{\beta H_{S|R_1}}\,,
\ee 
which thus clearly is neither Gibbs nor stationary with respect to the bare Hamiltonian in $R_2$-perspective.~In particular, the bare $S$-Hamiltonian in $R_2$-perspective vanishes such that its associated Gibbs state would be the completely mixed one.~The resulting state is also not Gibbs with respect to a generic prescription for a subsystem effective Hamiltonian (more on this in Sec.~\ref{Sec:TDsetup}).~It is essentially only Gibbs with respect to its modular Hamiltonian.
\end{example}

As the notion of temperature can be defined via Gibbs states, the relativity of subsystem equilibrium states naturally leads us to inquire about the QRF dependence of temperature.~To this aim, let us consider global factorised states of the form $\rho_{\ibar}=\rho_j\otimes\rho_{S}$.\footnote{We note however that this condition of a product state can be relaxed, and Lemma \ref{Lemma:NegBProduct} below can be generalised to generic global states $\rho_{\ibar} = \rho_j \otimes e^{-\beta H_S}/Z + \omega_{jS}$, where $\omega_{jS}$ is not necessarily zero but satisfies $\sum_{g\in\mcG} U^g_S\bra{g_jg^{-1}}\omega_{jS}\ket{g_jg^{-1}}_j U^{g^{-1}}_S = 0$.}~Such kind of states are often considered as initial states of composite systems; for instance when the subsystems are
individually prepared in some equilibrium states via thermalisation with separate reservoirs before putting them in contact with one another.~As a first step, we then have the following observation, generalising the previous example.
\begin{lemma} \label{Lemma:NegBProduct}
Let the total system relative to $R_i$ in some orientation $g_i\in \mcG$ be prepared in a global state $\rho_{\ibar} \in \mathcal{S}(\mcH_{\ibar})$ of the form
\be\label{i-separable-gibbs}
\rho_{\ibar} = \rho_{j} \otimes \frac{1}{Z}e^{-\beta H_{S|R_i}}
\ee
and the Hamiltonian $H_{S|R_i}$ be such that,
\begin{align}
\exists \, \mcG_{\rm a} \subsetneq \mcG \;\;\; &\;{\rm s.t.} \;\, \{U_S^h, H_{S|R_i}\} = 0 \;\; \forall h\in \mcG_{\rm a}, \label{negcond1} \\
&{\rm and} \;\;\; [U_S^g, H_{S|R_i}] \,= 0 \;\; \forall g \in \mcG_{\rm c}  \label{negcond2}
\end{align}
where, $\mcG_{\rm a}$ is a strict\footnote{This is because there always exists one element of $\mcG$, the identity, which only ever satisfies \eqref{negcond2} and thus always belongs to $\mcG_{\rm c}$.} subset of $\mcG$, $\{\cdot,\cdot\}$ denotes an anti-commutator, and $\mcG_{\rm c} = \mcG \backslash \mcG_{\rm a}$.~Then, the subsystem state relative to $R_j$ ($i\neq j$) in some orientation $g_j \in \mcG$ is
\be \label{negmixProduct}
\rho_{S| R_j} = q_{\rm a}^{g_j} \frac{1}{Z} \, e^{+\beta H_{S|R_i}}  + (1 - q_{\rm a}^{g_j}) \frac{1}{Z} \, e^{-\beta H_{S|R_i}} 
\ee
where, $q_{\rm a}^{g_j} = \sum \limits_{h \in \mcG_{\rm a}} \bra{g_j h^{-1}}\rho_{j}\ket{g_j h^{-1}}_j$.
\end{lemma}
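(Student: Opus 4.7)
The proof reduces to a direct computation starting from the explicit action of the QRF change on subsystem states. My plan is as follows.

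First, I would apply equation \eqref{rhoSUgSinv}, which gives
$$\rho_{S|R_j}=\sum_{g\in\mcG}U_S^g\,\bra{g_jg^{-1}}\rho_{\ibar}\ket{g_jg^{-1}}_j\,U_S^{g^{-1}},$$
where I have used abelianness to identify $g_jg^{-1}$ with $g^{-1}g_j$ in \eqref{rhoSUgSinv}, and substitute the product ansatz \eqref{i-separable-gibbs}. Because $\rho_{\ibar}$ factorises across $j$ and $S$, the $j$-matrix element produces only the scalar weight $w_g:=\bra{g_jg^{-1}}\rho_j\ket{g_jg^{-1}}_j$, so $\rho_{S|R_j}$ becomes a convex combination of conjugates of the Gibbs factor $Z^{-1}e^{-\beta H_{S|R_i}}$ by the unitaries $U_S^g$.

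Next, I would split $\mcG=\mcG_{\rm a}\sqcup\mcG_{\rm c}$ and evaluate each conjugation using hypotheses \eqref{negcond1}--\eqref{negcond2}. For $g\in\mcG_{\rm c}$, commutativity of $U_S^g$ with $H_{S|R_i}$ leaves the Gibbs factor invariant. For $h\in\mcG_{\rm a}$, the anti-commutation relation implies $U_S^h H_{S|R_i} U_S^{h^{-1}}=-H_{S|R_i}$, and hence by functional calculus (or term-by-term expansion of the exponential)
$$U_S^h\,e^{-\beta H_{S|R_i}}\,U_S^{h^{-1}}=e^{+\beta H_{S|R_i}},$$
i.e.\ the sign of $\beta$ flips on conjugation by any element of $\mcG_{\rm a}$.

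Finally, the weights $w_g$ are the diagonal entries of $\rho_j$ in the orthonormal basis $\{\ket{g_jg^{-1}}_j\}_{g\in\mcG}$ (which is merely a relabelling of the group basis, as $g\mapsto g_jg^{-1}$ is a bijection of $\mcG$), and therefore sum to $\Tr\rho_j=1$. Denoting $q_{\rm a}^{g_j}=\sum_{h\in\mcG_{\rm a}}w_h$ as in the statement, the complementary sum yields $\sum_{g\in\mcG_{\rm c}}w_g=1-q_{\rm a}^{g_j}$, and assembling the two pieces produces exactly \eqref{negmixProduct}.

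There is no serious obstacle here: once the algebraic bookkeeping is laid out, the (anti-)commutation hypotheses convert the QRF-twirl into a probabilistic mixture of positive- and negative-temperature Gibbs factors. The only subtleties worth flagging are that the partition $\mcG=\mcG_{\rm a}\sqcup\mcG_{\rm c}$ is exhaustive by definition of $\mcG_{\rm c}$, and that $\mcG_{\rm a}$ must be a \emph{strict} subset of $\mcG$ (since the identity always commutes with $H_{S|R_i}$ and thus always belongs to $\mcG_{\rm c}$).
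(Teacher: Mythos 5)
Your proposal is correct and follows essentially the same route as the paper's own proof: compute $\rho_{S|R_j}=\sum_{g\in\mcG}\bra{g_jg^{-1}}\rho_j\ket{g_jg^{-1}}_j\,\frac{1}{Z}e^{-\beta U_S^g H_{S|R_i}U_S^{g^{-1}}}$ from the partial trace of the QRF-transformed product state, split the sum over $\mcG=\mcG_{\rm a}\sqcup\mcG_{\rm c}$, and use the (anti-)commutation hypotheses together with normalisation of the diagonal weights of $\rho_j$. No gaps.
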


We refer to App.~\ref{app:thermo} for the proof.~Even if $H_{S|R_i}$ is the appropriate (bare or effective) Hamiltonian for defining equilibrium in $R_i$-perspective, we can interpret \eqref{negmixProduct} as a mixture of a positive and negative temperature Gibbs state in $R_j$-perspective only if $H_{S|R_i}$ is \emph{also} the appropriate (bare or effective) subsystem Hamiltonian for defining Gibbs states in that perspective. In a generic situation, this will not be true (as, e.g., in Example~\ref{ex:relequilibrium} which satisfies the conditions of this Lemma), but as we shall see shortly in special cases  this condition can be satisfied (cfr.~Example~\ref{ex:negbq}). 

In the case that the bare Hamiltonian is appropriate for defining equilibrium, this means that, instead of~\eqref{Gibbscond}, the following condition has to hold 
\be\label{nonunitaryGibbs} 
H_{S|R_j}=H_{S|R_i}^\mbt+\lambda_{S|R_j} = \pm \mu\,H_{S|R_i}\,,
\ee 
where $\mu>0$ is some positive real number.~In contrast to~\eqref{Gibbscond}, this is generally \emph{not} a unitary transformation between the subsystem Hamiltonians in the two perspectives, as it can change the spectrum.~Since~\eqref{negcond1} entails that $H_{S|R_i}$ is not translation-invariant and, in fact, lies in the kernel of $\Pit$,\footnote{\label{FN:negbham} This can be readily seen from the fact that, using the decomposition $H_S=H_S^{\mbt}+H_S^{\mbt\,\perp}$,~\eqref{negcond1} yields $H_S^{\mbt\,\perp}=-2H_S^{\mbt}-U_S^hH_S^{\mbt\perp}U_S^{h^{-1}}$, for any $h\in\mathcal G_{\rm a}$.~That is, $H_S=H_S^{\mbt}+H_S^{\mbt\perp}=-H_S^{\mbt}-U_S^hH_S^{\mbt\perp}U_S^{h^{-1}}$, from which, applying the projector $\Pi_\mbt^S$ on both sides of the last equality, it follows that $H_S^{\mbt}=0$.~Therefore, $H_S=H_S^{\mbt\perp}=-U_S^hH_S^{\mbt\perp}U_S^{h^{-1}}$, for any $h\in\mathcal G_{\rm a}$.} we see that~\eqref{nonunitaryGibbs} can only be fulfilled provided that ${H_{S|R_j}=\lambda_{S|R_j}=\pm\mu\,H_{S|R_i}}$.~In other words, the subsystem Hamiltonian in $R_j$-perspective results entirely from the interaction term $H_{jS}$ in $R_i$-perspective via QRF transformations (cfr.~\eqref{gla2} and~\eqref{intsourcelocal2}).~When~\eqref{nonunitaryGibbs} holds,~\eqref{negmixProduct} constitutes a convex mixture of two Gibbs states in $R_j$-perspective, one with positive rescaled inverse temperature ${\beta^+_j = \beta/\mu}$ and one with negative rescaled inverse temperature ${\beta^-_j=-\beta/\mu}$.~As regards the global states, it can be checked that the total state $\rho_{\ibar}$ in \eqref{i-separable-gibbs} generically breaks $\mbfUX$-invariance, and $\rho_{\jbar}$ is not necessarily separable. 
 
Further, if the system is prepared in a state $\rho_{j}$ relative to $R_i$ (for some $g_i,g_j$) such that 
\be\label{fullbetainversion}
\bra{g_j g^{-1}}\rho_{j}\ket{g_j g^{-1}}_j = 0 \,, \;\; \forall g \in \mcG_{\rm c}\,, 
\ee 
then $q_{\rm a}^{g_j}=1$ and $\rho_{S|R_j}$ in \eqref{negmixProduct} is purely a Gibbs state associated with negative inverse temperature $\beta^-_j$ and an inverted Hamiltonian,~$-H_{S|R_i}$.~In the special case that $\mu=1$, we moreover have that the subsystem states in the two perspectives are unitarily related, $\rho_{S|R_j}= \hat{U}^g_S(\rho_{S|R_i}),\, \forall g\in \mcG_{\rm a}$.

We note that thermal states with negative temperature only make sense for Hamiltonians that are both upper and lower bounded \cite{afa,*uffink,PhysRev.103.20,BALDOVIN20211}; this is clearly the case in the finite-dimensional setting in which we are working.~Compared to the ground state, negative temperature states can often be understood via a population inversion, where more degrees of freedom of the subsystem occupy higher energy states.~In such a system, any increase in energy corresponds to a decrease in its von Neumann entropy \cite{PhysRev.103.20,BALDOVIN20211}.

Let us now illustrate two multi-qubit scenarios (thus restricting to $\mathcal{G}=\mathbb{Z}_2$) in which condition~\eqref{nonunitaryGibbs} can be satisfied and the QRF transformation thus not only maps between positive and negative temperature Gibbs states, but also rescales the temperature.~The qubit case is of particular interest because negative temperatures have been reported experimentally for spin systems\footnote{Negative temperatures have also been reported experimentally for other types of systems like hydrodynamic vortices, lasers and condensates in optical lattices.~See for instance \cite{BALDOVIN20211,PhysRevE.95.012125}, and references therein.} \cite{PhysRev.103.20,BALDOVIN20211,PhysRevE.95.012125,Oja:1997zz}.

\begin{example}[\textbf{Positive-to-negative temperature QRF change}] \label{ex:negbq}
Consider a system of $N$ qubits ($\mcG = \mathbb{Z}_2$) evolving under a total Hamiltonian  relative to frame $R_1$ of the form
\be\label{ex:NqIZZ}
H_{\onebar}= \nu\,\sigma_2^z\otimes \mathds1_S + \mathds1_2 \otimes H_{S|R_1} + \mu\,\sigma_2^z\otimes H_{S|R_1} \,,
\ee 
such that $\{\sigma_x^{\otimes N-2},H_{S|R_1}\}=0$ and with $\nu,\mu>0$.~The interaction and free terms in $H_{\onebar}$ commute,\footnote{Thus, the dynamical map satisfies the so-called strict energy conservation condition, i.e.~$[e^{-iH_{\onebar}t},H_2\otimes \mathds1_S + \mathds1_2\otimes H_{S|R_1}]=0$ \cite{RevModPhys.93.035008}.~We shall be using commutator conditions like in the main text in various examples that follow.} and conditions \eqref{negcond1}-\eqref{negcond2} 
are satisfied with $\mcG_{\rm a} = \{1\}$ ($U^{h=1}=\sigma_x$) and $\mcG_{\rm c} = \{0\}$ ($U^{g=0}=\mathds1$).~We realize these conditions with the following two subsystem Hamiltonians\footnote{See for instance \cite{BALDOVIN20211,PhysRevE.95.012125,Oja:1997zz} for their experimental relevance in the context of negative temperatures.~Such Hamiltonians also appear in quantum spin chains, although one typically considers two-particle interactions, rather than a long-range many-body interaction of the kind given in example $(b)$ in \eqref{qunegall}.~A contribution to the total Hamiltonian of the form $\sigma_j^z\otimes H_S$ with $H_S$ given by $(a)$ is an example of a two-particle interaction term, though it may be long-range.}
\be
(a)\, H_{S|R_1} = \sum_{\ell=1}^{N-2}  \sigma^z_{\ell} \bigotimes \limits_{\substack{m =1 \\ m \neq \ell}}^{N-2} \mathds1_m, \quad (b)\, H_{S|R_1} =  \sigma_z^{\otimes N-2}. \label{qunegall} \ee
It is straightforward to verify that example $(a)$ satisfies the conditions for any $N$, while $(b)$ does so for odd $N$ only.\footnote{For $N=3$ these two Hamiltonians are identical.}

Given that $\mathds1_2\otimes H_{S|R_1}$ commutes with the interaction, it is a reasonable prescription to use the bare $H_{S|R_1}$ for defining the $S$-Gibbs states in $R_1$-perspective.
Consider thus an initial global state of the form ${\rho_{\onebar}(0)=\rho_{2}(0) \otimes e^{-\beta H_{S|R_1}}/Z}$, such that $\rho_2(0)$ is diagonal in the group basis and hence $\rho_{\onebar}(0)=\rho_{\onebar}(t)$ is stationary and $S$ is in the Gibbs state $\rho_{S|R_1}(t)=\frac{1}{Z}e^{-\beta H_{S|R_1}}$.

Let us fix the orientations of frames $R_1$ and $R_2$ to be $g_1=0$ and $g_2=0$ before and after the QRF transformation, respectively, so that the QRF transformation is given by $V_{1\to2}^{0,0}=\mathcal{I}_{1\to2}\mathbf{U}_{\onebar}$, where $\mathbf{U}_{\onebar}$ is the \textsf{CNOT} gate (cfr.~Example~\ref{Ex:Wstate}) and $\mcI_{1\to2}$ is the frame swap.~In $R_2$-perspective, the subsystem state is then given by the statistical mixture in \eqref{negmixProduct} with time-independent ${q_{\rm a}^{e}(t) = q_{\rm a}^{e}(0) =  \bra{1}\rho_2(0)\ket{1}_{2}}$ (since $\rho_2(t)$ is stationary).~Importantly, condition~\eqref{nonunitaryGibbs} is satisfied with $H_{S|R_1}^\mbt=0$ for both (a) and (b) (with odd $N$ in the latter case).~Indeed, in both cases it can be checked that 
\begin{align} 
\hat{V}_{1\to2}^{0,0}(\sigma_2^z\otimes\mathds1_S) &= \sigma_1^z\otimes \mathds1_S\nonumber\\
\hat{V}_{1\to2}^{0,0}(\mathds1_2\otimes H_{S|R_1}) &= \sigma_1^z\otimes H_{S|R_1}\nonumber\\
\hat{V}_{1\to2}^{0,0}(\sigma_2^z\otimes H_{S|R_1}) &= \mathds1_1\otimes H_{S|R_1}\nonumber\,,
\end{align}
so that the QRF transformation leaves the free ``other frame'' Hamiltonian invariant, while swapping the free subsystem and interaction Hamiltonians (up to frame swap).~Hence, comparing with~\eqref{ex:NqIZZ}, we have $H_{S|R_2} = \mu H_{S|R_1}$ and since this also commutes with the interaction piece in $R_2$-perspective we are again entitled to use this contribution to define subsystem Gibbs states in $R_2$-perspective.~Thus, in this case, the state~\eqref{negmixProduct} is a genuine mixture of postive and negative temperature Gibbs states, where the temperature is rescaled by $\mu$.

Further, if the system in the $R_1$-perspective is prepared such that $\bra{0} \rho_2(0)\ket{0}_2=0$, i.e.~\eqref{fullbetainversion} is satisfied, then $\rho_{S|R_2}(t) = e^{+\beta^-_2 H_{S|R_2}}/Z$ with $\beta_2^-=-\beta/\mu$, and $S$ is in a thermal state with negative temperature relative to $R_2$ at all times.~From \eqref{fullbetainversion} and $\Tr(\rho_2^2)\leq1$, it follows that $\rho_{2}=\ket{1}\!\bra{1}_2$ and so ${\rho_{\onebar}= \ket{1}\!\bra{1}_2\otimes \frac{1}{Z_S}e^{-\beta H_{S|R_1}}}$.

Moreover, in this case, the entropy of the $S$ subsystem is the same in the two perspectives \emph{only} if $\mu=1$ in which case the subsystem states are unitarily related ${\rho_{S|R_2}(t)=\sigma_{x}^{\otimes N-2}\rho_{S|R_1}(t)\sigma_{x}^{\otimes N-2}}$ at all times.~In this case, we also have $\rho_{\onebar}(t)\in\mcA_{\onebar}^{\mbfU_{\!X^1}}$ with $X^1 = \mathds1_2 \otimes \sigma_x^{\otimes N-2}$ and the global state is separable in both perspectives, while the global Hamiltonian $H_{\onebar}$ is QRF-invariant and resides in $\mcA_{\onebar}^{\mathds1}$. This example thus also illustrates Lemma \ref{lemma:finitetimeinvsoln} with Hamiltonian and trajectory belonging to distinct TPS-invariant subalgebras.

By contrast, when $\mu\neq1$, the relation is non-unitary and furnishes examples of QRF transformation related thermal subsystem states with in general different entropy in the two perspectives.~Neither global Hamiltonian, nor trajectory are TPS-invariant in this case.
\end{example}

While reminiscent of the Unruh effect in special relativistic quantum field theory, it is worth emphasising that the above QRF transformation generated change of subsystem temperature is \emph{a priori} distinct in nature.~In the Unruh effect, the transformation proceeds between classical inertial and uniformly accelerated frames and is non-unitary for the total quantum field state because the accelerated frames ``see'' only half of the system.~Here, by construction, the transformation relates two internal QRFs and is unitary at the global level (involving the frames) because both frames ``see'' the same amount of degrees of freedom.~While the same effect may appear in a quantum field theoretic extension with quantum observers (and may recover the Unruh effect in the limit of these observers being sufficiently classical), it appears to be qualitatively distinct and does not require relativistic spacetime structures, instead being entirely rooted in the quantum relativity of subsystems.

In the present case, in contrast to the Unruh effect, it is also possible to flip the sign of the temperature.~In the multi-qubit example with $\mcG=\mathbb{Z}_2$, this is due to the fact that the subsystem Hamiltonian is not flip-invariant, thus reversing the ordering of eigenstates under the only non-trivial $\mathbb{Z}_2$ transformation described by $\sigma^x$.

\subsection{QRF-relative thermodynamic processes}\label{Sec:tdynproc}

We saw in Sec.~\ref{Sec:Dyn} above that subsystem dynamics is QRF-dependent in general; it can be closed in one perspective and open in another.~This is intimately linked to the fact that correlations are generally QRF-dependent due to subsystem relativity (cfr.~Secs.~\ref{Sec:QuantRelTPSs}-\ref{Sec:states&entanglement}).~Thus, {\it both} the dynamical evolution and QRF transformations can generate correlations. This is particularly interesting in the context of quantum thermodynamics, where correlations are known to play an important role, e.g.~they can be understood and used as a thermodynamic resource, like for work \cite{vinanders,Vitagliano2018,RevModPhys.93.035008}.~Dynamical correlations are crucially related to energy exchanges and entropy production, thus generally to non-equilibrium thermodynamical processes in quantum systems \cite{RevModPhys.93.035008,rezakhani,Hossein_Nejad_2015,lebon2008understanding}.~Since QRF transformations can introduce interactions and map equilibrium states into non-equilibrium ones, this is a realm one necessarily needs to take into account when exploring quantum thermodynamics in the context of QRF covariance.

Given the QRF-dependence of correlations and subsystem dynamics, we can already anticipate that thermodynamical processes and their associated state- and process-dependent quantities will also be contingent on the choice of QRF in general.~On the other hand, we expect the laws of thermodynamics to be QRF-independent due to covariance of physical laws under reference frame changes. 

We now discuss the QRF-(in)dependence of thermodynamical processes, within the setting of subsystem dynamics as presented in Sec.~\ref{Sec:Dyn} above, and utilising insights and tools from quantum thermodynamics.~As a first step, our considerations here shall be limited to only a few aspects, placed generally in the context of energy balance (the first law) and entropy balance (related to the second law), in order to uncover interesting physical features in the context of QRFs.~In particular, we shall inquire about the QRF-dependence of heat, work, entropy production and flow. 

\subsubsection{Setup}\label{Sec:TDsetup}

Recall from Sec.~\ref{ssec:setupdyn} that the total $R_jS$ system in $R_i$-perspective is dynamically and thermodynamically isolated.~The energy and entropy exchanges that we consider here are therefore between the subsystem $S$ and ``the other frame'' in a given perspective.\footnote{Within a given perspective, this is akin to considering `autonomous bipartite quantum systems' in the context of quantum thermodynamics, see for instance \cite{rezakhani,weimer,Hossein_Nejad_2015}.} 

Let us begin with the formulation of a setup surrounding the first law. As is well known, there is no general consensus on the definitions of various (generalisations of) thermodynamic quantities like heat and work for quantum systems.~We will thus attempt to keep our discussions independent of any specific prescriptions as much as possible, while still noticing the points of differences or similarities between them.~This requires clarifications on the precise setup which we shall employ in the sequel.

Let $E_{S|R_i}$ denote the subsystem internal energy in $R_i$-perspective 
\be\label{esi}
E_{S|R_i}(t) = \Tr(H_{S|R_i}^{\eff}(t) \rho_{S|R_i}(t))\,, 
\ee
where $H_{S|R_i}^{\eff}$ is some effective subsystem Hamiltonian, which in general includes also contributions arising from non-trivial interactions between $S$ and its complement, i.e.\ here the other frame $R_j$.~How precisely $H_{S|R_i}^{\eff}$ is constructed from the total Hamiltonian $H_{\ibar}$ and global state $\rho_{\ibar}(t)$ is a matter of choice \cite{rezakhani,Hossein_Nejad_2015,weimer,alipour2019,PhysRevLett.124.160601}.~Generically, we can write 
\be \label{effhs}
H_{S|R_i}^{\eff}(t) = H_{S|R_i} + \mathds{h}_{S|R_i}(t) \,,
\ee
where $H_{S|R_i}$ is the subsystem local part of $H_{\ibar}$ (cfr.~Eq.~\eqref{hibar}), and $\mdsh_{S|R_i}$ encodes the effect on $S$ that is \emph{induced by interactions} with its complement $R_j$.~Which specific prescription for $\mdsh_{S|R_i}$ one chooses could be based on a combination of factors, for instance related to the specific (class of) system(s) at hand, e.g.~system size, presence or type of interactions, etc.\footnote{In presence of no interactions though, we naturally expect that $\mdsh_{S|R_i}(t) = 0,\,\forall t$.
}~For example, (i) $\mdsh_{S|R_i}(t) = \tilde{H}_{S|R_i}(t) + \alpha(t) \mathds1_S$ in \cite{rezakhani}, where $\tilde{H}_{S|R_i}(t)$ is given by \eqref{eq:HStilde}, and $\alpha$ could further depend on the interactions and the reduced states (more details on this choice in relation to the energy balance can be found in App.~\ref{App:TDroleofcorrelations}); (ii) $\mdsh_{S|R_i}(t) = \tilde{H}_{S|R_i,1}(t)$ in \cite{weimer,Hossein_Nejad_2015}, where $\tilde{H}_{S|R_i,1}$ is that part of $\tilde{H}_{S|R_i}(t)$ which commutes with $H_{S|R_i}$.\footnote{$\tilde{H}_{S|R_i}$ can be decomposed as $\tilde{H}_{S|R_i} = \tilde{H}_{S|R_i,1} + \tilde{H}_{S|R_i,2}$, such that $[\tilde{H}_{S|R_i,1}, H_{S|R_i}] = 0$ and $[\tilde{H}_{S|R_i,2},H_{S|R_i}] \neq 0$ \cite{weimer,Hossein_Nejad_2015}.~More precisely, $\tilde H_{S|R_i,1}$  is the projection of $\tilde H_{S|R_i}$ into the commutant of the bare $H_{S|R_i}$.}~Our aim is not to compare the different choices for their scope of use (which is a subject of active research in quantum thermodynamics), but to explore thermodynamical properties in different QRF perspectives without necessarily having to commit  to a specific form for $\mdsh_{S|R_i}$.~In the following, we thus work with the general form \eqref{effhs}, except in special cases where it will turn out useful to restrict to a prescription.

However, we do impose some basic restrictions on the construction of effective Hamiltonian contributions (not only for $S$).~This is motivated by the desire to have effective Hamiltonians that are consistent with the original dynamics and total energy and whose expectation values do not depend on the choice of local basis for the subsystems.~(The latter condition essentially means that the effective prescription does not introduce any coupling to anything ``external''.)~Specifically, modified contributions to the total energy arise simply from ``reassigning'' certain bare contributions to different subsystems or interactions.~Modifications of bare Hamiltonians must then arise as (possibly partial) expectation values of pieces of bare Hamiltonians in the relevant state.
\begin{definition}[\textbf{Permissible effective Hamiltonians}]\label{def_prescription}
A prescription for effective Hamiltonian contributions $H_{S|R_i}^{\eff}$, $H_{j}^{\eff}$ and $H_{jS}^{\eff}$ for $S$, $R_j$ and their interaction in $R_i$-perspective ($i\neq j$), respectively, is permissible if 
\begin{itemize}
    \item[(a)] they are built \emph{linearly} from the bare Hamiltonian contributions $H_j$, $H_{S|R_i}$ and $H_{jS}$, as well as the global state $\rho_{\ibar}$; 
    \item[(b)] their sum equals $H_{\ibar}$; and
    \item[(c)] they are covariant with respect to local changes of basis in $S$ and $R_j$.~That is, if we apply a local change of basis $X=Y_j\otimes Z_S$ to the bare Hamiltonian and state, $H_{\ibar}\mapsto\hat X(H_{\ibar})$ and ${\rho_{\ibar}\mapsto \hat X(\rho_{\ibar})}$, then the effective Hamiltonians transform likewise, $H_{S|R_i}^{\eff}\mapsto\hat Z_S(H_{S|R_i}^{\eff})$, $H_{j}^{\eff}\mapsto \hat Y_j(H_j^{\eff})$ and $H_{jS}^{\eff}\mapsto\hat X(H_{jS}^{\eff})$.
    \end{itemize}
    \end{definition}
Both examples (i) and (ii) above satisfy this definition. In particular, the effective Hamiltonian contributions may depend linearly on $\rho_j=\Tr_S\rho_{\ibar}$ and $\rho_{S|R_i}=\Tr_j\rho_{\ibar}$ and thus also $\omega_{jS}$. They may involve traces and even projections, so long as the projectors are defined in terms of the Hamiltonian and state (as in (ii)).  

The total Hamiltonian \eqref{hibar} can now be rewritten as
\be \label{hibareff}
H_{\ibar} = H_{j}^{\eff} \otimes \mathds1_S + \mathds1_j \otimes H_{S|R_i}^{\eff} + H_{jS}^{\eff}\,,
\ee
where $H_{j}^{\eff}$ is defined analogously to \eqref{effhs} with a corresponding $\mdsh_{j}$, and equation \eqref{hibareff} itself is taken as the definition of $H_{jS}^{\eff}$.~Note that, unlike in the case of $H_{jS}$ in~\eqref{hibar}, the effective interaction $H_{jS}^{\eff}$ no longer needs to involve only terms that are fully non-local across the $S$ and $R_j$ partition (see App.~\ref{App:TDroleofcorrelations} for an example).~It must be noted that the individual terms on the right hand side in \eqref{hibareff} are time dependent in general, but the full $H_{\ibar}$ is time independent by construction.~The total internal energy relative to $R_i$ can then be written as
\be \label{totinten}
E_{\ibar} = \Tr(H_{\ibar}\rho_{\ibar}) = E_{j} + E_{S|R_i} + E_{jS}\,,
\ee
where $E_{S|R_i}$ is given by \eqref{esi}, $E_{j}$ is defined analogously to \eqref{esi}, and $E_{jS} = \Tr(H_{jS}^{\eff} \,\rho_{\ibar})$ is the contribution associated with interactions in the total system, thus also with dynamical correlations.~Depending on the specific prescription for defining the effective Hamiltonian, this last energy contribution can be simplified further, e.g.~if the effective Hamiltonian is such that $\Tr_S(H_{jS}^{\eff}\, \rho_{S|R_i}) = 0 = \Tr_j(H_{jS}^{\eff} \, \rho_{j})$, then $E_{jS} = \Tr(H_{jS}^{\eff}\, \omega_{jS|R_i})$ as done in \cite{rezakhani}.\footnote{In fact, the study in \cite{rezakhani} has understood this contribution $E_{jS}$ as the `binding' energy which encodes how much energy is stored within correlation and is not transferred away as heat between the interacting components of a bipartite system.~We refer to App.~\ref{App:TDroleofcorrelations} for further details in our present context.}~Notice that even though the individual terms on the right hand side of \eqref{totinten} are time dependent in general, the total internal energy is time independent, $\dot{E}_{\ibar} = 0$ (conservation of total energy), as the total system is isolated.

The energy balance equation (the first law) for the subsystem of interest is then
\be \label{energybal}
\dot{E}_{S|R_i}(t) = \Tr(\dot{H}_{S|R_i}^{\eff}(t) \rho_{S|R_i}(t)) + \Tr(H_{S|R_i}^{\eff}(t) \dot{\rho}_{S|R_i}(t)).
\ee
Notice that at this level there is no ambiguity in evaluating the change in internal energy as determined by \eqref{energybal}.~The differences arise in the precise microscopic quantum definitions for work and heat from the right hand side of \eqref{energybal}, i.e.~in the precise separation of the RHS into terms that can reasonably well be understood as thermodynamic work and heat.~This can be clarified by simplifying the RHS further using the subsystem equations of motion~\eqref{sieom},
\begin{align} 
\dot{E}_{S|R_i}(t) &= \Tr\! \big(\dot{H}_{S|R_i}^{\eff}(t) \rho_{S|R_i}(t)\big) + \Tr\! \big(H_{S|R_i}^{\eff}(t) \dot{\rho}_{S|R_i}(t)\big) \label{energybal2} \\
&= \Tr\!\big( \dot{H}_{S|R_i}^{\eff}(t) \rho_{S|R_i}(t)\big) \nonumber \\
&\;\;\; -i \Tr\!\big(H_{S|R_i}^{\eff}(t)\big[H_{S|R_i} + \tilde{H}_{S|R_i}(t), \rho_{S|R_i}(t) \big]\big) \nonumber \\
&\;\;\; -i \Tr\!\big(H_{S|R_i}^{\eff}(t) \Tr_j\!\big[H_{jS}, \omega_{jS}(t) \big]\big). \label{energybal3}
\end{align}
The conventional definitions motivated from classical thermodynamics identify the rate of work done on $S$ as the first term on the RHS in \eqref{energybal2} and heat flow into $S$ as the second term in \eqref{energybal2}; see for instance \cite{rezakhani,alicki1979,PhysRevE.75.051118}. Let us denote these by
\begin{align} 
    \dot{w}_{S|R_i} &= \Tr (\dot{H}_{S|R_i}^{\eff}(t) \rho_{S|R_i}(t)), \label{convenWQ1}\\
    \dot{q}_{S|R_i} &= \Tr (H_{S|R_i}^{\eff}(t) \dot{\rho}_{S|R_i}(t)). \label{convenWQ2}
\end{align}
Other studies have argued for the second term in \eqref{energybal3} to instead be understood as a work contribution, and not as heat \cite{weimer,Hossein_Nejad_2015,alipour2019}.~The main conceptual motivation behind this is that only the entropy changing terms should be associated with heat.~Since this second term in \eqref{energybal3} is associated with the unitary part of the subsystem dynamics, and is thus entropy preserving, it should be understood as a work contribution instead.~Let us denote this term by
\be \label{star}
\dot{E}^*_{S|R_i}(t) = -i \Tr\!\big(H_{S|R_i}^{\eff}(t)\big[H_{S|R_i} + \tilde{H}_{S|R_i}(t), \rho_{S|R_i}(t) \big]\big) .
\ee 
We can then write the first law in \eqref{energybal2}-\eqref{energybal3} equivalently as $ \dot{E}_{S|R_i}(t) = \dot{w}_{S|R_i}(t) + \dot{q}_{S|R_i}(t) = \dot{W}_{S|R_i}(t) + \dot{Q}_{S|R_i}(t) $, where $\dot{w}_{S|R_i},\dot{q}_{S|R_i}$ are given by \eqref{convenWQ1}-\eqref{convenWQ2}, and we have denoted
\be\label{eq:QWdef}
\begin{aligned}
 \dot{W}_{S|R_i} &= \dot{w}_{S|R_i} + \dot{E}^*_{S|R_i}, \\
 \dot{Q}_{S|R_i} &= \dot{q}_{S|R_i} - \dot{E}^*_{S|R_i}.
\end{aligned}
\ee

In some cases of interest, this term $\dot{E}^*_{S|R_i}(t)$ could vanish, so that no ambiguity in the identification of heat and work terms arises.~This occurs, for example, when (i) $\rho_{S|R_i}$ is a stationary state of $H_{S|R_i}^{\eff}$, or even of $H_{S|R_i} + \tilde{H}_{S|R_i}$, or (ii) $[H_{S|R_i} + \tilde{H}_{S|R_i}, H_{S|R_i}^{\eff}]  = 0$.~(ii) is naturally satisfied in a weak coupling limit, which is also when the conventional definitions are usually employed.~However, $\dot{E}^*_{S|R_i}(t)=0$ can also happen in a strong coupling, finite size regime.~Indeed, neither (i) nor (ii) imply a weak coupling, macroscopic limit.~For example,~(ii) can be satisfied trivially when $\mdsh_{S|R_i}(t) = \tilde{H}_{S|R_i}(t) + \alpha(t) \mathds1_S$ in a generic non-equilibrium, interacting setting as in \cite{rezakhani} (see App.~\ref{App:TDroleofcorrelations} for further details). 

Whether $\dot{E}^*_{S|R_i}$ is included as heat or work will of course change the amount of heat and work exchanged in a given process \cite{weimer,Hossein_Nejad_2015,alipour2019,ahmadi2019refined}.~However, here we are more interested in whether the heat and work contributions would differ in the two QRF perspectives, regardless of where this contribution is included, as long as this is done consistently in both perspectives. 

Coming now to the setup for exploring entropy exchanges (and thus the second law), let us consider the von Neumann entropy $S_{\rm vN}$ of the subsystem of interest in $R_i$-perspective 
\be \label{vnent} 
S_{\rm vN}[\rho_{S|R_i}(t)] = -\Tr(\rho_{S|R_i}(t) \log \rho_{S|R_i}(t)).
\ee 
The change in $S_{\rm vN}$ must be accounted for by any entropy that is generated internally within the subsystem and any entropy flow into/out of it.~That is,
\be \label{entbal2} 
\Delta S_{S|R_i} = \Sigma_{\ibar} - \Phi_{\ibar} \,,
\ee
where $\Delta S_{S|R_i} = S_{\rm vN}[\rho_{S|R_i}(t)] - S_{\rm vN}[\rho_{S|R_i}(0)]$ (for some initial time $t_0=0$), $\Sigma_{\ibar}$ is the entropy production and $\Phi_{\ibar}$ is the entropy flow out of $S$ \cite{RevModPhys.93.035008,lebon2008understanding,Esposito_2010,spohn,Breuer:2002pc,PhysRevLett.107.140404}. 
Entropy production characterises irreversibility in processes,\footnote{The second law of thermodynamics can be given by the non-negativity of entropy production $\Sigma \geq 0$, where equality holds only for reversible processes; see for instance \cite{RevModPhys.93.035008,lebon2008understanding}.\label{fn:2ndlaw}} and entropy flow encodes the reversible heat exchange between $S$ and its complement $R_j$ \cite{RevModPhys.93.035008,lebon2008understanding}.~Our notation with subscript $\ibar$ refers to the total system because thermodynamical processes  are inherently non-local, thus pertaining to the global system.~This will become clear shortly when discussing possible prescriptions for $\Sigma_{\ibar}$ and $\Phi_{\ibar}$.

The entropy balance equation \eqref{entbal2} is applicable to generic non-equilibrium situations, and at this level is fully general  \cite{RevModPhys.93.035008,lebon2008understanding}.~As with heat and work above, the differences arise in the precise microscopic quantum definitions for $\Sigma_{\ibar}$ and $\Phi_{\ibar}$ and their regimes of validity. For example, in classical thermodynamics, the entropy flow leaving $S$ can be approximated by Clausius' expression $\Phi_{\ibar} = \Delta q_{j}/T_j$, thus relating it to the heat flow entering its complement, ``the other frame''; and $\Sigma_{\ibar}$ is then defined via the entropy balance equation \eqref{entbal2}, see  \cite{RevModPhys.93.035008,lebon2008understanding} for a discussion.~By contrast, in quantum thermodynamics, entropy production can be defined as \cite{Esposito_2010,RevModPhys.93.035008}
\be \label{entprod}
\Sigma_{\ibar} = I[\rho_{\ibar}(t)] + S[\rho_{j}(t) || \rho_j(0)]
\ee
for an initial global product state $\rho_{\ibar}(0) = \rho_j (0) \otimes \rho_{S|R_{i}}(0)$.~Here, $I[\rho_{\ibar}] = S[\rho_{\ibar}|| \rho_{j} \otimes \rho_{S|R_i}]$ is the mutual information between $S$ and frame $R_j$ in $R_i$-perspective, and $S[\rho||\sigma]=\Tr(\rho\log\rho - \rho\log\sigma)$ is the quantum relative entropy.

The two terms on the RHS of \eqref{entprod} quantify the possible sources of irreversibility due to disregarding (tracing over) the complement of $S$.~The mutual information in the first term quantifies the non-local information shared between S and its complement (the ``environment'') which is lost if one no longer has access to the state of the latter.~The relative entropy in the second term quantifies how the state of the environment is changed by the dynamical interactions with $S$.~This is an irreversible process since any information contained locally in the state of the complement of $S$ is disregarded after partial tracing over it.

Entropy flow is then defined via the entropy balance equation \eqref{entbal2} and (for unitary global dynamics as here) is given by 
\be \label{entflow}
\Phi_{\ibar} =\Delta S_j+S[\rho_j(t)||\rho_j(0)]\,, \ee
with $\Delta S_j = S_{\rm vN}[\rho_j(t)]-S_{\rm vN}[\rho_j(0)]$ the change in von Neumann entropy of the other frame $R_j$ under dynamical evolution.

Like with heat and work, an unambiguous definition for entropy production in quantum thermodynamics is an open problem (and may not exist).~With our aim to take the first step towards inquiring about the QRF dependence of entropy production in mind, we shall consider here the proposal in \eqref{entprod} because it is a physically reasonable choice; for instance, in view of the second law (cfr.~footnote \ref{fn:2ndlaw}), it is non-negative  for generic ``other frame'' initial states, unlike the alternative proposal of open quantum systems theory \cite{Breuer:2002pc}, which is non-negative for stationary states \cite{Esposito_2010} only.\footnote{In fact, the former reduces to the latter under a weak coupling or high temperature limit \cite{Esposito_2010}, and for dynamical maps with global fixed points (e.g.~thermal operations) \cite{RevModPhys.93.035008}.}~Furthermore, when the ``other frame'' is initially in a thermal state $\rho_j(0) = \rho_j^\beta = e^{-\beta_j H_j}/Z$, then \eqref{entprod}-\eqref{entflow} reduce to their conventional forms with $\Phi_{\ibar}= \beta_j \Delta q_j$, where $\Delta q_j = \Tr(H_j(\rho_j(t)-\rho_j^\beta))$ \cite{RevModPhys.93.035008}.~We anticipate, however, that several of the qualitative observations below will also hold for other definitions of entropy production.

We note though that in the present context of QRFs, the proposal \eqref{entprod} is quite constraining due to its use of factorised global initial states.~In order to apply the same notion \eqref{entprod} of entropy production (and hence of entropy flow \eqref{entflow}) in both perspectives, the initial states must be restricted to product states which remain product states under QRF transformation.~As we shall discuss later in Sec.~\ref{Sec:TDproc2}, this restricts the allowed initial states to a small subset.~We stress however, that at some later time the state $\rho_{\ibar}(t)$ does not need to be of product form and can get entangled via the presence of interactions.~The fact that, for a generic product state in a given frame perspective, the proposals \eqref{entprod} and \eqref{entflow} may not even be simultaneously applicable in the other perspective is already a somewhat extreme form of QRF dependence of this notion of entropy production and flow.

\subsubsection{QRF-relative heat and work exchange}\label{Sec:TDproc1}

With the above setup at our disposal, we can now investigate the QRF-dependence of thermodynamic processes in terms of heat, work, and entropy exchanges between the subsystem $S$ and ``the other frame''.~Here we shall focus on heat and work, while entropy production and entropy flow are discussed in the next subsection.

 First, note that the different prescriptions for effective Hamiltonians obeying Definition~\ref{def_prescription} will generally give different answers for the energetics and thermodynamics of the subsystems not only in one perspective, but also, for what concerns us here, their QRF-dependence.~Indeed, as we discuss in App.~\ref{App:TDrates}, this is the case even for the simplest class of TPS-invariant Hamiltonians, whose local and non-local terms are individually TPS-invariant.~In that case, the prescription of \cite{rezakhani} leads to coinciding energetics in the two QRF perspectives, while this is not in general true for the one of \cite{weimer,Hossein_Nejad_2015}.~Below, however, we shall focus on situations where the different prescriptions obeying Definition~\ref{def_prescription} agree on the QRF-(in-)dependence.

Let us start by inquiring into the frame-dependence of the notion of isolated and closed thermodynamic system.~As in previous sections, our strategy will be to first characterise when the energy exchange (in terms of heat and work) between ``the other frame'' and $S$-subsystem appears the same or not in the two perspectives \emph{for all solutions}, before studying what happens for a given solution, when this is not true for all of them.

In Sec.~\ref{Sec:qrfdepopenclosed} we characterised the conditions for a dynamically closed subsystem in one perspective to remain closed or become open in the other perspective.~These have direct consequences for the thermodynamic description of the system relative to the two frames.~For dynamically closed subsystems in  $R_i$-perspective,~i.e.~when the Hamiltonian $H_{\ibar}$ is non-interacting, $H_{S|R_i}^{\eff}=H_{S|R_i}$, $H_{j}^{\eff}=H_{j}$, and $H_{jS}^{\eff}=0$ for any given prescription and any state.~In absence of interactions, neither heat nor work (for time independent Hamiltonians) is exchanged by the subsystems.~In other words, the system is thermodynamically isolated in the chosen frame perspective.~This is no longer true for all states and all times when interactions are present and the system becomes thermodynamically closed.~Lemma~\ref{lemma:closed2closed}
thus directly entails the following.
\begin{corollary}
[\textbf{Isolated-to-isolated~\&~isolated-to-closed}]\label{TD:iso2iso&iso2clos}
Suppose $H_{\ibar}=H_j\otimes\mathds1_S+\mathds1_j\otimes H_{S|R_i}$.~Then, the system is thermodynamically \emph{isolated} in both $R_i$- and $R_j$-perspective for \emph{any} given state, at \emph{any} time, i.e.
\be\label{eq:zeroqw}
\begin{aligned}
\dot{q}_{S|R_j}&=\dot{q}_{S|R_i}=0\;,\\ \dot{w}_{S|R_j}&=\dot{w}_{S|R_i}=0\;,\qquad\forall\,\rho_{\ibar}\,,\,t\\ \dot{E}^*_{S|R_j}&=\dot{E}^*_{S|R_i}=0\;,
\end{aligned}
\ee
and similarly for ``the other frame'', if and only if
\be\label{TD:HPidPit}
H_{\ibar}=H_j^{\mbd}\otimes\mathds1_S+\mathds1_j\otimes H_S^{\mbt}\,.
\ee
In other words, an \emph{isolated} thermodynamic system in $R_i$-perspective becomes \emph{closed} (but not isolated) in $R_j$-perspective if and only if $H_{\ibar}$ satisfies at least one of the following,
\be\label{TD:Hpidpitperp}
H_{j}^{\mbd\perp}\otimes\mathds1_S\neq0\,,\quad\mathds1_j\otimes H_{S}^{\mbt\perp}\neq0\,.
\ee
The quantities in \eqref{eq:zeroqw} are as defined in Sec.~\ref{Sec:TDsetup}.
\end{corollary}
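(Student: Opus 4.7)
The claim is essentially a thermodynamic rephrasing of Lemma~\ref{lemma:closed2closed}, so the plan is to equate thermodynamic isolation with the absence of interactions in the global Hamiltonian and then invoke that lemma. First I would establish the easy direction: a non-interacting global Hamiltonian relative to $R_i$ implies vanishing of all the energy exchange rates in \eqref{eq:zeroqw}, for any state and at any time. Adopting the natural requirement (as stated in the discussion following Definition~\ref{def_prescription}) that $\mdsh_{S|R_i} = 0$ when $H_{jS} = 0$, we have $H_{S|R_i}^{\eff} = H_{S|R_i}$ time-independent, so $\dot w_{S|R_i} = \Tr(\dot H_{S|R_i}^{\eff}\rho_{S|R_i}) = 0$. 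Since moreover $\tilde H_{S|R_i} = \Tr_j(H_{jS}(\rho_j\otimes\mathds{1}_S)) = 0$ and the last term in \eqref{sieom} vanishes, the subsystem dynamics reduces to $\dot\rho_{S|R_i} = -i[H_{S|R_i}, \rho_{S|R_i}]$, and cyclicity of the trace gives $\dot q_{S|R_i} = 0 = \dot E^*_{S|R_i}$. The same reasoning applies upon swapping $S$ and $R_j$.

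With this equivalence between ``$H_\bullet$ non-interacting'' and ``$S$ (and $R_j$) thermodynamically isolated for all states and times'' in a given perspective, the \emph{if} direction follows directly from Lemma~\ref{lemma:closed2closed}: condition~\eqref{TD:HPidPit} is equivalent to $H_{\ibar}\in\mcA_{\ibar}^{X}$ with $X = P_j^{g_i,g_j}\otimes\mathds{1}_S$ (by Theorems~\ref{thm:Slocals} and~\ref{thm:jlocals}), so $H_{\jbar} = \hat V_{i\to j}^{g_i,g_j}(H_{\ibar})$ is again a sum of an $R_i$-local and an $S$-local term (up to local unitaries that do not affect energetics). The first step then applies verbatim in $R_j$-perspective and \eqref{eq:zeroqw} holds bilaterally.

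The \emph{only if} direction I would prove contrapositively. If at least one of the conditions~\eqref{TD:Hpidpitperp} holds, Lemma~\ref{lemma:closed2closed} yields a nontrivial interaction term $H_{iS}\neq 0$ in $H_{\jbar}$, and I must rule out that this interaction is thermodynamically inert. For a global initial product state $\rho_{\jbar}(0) = \rho_i(0)\otimes\rho_{S|R_j}(0)$, the correlation piece $\omega_{iS}(0)$ vanishes, so the subsystem equation of motion at $t = 0$ reduces to $\dot\rho_{S|R_j}(0) = -i[H_{S|R_j} + \tilde H_{S|R_j}(0), \rho_{S|R_j}(0)]$. I would then choose $\rho_i(0)$ and $\rho_{S|R_j}(0)$ such that $[\tilde H_{S|R_j}(0), \rho_{S|R_j}(0)]\neq 0$ --- always possible when $H_{iS}$ is genuinely bilocal --- so that $\dot q_{S|R_j}(0) = \Tr(H_{S|R_j}^{\eff}\dot\rho_{S|R_j}(0))$ fails to vanish for any permissible prescription, breaking isolation in $R_j$-perspective. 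The main obstacle is that the freedom in the effective-Hamiltonian prescription (Definition~\ref{def_prescription}) could in principle permit conspiratorial cancellations between $\dot q_{S|R_j}$, $\dot w_{S|R_j}$ and $\dot E^*_{S|R_j}$; the linearity and local-unitary-covariance conditions (a) and (c) of Definition~\ref{def_prescription}, together with the genuinely non-local character of $H_{iS}$ implied by Lemma~\ref{lemma:closed2closed}, should preclude such cancellations uniformly across all states and times, but establishing this fully rigorously requires a careful case analysis of \eqref{convenWQ1}--\eqref{convenWQ2} and \eqref{star} under the admissible prescriptions.
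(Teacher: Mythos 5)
Your proposal is correct and follows essentially the same route as the paper: the paper's proof consists of the observation that, absent interactions, every permissible prescription yields $H^{\eff}_{S|R_i}=H_{S|R_i}$, $H^{\eff}_{j}=H_{j}$ and $H^{\eff}_{jS}=0$ (so all rates in \eqref{eq:zeroqw} vanish for all states and times), followed by a direct appeal to Lemma~\ref{lemma:closed2closed} for the characterisation \eqref{TD:HPidPit} versus \eqref{TD:Hpidpitperp}. The extra care you devote to the converse (ruling out a ``thermodynamically inert'' interaction via a product-state construction) goes beyond the paper, which simply asserts that the presence of interactions renders the system closed rather than isolated and illustrates it by example; the residual gap you honestly flag is therefore no worse than the published argument.
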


Thus, when the total non-interacting Hamiltonian satisfies at least one of~\eqref{TD:Hpidpitperp}, depending on the state, there can be non-zero energy exchange among the subsystems in $R_j$-perspective, owing to the interactions generated in the Hamiltonian $H_{\jbar}$ by the QRF-transformation.
This is illustrated in Example~\ref{ex:TDisolatedvsclosed} below.

\begin{example}\label{ex:TDisolatedvsclosed}
\emph{\textbf{(Isolated-to-isolated~vs.~isolated-to-closed)}}~Consider three qubits $R_1R_2S$ and the two Hamiltonians 
\begin{align}
    &(1)\, H_{\bar 1}=\sigma_2^z\otimes\mathds1_S+\mathds1_2\otimes\sigma_S^x,\\
    &(2)\, H_{\bar 1}=\sigma_2^x\otimes\mathds1_S+\mathds1_2\otimes\sigma_S^x\label{eq2:hbar1}
\end{align}
of $R_2S$ relative to $R_1$.~$(1)$ is of the form \eqref{TD:HPidPit}, $(2)$ is not, as it satisfies the first condition in \eqref{TD:Hpidpitperp}.~In absence of interactions, $H_{S|R_1}^{\eff}=H_{S|R_1}=\sigma_S^x$ and $H_{2}^{\eff}=H_2=\sigma_2^z,\sigma_2^x$, respectively, for $(1),(2)$ in any given prescription.~In both cases, owing to the subsystem's unitary evolution and the time-independence of the effective Hamiltonians, neither heat nor work is exchanged between $R_2$ and $S$ for any given state $\rho_{\bar1}$, at any time.~

Setting the orientations before and after QRF transformations to $g_1=0$ and $g_2=0$, respectively, the Hamiltonian $H_{\bar2}=\hat{V}^{0,0}_{1\to2}(H_{\bar1})$ in $R_2$-perspective reads
\begin{align}
    &(1)\, H_{\bar 2}=\sigma_1^z\otimes\mathds1_S+\mathds1_1\otimes\sigma_S^x,\\
    &(2) \, H_{\bar 2}=\mathds1_1\otimes\sigma_S^x+\sigma_1^x\otimes\sigma_S^x.\label{eq2:H2bar}
\end{align}
In case $(1)$, $H_{\bar 2}$ remains non-interacting and there is no heat and work exchange between $R_1$ and $S$ for any $\rho_{\bar2}$, at any time $t$.~In case $(2)$, there are $R_1S$ interactions, and heat and work exchanges are not zero for any $\rho_{\bar2}$ and any $t$.~Indeed, let us consider the following initial states
\begin{align}
\ket{\psi(0)}_{\bar2}&=\ket{x_+,x_+}_{\bar2}\quad\text{with}\quad\sigma^x\ket{x_+}=\ket{x_+}\;,\label{eq2:state1}\\
\ket{\varphi(0)}_{\bar2}&=\frac{1}{\sqrt{2}}(\ket{00}_{\bar2}+\ket{11}_{\bar2}),\label{eq2:state2}
\end{align}
and the prescription of effective Hamiltonians \cite{rezakhani}
\begin{align}
H_{S|R_2}^{\eff}(t)&=H_{S|R_2}+\tilde{H}_{S|R_2}(t)\nonumber\\
&-\alpha_S\Tr_{\bar2}(H_{1S}(\rho_1(t)\otimes\rho_{S|R_2}(t))\cdot\mathds1_S,\label{eq2:heff}
\end{align}
where $\tilde{H}_{S|R_2}(t)$ is defined as in \eqref{eq:HStilde}, and $\alpha_S$ is a real parameter such that $\alpha_S+\alpha_1=1$, with $\alpha_1$ the parameter in the analogous expression for $H_{1}^{\eff}$.~In this prescription $\dot{E}_{S|R_2}^*=0$ (cfr.~\eqref{star}), and the definitions \eqref{convenWQ1},~\eqref{convenWQ2} and \eqref{eq:QWdef} of heat and work rates coincide, respectively.

Clearly,~\eqref{eq2:state1} is an eigenstate of \eqref{eq2:H2bar} and $\rho_{S|R_2}(t)=\ket{x_+}\!\bra{x_+}=\rho_{1}(t)$, for all $t$.~Thus, for the state \eqref{eq2:state1} and Hamiltonian \eqref{eq2:H2bar}, Eq.~\eqref{eq2:heff} yields $H_{S|R_2}^{\eff}=2\sigma_S^x-\alpha_S\mathds1_S$ and $\dot{q}_{S|R_2}=\dot{w}_{S|R_2}=0$ at any $t$ (cfr.~Eqs.~\eqref{convenWQ1},~\eqref{convenWQ2}).~For the state \eqref{eq2:state2}, we have
\be\label{state2:results}
\begin{aligned}
\rho_{1}(t)&=\frac{1}{2}\Bigl[\ket{0}\!\bra{0}_1+\ket{1}\!\bra{1}_1+\\
&\quad+\sin(2t)\sin(t)\bigl(e^{it}\ket{0}\!\bra{1}_1+e^{-it}\ket{1}\!\bra{0}_1\bigr)\Bigr]\,,\\
\rho_{S|R_2}&=\frac{1}{2}\Bigl[\ket{0}\!\bra{0}_S+\ket{1}\!\bra{1}_S+\\
&\quad+\sin(2t)\sin(t)\bigl(e^{-it}\ket{0}\!\bra{1}_S+e^{it}\ket{1}\!\bra{0}_S\bigr)\Bigr]\,,\\
H_{S|R_2}^{\eff}&=\left(1+\frac{1}{2}\sin^2(2t)\right)\sigma_S^x-\frac{\alpha_S}{4}\sin^2(2t)\mathds1_S\,,\\
\dot{q}_{S|R_2}&=\sin(2t)\bigl[2-\sin(t)(1+3\sin^2(t))\bigr]\,,\\
\dot{w}_{S|R_2}&=\sin(4t)\sin^2(2t)\left(\frac{1}{2}-\alpha_S\right)\,,
\end{aligned}
\ee
and the heat and work rates do not vanish at all times.

The above considerations hold true also with the effective Hamiltonian prescription of \cite{weimer,Hossein_Nejad_2015}, where $H_{S|R_2}^{\eff}(t)=H_{S|R_2}+\mdsh_{S|R_2}(t)$ with $\mdsh_{S|R_2}(t)$ the part of $\tilde{H}_{S|R_2}(t)$ that commutes with $H_{S|R_2}$.~In fact,  for $H_{\bar2}$ in \eqref{eq2:H2bar}, we have $H_{S|R_2}=\sigma_S^x$ and, by Eq.~\eqref{eq:HStilde}, $\tilde{H}_{S|R_2}(t)=\sigma_S^x$ for $\rho_{\bar2}$ in \eqref{eq2:state1} and $\tilde{H}_{S|R_2}(t)=\frac{1}{2}\sin^2(2t)\sigma_S^x$ for $\rho_{\bar2}$ in \eqref{eq2:state2}.~In either case, $\mdsh_{S|R_2}(t)=\tilde{H}_{S|R_2}(t)$ and the results for the given state in this prescription are obtained by setting $\alpha_S=0$ in the expressions above.
\end{example}

For a given state, the heat and work exchanged between the subsystems can thus differ in the two perspectives.~Let us now instead seek conditions, again for a given dynamical trajectory, such that the energetics agree in both perspectives, \emph{regardless} of whether interactions are present.~This brings us back to the discussion of Sec.~\ref{Sec:Dyn}, which highlights the central role of TPS-invariance of the global state trajectory for the QRF-independence of the subsystem evolution and correlations.~Lemma~\ref{lemma:finitetimeinvsoln} clarifies how this depends on the interplay of initial state and Hamiltonian.~In particular, we recall that for $\rho_{\ibar}(t)\in\mcA_{\ibar}^{X}$ in some time interval $[t_0,t_1]$, the imported Hamiltonian $H_{j\to i}=\hat X\hat{\mathcal I}_{j\to i}(H_{\jbar})$ from $R_j$-perspective can equivalently be used as the generator of dynamics for the given state in the interval $[t_0,t_1]$ in $R_i$-perspective.

In fact, in this time interval, $H_{j\to i}$ can be \emph{also} equivalently used to define the total energy of the system for the states in question.~Indeed, at any $t$ for which $\rho_{\ibar}(t)\in\mcA_{\ibar}^{X}$, we have
\begin{align}\label{energyUXinvstate}
E_{\ibar}(t)&=\Tr_{\ibar}\left(H_{\ibar}\,\rho_{\ibar}(t)\right)\nonumber\\
&=\Tr_{\ibar}\bigl(\PiUX(H_{\ibar})\,\rho_{\ibar}(t)\bigr)\nonumber\\
&=\Tr_{\ibar}\bigl(\PiUX(H_{j\to i})\,\rho_{\ibar}(t)\bigr)\nonumber\\
&=\Tr_{\ibar}\left(H_{j\to i}\,\rho_{\ibar}(t)\right)\;,
\end{align}
where we used $\rho_{\ibar}=\PiUX(\rho_{\ibar})$ for $\rho_{\ibar}\in\mcA_{\ibar}^{X}$, with $\PiUX$ the projector onto $\mcA_{\ibar}^{X}$, Hilbert-Schmidt orthogonality of the decomposition \eqref{AUXdecofAibar} of $\mcA_{\ibar}$, and the property \eqref{eq:hagree} that $\PiUX(H_{j\to i})=\PiUX(H_{\ibar})$.~This also holds for fluctuations of the total energy.

As a consequence, since effective Hamiltonian prescriptions modify the individual contributions to the total energy, while keeping the latter invariant, one could alternatively define the energetics in $R_i$-perspective using the imported Hamiltonian $H_{j\to i}$, rather than $H_{\ibar}$ (while still obeying Definition~\ref{def_prescription} with $H_{j\to i}$ instead).~Note, however, that the individual effective energy contributions using $H_{j\to i}$ or $H_{\ibar}$ will not agree in general, even for the same prescription.\footnote{They might agree though for certain states, as illustrated in Example~\ref{ex:stateQRFdepqw}.}~Indeed, as we have seen in Sec.~\ref{subsec:globaldyn}, $H_{j\to i}$ and $H_{\ibar}$ can be very different operators.~In particular, due to the QRF-transformation, new interactions can be present in $H_{\jbar}$ and thereby $H_{j\to i}$.~These do not contribute to the effective energies derived from $H_{\ibar}$, but will generically contribute to those derived from $H_{j\to i}$.

As one may expect, the following result, proven in App.~\ref{App:TDrates}, establishes that heat and work agree in the two QRF perspectives for a global TPS-invariant trajectory, provided one defines the effective Hamiltonian in one perspective with the bare Hamiltonian of that perspective and in the other with the correspondingly imported one (see Fig.~\ref{Fig:invTD} for an illustration).

\begin{lemma} \label{lemma:invTDrates}
Let $\rho_{\ibar}(t)\in\mathcal{A}_{\ibar}^X$ be a TPS-invariant trajectory for some interval $[t_0,t_1]$.~Choose \emph{any} prescription of effective Hamiltonians obeying Definition~\ref{def_prescription} and apply it to both $H_{\jbar}$ in $R_j$-perspective and to the imported Hamiltonian $H_{j\to i}$ in $R_i$-perspective ($i\neq j$) for defining the energetics.~Then, over $[t_0,t_1]$, the heat flow into and work done on $S$ are equal in both perspectives relative to this prescription.~This is \emph{independent} of whether work and heat are defined via~\eqref{convenWQ1} \&~\eqref{convenWQ2} or via~\eqref{eq:QWdef}.~That is, for any $t\in[t_0,t_1]$,
    \be\label{sysqts}
    \dot{q}_{S|R_j}=\dot{q}_{S|R_i}, \quad \dot{w}_{S|R_j}=\dot{w}_{S|R_i}, \quad \dot{E}^*_{S|R_j}=\dot{E}^*_{S|R_i},
    \ee
and similarly, for ``the other frame'' subsystem
    \be\label{frameqts}
    \dot q_{i}=\dot q_{j}, \;\; \dot{w}_{i}=\dot{w}_{j}, \quad \dot{E}^*_{i}=\dot{E}^*_{j}.
    \ee
This holds for any frame orientations $g_i,g_j$ before and after QRF transformation, respectively.
\end{lemma}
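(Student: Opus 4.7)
The plan is to reduce the equalities~\eqref{sysqts}--\eqref{frameqts} to a covariance-plus-cyclicity argument built on the TPS-invariance of the trajectory. Since $\rho_{\ibar}(t)\in\mathcal{A}_{\ibar}^X$ throughout $[t_0,t_1]$, Lemma~\ref{Lemma:tpseq} applied to the state gives
\be
\rho_{\jbar}(t)=\hat{\mathcal{I}}_{i\to j}\hat{X}^{\dagger}(\rho_{\ibar}(t))\,,\qquad\text{equivalently}\qquad \rho_{\ibar}(t)=\hat{X}\,\hat{\mathcal{I}}_{j\to i}(\rho_{\jbar}(t))\,,\nonumber
\ee
while the definition~\eqref{defHimport} of the imported Hamiltonian encodes exactly the same relation at the level of Hamiltonians, $H_{j\to i}=\hat{X}\,\hat{\mathcal{I}}_{j\to i}(H_{\jbar})$. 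Thus the input pair $(H_{j\to i},\rho_{\ibar}(t))$ used for the energetics in $R_i$-perspective and the pair $(H_{\jbar},\rho_{\jbar}(t))$ used in $R_j$-perspective differ only by the bilocal unitary conjugation $\hat{X}=\hat{Y}_j\otimes\hat{Z}_S$ composed with the frame relabeling $\mathcal{I}_{j\to i}$. Lemma~\ref{lemma:finitetimeinvsoln} further ensures that $H_{j\to i}$ generates the \emph{same} Liouville--von Neumann dynamics as $H_{\ibar}$ on the trajectory, so it is a legitimate Hamiltonian input for any permissible prescription.

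Next, I would invoke Definition~\ref{def_prescription}: any permissible prescription is linear in the bare Hamiltonian and state, structural (hence insensitive to the relabeling $\mathcal{I}_{j\to i}$), and covariant under local basis changes $X=Y_j\otimes Z_S$. Applying the same prescription to the two input pairs therefore produces corresponding effective Hamiltonian pieces related by the very same $\hat{X}\,\hat{\mathcal{I}}_{j\to i}$: for instance,
\begin{align}
\mathds1_j\otimes H_{S|R_i}^{\eff}(t)&=\hat{X}\,\hat{\mathcal{I}}_{j\to i}\bigl(\mathds1_i\otimes H_{S|R_j}^{\eff}(t)\bigr),\nonumber\\
H_j^{\eff}(t)\otimes\mathds1_S&=\hat{X}\,\hat{\mathcal{I}}_{j\to i}\bigl(H_i^{\eff}(t)\otimes\mathds1_S\bigr),\nonumber\\
H_{jS}^{\eff}(t)&=\hat{X}\,\hat{\mathcal{I}}_{j\to i}\bigl(H_{iS}^{\eff}(t)\bigr),\nonumber
\end{align}
and analogously for the bare pieces $H_{S|R_i},H_j,H_{jS}$ as well as for $\tilde{H}_{S|R_i}(t)$ in~\eqref{eq:HStilde}. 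Taking the appropriate partial trace---$\Tr_j$ for $S$-local quantities, $\Tr_S$ for ``other frame'' ones---the unitary on the traced-out factor drops out by cyclicity, yielding the reduced identities $\rho_{S|R_i}(t)=\hat{Z}_S(\rho_{S|R_j}(t))$, $H_{S|R_i}^{\eff}(t)=\hat{Z}_S(H_{S|R_j}^{\eff}(t))$, and symmetrically $\rho_j(t)=\hat{Y}_j\hat{\mathcal{I}}_{j\to i}(\rho_i(t))$, $H_j^{\eff}(t)=\hat{Y}_j\hat{\mathcal{I}}_{j\to i}(H_i^{\eff}(t))$.

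The desired equalities then follow by direct substitution together with cyclicity of the $S$- (resp.~$R_j$-) trace, using that $\hat{Y}_j,\hat{Z}_S,\hat{\mathcal{I}}_{j\to i}$ are all time-independent and thus commute with $\partial_t$. For example,
\be
\dot{w}_{S|R_i}=\Tr_S\!\bigl(\dot{H}_{S|R_i}^{\eff}(t)\rho_{S|R_i}(t)\bigr)=\Tr_S\!\bigl(\hat{Z}_S(\dot{H}_{S|R_j}^{\eff}(t)\rho_{S|R_j}(t))\bigr)=\dot{w}_{S|R_j}\,,\nonumber
\ee
and the same pattern covers $\dot{q}_{S|R_i}$ and $\dot{E}^{*}_{S|R_i}$---the latter because the commutator $[H_{S|R_i}+\tilde{H}_{S|R_i}(t),\rho_{S|R_i}(t)]$ is also covariant under $\hat{Z}_S$. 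The identities~\eqref{frameqts} are obtained by the symmetric argument tracing out $S$, with $\hat{Y}_j\hat{\mathcal{I}}_{j\to i}$ taking over the role of $\hat{Z}_S$. Since $\dot{w},\dot{q},\dot{E}^{*}$ match individually, so do $\dot{W}=\dot{w}+\dot{E}^{*}$ and $\dot{Q}=\dot{q}-\dot{E}^{*}$ in~\eqref{eq:QWdef}, independently of the convention chosen.

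The main obstacle I expect lies in the frame-swap half of the covariance step: property~(c) of Definition~\ref{def_prescription} is stated only for local unitary basis changes within a \emph{fixed} bipartition, whereas here the two settings live on the different bipartite Hilbert spaces $\mcH_j\otimes\mcH_S$ and $\mcH_i\otimes\mcH_S$. One must verify---or explicitly require as part of ``permissibility''---that the prescription is insensitive to the choice of label on the frame factor, equivalently that it commutes with the canonical identification $\mathcal{I}_{j\to i}$ of the two isomorphic regular representations of $\mcG$. Once that structural property is granted, the rest is the routine covariance-plus-cyclicity calculation sketched above.
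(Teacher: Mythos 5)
Your proposal is correct and follows essentially the same route as the paper's proof in the appendix: TPS-invariance of the trajectory plus the definition of the imported Hamiltonian put the input pairs $(H_{j\to i},\rho_{\ibar}(t))$ and $(H_{\jbar},\rho_{\jbar}(t))$ in the same $\hat{X}\hat{\mathcal{I}}_{j\to i}$-orbit, covariance under Definition~\ref{def_prescription} propagates this to the effective Hamiltonian pieces, and partial traces with cyclicity then yield the equalities, including $\dot{E}^*$ via covariance of the commutator. Your closing caveat about the frame-swap $\mathcal{I}_{j\to i}$ not being strictly covered by property~(c) is a fair observation; the paper simply absorbs this into property~(c) (stating in a footnote that it ensures consistency of applying the prescription before or after $\hat{X}\hat{\mathcal{I}}_{j\to i}$), so it is an implicit assumption rather than a gap in your argument.
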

\begin{figure*}[t!]
\centering\includegraphics[height=6.5cm,width=0.9\textwidth]{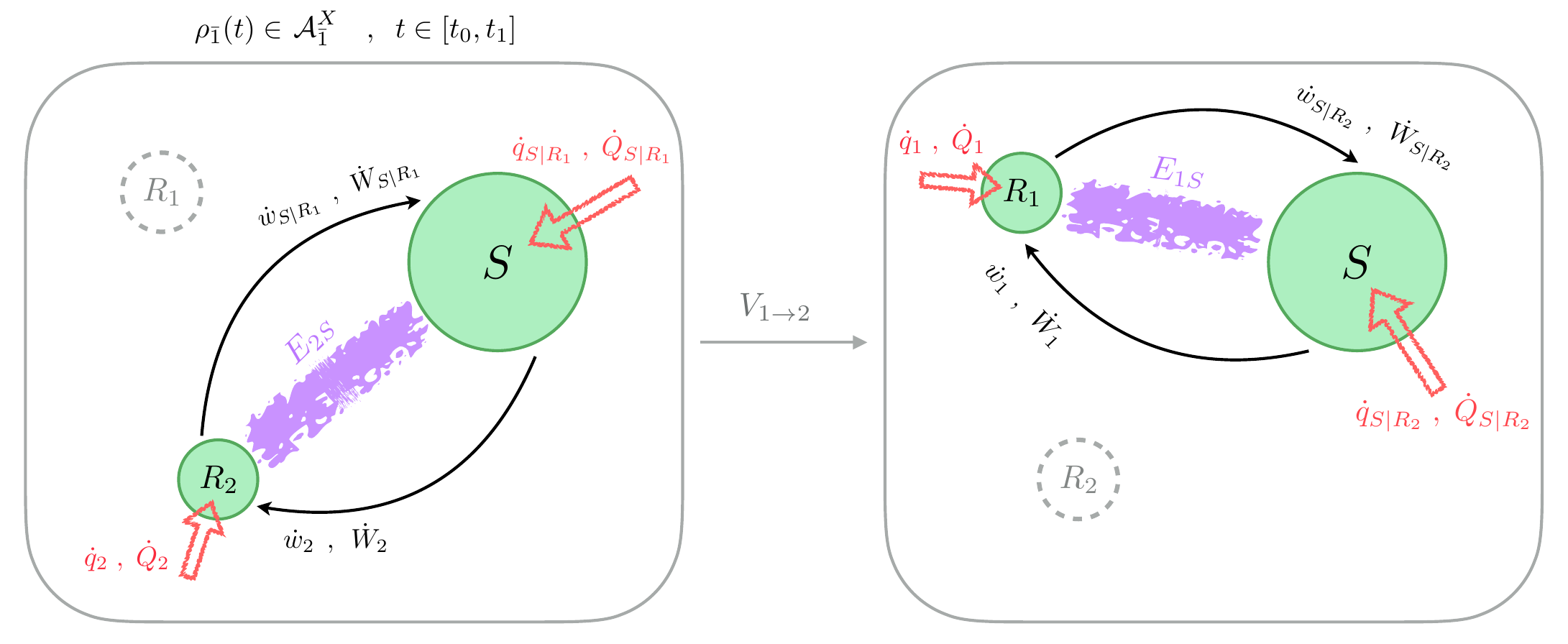}
\caption{Pictorial representation of Lemma~\ref{lemma:invTDrates}.~When the state $\rho_{\bar1}(t)$ of $R_2S$ relative to $R_1$ lies in a TPS-invariant subalgebra $\mcA_{\bar1}^{X}$ during some time interval $[t_0,t_1]$, the energetics of the total system is the same relative to the frames $R_1$ and $R_2$ (provided one uses the imported Hamiltonian in one perspective).~Specifically, the heat and work exchanged between ``the other frame'' and S-subsystem, and the energy stored in correlations, are the same in the perspectives of both frames during the interval $[t_0,t_1]$, independently of the prescription used to define them.}
\label{Fig:invTD}
\end{figure*}

Importantly, however, under the same premises, the conclusion will generally not hold, if we replace the imported Hamiltonian $H_{j\to i}$ by $H_{\ibar}$ in $R_i$-perspective. In other words, even for TPS-invariant global trajectories, the QRF perspectives will not in general agree on the energetics if both use the corresponding bare Hamiltonian in their perspective for defining effective Hamiltonians (even if both otherwise adhere to the same prescription).

Clearly, the precise form of the thermodynamic quantities involved and their magnitude in a given perspective depend on the specific prescription.~However, the above result holds for any given prescription of effective Hamiltonians obeying Definition~\ref{def_prescription} and is independent of the specific definition of heat and work rates (\eqref{convenWQ1},~\eqref{convenWQ2} or \eqref{eq:QWdef}).~Moreover, Lemma~\ref{lemma:invTDrates} does not require us to restrict to a specific form either of the initial state (separable or not) or the Hamiltonian (e.g.~weak coupling).~Note that, to obtain the same result for mixtures of TPS-invariant pure states, one would have to adapt the imported Hamiltonian to the bilocal unitary of each such pure state.~This would be rather cumbersome and may not be desirable in general.

A generic state will not meet the conditions of Lemma~\ref{lemma:invTDrates}, in which case, by Lemma~\ref{lemma:finitetimeinvsoln}, the imported Hamiltonian cannot be used for defining the energetics and time evolution, and one has to resort to the true Hamiltonian instead.~In that case, we expect that the heat and work exchange between ``the other frame'' and $S$-subsystem will generically differ in the two perspectives, but can still agree in special cases.~The following example illustrates this observation.
\begin{example}\label{ex:stateQRFdepqw}
    Consider again case $(2)$ of Example~\ref{ex:TDisolatedvsclosed} and the initial states \eqref{eq2:state1} and \eqref{eq2:state2} in $R_2$-perspective.~As for \eqref{eq2:state1}, we note that $\mbfU_{\bar2}^{0,0}\ket{x_+,x_+}_{\bar2}=\ket{x_+,x_+}_{\bar2}$,~i.e.~$\rho_{\bar2}(0)\in\mcA_{\bar2}^{\mathds1}$.~Moreover, the state is an eigenstate of both $H_{\bar2}$ in \eqref{eq2:H2bar} and $H_{1\to2}=\hat{\mathcal I}_{1\to2}(H_{\bar1})$ with $H_{\bar1}$ given in \eqref{eq2:hbar1} and $\mathcal I_{1\to2}=(\ket0_1\otimes\bra0_2+\ket1_1\otimes\bra1_2)\otimes\mathds1_S$.~Thus, $\rho_{\bar2}(t)=\ket{x_+,x_+}\!\bra{x_+,x_+}_{\bar2}\in\mcA_{\bar2}^{\mathds1}$ for any $t$.~Compatibly with Lemma~\ref{lemma:invTDrates}, for such a state, heat and work are zero in both $R_1$- and $R_2$-perspective, regardless of the prescription to compute them and, in this specific case, also of whether we use the imported or the true Hamiltonian.

    The state \eqref{eq2:state2} instead does not lie in any TPS-invariant subalgebra at any time, so that we cannot invoke the imported Hamiltonian.~As discussed earlier in Example~\ref{ex:TDisolatedvsclosed}, heat and work are zero at all times for the QRF-transformed state $\ket{\varphi(t)}_{\bar1}=V_{2\to1}^{0,0}\ket{\varphi(t)}_{\bar2}$ in $R_1$-perspective (isolated system).~In $R_2$-perspective, the thermodynamic system is closed and the heat and work rates for the given state $\ket{\varphi(t)}_{\bar2}$ are not zero at all $t$ (cfr.~Eqs.~\eqref{state2:results}).~They both vanish however at certain times, namely at $t=n\pi/2$, $n\in\mathbb N\cup\{0\}$.~In other words, even though the system is thermodynamically closed in the given perspective, it can still be \emph{effectively isolated} for some states and at some time.
\end{example}
We leave open what the necessary conditions are for heat and work exchange to coincide in the two perspectives and move our discussion to entropy production and entropy flow.

\subsubsection{QRF-relative entropy production and entropy flow}\label{Sec:TDproc2}

As emphasised at the end of Sec.~\ref{Sec:TDsetup}, for the notions of entropy production \eqref{entprod} and entropy flow \eqref{entflow} to be applicable in both perspectives, the initial global states must be restricted to those product states which remain product under QRF-transformation.~For pure states, Theorem~\ref{claim:purestatesrhoSZS} and its Cor.~\ref{claim:Renyipure} inform us that a necessary and sufficient condition for this to be the case is $\rho_{\ibar}(0)=\rho_j(0)\otimes \rho_{S|R_i}(0)\in\mcA_{\ibar}^{X}$,
for some bilocal unitary $X=Y_j\otimes Z_S$.~For mixed states, by linearity of the QRF-transformation and Lemma~\ref{Lemma:tpseq}, a sufficient condition is when $\rho_{\ibar}(0)$ belongs to a TPS-invariant subalgebra and/or it admits a pure state decomposition $\rho_{\ibar}(0)=\sum_{A,B}p_Aq_B\rho_{j}^A(0)\otimes\rho_{S|R_i}^B(0)$ with $\rho_{j}^A(0)\otimes\rho_{S|R_i}^B(0)\in\mcA_{\ibar}^{X^{AB}}$, for possibly distinct $X^{AB}=Y_j^A\otimes Z_S^B$.~Either way, this restricts the allowed initial states to a small subset.~The fact that the same notion of entropy production and entropy flow is generically not applicable in both perspectives simultaneously signals already a somewhat extreme form of the QRF-relativity of these quantities.

In this last subsection, we now aim to take first steps to investigate the QRF-dependence of subsystem entropy exchanges when the definitions \emph{do} apply to both perspectives simultaneously.~In what follows we shall thus restrict to initial states which are of product form in both perspectives.

Similarly to what was done with heat and work in Sec.~\ref{Sec:TDproc1}, let us start by asking when, given that there is no entropy production within the subsystem $S$ and no entropy flow out of it in one perspective, these quantities remain zero also in the other perspective.~A characterisation for pure global states is given by Lemma~\ref{lemma:zerosigmaphi} below which also provides us with sufficient conditions for mixed global states.~We refer to App.~\ref{App:TDrates} for the proof.

\begin{lemma}[\textbf{No entropy production \& flow}]\label{lemma:zerosigmaphi}
For pure global states $\rho_{\ibar}$, the entropy production and entropy flow of the subsystem $S$ at time $t=t_1>0$ are zero in both perspectives (for any given frame orientations $g_i,g_j\in\mathcal G$)
    \be\label{eq:noentex}
\Sigma_{\ibar}(t_1)=0=\Sigma_{\jbar}(t_1)\quad,\quad\Phi_{\ibar}(t_1)=0=\Phi_{\jbar}(t_1)\,,
    \ee
    if and only if, at the initial $t_0=0$ and the later time $t_1$
\be\label{cond1:zerosigmaphi}
    \rho_{\ibar}(t_a)=\rho_j(0)\otimes\rho_{S|R_i}(t_a)\in\mcA_{\ibar}^{X^{a}}\,,
    \ee
    for some bilocal unitary $X^a=Y_j^0\otimes Z_S^a$, $a=0,1$.
    For $\rho_{\ibar}$ mixed,~\eqref{eq:noentex} holds true if $\rho_{\ibar}(t_a)$, $a=0,1$, satisfy at least one of the following:\\

\noindent
    $\bullet$ they obey~\eqref{cond1:zerosigmaphi},\\

\noindent    $\bullet$ they admit pure state decompositions of the form
\be\label{MS2:zerosigmaphi}
    \rho_{\ibar}(t_a)=\sum_{A,B}p_A\,q_B^a\,\rho_{j}^A(0)\otimes\rho_{S|R_i}^{B}(t_a)\;,
\ee
with $\rho_{j}^A(0)\otimes\rho_{S|R_i}^B(t_a)\in\mcA_{\ibar}^{X^{AB}}$, for possibly distinct $X^{AB}=Y_j^{A,0}\otimes Z_S^{B,a}$.
\end{lemma}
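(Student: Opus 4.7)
The strategy is to decode the conditions $\Sigma_{\bullet}(t_1)=\Phi_{\bullet}(t_1)=0$ into concrete constraints on $\rho_{\ibar}(t_a)$ and $\rho_{\jbar}(t_a)$ at $a=0,1$, and then match these constraints to the algebraic TPS-invariance condition via Theorem~\ref{claim:purestatesrhoSZS} and Lemma~\ref{Lemma:tpseq}.

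First I would unpack the definitions. Since the initial state is product in $R_i$-perspective by assumption, \eqref{entprod} reads $\Sigma_{\ibar}(t_1)=I[\rho_{\ibar}(t_1)]+S[\rho_j(t_1)\|\rho_j(0)]$ with both summands non-negative, so $\Sigma_{\ibar}(t_1)=0$ is equivalent to $\rho_{\ibar}(t_1)$ being product with $\rho_j(t_1)=\rho_j(0)$; substituting back into \eqref{entflow} then yields $\Phi_{\ibar}(t_1)=0$ automatically. In order for the proposal \eqref{entprod} to be simultaneously well-defined in $R_j$-perspective, $\rho_{\jbar}(0)$ must also be product, and the analogous argument then gives $\rho_{\jbar}(t_a)=\rho_i(0)\otimes\rho_{S|R_j}(t_a)$ for $a=0,1$. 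These four conditions together are what must match the algebraic statement in the lemma.

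For sufficiency I would apply Lemma~\ref{Lemma:tpseq} to each $\rho_{\ibar}(t_a)\in\mathcal A_{\ibar}^{X^a}$ with $X^a=Y_j^0\otimes Z_S^a$ to obtain
\begin{equation*}
\rho_{\jbar}(t_a)=\hat{\mcI}_{i\to j}\bigl(\hat Y_j^{0\dag}(\rho_j(0))\otimes\hat Z_S^{a\dag}(\rho_{S|R_i}(t_a))\bigr),
\end{equation*}
a product state in $R_j$-perspective whose frame factor $\hat Y_j^{0\dag}(\rho_j(0))$ is $a$-independent. Hence $\rho_i(t_1)=\rho_i(0)$ and all four entropy quantities vanish. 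The mixed case under the decomposition~\eqref{MS2:zerosigmaphi} proceeds identically term by term by linearity of $\hat V_{i\to j}^{g_i,g_j}$ and of the partial trace, with the $R_i$-marginal in $R_j$-perspective reading $\sum_A p_A\,\hat Y_j^{A,0\dag}(\rho_j^A(0))$, which is again $a$-independent.

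For necessity in the pure case, the product structure derived in the first step forces all reduced states to be pure; hence $\rho_{S|R_i}(t_a)$ and $\rho_{S|R_j}(t_a)$ are trivially unitarily related by some $Z_S^a$, and Theorem~\ref{claim:purestatesrhoSZS} with Corollary~\ref{claim:Renyipure} delivers $\rho_{\ibar}(t_a)\in\mcA_{\ibar}^{Y_j^a\otimes Z_S^a}$. The remaining step is to align the $Y_j^a$ at $a=0,1$: since both $\rho_j(t_a)$ and $\rho_i(t_a)$ equal their $t_0$-values, the frame-factor constraint extracted from Lemma~\ref{Lemma:tpseq}, namely $\hat Y_j^\dag(\rho_j(0))=\rho_i(0)$ up to frame swap, is identical at both times, and the unitary freedom in the construction of $X^a$ (the isotropy of $\rho_j(0)$) permits a common choice $Y_j^0$. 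I expect this alignment to be the main technical obstacle, since Theorem~\ref{claim:purestatesrhoSZS} only asserts existence of \emph{some} bilocal unitary at each time; everything else reduces to bookkeeping with Lemma~\ref{Lemma:tpseq} and linearity of the partial trace.
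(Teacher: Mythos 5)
Your proof is correct and follows essentially the same route as the paper's: decode $\Sigma=\Phi=0$ into product-state and marginal-preservation conditions at $t_0$ and $t_1$ (the paper packages the two non-negative summands of $\Sigma_{\ibar}$ into the single relative entropy $S[\rho_{\ibar}(t_1)\|\rho_j(0)\otimes\rho_{S|R_i}(t_1)]$, you keep them separate, which is equivalent), then invoke Theorem~\ref{claim:purestatesrhoSZS}/Corollary~\ref{claim:Renyipure} and Lemma~\ref{Lemma:tpseq} to translate into TPS-invariance. Your explicit handling of the alignment of the frame unitary to a common $Y_j^0$ at both times, via the stabiliser of $\rho_j(0)$ and the product structure, is a welcome extra step that the paper's own proof leaves implicit.
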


When the conditions of Lemma~\ref{lemma:zerosigmaphi} are satisfied, the entropy variation of both ``the other frame'' and $S$ subsystems are zero in the two perspectives, i.e.~$\Delta S_{S|R_i}=0=\Delta S_{S|R_j}$ and $\Delta S_{j}=0=\Delta S_{i}$, as is clear from the entropy balance relation \eqref{entbal2} and definition \eqref{entflow} of entropy flow.

Conditions \eqref{cond1:zerosigmaphi}--\eqref{MS2:zerosigmaphi} tell us that, for the entropy production and entropy flow of the subsystem $S$ at time $t$ to be zero relative to both frames, not only must the global state at time $t$ be a product state in both perspectives, but also ``the other frame'' reduced states at time $t$ must coincide with their corresponding initial states.\footnote{Rewriting \eqref{entprod} as $\Sigma_{\ibar}=S[\rho_{\ibar}(t)\|\rho_{j}(0)\otimes\rho_{S|R_i}(t)]$, and similarly for $\Sigma_{\jbar}$, the well-known property that $S[\rho||\sigma]=0$ if and only if $\rho=\sigma$ tells us that $\rho_{\ibar}(t)=\rho_{j}(0)\otimes\rho_{S|R_i}(t)$ and $\rho_{\jbar}(t)=\rho_{i}(0)\otimes\rho_{S|R_j}(t)$ give necessary and sufficient conditions for the entropy production to be zero in both perspectives also with global mixed states.~Whether the latter condition necessarily implies that $\rho_{\ibar}(t)$ resides in a TPS-invariant subalgebra or can be decomposed into pure states that do (so being equivalent to~\eqref{cond1:zerosigmaphi} or~\eqref{MS2:zerosigmaphi}), is currently not clear to us.~It does imply this, however, in certain cases, e.g.\ when $\rho_{\ibar}(t)$ has non-degenerate spectrum.}~Clearly, these are rather restrictive conditions for the states and their dynamics in view of the discussion in Secs.~\ref{Sec:states&entanglement} and \ref{Sec:Dyn}, showing that the structure of correlations and interactions can be significantly affected by the QRF transformation.~Thus, for a generic Hamiltonian and initial global state that is product in both perspectives, even though there is no entropy produced within and flowing out of the subsystem $S$ in one perspective, there will be nonzero entropy production and/or entropy flow in the other perspective.~This is illustrated in the following example. 
\begin{example}{\bf(Zero-to-nonzero~entropy~production)}\label{ex1:sigmaphi}
Consider three qubits $R_1R_2S$, the Hamiltonian $H_{\bar1}=\sigma_2^z\otimes\mathds1_S+\mathds1_2\otimes\sigma_S^z$ of $R_2S$ relative to $R_1$ (in orientation $g_1=0$), and the initial state
\be\label{eq:ex1rho1barin}
    \rho_{\bar1}(0)=\frac{e^{-\beta\sigma_2^z}}{Z(\beta)}\otimes\ket{x_+}\!\bra{x_+}_S\;,
\ee
where $Z(\beta)=2\cosh(\beta)$, and $\sigma^x\ket{x_\pm}=\pm\ket{x_\pm}$.~Note that $\rho_{\bar1}(0)=\Pid\Pit(\rho_{\bar1}(0))$ so that, by Lemma~\ref{lemma:pinpd-uin}, $\rho_{\bar1}(0)\in\mcA_{\bar1}^{\mathds1}$ (for $g_2=0$ also).~Thus, the initial state $\rho_{\bar2}(0)$ in $R_2$-perspective is also a product state as in \eqref{eq:ex1rho1barin} with the frame labels $1$ and $2$ exchanged, and the definitions \eqref{entprod},~\eqref{entflow} of $\Sigma$,~$\Phi$ can be applied in both perspectives.

The state $\rho_{\bar1}(t)$ at time $t$ is given by
\be\label{eq:ex1rho1bart}
\rho_{\bar1}(t)=\frac{e^{-\beta\sigma_2^z}}{Z(\beta)}\otimes\ket{\varphi(t)}\!\bra{\varphi(t)}_{S|R_1}\,,
\ee
with $\ket{\varphi(t)}=\cos(t)\ket{x_+}-i\sin(t)\ket{x_-}$ and hence is of the form $\rho_{\bar1}(t)=\rho_2(0)\otimes\rho_{S|R_1}(t)$.~Thus, at any $t$, there is no entropy produced within and flowing out of the subsystem $S$, $\Sigma_{\bar1}=0=\Phi_{\bar1}$, as well as no entropy change $\Delta S_2=0=\Delta S_{S|R_1}$ between $t=0$ and any $t>0$.

In $R_2$-perspective, the Hamiltonian becomes interacting, $H_{\bar2}=\hat{V}_{1\to2}^{0,0}(H_{\bar1})=\sigma_1^z\otimes\mathds1_S+\sigma_1^z\otimes\sigma_S^z$, and the state at time $t$ is given by
\begin{align}
\rho_{\bar2}(t)=&\frac{e^{-\beta}}{Z(\beta)}\ket{0}\!\bra{0}_1\otimes\ket{\varphi(t)}\!\bra{\varphi(t)}_{S|R_2}\nonumber\\
&+\frac{e^{+\beta}}{Z(\beta)}\ket{1}\!\bra{1}_1\otimes\sigma_S^x\ket{\varphi(t)}\!\bra{\varphi(t)}_{S|R_2}\sigma_S^x\,.\label{eq:ex1rho2bart}
\end{align}
Thus, $\rho_{1}(t)=\Tr_S(\rho_{\bar2}(t))=\frac{e^{-\beta\sigma_1^z}}{Z(\beta)}=\rho_{1}(0)$, so that $\Phi_{\bar2}=0$ and $\Delta S_1=0$, similarly to $R_1$-perspective.~However, from the entropy balance equation \eqref{entbal2}, we have $\Sigma_{\bar2}=\Delta S_{S|R_2}=S_{\rm vN}[\rho_{S|R_2}(t)]$, where we used the fact that $\rho_{S|R_2}(0)=\ket{x_+}\!\bra{x_+}_S$ is pure.~But, $\Tr_S\left[\rho_{S|R_2}(t)^2\right]=\frac{1}{2}\left[1+\cos^2(2t)+\tanh^2(\beta)\sin^2(2t)\right]$, which is equal to 1 only for $t=\frac{n\pi}{2}$, $n\in\mathbb N\cup\{0\}$, i.e.~the state $\rho_{S|R_2}(t)$ is pure at $t=\frac{n\pi}{2}$ and mixed otherwise.~Thus, $\Sigma_{\bar2}$ and $\Delta S_{S|R_2}$ vanish for $t=\frac{n\pi}{2}$, and are nonzero otherwise.

In line with Lemma~\ref{lemma:zerosigmaphi}, the global state \eqref{eq:ex1rho2bart} is separable, but not a product state for any $t\neq\frac{n\pi}{2}$.~Thus, at those $t$, the product state $\rho_{\bar1}(t)$ in \eqref{eq:ex1rho1bart} neither belongs to any TPS-invariant subalgebra $\mcA_{\bar1}^X$, nor does it admit a decomposition as in \eqref{MS2:zerosigmaphi}.~It admits a decomposition into TPS-invariant pure states, as can be seen by using the spectral decomposition of $\rho_2$ in \eqref{eq:ex1rho1bart}, but this is not of the form \eqref{MS2:zerosigmaphi}.~Instead, we have $\rho_{\bar1}(t)=\sum_{A=0,1}p_A\ket{A}\!\bra{A}_2\otimes\ket{\varphi(t)}\!\bra{\varphi(t)}_S$ with $\ket{A}\!\bra{A}_2\otimes\ket{\varphi(t)}\!\bra{\varphi(t)}_S\in\mcA_{\bar1}^{X^A}$, $X^A=\mathds1_2\otimes Z_S^A$, where $Z_S^0=\mathds1_S$ and $Z_S^1=\sigma_S^x$.~It is only at $t=\frac{n\pi}{2}$ that $\rho_{\bar1}(t)=\Pid\Pit(\rho_{\bar1}(t))\in\mcA_{\bar1}^{\mathds1}$ (cfr.~\eqref{eq:ex1rho1bart}), also $\ket{A}\!\bra{A}_2\otimes\ket{\varphi(\frac{n\pi}{2})}\!\bra{\varphi(\frac{n\pi}{2})}_S\in\mcA_{\bar1}^{\mathds1}$, $A=0,1$.~Thus, both conditions \eqref{cond1:zerosigmaphi} and \eqref{MS2:zerosigmaphi} are satisfied for $t=\frac{n\pi}{2}$ and at these times there is no entropy generation for the subsystem $S$ in both perspectives.~This demonstrates that the decomposability of a state into TPS-invariant pure states is not in general sufficient to guarantee coincidence of the entropy production (and flow) in the two perspectives.
\end{example}

Let us briefly comment on the last observation, namely that a decomposition into pure TPS-invariant states is not sufficient for the two QRFs to agree on entropy production and flow.~While we saw in Sec.~\ref{Sec:states&entanglement} that, for the entropies associated with subsystem $S$ to be the same in the two perspectives, it is sufficient that the global state is decomposable into TPS-invariant pure states, this was only true when these pure states all agreed on the $S$-unitary $Z_S$ in $X^A=Y_j^A\otimes Z_S$.~The TPS-invariant decomposition of $\rho_{\ibar}(t)$ in the example above already violates this condition.~More generally, entropy production \eqref{entprod} and entropy flow \eqref{entflow} involve the entropies of \emph{both} $S$ and ``the other frame'', so that decompositions into TPS-invariant pure states will have to get restricted even further in order to admit a coinciding entropy balance in both perspectives.

More generally, for a given state in one perspective, we can now ask when the entire entropy balance (subsystem entropy variation, as well as entropy production and flow), which need not vanish, is the same in the perspective of the two frames.~The following simple observation, proven in App.~\ref{App:TDrates}, again emphasises the relevance of TPS-invariance of initial and final states, similar to Lemma~\ref{lemma:zerosigmaphi}.

\begin{lemma}[\textbf{QRF-invariance of entropy balance}]\label{lemma:inventbal}
Suppose the global state is TPS-invariant at $t_0$ and $t_1>t_0$, i.e.\ $\rho_{\ibar}(t_a)\in\mcA_{\ibar}^{X^a}$, $a=0,1$, for some bilocal unitaries $X^a=Y_j^a\otimes Z_S^a$.~Then, between these times,
\begin{itemize}
    \item the change in the entropy of ``the other frame'' and $S$ subsystems are the same in both perspectives, i.e.
\be\label{eq:invDeltaS}
\Delta S_{i}=\Delta S_{j}\quad,\quad\Delta S_{S|R_j}=\Delta S_{S|R_i}\,.
\ee
When $\rho_{\ibar}(t_0)$ is pure, we have an additional equality, $\Delta S_j=\Delta S_{S|R_i}$, and so all entropy changes are equal.

\item the entropy production in the subsystem $S$ and the entropy flow out of $S$ are equal in both perspectives,~i.e.
\be\label{eq:invSigmaPhi}
\Sigma_{\jbar}=\Sigma_{\ibar}\quad,\quad \Phi_{\jbar}=\Phi_{\ibar}
\ee
if and only if the initial state is of product form and
\be\label{cond01:invSigmaPhi}
S[\hat{Y}_j^{01}(\rho_j(t_1))\|\rho_j(t_0)]=S[\rho_j(t_1)\|\rho_j(t_0)]\,,
\ee
with $Y_j^{01}=Y_j^0Y_j^{1\dagger}$.
\end{itemize}
\end{lemma}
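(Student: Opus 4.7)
The plan is to invoke Lemma~\ref{Lemma:tpseq} to reduce the QRF transformation restricted to states in $\mcA_{\ibar}^{X^a}$ to a frame swap composed with bilocal unitaries, and then exploit the standard unitary invariance of von Neumann entropy, mutual information and relative entropy.

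For the first part, the assumption $\rho_{\ibar}(t_a)\in\mcA_{\ibar}^{X^a}$ with $X^a=Y_j^a\otimes Z_S^a$ together with Lemma~\ref{Lemma:tpseq} gives $\rho_{\jbar}(t_a)=\hat{\mcI}_{i\to j}\hat X^{a\dagger}(\rho_{\ibar}(t_a))$. Taking the appropriate partial traces, the subsystem states in the two perspectives are unitarily related, $\rho_{S|R_j}(t_a)=\hat Z_S^{a\dagger}(\rho_{S|R_i}(t_a))$, while $\rho_i(t_a)$ and $\rho_j(t_a)$ are likewise unitarily related by the unitary on $\mathcal{H}_i$ that corresponds to $Y_j^a$ under the identification~\eqref{eq:Ijtoi} of $\mathcal{H}_j$ with $\mathcal{H}_i$. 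Unitary invariance of the von Neumann entropy immediately yields $\Delta S_{S|R_j}=\Delta S_{S|R_i}$ and $\Delta S_i=\Delta S_j$. When $\rho_{\ibar}(t_0)$ is pure, unitarity of the global dynamics keeps $\rho_{\ibar}(t_1)$ pure, and the Schmidt decomposition forces $S_{\rm vN}[\rho_j(t_a)]=S_{\rm vN}[\rho_{S|R_i}(t_a)]$, giving the additional equality $\Delta S_j=\Delta S_{S|R_i}$.

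For the second part, I first observe that the definition~\eqref{entprod} of $\Sigma$ requires an initial product global state. Assuming this for $\rho_{\ibar}(t_0)$, its TPS-invariance together with $\rho_{\jbar}(t_0)=\hat{\mcI}_{i\to j}\hat X^{0\dagger}(\rho_{\ibar}(t_0))$ ensures that $\rho_{\jbar}(t_0)$ is also of product form, so~\eqref{entprod} and~\eqref{entflow} apply simultaneously in both perspectives. Since the QRF transformation restricted to $\mcA_{\ibar}^{X^1}$ is a frame swap (respecting the bipartite structure) composed with a bilocal unitary across the $R_j$–$S$ bipartition, mutual information is preserved, $I[\rho_{\jbar}(t_1)]=I[\rho_{\ibar}(t_1)]$. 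For the relative entropy term, combining the unitary relations between $\rho_i(t_a)$ and $\rho_j(t_a)$ with the invariance property $S[\hat U(\rho)\|\hat U(\sigma)]=S[\rho\|\sigma]$ rewrites $S[\rho_i(t_1)\|\rho_i(t_0)]=S[\hat{Y}_j^{01}(\rho_j(t_1))\|\rho_j(t_0)]$ with $Y_j^{01}=Y_j^0 Y_j^{1\dagger}$. Hence $\Sigma_{\jbar}=\Sigma_{\ibar}$ if and only if the stated relative entropy equality~\eqref{cond01:invSigmaPhi} holds. The corresponding claim $\Phi_{\jbar}=\Phi_{\ibar}$ then follows automatically from the entropy balance identity~\eqref{entbal2} combined with the equality $\Delta S_{S|R_j}=\Delta S_{S|R_i}$ established in the first part, so both equivalences hold under exactly the same condition.

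The main technical subtlety I expect is carefully unpacking the explicit form~\eqref{eq:Ijtoi} of the frame swap to verify that the composition of the two $\mathcal{H}_i$-unitaries corresponding to $Y_j^0$ and $Y_j^{1\dagger}$ translates back precisely to $Y_j^0 Y_j^{1\dagger}$ on $\mathcal{H}_j$, including the correct ordering and adjoint conventions. This is essentially a bookkeeping exercise, after which all remaining steps reduce to routine manipulations with standard invariance properties of entropic quantities.
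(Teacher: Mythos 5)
Your proposal is correct and follows essentially the same route as the paper's proof in the appendix: both rest on Lemma~\ref{Lemma:tpseq} to get the unitary relations $\rho_{S|R_j}(t_a)=\hat Z_S^{a\dagger}(\rho_{S|R_i}(t_a))$ and $\rho_i(t_a)=\hat{\mathcal{I}}'_{i\to j}\hat Y_j^{a\dagger}(\rho_j(t_a))$, then invoke unitary invariance of von Neumann entropy and mutual information, and finally reduce the relative-entropy term to condition~\eqref{cond01:invSigmaPhi} via invariance of $S[\rho\|\sigma]$ under joint unitary conjugation, with the composite $Y_j^{01}=Y_j^0Y_j^{1\dagger}$ arising exactly as you describe. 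The bookkeeping subtlety you flag is handled identically in the paper, and your derivation of the $\Phi$ equality from the entropy balance identity matches the paper's argument.
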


Condition \eqref{cond01:invSigmaPhi} tells us that, for the entropy produced within and flowing out of the subsystem $S$ to be the same relative to the two frames, ``the other frame'' state $\rho_j(t_1)$ at the later time $t_1$ is as distinguishable from its initial state $\rho_j(t_0)$ as the unitarily related state $\hat{Y}_j^{01}(\rho_j(t_1))$.~This is the case when e.g.~$Y_j^{01}$ belongs to the stabiliser group of $\rho_j(t_1)$.~A special case of this is when $\rho_j(t_1)$ and $\rho_j(t_0)$ share the same subalgebra $\mcA_{\ibar}^X$ and so $Y_j^{01}=\mathds1_j$.~These are non-trivial constraints and thus, even for initial and final global states that are TPS-invariant, entropy production and flow will not in general coincide in the two perspectives.~In particular, this encompasses the situation in which the system is in a non-equilibrium steady-state (NESS) in $R_i$-perspective, but not in $R_j$-perspective.~This is the case, for instance, when \eqref{eq:invDeltaS} holds with zero entropy variation, but $\Sigma_{\ibar}=\Phi_{\ibar}\neq0$, while $\Sigma_{\jbar}=\Phi_{\jbar}=0$.~In this case, \eqref{cond01:invSigmaPhi} is not satisfied.

The conditions of Lemma~\ref{lemma:inventbal} are satisfied, e.g.\ by an initial product state whose trajectory lies for the entire interval $[t_0,t_1]$ in a fixed TPS-invariant subalgebra, a case encompassed by Lemma~\ref{lemma:finitetimeinvsoln}.~It is also clear that the pure state case of Lemma~\ref{lemma:zerosigmaphi} satisfies the conditions of Lemma~\ref{lemma:inventbal}.~In particular, $\rho_j(t_1)=\rho_j(t_0)$ (cfr.~\eqref{cond1:zerosigmaphi}) implies that \eqref{cond01:invSigmaPhi} is automatically satisfied.~We emphasise that, for the equality \eqref{eq:invSigmaPhi} to hold for nonzero values, the state at later times need not be a product state.

Example~\ref{ex2:entbalance} in App.~\ref{App:TDrates} illustrates Lemma~\ref{lemma:inventbal} with a pure state trajectory that oscillates between two distinct TPS-invariant subalgebras and is otherwise not TPS-invariant.~At the discrete times when the trajectory is TPS-invariant,~\eqref{cond01:invSigmaPhi} is obeyed and the entropy balance in the two QRF perspectives coincides (including the case when both entropy production and flow are non-vanishing), while it disagrees for the intervals in-between.

This concludes our initial exploration of the QRF relativity of quantum thermodynamics.~Our results establish conditions for thermodynamic processes to appear the same in distinct QRF perspectives and moreover demonstrate that generally their description depends on the choice of QRF.~This is rooted in subsystem relativity and the ensuing QRF relativity of correlations and interactions.~We leave open what the precise necessary and sufficient conditions are for different QRFs to agree or disagree on the description of quantum thermodynamic processes.~In view of the fact that this necessarily involves non-equilibrium processes and generic mixed states, identifying these is a challenging task. 

\section{Discussion and outlook}\label{conc}

In this work, we expanded on the quantum frame relativity of subsystems\,---\,the fact that distinct internal QRFs partition the total system in distinct ways into subsystems\,---\,and initiated a systematic analysis of its physical consequences.~While this subsystem relativity was first noticed for (both ideal and non-ideal) QRFs associated with abelian groups in \cite{PHquantrel}, before being generalised to ideal QRFs for arbitrary unimodular Lie groups in \cite{delaHamette:2021oex,Castro-Ruiz:2021vnq}, here we refined and simplified its explication both technically and conceptually.~We unraveled novel algebraic structures to characterise when operators such as states and observables remain invariant under QRF changes, including when they are invariant up to local subsystem unitaries.~We then utilised such tools to launch the first investigation into the QRF covariance of subsystem states, entropies, dynamics and thermodynamics, and to significantly deepen previous studies on the QRF relativity of correlations.

\subsection{Summary}

\noindent
Our main results can be summarised as follows:
\begin{itemize}
    \item We began by revealing the intimate relation between QRF and special covariance.~We showed that the capability to mimic the covariance structures underlying special relativity\,---\,and thereby to extend them into the quantum theory\,---\,singles out the perspective-neutral framework \cite{delaHamette:2021oex,PHquantrel,PHtrinity,Hoehn:2020epv,Vanrietvelde:2018pgb,Vanrietvelde:2018dit,Hohn:2018toe,Hohn:2018iwn,PHinternalQRF,PHQRFassymmetries,delaHamette:2021piz,Giacomini:2021gei,CastroRuizQuantumclockstemporallocalisability,Suleymanov:2023wio,yang2020switching} used in this work among the approaches to QRF covariance.~While here we illustrated this for ideal QRFs for finite abelian groups, the argument extends beyond this setting.~In particular, it was already shown in \cite{delaHamette:2021oex} that the perspective-neutral framework mimics the covariance structures of special relativity also for non-ideal frames and arbitrary unimodular Lie groups. 

    We emphasised that this observation does not invalidate other approaches as QRF covariance can have different meanings in different contexts, especially depending on whether access to an external frame can be physically realised.~Furthermore, the purely perspective-dependent approach to QRF covariance \cite{GiacominiQMcovariance:2019,Giacomini:2018gxh,delaHammetteQRFsforgeneralsymmetrygroups,delaHamette:2021iwx,Ballesteros:2020lgl,streiter2021relativistic,Mikusch:2021kro} is equivalent to the perspective-neutral one for ideal QRFs \cite{Vanrietvelde:2018pgb,PHQRFassymmetries,PHinternalQRF,PHtrinity,delaHamette:2021oex}, not, however, beyond this case \cite{delaHamette:2021oex}.~Similarly, for ideal frames for abelian groups, the effective  \cite{Bojowald:2010qw,Bojowald:2010xp,Hohn:2011us} and algebraic approaches \cite{Bojowald:2019mas,Bojowald:2022caa} are also equivalent to the perspective-neutral one \cite{JulianArtur} (the former in the semiclassical regime),\footnote{In fact, the algebraic approach is somewhat more general in the sense that the current formulation of the perspective-neutral framework is one Hilbert space realisation of it \cite{JulianArtur}; predictions are equivalent.~However, the algebraic approach is not yet developed for non-ideal QRFs and non-abelian groups.} with the relation beyond this setting currently unknown.~Thanks to our restriction to ideal QRFs for abelian groups, the results of this article thus also apply to these approaches.

    \item We exhibited the relation between subsystem relativity and the relativity of simultaneity in special relativity.~We demonstrated that subsystem relativity is nothing characteristic of QRFs and also arises in special relativity with internal frames, where it \emph{implies} the relativity of simultaneity.~Subsystem relativity is thus a generalisation of the latter and, from the point of view of frame covariance, its consequences should thus be seen in similar light.~Just like the relativity of simultaneity is the root of essentially all characteristic special relativistic phenomena, \emph{so is the relativity of subsystems the essence of QRF covariance}.
    
    \item We stressed that there are two equivalent ways of ``jumping into an internal frame perspective'' both for special relativity with tetrads and QRFs:\\ (i) gauge fixing to align the (possibly fictitious) external frame with the internal one;\\ (ii) building gauge-invariant relational observables to describe the degrees of interest relative to the internal frame.~While one can thus associate a choice of gauge with an internal frame, the gauge-invariant option (ii) highlights that the choice of internal frame itself it not a gauge.~The choice of frame isolates which degrees of freedom become fixed, but does not determine to what configuration they become fixed.~In particular, the subsystem relativity is \emph{not} a gauge artifact and for QRFs can be seen in the two equivalent ways:

    (i) For ideal frames, we showed that implementing an internal QRF perspective via conditioning on a frame orientation (gauge fixing) is formally nothing but a TPS on the (external-frame-independent) perspective-neutral Hilbert space $\mathcal{H}_{\rm phys}$, which by itself does not come with a TPS and thus is TPS-neutral too.~Similarly, (gauge induced) QRF transformations are nothing but changes of TPS on this space and the gauge-invariant algebra $\Aphys$.
    
    (ii) Relational observables describing the subsystem $S$ and ``the other frame'', respectively, define distinct commuting subalgebras of the gauge-invariant algebra $\Aphys$ for different choices of internal QRF; the two relational algebras describing $S$ relative to $R_1$ and $R_2$ are isomorphic and overlap, yet do not coincide. While this was already demonstrated in \cite{PHquantrel,delaHamette:2021oex}, here we refined this statement by characterising the overlap as consisting of all \emph{internal} relational observables of $S$ (cfr.~Fig.~\ref{Fig:relSR}).~These are the only $S$-observables which remain invariant under QRF transformation.~Conversely, this characterises QRF-contingent $S$-observables as those not residing in the overlap. A similar observation holds for the ``other frame'' observables.
     
    \item Noting that QRF perspectives are TPSs on $\Hphys$, we introduced the notion of TPS-invariant subalgebras.~These are subalgebras consisting of operators that are up to local unitaries invariant under QRF transformations, thus maintaining their degree of locality.~We fully characterised when subsystem local operators reside in such algebras and partially did so for composite operators.~This characterisation is crucial as operators not belonging to such subalgebras encode the QRF-dependent properties.
   
    \item Applying these algebraic tools to density operators allowed us to investigate how reduced subsystem states transform under QRF changes.~While global states transform unitarily, this is generically not the case for subsystem states due to subsystem relativity.~However, we showed that for pure global states, subsystem states are up to local unitaries QRF invariant if and only if they reside in a TPS-invariant subalgebra.~This thus also characterises when the entanglement spectrum and $\alpha$-R\'enyi entropies are invariant.~For global mixed states the situation is more subtle and we provided sufficient conditions.~Using these tools, we explored how correlations within a subsystem and with its complement can change under QRF transformations.
    
    \item Coming then to dynamics, we explored how Hamiltonians transform and determined necessary and sufficient conditions for a free Hamiltonian to remain free or become interacting under QRF changes.~This in turn entails a QRF dependence of the notion of closed or open dynamical subsystems.~We also characterised when an individual dynamical trajectory appears up to local unitaries the same in the two QRF perspectives, thus featuring the same correlation structures in both.~Conversely, this also characterises when a trajectory is QRF dependent.
    
    \item Finally, we focused on the thermodynamic description of subsystems relative to distinct QRFs.~First, we studied the frame-dependence of local observable averages (hence subsystem macrostates), subsystem thermal equilibrium and stationarity.~In particular, we demonstrated that, while subsystem Gibbs states and thus the notion of thermal equilibrium are generically QRF relative, QRF-invariant Gibbs states do exist.~Further, we showed that QRF transformations can map from one thermal equilibrium to another with distinct entropy.~Specifically, QRF transformations can not only change the temperature of the subsystem, but even flip its sign.~We then took the first steps to also investigate the QRF covariance of non-equilibrium thermodynamic processes, exhibiting that different QRFs generically refer to different types of thermodynamic subsystems in that heat, work, entropy production and flow depend on the choice of QRF.~To this end, we utilised the framework of quantum thermodynamics, which is the most natural one in a QRF context, where a single particle can serve as internal QRF and equilibrium and interactions are frame-dependent.
\end{itemize}

Since we were interested in quantum information-theoretic quantities, such as entropies, and quantum thermodynamics, which are best developed for finite-dimensional quantum systems, we restricted our exposition to QRFs associated with finite abelian groups as they naturally lead to such a setting.~To put the spotlight on physical properties, we further restricted to ideal QRFs.~However, with a bit more machinery, non-ideal QRFs and arbitrary unimodular Lie groups can be treated very similarly \cite{delaHamette:2021oex}, so that we expect our results to extend qualitatively to those cases as well.

Notwithstanding, for non-ideal QRFs, the reduced Hilbert spaces of the internal QRF perspectives will no longer coincide with the kinematical Hilbert space tensor factor of the complement of the frame as in~\eqref{eq:iReduction}, but will be a subspace of it \cite{delaHamette:2021oex,PHtrinity,Hoehn:2020epv,PHquantrel,reldynperclock}.~If the complement is kinematically also a composite system, then the perspectival Hilbert space is a subspace of a tensor product of Hilbert spaces, which is known as a \emph{fusion} or \emph{entangling} product.~This can usually be written as a direct sum of tensor products.~In other words, a non-ideal QRF perspective would not define a TPS on $\Hphys$ and $\Aphys$, but rather a direct sum of TPSs.~For example, relative to $R_1$ it would be of the form
\begin{align}
\Hphys&=\bigoplus_c\mathcal{H}^c_{R_2|R_1}\otimes\mathcal{H}^c_{S|R_1}\,,\nonumber\\\Aphys&=\bigoplus_c\mcA_{R_2|R_1}^c\otimes\mcA_{S|R_1}^c\,,\label{directsum}
\end{align}
where $c$ labels some charge sectors.~On the former this would happen via gauge fixing on a frame orientation, on the latter via the gauge-invariant subalgebras of relational observables describing, respectively, ``the other frame'' and $S$ relative to the chosen QRF, which fail to commute for non-ideal frames \cite{PHquantrel}.

\subsection{Relational subsystems and entropies vs.\ subsystems and entropies in gauge theories}

We take this opportunity to stress that the relational definition of subsystem and its entanglement entropies invoked in this work is distinct from the ones usually employed in gauge theories\footnote{To avoid misunderstanding already at this point: this is as far as computing entanglement entropies in gauge theories is concerned, which essentially takes only the \emph{internal} gauge-invariant data of a region into account.~Defining the gauge-invariant content of subregions relative to edge mode frames (and thus looking at a larger set of gauge-invariant data) \cite{Donnelly:2016auv,Geiller:2019bti,SCPHedgemodesasreferenceframes,Carrozza:2022xut}, however, is equivalent to what we do here.~But this larger data is usually not invoked in entropy calculations.} and constitutes an alternative proposal for that setting. 

In gauge theories, where locality is naturally defined in terms of the spacetime background, one typically associates a subsystem with some subregion $V$ and wishes to compute its entanglement with the complement $\bar{V}$. However, the gauge-invariant degrees of freedom in $V$ and in $\bar{V}$ do \emph{not} form a tensor product because, thanks to the gauge constraints, the gauge-invariant algebras associated with both overlap in a non-trivial center 
\be\label{eq:center}
\mcA_V\cap\mcA_{\bar{V}} = \mathcal{Z}\neq\mathbb{C}\mathbf{1}\,,
\ee 
which commutes with all elements in the regional algebras (e.g., see \cite{Casini:2013rba,Delcamp:2016eya}); here $\mcA_{\bar{V}}=(\mcA_V)'$ is the commutant of $\mcA_V$. This means that $\mcA_V$ and $\mcA_{\bar{V}}$ are \emph{not} factors of the full algebra $\Aphys$ of gauge-invariant observables (cfr.~(i) in Definition~\ref{def:TPSreleq}). This also holds on the lattice and is thus distinct from the usual arguments that subregions in QFT do not factorise into tensor products due to the type III nature of their observable algebras.

To circumnavigate this issue, one then usually builds extended regional Hilbert spaces $\mathcal{H}_{V'}$ and $\mathcal{H}_{\bar{V}'}$, where the prime denotes that these spaces not only contain the regional gauge-invariant degrees of freedom, but also additional boundary degrees of freedom \cite{Buividovich:2008gq,Donnelly:2011hn,Donnelly:2016auv,Delcamp:2016eya,Aoki:2015bsa,Ghosh:2015iwa,Donnelly:2014fua,Donnelly:2014gva,VanAcoleyen:2015ccp}, often called edge modes.~The physical Hilbert space is  a subspace of the tensor product $\mcH_{V'}\otimes\mcH_{\bar{V}'}$; more precisely, it is the entangling product of these spaces with respect to a surface symmetry group \cite{Donnelly:2016auv,Geiller:2019bti}.~In order to compute the entanglement entropy of $V$, one traces over $\bar{V}'$ in $\mcH_{V'}\otimes\mcH_{\bar{V}'}$, resulting in a superselection across different representations of the surface symmetry on $\mcH_{V'}$, and applies the usual von Neumann formula to each such superselection sector, and finally averages over the result \cite{Buividovich:2008gq,Donnelly:2011hn,Delcamp:2016eya,Aoki:2015bsa,Ghosh:2015iwa,Donnelly:2014fua,Donnelly:2014gva,VanAcoleyen:2015ccp}.~In \cite{Casini:2013rba}, it was shown that this procedure is equivalent to a (non-unique) algebraic center construction, building up on~\eqref{eq:center}.~In particular, it amounts to taking the algebra generated by the regional ones and decomposing it into charge sectors labeled by irreps $z$ of the center,
\be 
\mcA_V\vee\mcA_{\bar{V}}=\bigoplus_z\mcA^z_V\otimes\mcA^z_{\bar{V}}\subsetneq\Aphys\,.
\ee 
This, in particular, does not generate the full invariant algebra as it lacks the cross-boundary gauge-invariant data (e.g.\ Wilson loops).~One then decomposes the global physical state into these charge sectors, ${\rho_{V\bar{V}}=\oplus_z p_z\,\rho^z_{V\bar{V}}}$, and within each one traces over $\bar{V}$ to obtain ${\rho_V=\oplus_zp_z\,\rho_V^z}$.~Applying the usual von Neumann entropy formula to this state yields
\be \label{GTentropy}
S_{\rm vN}(V)=H(\{p_z\})+\sum_zp_z\,S_{\rm vN}(\rho_V^z)\,,
\ee 
where $H(\{p_z\})=-\sum_z p_z\ln p_z$ is the classical Shannon entropy associated with the probability distribution over the charge sectors.~As emphasised in \cite{Casini:2013rba,VanAcoleyen:2015ccp}, this entropy does not have the usual interpretation of entanglement entropy in quantum information theory because $H(\{p_z\})$ constitutes a non-distillable constribution.

Let us now compare this to our setting, which can be viewed as a toy theory for lattice gauge theory (with global, rather than local gauge action given by $U_{12S}^g$).~For contradistinction, let us first mimic the above in our setting.~If one identifies $V$ with $S$ in our case, then its complement is $\bar{V}=R_1R_2$.~The gauge-invariant ``regional'' subalgebra for $V=S$ is then $\mcA_V =  \mcA^{\rm phys}_{S|R_1}\cap\mcA_{S|R_2}^{\rm phys}$, i.e.\ the purely internal relational observables of $S$ (cfr.~Theorem~\ref{lem_TPSineqalg} and Fig.~\ref{Fig:relSR}).~By contrast, the ``regional'' subalgebra associated with $\bar{V}=R_1R_2$ is  $\mcA_{\bar{V}}=\mcA_{R_2|R_1}^{\rm phys}\vee\mcA^{\rm phys}_{R_1|R_2}$, i.e.\ all the relational observables between the two QRFs.\footnote{This can be seen from the expression~\eqref{RelDO} for relational observables relative to $R_1$ (and the same expression with $R_1$ and $R_2$ exchanged) and noting that gauge-invariant observables in $\mcA_{\bar{V}}$ should have no non-trivial dependence on the $S$-tensor factor.~That is, they should be of the form $O_{12}\otimes\mathds1_S$.}~The algebra generated by both, $\mcA_{V}\vee\mcA_{\bar{V}}$, is a strict subalgebra of $\Aphys$,\footnote{This can be best seen by reducing $\mcA_V$ and $\mcA_{\bar{V}}$ into one of the two QRF perspectives, say $R_1$. $\mcA_{\bar{V}}$ does not yield any purely $S$-local observables relative to $R_1$, while $\mcA_V$ yields only translation-invariant $S$-local observables relative to $R_1$.~That is, all the non-translation-invariant $S$-observables relative to $R_1$ cannot be generated; they correspond to $\mcA^{\rm phys}_{S|R_1}\setminus(\mcA^{\rm phys}_{S|R_1}\cap\mcA^{\rm phys}_{S|R_2})$.} and we have a non-trivial center
\be
\mcA_V\cap\mcA_{\bar{V}} = \{\mathbf{1},\Piphys^{12}\otimes\Piphys^S\}\,,
\ee
where $\Piphys^{12}=\frac{1}{|\mcG|}\sum_{g\in\mcG}U_{12}^g$ and $\Piphys^S=\frac{1}{|\mcG|}\sum_{g\in\mcG}U_S^g$. This can be seen from the fact that the ``regional'' observables commute with the latter two projectors and
\be
(\Piphys^{12}\otimes\mathds1_S)\Piphys=(\mathds1_{12}\otimes\Piphys^S)\Piphys=\Piphys^{12}\otimes\Piphys^S\,.\nonumber
\ee 
$\mcA_{V}\vee\mcA_{\bar{V}}$ can then be decomposed into a direct sum over irreps of this center.~The ``entanglement entropy'' of $S$ that one would then compute according to~\eqref{GTentropy} contains the non-distillable contribution and is  associated with the internal relational observables of $S$, $\mcA^{\rm phys}_{S|R_1}\cap\mcA_{S|R_2}^{\rm phys}$.~In particular, it is computed with states that only have support in a strict \emph{sub}algebra $\mcA_{V}\vee\mcA_{\bar{V}}$ of $\Aphys$.

Clearly, this is very different from what we do in this article (and more complicated).~Instead of defining subsystems and entropies via kinematical complements $S$ and $R_1R_2$, we defined them via complements \emph{relative to a QRF}, e.g.\ $S$ and $R_2$ relative to $R_1$.~As such, rather than defining the gauge-invariant content of $S$ to be only its internal relational observables, $\mcA^{\rm phys}_{S|R_1}\cap\mcA_{S|R_2}^{\rm phys}$, we defined it to be \emph{all} of $\mcA^{\rm phys}_{S|R_1}$.~Similarly, the gauge-invariant content of the complement of $S$ in $R_1$-perspective is given by $\mcA^{\rm phys}_{R_2|R_1}$.~The crucial point is that these together span \emph{all} of $\Aphys$ and, for ideal QRFs, we have a genuine tensor product $\Aphys\simeq\mcA^{\rm phys}_{R_2|R_1}\otimes\mcA^{\rm phys}_{S|R_1}$.~This TPS permitted us to compute the entanglement entropy of $S$ in the standard way without any non-distillable contributions and therefore with the usual interpretation as in quantum information theory.~The key difference to the gauge theory setting is thus that we choose a larger gauge-invariant algebra associated with the system and that states can be decomposed in terms of the \emph{full} invariant algebra $\Aphys$.~Furthermore, since the subsystem and entanglement entropy are defined relative to a QRF, we also obtain QRF-dependent entropies, in contrast to the usual gauge theory setting.

This also means that the entangling product appearing in the extended Hilbert space construction \cite{Donnelly:2016auv,Geiller:2019bti} is distinct to the one leading to~\eqref{directsum}, which is non-trivial only for non-ideal QRFs.~In particular, the entangling product in $R_1$-perspective leading to~\eqref{directsum} is between the \emph{unextended} kinematical tensor factors $\mcH_{2}$ and $\mcH_S$.~For such a direct sum decomposition one can compute the entanglement entropy analogously to the procedure leading to~\eqref{GTentropy} \cite{Bianchi:2019stn,Bianchi:2021aui}.~Since the techniques are applied to different entangling products, this will also lead to distinct entropies for non-ideal frames.

Altogether, this suggests an alternative for defining subsystems and their entanglement entropies in gauge theories.~Rather than defining a subsystem via the purely regional gauge-invariant observables in analogy to $\mcA^{\rm phys}_{S|R_1}\cap\mcA^{\rm phys}_{S|R_2}$, we can gauge-invariantly define the subsystem via the relational observables describing the subregion relative to a QRF in analogy to $\mcA^{\rm phys}_{S|R_1}$.~Indeed, the above mentioned edge modes were shown to constitute dynamical or quantum reference frames of the type considered in this article \cite{SCPHedgemodesasreferenceframes,Carrozza:2022xut,Kabel:2023jve} (see also \cite{Gomes:2016mwl,Gomes:2019otw,Gomes:2021nwt,Gomes:2022kpm,Riello:2020zbk,Riello:2021lfl} for related observations).~In particular, there exists no unique edge mode frame \cite{SCPHedgemodesasreferenceframes,Carrozza:2022xut} and thus a frame-dependent notion of subsystem and entropy will also arise \cite{SubregionRel}.~This means a slight departure from the background notion of locality.~The local subsystem is now not defined purely in terms of a local subregion, but in terms of the relational observables describing that subregion relative to a frame.~This frame can be non-locally constructed from the field content \cite{SCPHedgemodesasreferenceframes,Carrozza:2022xut,Goeller:2022rsx} and so locality is defined relationally.

\newpage

\subsection{Outlook}

Besides realising a relational definition of subsystems and entropies for gauge theories and gravity, as just suggested, and generalising to non-ideal frames and general groups, there are a few other interesting lines of research.~These pertain to the physical consequences of subsystem relativity, of which we have studied only a fraction.~It is clear that there are many more physical properties of subsystems that will be QRF relative.~For example, we can expect that chaos as defined via out-of-time-ordered correlators, Markovianity of a dynamics, complexity of states and processes, fluctuation theorems, etc.\ 
will become contingent on the choice of internal QRF.~This is already clear for fluctuation theorems such as the Jarzynski equality \cite{Campisietal}, which relies on an initial Gibbs state.~As we have seen, Gibbs states and thermal equilibrium depend on the QRF and so a Jarzynski equality may hold in one QRF perspective, but not in another.~This is similar to our discussion of entropy production and flow, where the initial state needed to be of product form, which is not in general invariant under QRF transformations.~It would also be interesting to consider open thermodynamic systems and how matter exchange may depend on the choice of QRF. 

\begin{acknowledgments}

We would like to thank Florio Ciaglia, Stefan Eccles, Josh Kirklin, Leon Loveridge, Markus M\"uller, Germain Tobar, and Mischa Woods for discussion and comments.~PAH and FMM are grateful for the hospitality of the high-energy physics group at EPF Lausanne, where the final stages of this work were carried out.~This work was supported by funding from Okinawa Institute of Science and Technology Graduate University.~This project/publication was also made possible through the support of the ID\# 62312 grant from the John Templeton Foundation, as part of the \href{https://www.templeton.org/grant/the-quantum-information-structure-of-spacetime-qiss-second-phase}{\textit{`The Quantum Information Structure of Spacetime'} Project (QISS)}.~The opinions expressed in this project/publication are those of the author(s) and do not necessarily reflect the views of the John Templeton Foundation.
\end{acknowledgments}

\newpage

\bibliography{refqrftd} 

\clearpage

\onecolumngrid

\appendix

\section*{Appendices}
\renewcommand{\thesubsection}{\Alph{subsection}}

\subsection{Symmetry-induced QRF transformations and QRF-invariant relational $S$-observables}\label{App:relcondreorientations}

The frame change transformations $V_{1\to 2}^{g_1,g_2}$ discussed in Sec.~\ref{sec:qrfphystps} (cfr.~\eqref{eq:Vitoj}) allow us to change the reduced description of a \emph{given} relational observable from one frame to the other.~These are also referred to as \emph{gauge-induced} QRF transformations \cite{delaHamette:2021oex}.~In fact, as the reduction maps $\mathcal R_{i}^{g_i}$ into the perspective of frame $R_i$ involve a conditioning $\bra{g_i}_i$ on frame orientations, they are the quantum analogue of gauge-fixing conditions and the reduced perspectival descriptions correspond to different gauge-fixing conditions.~The transformation $V_{1\to 2}^{g_1,g_2}=\mathcal R_2^{g_2}\circ(\mathcal R_1^{g_1})^{-1}$ thus amounts to a change $\bra{g_1}_1\to\bra{g_2}_2$ of both the conditioning and the frame therein which can then be interpreted as a \emph{relation-conditional gauge transformation}, i.e., a gauge transformation that depends on the relation between the two frames.~These are \emph{passive} transformations which do not change neither the physical state of the system nor the observable we evaluate in the given physical state (i.e., the physical question that is being asked) but only their gauge-fixed internal description from one frame to another.

As shown in \cite{delaHamette:2021oex}, there exists a second type of QRF transformations, which are \emph{symmetry-induced} and \emph{active} in the sense that they \emph{change} relational observables from one frame to another. (They are the quantum versions of relational observable transformations that have appeared in classical gauge theories and gravity \cite{Carrozza:2022xut,Goeller:2022rsx}.) More precisely, they transform between the algebras of relational observables relative to frames $R_1$ and $R_2$, e.g.\ mapping $O^{g_1}_{f_{2S}|R_1}$ into $O^{g_2}_{f_{2S}|R_2}$, and in particular were used in \cite{delaHamette:2021oex} to explain the relativity of subsystems. Here, we shall use this transformation to provide an alternative characterization of the overlap of $\mathcal{A}_{S|R_1}^{\rm phys}\cap\mathcal{A}_{S|R_2}^{\rm phys}$, described in Theorem~\ref{lem_TPSineqalg}, as those relational observables describing $S$ that are invariant under this second type of QRF transformation.

To build up some intuition, it is instructive to consider a simple example of $N$ particles in one spatial dimension subject to a continuous translation invariance.~Let us take the frames $R_1$ and $R_2$ to be any two particles and let the system $S$ consists of the remaining $N-2$ particles.~Consider now the relational observables $O^{q_1=0}_{q_k|R_1}=q_k-q_1\in\mcA^{\text{phys}}_{S|R_1}$, $k\neq1,2$, corresponding to the positions of any of the particles in $S$ when $R_1$ is in the origin.~Clearly, translating $R_1$ by its relative distance $q_2-q_1$ from $R_2$ will transform $O^{q_1=0}_{q_k|R_1}$ into the corresponding observables $O^{q_2=0}_{q_k|R_2}=q_k-q_2\in\mcA^{\text{phys}}_{S|R_2}$ relative to frame $R_2$.~This is a reorientation of frame $R_1$ which depends on the relation between the frames, that is a \emph{relation-conditional frame reorientation}.
This transformation is not changing the physical state of the system but only the question that is being asked.~For example, we change from asking ``what is the position of particle $k$ when $R_1$ is in the origin?'' to ``what is the position of particle $k$ when $R_2$ is in the origin?'', while referring to the same kinematical property of $S$ (the position of particle $k$). 

For finite Abelian groups as considered in this work, the analogue of the translation group-valued relative distance $q_2-q_1$ is given by $g_1^{-1}g_2$.~A reorientation of $R_1$ (which is a symmetry on $\Hphys$) conditioned on $g_1^{-1}g_2$ can be thus written as the following super-operator (cfr.~\cite{delaHamette:2021oex},~Sec.~8 for the generalisation to the regular representation of arbitrary unimodular Lie groups) 
\be\label{symmQRFtransf}
\hat{\mathcal V}_{1\to 2}^{g_1,g_2}(\bullet)=\sum_{g,g'\in\mathcal G}\left(\ket{g}\!\bra{g}_1\otimes\ket{gg'}\!\bra{gg'}_2\otimes\mathds1_S\right)\left(U_1^{g_1^{-1}g_2g'^{-1}}\otimes\mathds 1_2\otimes\mathds 1_S\right)\bullet\left(U_1^{g'g_2^{-1}g_1}\otimes\mathds 1_2\otimes\mathds 1_S\right)\,,
\ee
where $O_{\ket{g'}\!\bra{g'}_2\otimes\mathds1_S|R_1}^{g_1=0}=\sum_{g\in\mathcal G}\ket{g}\!\bra{g}_1\otimes\ket{gg'}\!\bra{gg'}_2\otimes\mathds1_S$ is the projection onto states with $g_1^{-1}g_2=g'$.~The super-operator \eqref{symmQRFtransf} maps the subalgebra $\mcA_{S|R_1}^{\text{phys}}\subset\Aphys$ of relational observables describing $S$ relative to frame $R_1$ into the subalgebra $\mcA_{S|R_2}^{\text{phys}}\subset\Aphys$ of relational $S$-observables relative to $R_2$.~Specifically, recalling the expression \eqref{RelDO} of relational observables together with the property $\Piphys A\Piphys=\hat\Pi_{\rm inv}(A)\Piphys$ for any $A\in\mathcal L(\Hkin)$ (cfr.~\eqref{relationpiphyspiinv}), with $\hat\Pi_{\rm inv}(A)$ the incoherent group averaging given in \eqref{Gtwirl}, one can check that
\be\label{symmindtransf:Slocal}
\hat{\mathcal V}_{1\to 2}^{g_1,g_2}(O^{g_1}_{\mathds 1_2\otimes f_S|R_1})=O^{g_2}_{\mathds 1_1\otimes f_S|R_2}\qquad f_S\in\mcA_S\;,
\ee
on the physical Hilbert space \cite{delaHamette:2021oex}.~Therefore, as pictorially depicted in Fig.~\ref{Fig:symmSobs}, the symmetry-induced QRF transformation \eqref{symmQRFtransf} maps the relational observable describing the value of a given kinematical observable $f_S\in\mcA_S$ of the system $S$ when frame $R_1$ is in orientation $g_1$ into the relational observable describing the value of the \emph{same} $S$-observable $f_S$ when frame $R_2$ is in orientation $g_2$.

The QRF transformations \eqref{symmQRFtransf} provide us with a further characterisation of the elements in the overlap $\mcA_{S|R_1}^{\text{phys}}\cap\mcA_{S|R_2}^{\text{phys}}$ as those relational $S$-observables which are invariant under them.
\begin{figure}[!t]
\centering\includegraphics[scale=0.45]{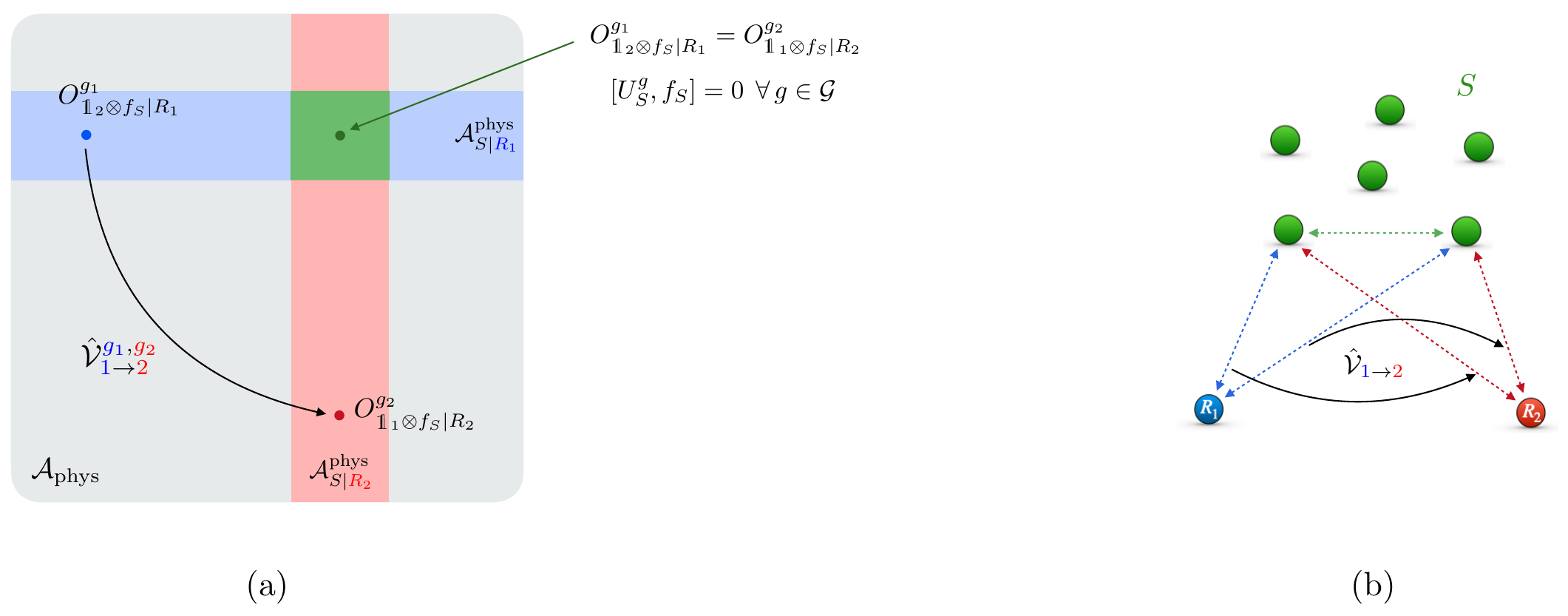}
\caption{(a)~The symmetry-induced QRF transformations \eqref{symmQRFtransf} map the relational observables $O^{g_1}_{\mathds 1_2\otimes f_S|R_1}$ describing the value of the $S$-observable $f_S$ when frame $R_1$ is in orientation $g_1$ into the relational observables $O^{g_2}_{\mathds 1_1\otimes f_S|R_2}$ describing the value of the \emph{same} $S$-observable $f_S$ when frame $R_2$ is in orientation $g_2$.~For example, in a system of particles subject to translation invariance depicted in (b), the relative distance of a particle in $S$ from $R_1$ (blue dashed arrow) is transformed into the corresponding relative distance of that particle from $R_2$ (red dashed arrow).~The observables in $\mcA_{S|R_1}^{\text{phys}}\cap\mcA_{S|R_2}^{\text{phys}}$ (green region in (a)), which by Thm.~\ref{lem_TPSineqalg} corresponds to translation invariant $f_S$ observables, are invariant under these QRF changes.~These correspond to internal $S$-relations and identify coincident relational observables relative to the two frames $R_1$ and $R_2$ (e.g.~the green dashed arrow in (b)).}
\label{Fig:symmSobs}
\end{figure}
\begin{claim} 
    Let $f_S\in\mcA_S$ be a kinematical $S$-observable.~Then, the associated relational observable $O_{\mathds1_2\otimes f_S|R_1}^{g_1}\in\mcA_{S|R_1}^{\emph{phys}}$ is invariant under the symmetry-induced QRF transformation \eqref{symmQRFtransf}, that is $\hat{\mathcal V}_{1\to 2}^{g_1,g_2}(O_{\mathds1_2\otimes f_S|R_1}^{g_1})=O_{\mathds1_2\otimes f_S|R_1}^{g_1}$, if and only if $f_S$ is translation-invariant, $[f_S,U_S^g]=0$, $\forall\,g\in\mathcal G$, i.e., if and only if $O_{\mathds1_2\otimes f_S|R_1}^{g_1}\in\mcA_{S|R_1}^{\emph{phys}}\cap\mcA_{S|R_2}^{\emph{phys}}$. 
\end{claim}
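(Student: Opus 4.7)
The strategy is to reduce the claimed equivalence to a direct application of Theorem~\ref{lem_TPSineqalg}. The bridge is the explicit action \eqref{symmindtransf:Slocal}, $\hat{\mathcal V}_{1\to 2}^{g_1,g_2}(O^{g_1}_{\mathds 1_2\otimes f_S|R_1})=O^{g_2}_{\mathds 1_1\otimes f_S|R_2}$, which converts the invariance statement ``$\hat{\mathcal V}_{1\to 2}^{g_1,g_2}(O^{g_1}_{\mathds 1_2\otimes f_S|R_1})=O^{g_1}_{\mathds 1_2\otimes f_S|R_1}$'' into the purely algebraic statement
\[
O^{g_2}_{\mathds 1_1\otimes f_S|R_2}\;=\;O^{g_1}_{\mathds 1_2\otimes f_S|R_1},
\]
i.e.\ the observable lives in the overlap $\mcA_{S|R_1}^{\mathrm{phys}}\cap\mcA_{S|R_2}^{\mathrm{phys}}$. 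Theorem~\ref{lem_TPSineqalg} then provides the desired characterisation of the overlap.

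For the sufficiency direction, I would take $f_S$ translation-invariant and use Theorem~\ref{lem_TPSineqalg} to note that relational observables built from such $f_S$ take the canonical form $(\mathds{1}_1\otimes\mathds{1}_2\otimes f_S)\,\Pi_{\rm phys}$, which manifestly does not depend on which frame orientation one conditions on. Hence $O^{g_1}_{\mathds 1_2\otimes f_S|R_1}=(\mathds{1}_1\otimes\mathds{1}_2\otimes f_S)\,\Pi_{\rm phys}=O^{g_2}_{\mathds 1_1\otimes f_S|R_2}$ for any $g_1,g_2\in\mathcal G$, and applying \eqref{symmindtransf:Slocal} yields invariance.

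For the necessity direction, I would assume the invariance and apply \eqref{symmindtransf:Slocal} to conclude that $O^{g_1}_{\mathds 1_2\otimes f_S|R_1}\in\mcA_{S|R_1}^{\mathrm{phys}}\cap\mcA_{S|R_2}^{\mathrm{phys}}$. Theorem~\ref{lem_TPSineqalg} then forces this observable to be of the form $(\mathds{1}_1\otimes\mathds{1}_2\otimes \tilde f_S)\,\Pi_{\rm phys}$ for some translation-invariant $\tilde f_S\in\mcA_S$. To conclude that the original $f_S$ is itself translation-invariant (rather than merely yielding the same relational observable as some translation-invariant $\tilde f_S$), I would apply the reduction theorem \eqref{eq:obsredthm}, $\hat{\mathcal{R}}_1^{g_1}(O^{g_1}_{\mathds{1}_2\otimes f_S|R_1})=\mathds{1}_2\otimes f_S$. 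A short computation using the explicit form of $\mathcal R_1^{g_1}$ in \eqref{eq:iReduction}, unitarity $\mathcal R_1^{g_1}(\mathcal R_1^{g_1})^\dag=\mathds{1}_2\otimes\mathds{1}_S$, and the fact that $\mathds{1}_1\otimes\mathds{1}_2\otimes\tilde f_S$ is gauge-invariant and hence commutes with $\Pi_{\rm phys}$ shows $\hat{\mathcal{R}}_1^{g_1}\bigl((\mathds{1}_1\otimes\mathds{1}_2\otimes \tilde f_S)\,\Pi_{\rm phys}\bigr)=\mathds{1}_2\otimes\tilde f_S$. Thus $f_S=\tilde f_S$ is translation-invariant.

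The proof is essentially a consequence of the earlier, harder Theorem~\ref{lem_TPSineqalg}, so no substantial obstacle is anticipated. The only subtle point is the uniqueness step at the end of the necessity direction — one must distinguish ``there exists some translation-invariant $\tilde f_S$ giving the same relational observable'' from ``$f_S$ itself is translation-invariant'' — and this is resolved cleanly by invoking unitarity of the reduction map and the injectivity it provides on $\mcA_{\bar 1}$.
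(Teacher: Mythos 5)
Your proof is correct, and it reaches the same destination as the paper's (Theorem~\ref{lem_TPSineqalg} characterising the overlap), but via a different intermediate lemma. The paper works directly with the explicit form \eqref{symmQRFtransf}: for the sufficiency direction it notes that overlap elements are invariant under reorientations of both frames, so the conjugation by $U_1^{g_1^{-1}g_2g'^{-1}}$ acts trivially and the remaining sum of projectors resolves to the identity; for the necessity direction it asserts that $\hat{\mathcal V}$-invariance forces invariance under $R_1$-reorientations (the observable being $R_2$-reorientation invariant by construction) and then appeals to the proof of Theorem~\ref{lem_TPSineqalg}. You instead route both directions through the intertwining property \eqref{symmindtransf:Slocal}, which converts the invariance condition into the purely algebraic statement $O^{g_1}_{\mathds1_2\otimes f_S|R_1}=O^{g_2}_{\mathds1_1\otimes f_S|R_2}$, i.e.\ membership in $\mcA^{\rm phys}_{S|R_1}\cap\mcA^{\rm phys}_{S|R_2}$. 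What your route buys is a cleaner converse: the paper's claim that $\hat{\mathcal V}$-invariance implies $R_1$-reorientation invariance is left without detailed justification, whereas your reduction to the overlap is immediate once \eqref{symmindtransf:Slocal} is granted. You also make explicit the final uniqueness step (that $f_S$ itself, not merely some $\tilde f_S$ yielding the same relational observable, is translation-invariant) via the reduction theorem \eqref{eq:obsredthm}; the paper compresses this into ``by the proof of Theorem~\ref{lem_TPSineqalg}''. The only dependency to flag is that your argument leans entirely on \eqref{symmindtransf:Slocal}, which the appendix states without proof (citing the earlier literature); the paper's own argument re-derives the relevant special case from \eqref{symmQRFtransf} directly, so it is marginally more self-contained.
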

\begin{proof}
    By Theorem~\ref{lem_TPSineqalg}, the relational observables $O_{\mathds1_2\otimes f_S|R_1}^{g_1}$ with translation-invariant $f_S$ are invariant under reorientations of both frames.~Thus, the action \eqref{symmQRFtransf} of $\hat{\mathcal V}_{1\to 2}^{g_1,g_2}$ on $O_{\mathds1_2\otimes f_S|R_1}^{g_1}$ reduces to
    \be
    \hat{\mathcal V}_{1\to 2}^{g_1, g_2}(O_{\mathds1_2\otimes f_S|R_1}^{g_1})=\sum_{g\in\mathcal G}(\ket{g}\!\bra{g}_1\otimes\mathds1_2\otimes\mathds1_S)O_{\mathds1_2\otimes f_S|R_1}^{g_1}=O_{\mathds1_2\otimes f_S|R_1}^{g_1}\;.
    \ee

Vice versa, if $\hat{\mathcal V}_{1\to 2}^{g_1, g_2}(O_{\mathds1_2\otimes f_S|R_1}^{g_1})=O_{\mathds1_2\otimes f_S|R_1}^{g_1}$, then $O_{\mathds1_2\otimes f_S|R_1}^{g_1}$ is invariant under reorientations of frame $R_1$.~By construction, $O_{\mathds1_2\otimes f_S|R_1}^{g_1}$ is also invariant under reorientations of frame $R_2$.~Thus, by the proof of Theorem~\ref{lem_TPSineqalg} $O_{\mathds1_2\otimes f_S|R_1}^{g_1}\in\mcA_{S|R_1}^{\text{phys}}\cap\mcA_{S|R_2}^{\text{phys}}$ and $[f_S,U_S^g]=0$ for any $g\in\mathcal G$.
\end{proof}

For illustration, consider once more the example of a continuous translation-invariant system of particles and Fig.~\ref{Fig:symmSobs}.~Consider the relative distance between any two particles in $S$, say $q_{k+1}-q_k$ with $k\neq 1,2$.~This is a translation invariant observable internal to $S$ which can be equivalently written as $q_{k+1}-q_k=(q_{k+1}-q_1)-(q_k-q_1)=(q_{k+1}-q_2)-(q_k-q_2)$, thus yielding the same relational observable $O_{q_{k+1}-q_k|R_1}^{q_1=0}=O_{q_{k+1}-q_k|R_2}^{q_2=0}$ relative to $R_1$ and $R_2$.~Note also that it corresponds to the same reduced $S$-observable when the position $q_i$ of the frame $R_i$ is fixed.~Moreover, this observable is clearly invariant under reorientations of both frames.~In particular, it is also invariant under a translation of $R_1$ by the relative distance $q_2-q_1$ between $R_2$ and $R_1$.

\subsection{Proof: Theorem \ref{lem_TPSineqalg}}\label{App:TPSlocu2}

\noindent{\bf Theorem~\ref{lem_TPSineqalg}.}
\emph{The algebras of relational observables describing solely $S$ relative to the two frames are distinct, but isomorphic subalgebras of $\mathcal{A}_{\rm phys}$, i.e.\ $\mathcal{A}_{S|R_1}^{\rm phys}\neq\mathcal{A}_{S|R_2}^{\rm phys}$. Their overlap $\mathcal{A}_{S|R_1}^{\rm phys}\cap\mathcal{A}_{S|R_2}^{\rm phys}$ is non-trivial and consists of all operators of the form $(\mathds{1}_1\otimes\mathds1_2\otimes f_S)\,\Pi_{\rm phys}=\mathds{1}_1\otimes\mathds1_2\otimes f_S\restriction{\mathcal{H}_{\rm phys}}$, where $f_S\in\mathcal{A}_S$ is translation-invariant $[f_S,U^g_S]=0$, $\forall\,g\in\mathcal{G}$. Hence, observables in the overlap commute with reorientations of \emph{both} frames.}

\begin{proof}
The reduction theorem~\eqref{eq:obsredthm} implies that both $\mathcal{A}_{S|R_i}^{\rm phys}$, $i=1,2$, are isomorphic to $\mathcal{A}_S$. By construction, $O^{g_i}_{\mathds1_{j}\otimes f_S|R_i}$ is invariant under reorientations of frame $R_j$, i.e.~$\big[O^{g_i}_{\mathds1_{j}\otimes f_S|R_i},\mathds1_{i}\otimes U^g_j\otimes\mathds1_S\big]=0$, $\forall\,g\in\mathcal{G}$ (cfr.~Eq.~\eqref{RelDO}). Hence, all elements in $\mathcal{A}_{S|R_1}^{\rm phys}\cap\mathcal{A}_{S|R_2}^{\rm phys}$ must be invariant under reorientations of \emph{both} frames and so commute with $U^g_1\otimes U^{g'}_2\otimes\mathds1_S$, $\forall\,g,g'\in\mathcal{G}$. Since $(U^g_1\otimes U^{g}_2\otimes\mathds1_S)\,\Piphys = (\mathds1_{1}\otimes\mathds1_{2}\otimes U^{g^{-1}}_S)\,\Piphys$, this means that all these elements must also commute with $\mathds1_{1}\otimes\mathds1_{2}\otimes U^{g}_S$, $\forall\,g\in\mathcal{G}$. The expression in~\eqref{RelDO} then entails $O^{g_i}_{\mathds1_{j}\otimes f_S|R_i}=O^{g_i}_{\mathds1_{j}\otimes U^g_S f_SU^{g^{-1}}_S|R_i}$ for all elements in $\mathcal{A}_{S|R_1}^{\rm phys}\cap\mathcal{A}_{S|R_2}^{\rm phys}$. Now applying the reduction theorem~\eqref{eq:obsredthm} yields $f_S=U^g_S f_S U^{g^{-1}}_S$. For such translation-invariant $S$-observables \eqref{RelDO} implies $O^{g_i}_{\mathds1_{j}\otimes f_S|R_i}=(\mathds1_{12}\otimes f_S)O^{g_i}_{\mathds1_{j}\otimes\mathds1_S|R_i}$ and noting that $O^{g_i}_{\mathds1_{j}\otimes\mathds1_S|R_i}=\Piphys$ means that all elements in $\mathcal{A}_{S|R_1}^{\rm phys}\cap\mathcal{A}_{S|R_2}^{\rm phys}$ are of the claimed form. Conversely, it is clear that every observable of the claimed form resides in this overlap.
\end{proof}

\subsection{Subalgebra of $\mbfUX$-invariant operators} \label{app:subalgebraAU}

\noindent
\begin{claim}
The proper subset $\mcA_{\ibar}^{X}\subset\mathcal A_{\overline{i}}$ of $\mbfUX$-invariant operators defined in~\eqref{Uinvsubalgebra} is a unital $^*$-subalgebra of $\mathcal A_{\overline{i}}$.
\end{claim}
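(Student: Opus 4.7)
The plan is to verify the four defining properties of a unital $^*$-subalgebra, exploiting the fact that both $\hat{\mbfU}_{\ibar}^{g_i,g_j}$ and $\hat{X}^{\dagger}$ are superoperators given by conjugation with a unitary, hence $^*$-automorphisms of $\mathcal{A}_{\ibar}$. First I would note that any such conjugation superoperator $\hat{W}(A) = WAW^{\dagger}$ is $\mathbb{C}$-linear, multiplicative ($\hat{W}(AB) = \hat{W}(A)\hat{W}(B)$), adjoint-preserving ($\hat{W}(A^{\dagger}) = \hat{W}(A)^{\dagger}$), and sends $\mathds{1}$ to $\mathds{1}$. Thus both sides of the defining equation $\hat{\mbfU}_{\ibar}^{g_i,g_j}(f_{\ibar}) = \hat{X}^{\dagger}(f_{\ibar})$ have the same algebraic compatibility properties.

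Next, I would check each axiom in turn. Unitality follows immediately since $\hat{\mbfU}_{\ibar}^{g_i,g_j}(\mathds{1}_{\ibar}) = \mathds{1}_{\ibar} = \hat{X}^{\dagger}(\mathds{1}_{\ibar})$, so $\mathds{1}_{\ibar} \in \mcA_{\ibar}^{X}$. Linearity of $\mcA_{\ibar}^{X}$ follows from $\mathbb{C}$-linearity of both superoperators: if $f, f' \in \mcA_{\ibar}^{X}$ and $\alpha, \beta \in \mathbb{C}$, then
\begin{equation}
\hat{\mbfU}_{\ibar}^{g_i,g_j}(\alpha f + \beta f') = \alpha \hat{X}^{\dagger}(f) + \beta \hat{X}^{\dagger}(f') = \hat{X}^{\dagger}(\alpha f + \beta f').
\end{equation}
Closure under products uses multiplicativity: for $f, f' \in \mcA_{\ibar}^{X}$,
\begin{equation}
\hat{\mbfU}_{\ibar}^{g_i,g_j}(ff') = \hat{\mbfU}_{\ibar}^{g_i,g_j}(f)\,\hat{\mbfU}_{\ibar}^{g_i,g_j}(f') = \hat{X}^{\dagger}(f)\,\hat{X}^{\dagger}(f') = \hat{X}^{\dagger}(ff'),
\end{equation}
so $ff' \in \mcA_{\ibar}^{X}$. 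Closure under the $^*$-operation uses that conjugation by a unitary commutes with taking adjoints: $\hat{\mbfU}_{\ibar}^{g_i,g_j}(f^{\dagger}) = \hat{\mbfU}_{\ibar}^{g_i,g_j}(f)^{\dagger} = \hat{X}^{\dagger}(f)^{\dagger} = \hat{X}^{\dagger}(f^{\dagger})$, so $f^{\dagger} \in \mcA_{\ibar}^{X}$.

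Together these confirm that $\mcA_{\ibar}^{X}$ is a unital $^*$-subalgebra of $\mathcal{A}_{\ibar}$. To finish, I would briefly remark on why the inclusion is proper: this follows from the discussion in the main text after Lemma~\ref{Lemma:tpseq}, e.g.\ from the existence of Hamiltonians such as those in Example~\ref{Ex:AUqubits} that lie in $\mathcal{A}_{\ibar}$ but in no TPS-invariant subalgebra, or equivalently from the inequivalence of the natural TPSs (Lemma~\ref{lem_TPSineq}) which forbids $\mcA_{\ibar}^{X}$ from exhausting $\mathcal{A}_{\ibar}$. There is no genuine obstacle in this proof; the only thing to be careful about is ensuring the verification uses only the automorphism properties that conjugation by \emph{any} unitary enjoys, so that no specific feature of $\mbfU_{\ibar}^{g_i,g_j}$ or $X$ enters.
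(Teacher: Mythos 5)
Your proof is correct and follows essentially the same route as the paper's: both verify unitality, closure under linear combinations, products, and the adjoint by exploiting that conjugation by a unitary is a $^*$-automorphism, so nothing specific to $\mbfU_{\ibar}^{g_i,g_j}$ or $X$ is needed. Your added remark on why the inclusion is proper is a small bonus the paper's appendix proof omits, but it does not change the argument.
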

\begin{proof}
Given the subset \eqref{Uinvsubalgebra}, one can easily check that $\mcA_{\ibar}^{X}$ is a vector subspace of $\mcA_{\ibar}$ over $\mathbb{C}$, closed under linear combination and multiplication, by using unitarity of $\mbfU_{\ibar}^{g_i,g_j}$ and $X$.~Next, if $A_{\ibar} \in \mcA_{\ibar}^{X}$, then $\hat{X}\circ\hat{\mbfU}_{\ibar}^{g_i,g_j}(A_{\ibar}^\dagger)=(\hat{X}\circ\hat{\mbfU}_{\ibar}^{g_i,g_j}(A_{\ibar}))^\dagger= A_{\ibar}^\dagger$, and so $A_{\ibar}^\dagger \in \mcA_{\ibar}^{X}$. Lastly, clearly $\hat{X}\circ\hat{\mbfU}_{\ibar}^{g_i,g_j}(\mathds1_{\ibar}) = \mathds1_{\ibar}$, thus $\mathds 1_{\ibar}\in\mcA_{\ibar}^{X}$.
\end{proof}

\noindent\textbf{Lemma~\ref{lem:pix}.}
 \emph{We have $\PiUX\circ\hat X\circ\hat U_{\ibar}^{g_i,g_j}=\PiUX$. }

\begin{proof}
  Let $A_{\ibar}\in\mcA_{\ibar}$ be arbitrary. Then, using that $\PiUX$ is a Hilbert-Schmidt orthogonal projector, we can write 
  \be
\PiUX\left(\hat X\hat U_{\ibar}^{g_i,g_j}(A_{\ibar})\right) = \sum_a \Tr\left[\left(\hat X\hat \mbfU_{\ibar}^{g_i,g_j}(A_{\ibar})\right)^\dag\,E^a_{\ibar}\right]\,E_{\ibar}^a = \sum_a \Tr\left[A^\dag_{\ibar}\,\hat X\hat\mbfU_{\ibar}^{g_i,g_j}(E_{\ibar}^a)\right]\,E_{\ibar}^a = \sum_a \Tr\left[A^\dag_{\ibar}\,E_{\ibar}^a\right]\,E_{\ibar}^a = \PiUX(A_{\ibar})\,,\nonumber
  \ee
  where $E_{\ibar}^a$ is any basis of $\mcA_{\ibar}^X$, whose elements by definition obey $\hat X\hat\mbfU_{\ibar}^{g_i,g_j}(E_{\ibar}^a)=E_{\ibar}^a$.
\end{proof}

\noindent\textbf{Lemma~\ref{lem_Agg}.} \emph{
If $f_{\ibar}\in\mathcal{A}_{\ibar}^{X,g_i,g_j}$ for some fixed QRF orientations $g_i,g_j$ and bilocal unitary $X=Y_j\otimes Z_S$, then, for arbitrary $g_i',g'_j\in\mathcal{G}$, also $f_{\ibar}\in\mathcal{A}_{\ibar}^{X',g'_i,g'_j}$ with $X'=Y'_j\otimes Z_S'$ given by
\be
Y'_j=Y_j\,U_j^{(g'_ig'_j){}^{-1}g_ig_j}\,,\qquad\qquad Z_S'=Z_S\, U_S^{g_jg'_j{}^{-1}}\,.\nonumber
\ee
}

\begin{proof}
The condition that $f_{\ibar}$ resides in both $\mathcal{A}_{\ibar}^{X,g_i,g_j}$ and $\mathcal{A}_{\ibar}^{X',g'_i,g'_j}$ entails by \eqref{Uinvsubalgebra} that
\be
\left(\hat{\mathbf{U}}_{\ibar}^{g_i,g_j}\right)^\dag\hat X^\dag(f_{\ibar})=
\left(\hat{\mathbf{U}}_{\ibar}^{g'_i,g'_j}\right)^\dag\hat X^{\prime\dag}(f_{\ibar})\,.\label{uxf}
\ee
Invoking \eqref{TPSschangemap}, we have
\begin{align} 
\mathbf{U}_{\ibar}^{g_i,g_j}\left(\mathbf{U}_{\ibar}^{g'_i,g'_j}\right)^\dag&=\sum_{g,g'}\ket{g_ig}\!\braket{g_jg^{-1}|g'^{-1}g_j'}\!\bra{g_i'g'}_j\otimes U_S^{gg'{}^{-1}}\\
&=\sum_g\ket{g_ig}\!\bra{g'_ig_j{}^{-1}g_j'g}_j\otimes U_S^{g_jg'_j{}^{-1}}\\
&=U_j^{g_i}\left(\sum_g\ket{g}\!\bra{g}_j\right)U_j^{g'_i{}^{-1}g_jg'_j{}^{-1}}\otimes U_S^{g_jg_j'{}^{-1}}\\
&=U_j^{(g'_ig'_j){}^{-1}g_ig_j}\otimes U_S^{g_jg_j'{}^{-1}}\,.
\end{align}
We can thus solve condition~\eqref{uxf} if we choose 
\be 
X^{\prime\dag}=\bigl(U_j^{(g_ig_j){}^{-1}g'_ig'_j}\otimes U_S^{g_j{}^{-1}g'_j}\bigr)X^\dag\,.
\ee 
This proves the claim.   
\end{proof}

\subsection{Algebraic properties related to projectors $\hat{\Pi}_{\mbt}, \hat{\Pi}_{\mbd}$}\label{App:algebrasplitting}

Here we provide details on the statements made in section \ref{Sec:Uinv&break} surrounding $\hat{\Pi}_{\mbt}$ and $\hat{\Pi}_{\mbd}$, as defined in equations \eqref{pitdef} and \eqref{piddef} respectively. 

\begin{claim} \label{claim:projPit}
The map $\hat{\Pi}_{\mbt}: \mcA_{\ibar} \to \mcA_{\ibar}$ defined by
\be \label{pitdef2}
\hat{\Pi}_{\mbt} (f_{\ibar}) := \frac{1}{|\mcG|}\sum_{g \in \mcG}  (\mathds1_{j}\otimes U^g_S)\, f_{\ibar} \, (\mathds1_{j}\otimes {U^{g}_S}^\dag)\,, \qquad f_{\ibar} \in \mcA_{\ibar}
\ee
is an orthogonal projector, with complement $\hat{\Pi}_{\mbt}^{\perp} := \hat{\mathds1}_{\ibar} - \hat{\Pi}_{\mbt}$ defined with respect to the Hilbert-Schmidt inner product on $\mcA_{\ibar}$.
\end{claim}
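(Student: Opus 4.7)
The plan is to verify the two defining properties of an orthogonal projector\,---\,idempotency ($\hat{\Pi}_{\mbt}^2=\hat{\Pi}_{\mbt}$) and self-adjointness with respect to the Hilbert-Schmidt inner product $\langle A,B\rangle_{\rm HS}:=\Tr(A^\dag B)$ on $\mcA_{\ibar}$\,---\,directly from the $G$-twirl formula \eqref{pitdef2}. The key inputs are the group-homomorphism property of the unitary representation, $U_S^gU_S^{g'}=U_S^{gg'}$, together with unitarity $(U_S^g)^\dag=U_S^{g^{-1}}$, the cyclicity of the trace, and the bi-invariance of the normalised sum $\frac{1}{|\mcG|}\sum_{g\in\mcG}$ under left/right translation and inversion of $\mcG$.

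First I would establish idempotency. Applying \eqref{pitdef2} twice yields
\be
\hat{\Pi}_{\mbt}^2(f_{\ibar})=\frac{1}{|\mcG|^2}\sum_{g,g'\in\mcG}\bigl(\mathds1_j\otimes U_S^{g}U_S^{g'}\bigr)\,f_{\ibar}\,\bigl(\mathds1_j\otimes (U_S^{g}U_S^{g'})^\dag\bigr).\nonumber
\ee
The substitution $h=gg'$ (with $g$ fixed) is a bijection of $\mcG$, so summing over $g'$ at fixed $g$ is the same as summing over $h\in\mcG$. One of the two $\mcG$-sums thus factors off as $\frac{1}{|\mcG|}\sum_g 1 = 1$, leaving precisely $\hat{\Pi}_{\mbt}(f_{\ibar})$.

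Next I would establish HS-self-adjointness. Starting from $\langle A,\hat{\Pi}_{\mbt}(B)\rangle_{\rm HS}=\frac{1}{|\mcG|}\sum_g\Tr\!\bigl[A^\dag(\mathds1_j\otimes U_S^g)B(\mathds1_j\otimes U_S^{g\dag})\bigr]$, I cycle the rightmost factor under the trace, rewrite the result as $\Tr\!\bigl[\bigl((\mathds1_j\otimes U_S^g)A(\mathds1_j\otimes U_S^{g\dag})\bigr)^\dag B\bigr]$, and use that summing over $g$ or over $g^{-1}$ is identical to recognise the prefactor of $B$ as $\hat{\Pi}_{\mbt}(A)^\dag$. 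Hence $\langle A,\hat{\Pi}_{\mbt}(B)\rangle_{\rm HS}=\langle\hat{\Pi}_{\mbt}(A),B\rangle_{\rm HS}$, i.e.\ $\hat{\Pi}_{\mbt}$ is self-adjoint, and together with idempotency this makes it an orthogonal projector. The statement for $\hat{\Pi}_{\mbt}^\perp=\hat{\mathds1}_{\ibar}-\hat{\Pi}_{\mbt}$ follows immediately, since the complement of any orthogonal projector is again an orthogonal projector onto the orthogonal subspace.

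I do not expect any genuine obstacle; the argument is a textbook $G$-twirl computation whose only delicate point is keeping track of the adjoint under cycling, which is handled cleanly by the observation that $\sum_{g}$ is invariant under $g\mapsto g^{-1}$. The same structure of proof will be reused verbatim for $\hat{\Pi}_{\mbd}$ (just replacing the family of unitaries $\{\mathds1_j\otimes U_S^g\}$ by the orthogonal projectors $\{\ket{g}\!\bra{g}_j\otimes\mathds1_S\}$, which are self-adjoint but no longer unitary, so the reindexing $g\mapsto g^{-1}$ is unnecessary and the adjoint step becomes trivial).
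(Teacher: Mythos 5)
Your proof is correct and follows essentially the same route as the paper's: idempotency by reindexing the double group sum, and Hilbert--Schmidt self-adjointness by cyclicity of the trace together with the invariance of the sum under $g\mapsto g^{-1}$. The concluding remark that $\hat{\Pi}_{\mbt}^{\perp}$ is then automatically the orthogonal complement projector matches the paper's treatment, so nothing further is needed.
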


\begin{proof}
The proof is straightforward by direct computation. For idempotence: for any $f_{\ibar} \in \mcA_{\ibar}$, we have
\be
\Pit^2(f_{\ibar}) = \frac{1}{|\mcG|^2} \sum_{g',g} (\mathds1_{j}\otimes U_S^{g'g}) \, f_{\ibar} \, (\mathds1_{j}\otimes U_S^{g'g\dag}) = \frac{1}{|\mcG|^2} \sum_{g',g} (\mathds1_{j}\otimes U_S^{g}) \, f_{\ibar} \, (\mathds1_{j}\otimes U_S^{g\dag}) = \Pit(f_{\ibar}) \,,
\ee
so that $\Pit^2 = \Pit$. This also implies $(\Pit^{\perp})^2  = \Pit^{\perp}$. Thus, $\Pit$ and $\Pit^{\perp}$ are projectors on $\mcA_{\ibar}$. For orthogonality: consider the Hilbert-Schmidt inner product on $\mcA_{\ibar}$ given by $(A_{\ibar},B_{\ibar}) := \Tr(A_{\ibar}^\dagger B_{\ibar})$, where $\Tr$ refers to the full trace over the underlying Hilbert space $\Hibar$. Then, for any $A_{\ibar},B_{\ibar} \in \mcA_{\ibar}$:
\be
    (\Pit(A_{\ibar}),B_{\ibar}) = \frac{1}{|\mcG|}\sum_{g} \Tr \Big( (\mathds1_j \otimes U_S^{g\dag})A_{\ibar}^\dagger (\mathds1_j \otimes U_S^g) B_{\ibar} \Big) = (A_{\ibar},\Pit(B_{\ibar}))
\ee
using cyclicity and linearity of $\Tr$; also, $(\Pit^{\perp}(A_{\ibar}),B_{\ibar}) = (A_{\ibar},B_{\ibar}) - (\Pit(A_{\ibar}),B_{\ibar}) = (A_{\ibar},\Pit^{\perp}(B_{\ibar}))$, using orthogonality of $\Pit$. Finally, it is easy to see that $\Pit$ and $\Pit^{\perp}$ are complements with respect to the Hilbert-Schmidt inner product: for any $A_{\ibar} \in {\rm im}(\Pit), B_{\ibar} \in {\rm im}(\Pit^{\perp})$, we have
\be
(A_{\ibar},B_{\ibar}) = (\Pit(A_{\ibar}),\Pit^{\perp}(B_{\ibar})) = (\Pit(A_{\ibar}),B_{\ibar}) - (\Pit(A_{\ibar}),\Pit(B_{\ibar})) = 0
\ee
using the fact that $\Pit$ is an orthogonal projector, in the third equality.
\end{proof}

\noindent {\bf Remark.} From Claim \ref{claim:projPit} above, it directly follows that ${\rm im}(\Pit) = {\rm ker}(\Pit^\perp)$, and vice-versa. We thus have the algebra decomposition $\mcA_{\ibar} = \mcA_{\ibar}^{\mbt} \oplus \mathcal{C}_{\ibar}^{\mbt}$ given in \eqref{algtdecomp}, with $\mcA_{\ibar}^{\mbt} := {\rm im}(\Pit) = {\rm ker}(\Pit^\perp)$ and $\mathcal{C}_{\ibar}^{\mbt} := {\rm im}(\Pit^\perp) = {\rm ker}(\Pit)$.

\begin{claim} \label{claim:algat}
$\mcA_{\ibar}^{\mbt} := {\rm im}(\Pit)$ is a unital \emph{*}-subalgebra of $\mcA_{\ibar}$. Its Hilbert-Schmidt complement, $\mathcal{C}_{\ibar}^{\mbt} := {\rm im}(\Pit^\perp)$ is a subspace of $\mcA_{\ibar}$ but does not close under multiplication as an algebra.
\end{claim}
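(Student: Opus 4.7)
The plan is to prove the two parts of the claim separately, exploiting the characterisation of $\Pit$-invariance as commutativity with the local $S$-translations $\mathds 1_j \otimes U_S^g$, which is stated just after Eq.~\eqref{ugscomm} in the main text and itself follows immediately from the definition~\eqref{pitdef2}: $\Pit(f_{\ibar}) = f_{\ibar}$ if and only if $[f_{\ibar}, \mathds 1_j \otimes U_S^g]=0$ for all $g \in \mcG$.

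For the subalgebra part, I would first note that $\mcA_{\ibar}^\mbt$ is a linear subspace because $\Pit$ is linear (Claim~\ref{claim:projPit}). Closure under multiplication is then immediate from the commutator characterisation: if $f_{\ibar},h_{\ibar}\in\mcA_{\ibar}^\mbt$, then $[f_{\ibar}h_{\ibar},\mathds 1_j\otimes U_S^g] = f_{\ibar}[h_{\ibar},\mathds 1_j\otimes U_S^g] + [f_{\ibar},\mathds 1_j\otimes U_S^g]h_{\ibar}=0$, hence $f_{\ibar}h_{\ibar}\in\mcA_{\ibar}^\mbt$. For the adjoint, taking the Hermitian conjugate of $[f_{\ibar},\mathds 1_j\otimes U_S^g]=0$ and using unitarity $(U_S^g)^\dag = U_S^{g^{-1}}$ yields $[f_{\ibar}^\dag,\mathds 1_j\otimes U_S^{g^{-1}}]=0$, which holds for all $g\in\mcG$ since the map $g\mapsto g^{-1}$ is a bijection, so $f_{\ibar}^\dag\in\mcA_{\ibar}^\mbt$. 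Finally, $\mathds 1_{\ibar}$ trivially commutes with every $\mathds 1_j\otimes U_S^g$, so it belongs to $\mcA_{\ibar}^\mbt$, making the subalgebra unital.

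For the second part, I would pick any nonzero $f_{\ibar}\in\mathcal{C}_{\ibar}^\mbt$ and show that $f_{\ibar}^\dag f_{\ibar}\notin\mathcal{C}_{\ibar}^\mbt$, which prevents $\mathcal{C}_{\ibar}^\mbt$ from closing under multiplication (note $\mathcal{C}_{\ibar}^\mbt$ is closed under the $*$-operation by the same adjoint argument as above, so $f^\dag_{\ibar}\in\mathcal{C}_{\ibar}^\mbt$ too). The key observation is that $f_{\ibar}^\dag f_{\ibar}$ is positive semidefinite and nonzero, hence has strictly positive trace $\Tr(f_{\ibar}^\dag f_{\ibar})>0$. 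Since $\mathds 1_{\ibar}\in\mcA_{\ibar}^\mbt$ and $\mcA_{\ibar}^\mbt\perp\mathcal{C}_{\ibar}^\mbt$ with respect to the Hilbert-Schmidt inner product, we would have $(\mathds 1_{\ibar},f_{\ibar}^\dag f_{\ibar}) = \Tr(f_{\ibar}^\dag f_{\ibar}) = 0$ if $f_{\ibar}^\dag f_{\ibar}$ were in $\mathcal{C}_{\ibar}^\mbt$, which contradicts positivity. Of course, $\mathcal{C}_{\ibar}^\mbt$ \emph{is} a linear subspace because it is the image of the linear projector $\Pit^\perp$. I expect no real obstacle; the only minor subtlety is making sure the chosen element yields a non-trivial counterexample, which is guaranteed by the $C^*$-property $\|f_{\ibar}^\dag f_{\ibar}\|=\|f_{\ibar}\|^2$ (hence $f_{\ibar}^\dag f_{\ibar}\neq 0$ whenever $f_{\ibar}\neq 0$).
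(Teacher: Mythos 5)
Your proof is correct and follows essentially the same route as the paper's: the subalgebra properties are obtained from the commutator characterisation of $\Pit$-invariance, and the failure of $\mathcal{C}_{\ibar}^{\mbt}$ to close under multiplication is derived from Hilbert--Schmidt orthogonality to $\mathds1_{\ibar}\in\mcA_{\ibar}^{\mbt}$ together with $\Tr(A_{\ibar}B_{\ibar})=(A_{\ibar}^{\dagger},B_{\ibar})$. The only (cosmetic) difference is that you exhibit the explicit positive element $f_{\ibar}^{\dagger}f_{\ibar}$ with strictly positive trace, whereas the paper argues by contradiction that closure would force every $A_{\ibar}\in\mathcal{C}_{\ibar}^{\mbt}$ to be orthogonal to all of $\mathcal{C}_{\ibar}^{\mbt}$ and hence to lie in $\mcA_{\ibar}^{\mbt}$.
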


\begin{proof}
Given Claim \ref{claim:projPit}, it is straightforward to check that $\mcA_{\ibar}^{\mbt}$ is closed under complex linear combinations and multiplication. It is also easy to verify that $\mcA_{\ibar}^{\mbt}$ is closed under the adjoint, i.e.\ given any $A_{\ibar} \in \mcA_{\ibar}^{\mbt}$, then $\Pit^{\perp}(A_{\ibar}^\dagger) = 0$, thus $A_{\ibar}^\dagger \in \mcA_{\ibar}^{\mbt}$. The multiplicative identity  satisfies $\Pit^{\perp}(\mathds1_{\ibar})=0$, hence is also an element of $\mcA_{\ibar}^{\mbt}$. 

Similarly, one checks that $\mathcal{C}_{\ibar}^{\mbt}$ is a vector subspace of $\mcA_{\ibar}$ over $\mathbb{C}$, i.e.~closed under linear combinations.~However, $\mathcal{C}_{\ibar}^{\mbt}$ does not close under multiplication.~We show this by contradiction.~For any $A_{\ibar},B_{\ibar} \in \mathcal{C}_{\ibar}^{\mbt}$, let us assume that $(A_{\ibar}B_{\ibar}) \in \mathcal{C}_{\ibar}^{\mbt}$.~By Hilbert-Schmidt orthogonality, this implies that for any $f_{\ibar} \in \mcA_{\ibar}^{\mbt}$, it holds that $(f_{\ibar},A_{\ibar}B_{\ibar}) = 0$.~This holds in particular also for $f_{\ibar} = \mathds1_{\ibar}$, i.e.~$0 = (\mathds1_{\ibar},A_{\ibar}B_{\ibar}) = \Tr(A_{\ibar}B_{\ibar}) = (A_{\ibar}^{\dagger},B_{\ibar})$.~This implies that for any $B_{\ibar} \in \mathcal{C}_{\ibar}^{\mbt}$, then $A_{\ibar}^{\dagger} \in \mcA_{\ibar}^{\mbt}$.~Using the result above that $\mcA_{\ibar}^{\mbt}$ is closed under the adjoint, we get $A_{\ibar} \in \mcA_{\ibar}^{\mbt}$.~This is a contradiction.
\end{proof}

\noindent {\bf Remark.} Since $\mathds1_{\ibar} \in \mcA_{\ibar}^{\mbt}$, we have $\Tr(f_{\ibar}) = (\mathds1_{\ibar},f_{\ibar})=0$, for any $f_{\ibar} \in \mathcal{C}_{\ibar}^{\mbt}$. That is, all elements of $\mathcal{C}_{\ibar}^{\mbt}$ are traceless. But the converse is not true i.e.\ not all traceless operators are elements of $\mathcal{C}_{\ibar}^{\mbt}$. For example, suppose $\mcG=\mathbb{Z}_2$, then $A_j \otimes \sigma^x_{S} \in \mcA_{\ibar}^{\mbt}$ and is traceless, for any $A_j \in \mcA_{j}$.

\begin{claim} \label{claim:ugs-inv}
For any $f_{\ibar} \in \mcA_{\ibar}$, it holds that \emph{(cfr.~Eq.~\eqref{ugscomm})}
\be [\Pit(f_{\ibar}), \mathds1_j \otimes U_S^g] = 0 \,, \quad \forall g\in \mcG \,. \ee
\end{claim}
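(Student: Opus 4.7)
The plan is to prove this by a direct reindexing argument in the sum defining $\hat{\Pi}_{\mbt}$, exploiting only unitarity of the $U_S^g$ and the fact that summation over $\mcG$ is invariant under left (or right) translation of the summation variable.

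First, I would multiply $\hat{\Pi}_{\mbt}(f_{\ibar})$ on the left by $\mathds{1}_j \otimes U_S^h$ for an arbitrary $h \in \mcG$ and insert the definition~\eqref{pitdef2}, yielding a sum of terms of the form $(\mathds{1}_j \otimes U_S^h U_S^g)\, f_{\ibar}\, (\mathds{1}_j \otimes U_S^{g\dag})$. Using the group homomorphism property $U_S^h U_S^g = U_S^{hg}$, this becomes a sum over $g$ of $(\mathds{1}_j \otimes U_S^{hg})\, f_{\ibar}\, (\mathds{1}_j \otimes U_S^{g\dag})$. I would then perform the change of summation variable $g' = hg$, so that $g = h^{-1} g'$, and use that $(U_S^{h^{-1} g'})^\dag = U_S^{g'\dag}\, U_S^{h}$ (which follows from unitarity together with $U_S^{h^{-1}} = (U_S^h)^\dag$). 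This rewrites the sum as $\frac{1}{|\mcG|}\sum_{g'} (\mathds{1}_j \otimes U_S^{g'})\, f_{\ibar}\, (\mathds{1}_j \otimes U_S^{g'\dag})(\mathds{1}_j \otimes U_S^{h})$, which is exactly $\hat{\Pi}_{\mbt}(f_{\ibar})\,(\mathds{1}_j \otimes U_S^{h})$.

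Since $h \in \mcG$ was arbitrary, this establishes $[\hat{\Pi}_{\mbt}(f_{\ibar}),\, \mathds{1}_j \otimes U_S^h] = 0$ for all $h \in \mcG$, which is the claim.

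There is no real obstacle here: the argument rests only on the defining relation of a group representation and the translation-invariance of finite group sums, so the only care needed is to track which $U_S^g$ factors pick up the extra $U_S^h$ under the substitution. The proof is actually completely general and does not even require $\mcG$ to be abelian, although this fact is used elsewhere in the paper.
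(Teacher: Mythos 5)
Your proof is correct and follows essentially the same route as the paper's: a direct computation via relabelling the group sum, using the representation property $U_S^hU_S^g=U_S^{hg}$ and the translation invariance of summation over $\mcG$ (the paper phrases it as invariance of $\Pit(f_{\ibar})$ under conjugation by $\mathds1_j\otimes U_S^g$, which is the two-sided version of your one-sided manipulation). Your closing remark that the argument does not require $\mcG$ to be abelian is also accurate.
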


\begin{proof}
By direct computation via sum relabelling: for any $g\in \mcG$, we have
$(\mathds1_j \otimes U_S^g)\Pit(f_{\ibar}) (\mathds1_j \otimes U_S^{g\dag}) = \frac{1}{|\mcG|}\sum_{g'} (\mathds1_j \otimes U_S^{gg'}) f_{\ibar} (\mathds1_j \otimes U_S^{gg'\dag}) = \Pit(f_{\ibar})$.
\end{proof}

\noindent {\bf Remark.} We further have, for any $f_{\ibar}\in \mcA_{\ibar}$,
\be \label{pitcommute}
\Pit(f_{\ibar}) = f_{\ibar} \quad \Leftrightarrow \quad [f_{\ibar}, \mathds1_j \otimes U_S^g] = 0 \,, \; \forall g\in \mcG \ee
where the forward implication is a corollary of Claim \ref{claim:ugs-inv} above, and the converse follows easily by direct computation of $\Pit(f_{\ibar})$. Thus, the subalgebra $\mcA_{\ibar}^{\mbt}$ is completely characterised by invariance under local $U_S^{g}$-translations.

\begin{claim} \label{claim:projPid}
The map $\Pid: \mcA_{\ibar} \to \mcA_{\ibar}$ defined by
\be \label{piddef2}
\hat{\Pi}_{\mbd}(f_{\ibar}) := \sum_{g \in \mcG} (\ket{g}\!\bra{g}_j \otimes \mathds1_S) \, f_{\ibar} \, (\ket{g}\!\bra{g}_j \otimes \mathds1_S)\,, \qquad f_{\ibar} \in \mcA_{\ibar}
\ee
is an orthogonal projector, with complement $\Pid^{\perp} := \hat{\mathds1}_{\ibar} - \Pid$ defined with respect to the Hilbert-Schmidt inner product on $\mcA_{\ibar}$.
\end{claim}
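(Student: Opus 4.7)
The plan is to mirror exactly the argument used for Claim \ref{claim:projPit}, since the structural ingredients are identical: $\Pid$ is built from conjugation by the rank-one projectors $M_g := \ket{g}\!\bra{g}_j \otimes \mathds1_S$, which are self-adjoint and satisfy the resolution of identity $\sum_g M_g = \mathds1_{\ibar}$ together with the orthogonality relation $M_g M_{g'} = \delta_{g,g'} M_g$. These two algebraic facts are exactly what make conjugation by a complete family of mutually orthogonal projectors an orthogonal projection on the space of operators.

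First I would verify idempotence by direct computation: expanding $\Pid^2(f_{\ibar}) = \sum_{g,g'} M_g M_{g'} f_{\ibar} M_{g'} M_g$ and invoking $M_g M_{g'} = \delta_{g,g'} M_g$ to collapse one of the sums, leaving $\sum_g M_g f_{\ibar} M_g = \Pid(f_{\ibar})$. Idempotence of $\Pid$ immediately gives idempotence of $\Pid^\perp = \hat{\mathds1}_{\ibar} - \Pid$ as well, so both are projectors on $\mcA_{\ibar}$.

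Next I would establish Hilbert-Schmidt self-adjointness of $\Pid$. For arbitrary $A_{\ibar}, B_{\ibar} \in \mcA_{\ibar}$, write
\begin{equation}
(\Pid(A_{\ibar}), B_{\ibar}) = \sum_{g \in \mcG} \Tr\bigl(M_g A_{\ibar}^\dag M_g B_{\ibar}\bigr) = \sum_{g \in \mcG} \Tr\bigl(A_{\ibar}^\dag M_g B_{\ibar} M_g\bigr) = (A_{\ibar}, \Pid(B_{\ibar}))\,,\nonumber
\end{equation}
using self-adjointness $M_g^\dag = M_g$ together with cyclicity of the trace. Linearity then yields $(\Pid^\perp(A_{\ibar}),B_{\ibar}) = (A_{\ibar},\Pid^\perp(B_{\ibar}))$.

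Finally, to show that $\Pid$ and $\Pid^\perp$ are Hilbert-Schmidt complements, I would pick any $A_{\ibar} \in {\rm im}(\Pid)$ and $B_{\ibar} \in {\rm im}(\Pid^\perp)$ and compute
\begin{equation}
(A_{\ibar}, B_{\ibar}) = (\Pid(A_{\ibar}), \Pid^\perp(B_{\ibar})) = (A_{\ibar}, \Pid\, \Pid^\perp(B_{\ibar})) = (A_{\ibar}, (\Pid - \Pid^2)(B_{\ibar})) = 0\,,\nonumber
\end{equation}
where the last equality uses idempotence. No step is genuinely an obstacle here; the only thing to be mildly careful about is ensuring that the projectors $M_g$ are self-adjoint and mutually orthogonal, which uses that the frame orientation states $\{\ket{g}_j\}_{g \in \mcG}$ form an orthonormal basis (i.e.\ the ideal QRF assumption).
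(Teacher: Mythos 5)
Your proof is correct and follows essentially the same route as the paper's: idempotence by direct computation using the orthogonality $M_g M_{g'}=\delta_{g,g'}M_g$ of the conditioning projectors, Hilbert--Schmidt self-adjointness via cyclicity of the trace and $M_g^\dag=M_g$, and complementarity of the images from these two properties. The paper itself treats this claim as the direct analogue of the $\Pit$ case, exactly as you do, so there is nothing to add.
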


\begin{proof}
Proof again follows straightforwardly by direct computation. For idempotence: for any $f_{\ibar} \in \mcA_{\ibar}$,
\be \Pid^2(f_{\ibar}) = \sum_{g,g'} \delta_{g,g'}^2 (\ket{g'}\!\bra{g}_j \otimes \mathds1_S) f_{\ibar} (\ket{g}\!\bra{g'}_j \otimes \mathds1_S) = \Pid(f_{\ibar})  \ee
so that, $\Pid^2 = \Pid$. Also, $(\Pid^{\perp})^2 = (\hat{\mathds1}_{\ibar} - \Pid)^2 = \Pid^{\perp}$, using idempotence of $\Pid$. Thus, $\Pid$ and $\Pid^{\perp}$ are projectors on $\mcA_{\ibar}$. For orthogonality with respect to the Hilbert-Schmidt inner product (using the same notation as in Claim \ref{claim:projPit} above): for any $A_{\ibar},B_{\ibar} \in \mcA_{\ibar}$, we have
\be (\Pid(A_{\ibar}), B_{\ibar}) = \sum_{g} \Tr \Big( (\ket{g}\!\bra{g}_j \otimes \mathds1_S)A_{\ibar}^\dagger (\ket{g}\!\bra{g}_j \otimes \mathds1_S) B_{\ibar} \Big) = (A_{\ibar},\Pid(B_{\ibar})) \ee
using cyclicity and linearity of $\Tr$. Then, $(\Pid^{\perp}(A_{\ibar}),B_{\ibar}) = (A_{\ibar},B_{\ibar}) - (\Pid(A_{\ibar}),B_{\ibar}) = (A_{\ibar},\Pid^{\perp}(B_{\ibar}))$, using orthogonality of $\Pid$. Lastly, $\Pid$ and $\Pid^{\perp}$ are Hilbert-Schmidt complements: for any $A_{\ibar} \in {\rm im}(\Pid), B_{\ibar} \in {\rm im}(\Pid^{\perp})$, we have $(A_{\ibar},B_{\ibar}) = (\Pid(A_{\ibar}),B_{\ibar}) - (\Pid(A_{\ibar}),\Pid(B_{\ibar})) = 0$, using that $\Pid$ is an orthogonal projector.
\end{proof}

\noindent {\bf Remark.} From Claim \ref{claim:projPid}, it follows that ${\rm im}(\Pid) = {\rm ker}(\Pid^\perp)$, and vice-versa. We thus have the algebra decomposition $\mcA_{\ibar}=\mcA_{\ibar}^{\mbd}\oplus\mathcal{C}_{\ibar}^{\mbd}$ given in \eqref{algddecomp}, with $\mcA_{\ibar}^{\mbd} := {\rm im}(\Pid) = {\rm ker}(\Pid^\perp)$ and $\mathcal{C}_{\ibar}^{\mbd} := {\rm im}(\Pid^\perp) = {\rm ker}(\Pid)$.

\begin{claim}
$\mcA_{\ibar}^{\mbd} := {\rm im}(\Pid)$ is a unital \emph{*}-subalgebra of $\mcA_{\ibar}$. Its Hilbert-Schmidt complement, $\mathcal{C}_{\ibar}^{\mbd} := {\rm im}(\Pid^\perp)$ is a subspace of $\mcA_{\ibar}$ but does not close under multiplication as an algebra.
\end{claim}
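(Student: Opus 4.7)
The plan is to mirror the structure of the analogous Claim~\ref{claim:algat} for $\mcA_{\ibar}^{\mbt}$, adapting each step to the projector $\Pid$ instead of $\Pit$. The essential ingredient I would first extract is the commutator characterisation: for any $f_{\ibar}\in\mcA_{\ibar}$, $\Pid(f_{\ibar})=f_{\ibar}$ if and only if $[f_{\ibar},\ket{g}\!\bra{g}_j\otimes\mathds1_S]=0$ for all $g\in\mcG$. The forward direction is exactly Eq.~\eqref{pvmgcomm}, while the converse follows by inserting the commutation into the definition~\eqref{piddef2} and using the resolution of the identity $\sum_g\ket{g}\!\bra{g}_j=\mathds1_j$. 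This is the $\Pid$-analogue of the Remark after Claim~\ref{claim:ugs-inv} and is the workhorse of the proof.

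With this in hand, I would then verify the subalgebra axioms for $\mcA_{\ibar}^{\mbd}$. Linearity is immediate from linearity of $\Pid$. For closure under multiplication, if $A_{\ibar},B_{\ibar}\in\mcA_{\ibar}^{\mbd}$, then by the above characterisation both commute with every $\ket{g}\!\bra{g}_j\otimes\mathds1_S$; hence so does $A_{\ibar}B_{\ibar}$, yielding $A_{\ibar}B_{\ibar}\in\mcA_{\ibar}^{\mbd}$. For closure under the adjoint, self-adjointness of each projector $\ket{g}\!\bra{g}_j\otimes\mathds1_S$ gives $\Pid(A_{\ibar})^\dag=\Pid(A_{\ibar}^\dag)$, so $A_{\ibar}\in\mcA_{\ibar}^{\mbd}$ implies $A_{\ibar}^\dag\in\mcA_{\ibar}^{\mbd}$. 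Finally, unitality follows from $\Pid(\mathds1_{\ibar})=\sum_g\ket{g}\!\bra{g}_j\otimes\mathds1_S=\mathds1_{\ibar}$.

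The second part of the statement, concerning $\mathcal{C}_{\ibar}^{\mbd}$, I would handle in the same style as in Claim~\ref{claim:algat}. It is automatically a vector subspace because it is the image of the linear map $\Pid^\perp$. To show it does not close under multiplication, I would argue by contradiction: assume that for every $A_{\ibar},B_{\ibar}\in\mathcal{C}_{\ibar}^{\mbd}$ one has $A_{\ibar}B_{\ibar}\in\mathcal{C}_{\ibar}^{\mbd}$. Since $\mathds1_{\ibar}\in\mcA_{\ibar}^{\mbd}$, Hilbert-Schmidt orthogonality forces $0=(\mathds1_{\ibar},A_{\ibar}B_{\ibar})=\Tr(A_{\ibar}B_{\ibar})=(A_{\ibar}^\dag,B_{\ibar})$, and since $B_{\ibar}$ is arbitrary in $\mathcal{C}_{\ibar}^{\mbd}$, this places $A_{\ibar}^\dag\in\mcA_{\ibar}^{\mbd}$. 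On the other hand, $\mathcal{C}_{\ibar}^{\mbd}$ is also closed under adjoint (by the same commutation-with-adjoint argument used above, applied to $\Pid^\perp=\hat{\mathds1}_{\ibar}-\Pid$), so $A_{\ibar}^\dag\in\mathcal{C}_{\ibar}^{\mbd}\cap\mcA_{\ibar}^{\mbd}=\{0\}$, contradicting arbitrariness of $A_{\ibar}\in\mathcal{C}_{\ibar}^{\mbd}$.

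I do not anticipate a genuine obstacle: the argument is a direct transcription of the $\Pit$-case with $U_S^g$ replaced by the PVM $\ket{g}\!\bra{g}_j$. The only mild subtlety worth flagging is that $\Pid$ is defined via a sum of projectors rather than a group average of unitaries, so the commutator characterisation must be re-derived rather than quoted from the $\Pit$-case; but the self-adjointness of $\ket{g}\!\bra{g}_j\otimes\mathds1_S$ makes this straightforward and in fact simpler than in the $\Pit$-setting.
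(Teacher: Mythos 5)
Your proof is correct and follows essentially the same route as the paper, which simply notes that the argument is "completely analogous" to the $\Pit$-case (Claim~\ref{claim:algat}) and omits it: the subalgebra axioms are checked directly (your use of the commutant characterisation $\Pid(f_{\ibar})=f_{\ibar}\Leftrightarrow[f_{\ibar},\ket{g}\!\bra{g}_j\otimes\mathds1_S]=0$ is exactly the Remark following Claim~\ref{claim:gginv}), and the non-closure of $\mathcal{C}_{\ibar}^{\mbd}$ is obtained by the same contradiction via Hilbert--Schmidt pairing with $\mathds1_{\ibar}\in\mcA_{\ibar}^{\mbd}$. The only cosmetic difference is that you land the contradiction through closure of $\mathcal{C}_{\ibar}^{\mbd}$ under the adjoint rather than closure of $\mcA_{\ibar}^{\mbd}$, which is equivalent.
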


The proof is completely analogous to that for Claim \ref{claim:algat}, and thus  omitted.\\

\noindent {\bf Remark.} Since $\mathds1_{\ibar} \in \mcA_{\ibar}^{\mbd}$, we again have that $\Tr(f_{\ibar}) = (\mathds1_{\ibar},f_{\ibar})=0$, for any $f_{\ibar} \in \mathcal{C}_{\ibar}^{\mbd}$. That is, all elements of $\mathcal{C}_{\ibar}^{\mbd}$ are traceless. But the converse is not true. For example, suppose $\mcG=\mathbb{Z}_2$, then $\sigma_j^z\otimes A_S \in \mcA_{\ibar}^{\mbd}$ and is traceless, for any $A_S \in \mcA_{S}$.

\begin{claim} \label{claim:gginv}
For any $f_{\ibar}\in \mcA_{\ibar}$, it holds that \emph{(cfr.~Eq.~\eqref{pvmgcomm})}
\be [\Pid(f_{\ibar}), \ket{g}\!\bra{g}_j \otimes \mathds1_S] = 0 \,, \quad \forall g\in \mcG \,. \ee
\end{claim}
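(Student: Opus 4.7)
The plan is to prove this by direct computation, exploiting the orthogonality of the frame orientation basis states $\{\ket{g}_j\}_{g\in\mcG}$, in complete analogy with the proof of Claim~\ref{claim:ugs-inv}. The key observation is that $\Pid$ sandwiches $f_{\ibar}$ between two orthogonal projectors $\ket{g'}\!\bra{g'}_j\otimes\mathds 1_S$ summed over $g'\in\mcG$, so multiplying by $\ket{g}\!\bra{g}_j\otimes\mathds 1_S$ from either side collapses the sum via $\bra{g}g'\rangle_j=\delta_{g,g'}$.

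Concretely, I would fix an arbitrary $g\in\mcG$ and compute both $(\ket{g}\!\bra{g}_j\otimes\mathds 1_S)\,\Pid(f_{\ibar})$ and $\Pid(f_{\ibar})\,(\ket{g}\!\bra{g}_j\otimes\mathds 1_S)$ from the definition~\eqref{piddef2}. In each case, only the $g'=g$ term in the sum survives, and both expressions reduce to the same operator $(\ket{g}\!\bra{g}_j\otimes\mathds 1_S)\,f_{\ibar}\,(\ket{g}\!\bra{g}_j\otimes\mathds 1_S)$. Subtracting then yields
\begin{equation}
[\Pid(f_{\ibar}),\,\ket{g}\!\bra{g}_j\otimes\mathds 1_S]=0\,.\nonumber
\end{equation}
Since $g\in\mcG$ was arbitrary, the claim follows. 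There is no real obstacle here; the statement is a one-line consequence of the orthogonality of the group basis states and the sandwich form of the projector $\Pid$. As a corollary (parallel to~\eqref{pitcommute}), one also obtains $\Pid(f_{\ibar})=f_{\ibar}$ if and only if $[f_{\ibar},\ket{g}\!\bra{g}_j\otimes\mathds 1_S]=0$ for all $g\in\mcG$, since the forward direction follows from the claim just proved, while the converse is immediate by plugging the commutation relation into~\eqref{piddef2} and summing $\sum_{g}\ket{g}\!\bra{g}_j=\mathds 1_j$.
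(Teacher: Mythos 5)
Your proposal is correct and follows essentially the same route as the paper: a one-line direct computation from the definition of $\Pid$, using the orthogonality $\braket{g|g'}_j=\delta_{g,g'}$ of the group basis (the paper evaluates the commutator $\sum_{g'}[\ket{g'}\!\bra{g'}_j,\ket{g}\!\bra{g}_j]\otimes\braket{g'|f_{\ibar}|g'}_j$ directly, whereas you compute the left and right products separately and observe they coincide — these are trivially equivalent). Your closing remark on the equivalence $\Pid(f_{\ibar})=f_{\ibar}\Leftrightarrow[f_{\ibar},\ket{g}\!\bra{g}_j\otimes\mathds1_S]=0$ for all $g$ likewise matches the remark following the claim in the paper.
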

\begin{proof}
By direct computation: for any $g \in \mcG$, we have $[\Pid(f_{\ibar}), \ket{g}\!\bra{g}_j \otimes \mathds1_S] = \sum_{g'} [\ket{g'}\!\bra{g'}_j,\ket{g}\!\bra{g}_j] \otimes \braket{g'|f_{\ibar}|g'}_j = \sum_{g'} \delta_{g,g'}(\ket{g'}\!\bra{g}_j - \ket{g}\!\bra{g'}_j) \otimes \braket{g'|f_{\ibar}|g'}_j = 0$. \end{proof}

\noindent {\bf Remark.} We further have, for any $f_{\ibar}\in \mcA_{\ibar}$,
\be \Pid(f_{\ibar}) = f_{\ibar} \quad \Leftrightarrow \quad [f_{\ibar}, \ket{g}\!\bra{g}_j \otimes \mathds1_S] = 0 \,, \; \forall g\in \mcG \ee
where the forward implication follows from Claim \ref{claim:gginv} above, and the converse follows by direct computation of $\Pid(f_{\ibar})$.

\begin{claim}
The following commutation relations hold on $\mcA_{\ibar}$, for any $g_j \in \mcG$,
\be [\hat{\mbfU}_{\ibar}^{g_i,g_j}, \hat{\Pi}_{\mbd}]=0, \quad [\hat{\mbfU}_{\ibar}^{g_i,g_j}, \hat{\Pi}_{\mbt}]=0, \quad [\hat{\Pi}_{\mbt}, \hat{\Pi}_{\mbd}]=0 \,. \ee
\end{claim}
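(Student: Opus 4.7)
The plan is to verify the three commutation relations in the order: first $[\hat{\Pi}_{\mbt},\hat{\Pi}_{\mbd}]=0$, since this is essentially automatic; next $[\hat{\mbfU}_{\ibar}^{g_i,g_j},\hat{\Pi}_{\mbt}]=0$, which reduces to a single local commutation identity that crucially uses abelianness of $\mcG$; and finally $[\hat{\mbfU}_{\ibar}^{g_i,g_j},\hat{\Pi}_{\mbd}]=0$, which is the only one where a nontrivial reindexing argument is needed.

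For the first relation, observe that $\hat{\Pi}_{\mbt}$ acts by conjugation with operators of the form $\mathds1_j\otimes U_S^g$, while $\hat{\Pi}_{\mbd}$ acts by conjugation with operators of the form $\ket{h}\!\bra{h}_j\otimes\mathds1_S$. These conjugating operators live on mutually orthogonal tensor factors and therefore commute as operators on $\mcH_{\ibar}$; applying both superoperators in either order on a generic $f_{\ibar}\in\mcA_{\ibar}$ gives the same double sum
\[
\frac{1}{|\mcG|}\sum_{g,h}(\ket{h}\!\bra{h}_j\otimes U_S^g)\,f_{\ibar}\,(\ket{h}\!\bra{h}_j\otimes U_S^{g\dagger}),
\]
so $\hat{\Pi}_{\mbt}\hat{\Pi}_{\mbd}=\hat{\Pi}_{\mbd}\hat{\Pi}_{\mbt}$ by inspection.

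For the second relation, I would first establish the \emph{operator}-level commutation $[\mbfU_{\ibar}^{g_i,g_j},\mathds1_j\otimes U_S^g]=0$ for every $g\in\mcG$. Using the explicit form \eqref{TPSschangemap} and composing on either side, both products equal $\sum_h\ket{g_ih}\!\bra{g_jh^{-1}}_j\otimes U_S^{hg}$ (resp.\ $U_S^{gh}$), and these agree \emph{precisely because $\mcG$ is abelian}. Plugging this into the definition \eqref{pitdef2} of $\hat{\Pi}_{\mbt}$ then yields
\[
\hat{\mbfU}_{\ibar}^{g_i,g_j}\bigl(\hat{\Pi}_{\mbt}(f_{\ibar})\bigr)=\frac{1}{|\mcG|}\sum_g(\mathds1_j\otimes U_S^g)\,\hat{\mbfU}_{\ibar}^{g_i,g_j}(f_{\ibar})\,(\mathds1_j\otimes U_S^{g\dagger})=\hat{\Pi}_{\mbt}\bigl(\hat{\mbfU}_{\ibar}^{g_i,g_j}(f_{\ibar})\bigr),
\]
which is the desired identity.

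The third and hardest relation requires a reindexing. The obstacle is that $\mbfU_{\ibar}^{g_i,g_j}$ does \emph{not} commute with the individual projectors $\ket{h}\!\bra{h}_j\otimes\mathds1_S$; rather it intertwines them via an orientation-dependent action. The clean way to see it is to note that the Kraus operators of $\hat{\Pi}_{\mbd}$ are precisely the rank-one projectors $\Pi_h:=\ket{h}\!\bra{h}_j\otimes\mathds1_S$, and a direct computation using \eqref{TPSschangemap} gives
\[
\mbfU_{\ibar}^{g_i,g_j}\,\Pi_h=\ket{g_ig_jh^{-1}}\!\bra{h}_j\otimes U_S^{g_jh^{-1}},\qquad \Pi_{\tilde h}\,\mbfU_{\ibar}^{g_i,g_j}=\ket{\tilde h}\!\bra{g_j\tilde h^{-1}g_i}_j\otimes U_S^{g_i^{-1}\tilde h}.
\]
Setting $\tilde h:=g_ig_jh^{-1}$ (a bijection of $\mcG$) and using abelianness to rewrite $g_j\tilde h^{-1}g_i=h$ and $g_i^{-1}\tilde h=g_jh^{-1}$, one finds $\mbfU_{\ibar}^{g_i,g_j}\,\Pi_h=\Pi_{\tilde h}\,\mbfU_{\ibar}^{g_i,g_j}$. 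Summing the sandwich $\Pi_h f_{\ibar}\Pi_h$ over $h$ and reindexing by $\tilde h$ then yields $\hat{\mbfU}_{\ibar}^{g_i,g_j}\circ\hat{\Pi}_{\mbd}=\hat{\Pi}_{\mbd}\circ\hat{\mbfU}_{\ibar}^{g_i,g_j}$. I expect this last reindexing step, and in particular checking that the permutation $h\mapsto g_ig_jh^{-1}$ together with abelianness makes the two Kraus decompositions coincide, to be the only non-routine part of the argument.
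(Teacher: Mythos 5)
Your proof is correct and follows essentially the same route as the paper's: the $[\hat{\Pi}_{\mbt},\hat{\Pi}_{\mbd}]$ and $[\hat{\mbfU}_{\ibar}^{g_i,g_j},\hat{\Pi}_{\mbt}]$ relations come from commutation of the generating/Kraus factors (using abelianness for the latter), and the $[\hat{\mbfU}_{\ibar}^{g_i,g_j},\hat{\Pi}_{\mbd}]$ relation from a direct computation with a sum relabelling. Your intertwining identity $\mbfU_{\ibar}^{g_i,g_j}\Pi_h=\Pi_{g_ig_jh^{-1}}\mbfU_{\ibar}^{g_i,g_j}$ is a slightly tidier packaging of the paper's explicit reindexed sum, but it is the same argument.
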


\begin{proof}
Let $f_{\ibar} \in \mcA_{\ibar}$ be any operator.~Using equations \eqref{TPSschangemap} and \eqref{piddef}, by direct computation and sum relabelings we have that: $\hat{\mbfU}_{\ibar}^{g_i,g_j}\hat{\Pi}_{\mbd}(f_{\ibar}) = \sum_{g}\ket{gg_jg_i}\!\bra{gg_jg_i}_j \otimes U_S^{gg_j}\braket{g^{-1}|f_{\ibar}|g^{-1}}_j U_S^{gg_j\dag} = \hat{\Pi}_{\mbd}\hat{\mbfU}_{\ibar}^{g_i,g_j}(f_{\ibar})$.~From \eqref{TPSschangemap} and \eqref{pitdef}, it directly follows that $\hat{\mbfU}_{\ibar}^{g_i,g_j}\hat{\Pi}_{\mbt}(f_{\ibar}) = \hat{\Pi}_{\mbt}\hat{\mbfU}_{\ibar}^{g_i,g_j}(f_{\ibar})$, since the factors $(\ket{gg_i}\!\bra{g_jg^{-1}}_j \otimes U_S^g)$ and $(\mathds1_j \otimes U_S^{g'})$ commute for Abelian $\mcG$, any $g,g'\in \mcG$.~Finally, by definitions \eqref{pitdef} and \eqref{piddef}, it is clear that these two projectors commute, since their corresponding generating factors $(\ket{g}\!\bra{g}_j \otimes \mathds1_S)$ and $(\mathds1_j \otimes U_S^{g'})$ commute for any $g,g'\in \mcG$.
\end{proof}

\begin{claim}
The algebra decompositions \eqref{algtdecomp} and \eqref{algddecomp} are preserved under the QRF transformation, i.e.\ on $\mcA_{\ibar}$, for any $g_i,g_j \in \mcG$, we have
\be 
\hat{\Pi}_{\mbt} \circ \hat{V}_{i \to j}^{g_i,g_j} = \hat{V}_{i \to j}^{g_i,g_j} \circ \hat{\Pi}_{\mbt} \,, \quad
\hat{\Pi}_{\mbd} \circ \hat{V}_{i \to j}^{g_i,g_j} = \hat{V}_{i \to j}^{g_i,g_j} \circ \hat{\Pi}_{\mbd} 
\ee
where, the projectors $\Pit$ and $\Pid$ on the LHS of each equation are implicitly understood to act on $\mcA_{\jbar} = \hat{V}_{i \to j}^{g_i,g_j}(\mcA_{\ibar})$, while on the RHS they act on $\mcA_{\ibar}$.
\end{claim}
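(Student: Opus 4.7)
My approach will be to reduce the commutation relations involving $\hat{V}_{i\to j}^{g_i,g_j}$ to the already-established commutation relations with $\hat{\mathbf{U}}_{\ibar}^{g_i,g_j}$, by factoring the QRF transformation into two pieces: the unitary $\mathbf{U}_{\ibar}^{g_i,g_j}\in\mathcal{U}(\mcH_{\ibar})$, which lives within a fixed algebra, and the trivial frame swap $\mathcal{I}_{i\to j}:\mcH_{\ibar}\to\mcH_{\jbar}$ that merely relabels the ``other frame'' tensor factor. Inverting the relation \eqref{IUVrelation}, and using that for ideal QRFs $\mathcal{I}_{j\to i}$ is a Hilbert-space isomorphism with inverse $\mathcal{I}_{i\to j}$, one has $V_{i\to j}^{g_i,g_j}=\mathcal{I}_{i\to j}\,\mathbf{U}_{\ibar}^{g_i,g_j}$, and correspondingly $\hat{V}_{i\to j}^{g_i,g_j}=\hat{\mathcal{I}}_{i\to j}\circ\hat{\mathbf{U}}_{\ibar}^{g_i,g_j}$. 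This identity is the first step and will already do most of the work.

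Given this factorisation, the claim reduces to showing that the frame swap $\hat{\mathcal{I}}_{i\to j}$ intertwines the $\Pit$-projector defined on $\mcA_{\ibar}$ with the corresponding one defined on $\mcA_{\jbar}$, and similarly for $\Pid$. Concretely, the task is to verify
\be
\hat{\Pi}_{\mbt}^{(\jbar)}\circ\hat{\mathcal{I}}_{i\to j}=\hat{\mathcal{I}}_{i\to j}\circ\hat{\Pi}_{\mbt}^{(\ibar)}\,,\qquad\hat{\Pi}_{\mbd}^{(\jbar)}\circ\hat{\mathcal{I}}_{i\to j}=\hat{\mathcal{I}}_{i\to j}\circ\hat{\Pi}_{\mbd}^{(\ibar)}\,,\nonumber
\ee
where the superscripts indicate the algebra on which each projector acts. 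Using the explicit form $\mathcal{I}_{i\to j}=\sum_g\ket{g}_i\bra{g}_j\otimes\mathds1_S$, one reads off the operator identities
\be
(\mathds1_i\otimes U_S^g)\,\mathcal{I}_{i\to j}=\mathcal{I}_{i\to j}\,(\mathds1_j\otimes U_S^g)\,,\qquad(\ket{g}\!\bra{g}_i\otimes\mathds1_S)\,\mathcal{I}_{i\to j}=\mathcal{I}_{i\to j}\,(\ket{g}\!\bra{g}_j\otimes\mathds1_S)\,,\nonumber
\ee
both of which are immediate since $\mathcal{I}_{i\to j}$ acts trivially on $\mcH_S$ and simply relabels the orthonormal group basis from $\mcH_j$ to $\mcH_i$. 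Substituting these into the definitions \eqref{pitdef}, \eqref{piddef} of $\Pit^{(\jbar)}$ and $\Pid^{(\jbar)}$ and conjugating by $\mathcal{I}_{i\to j}$ yields the two intertwining identities above.

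Combining the intertwining with the preceding claim that $[\hat{\mathbf{U}}_{\ibar}^{g_i,g_j},\hat{\Pi}_{\mbt}]=[\hat{\mathbf{U}}_{\ibar}^{g_i,g_j},\hat{\Pi}_{\mbd}]=0$ on $\mcA_{\ibar}$ then closes the argument:
\be
\hat{\Pi}_{\mbt}^{(\jbar)}\circ\hat{V}_{i\to j}^{g_i,g_j}=\hat{\Pi}_{\mbt}^{(\jbar)}\circ\hat{\mathcal{I}}_{i\to j}\circ\hat{\mathbf{U}}_{\ibar}^{g_i,g_j}=\hat{\mathcal{I}}_{i\to j}\circ\hat{\Pi}_{\mbt}^{(\ibar)}\circ\hat{\mathbf{U}}_{\ibar}^{g_i,g_j}=\hat{\mathcal{I}}_{i\to j}\circ\hat{\mathbf{U}}_{\ibar}^{g_i,g_j}\circ\hat{\Pi}_{\mbt}^{(\ibar)}=\hat{V}_{i\to j}^{g_i,g_j}\circ\hat{\Pi}_{\mbt}^{(\ibar)}\,,\nonumber
\ee
and identically for $\Pid$. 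There is no real obstacle here; the only subtle point is bookkeeping, namely keeping track of the fact that the two ``copies'' of each projector act on different algebras (on the $j$-tensor factor versus the $i$-tensor factor), and recognising that the $\mathcal{I}_{i\to j}$-relabelling is precisely what converts between them. This is where a careful reader might otherwise worry, so the plan is to make the superscript notation explicit before performing the substitution.
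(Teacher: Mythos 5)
Your proof is correct, but it takes a genuinely different route from the paper's. The paper proves both commutation relations by direct computation: it inserts the explicit form~\eqref{eq:Vitoj} of $V_{i\to j}^{g_i,g_j}$ into the definitions~\eqref{pitdef} and~\eqref{piddef} and manipulates the resulting triple sums over group elements until the two orderings of the superoperators visibly agree. You instead factor the QRF transformation as $\hat{V}_{i\to j}^{g_i,g_j}=\hat{\mcI}_{i\to j}\circ\hat{\mbfU}_{\ibar}^{g_i,g_j}$ (a correct inversion of~\eqref{IUVrelation}), import the already-established commutators $[\hat{\mbfU}_{\ibar}^{g_i,g_j},\Pit]=[\hat{\mbfU}_{\ibar}^{g_i,g_j},\Pid]=0$ from the preceding claim, and reduce the remaining work to the observation that the frame swap $\mcI_{i\to j}$ intertwines the generating operators $\mathds1_j\otimes U_S^g$ and $\ket{g}\!\bra{g}_j\otimes\mathds1_S$ on $\mcH_{\ibar}$ with their counterparts on $\mcH_{\jbar}$ -- which is immediate from~\eqref{eq:Ijtoi}. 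Your version is more modular and avoids redoing the sum bookkeeping, at the cost of leaning on the prior claim and requiring the explicit superscript distinction between the two copies of each projector (which you rightly flag as the only delicate point, since the ``other frame'' tensor factor changes from $R_j$ to $R_i$ under the swap); the paper's version is self-contained but repeats essentially the same index manipulations already performed for $\hat{\mbfU}_{\ibar}^{g_i,g_j}$. Both arguments are sound.
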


\begin{proof}
Let $f_{\ibar} \in \mcA_{\ibar}$ be any operator. Using equations \eqref{eq:Vitoj}, \eqref{pitdef} and that $\mcG$ is Abelian, we have
\begin{align}
    \hat{\Pi}_{\mbt} \hat{V}_{i \to j}^{g_i,g_j}(f_{\ibar}) &= \frac{1}{|\mcG|}\sum_{g,g',g''} \ket{g_i g'}\!\bra{g_i g''^{-1}}_i \otimes U_S^{gg'} \braket{g_j g'^{-1}|f_{\ibar}|g_jg''}_j U_S^{g'' g^{-1}} \nonumber \\
    &= \frac{1}{|\mcG|}\sum_{g',g''} \ket{g_i g'}\!\bra{g_i g''^{-1}}_i \otimes U_S^{g'} \bra{g_j g'^{-1}} \, \sum_g (\mathds1_j \otimes U_S^g) f_{\ibar} (\mathds1_j \otimes U_S^{g\dag}) \,  \ket{g_jg''}_j U_S^{g''} \nonumber \\
    &= \hat{V}_{i \to j}^{g_i,g_j}\hat{\Pi}_{\mbt} (f_{\ibar})\,.
\end{align}
Similarly for $\Pid$ using equations \eqref{eq:Vitoj} and \eqref{piddef}, it can be shown by direct computation that, $\hat{\Pi}_{\mbd} \hat{V}_{i \to j}^{g_i,g_j}(f_{\ibar}) = \sum_g \ket{g_i g}\!\bra{g_i g}_i \otimes U_S^g\braket{g_jg^{-1}|f_{\ibar}|g_jg^{-1}}_j U_S^{g\dag}  =  \hat{V}_{i \to j}^{g_i,g_j}\hat{\Pi}_{\mbd}(f_{\ibar})$.
\end{proof}

\subsection{Proofs: Theorems \ref{thm:Slocals} and \ref{thm:jlocals} and Lemma \ref{lemma:pinpd-uin}} \label{app:thmproof-S&jlocals}

\noindent\textbf{Theorem \ref{thm:Slocals}} ($\mathbf{S}$-\textbf{operators})\textbf{.}
\emph{Let $\mathds1_j\otimes f_S \in \mcA_{\ibar}$ be some $S$-local operator relative to frame $R_i$. Then $\mathds1_j\otimes f_S\in\mcA_{\ibar}^{X}$ for some bilocal unitary $X$ (and some orientations $g_i,g_j\in\mcG$) if and only if it lies in the image of $\Pit$, i.e.\ if and only if $f_S$ is translation invariant, $[f_S,U_S^g]=0$, $\forall\,g\in\mcG$. In particular, $\mathds1_j\otimes f_S\in\mcA_{\ibar}^\mbt$ resides in exactly all those subalgebras $\mcA_{\ibar}^{X}$, with $X$ of the form $X=Y_j\otimes Z_S$, where the $S$-unitary $Z_S$ commutes with $f_S$, $[Z_S,f_S]=0$, and $Y_j$ is an arbitrary $R_j$-unitary.}

\begin{proof}
$\mathds1_j\otimes f_S\in\mcA_{\ibar}^{X}$ with $X=Y_j\otimes Z_S$ means 
\be\label{inaux}
\hat{\mbfU}_{\ibar}^{g_i,g_j}(\mathds1_j \otimes f_S) = \mathds1_j \otimes \hat{Z}_{S}^\dagger(f_S)\,.
\ee
Now using \eqref{TPSschangemap} and the decomposition \eqref{algtdecomp}, we find
\be\label{inaux2}
\hat{\mbfU}_{\ibar}^{g_i,g_j}(\mathds1_j \otimes f_S) = \mathds1_j \otimes f_S^{\mbt} +  \sum_{g}\ket{gg_i}\!\bra{gg_i}_j \otimes \hat U_S^{g} (f_S^{\mbt\perp})\,, 
\ee
writing for simplicity $(\mathds1_j\otimes f_S)^\mbt=\mathds1_j \otimes f_S^{\mbt}$ and exploiting that, due to \eqref{pitcommute}, $\mathds1_j \otimes f_S^{\mbt}$ commutes with $\mbfU_{\ibar}^{g_i,g_j}$. Importantly, the second term cannot be written in the form of a local $S$-operator since, thanks to \eqref{pitcommute}, $[f_S^{\mbt\perp},U_S^g]\neq 0$ for at least one $g\in \mcG$. This implies, by linear independence, that we must have both,
\be 
f_S^\mbt=\hat Z^\dag_S(f_S)\,,\qquad\qquad\text{and}\qquad\qquad f_S^{\mbt\perp}=0\,,
\ee
in order for \eqref{inaux} to be true. Since therefore $f_S=f_S^\mbt$, we also have $[Z_S,f_S]=0$. 

Conversely, inserting $f_S^{\mbt\perp}=0$ into \eqref{inaux2} yields $\hat{\mbfU}_{\ibar}^{g_i,g_j}(\mathds1_j \otimes f_S) = \mathds1_j \otimes f_S$. Hence, $\mathds1_j\otimes f_S\in\mcA_{\ibar}^{X}$ with $X=Y_j\otimes Z_S$, where $[Z_S,f_S]=0$ and $Y_j$ is an arbitrary $R_j$-unitary.
\end{proof}

\noindent\textbf{Theorem \ref{thm:jlocals}} (\textbf{Frame-operators})\textbf{.}
\emph{Let $f_j \otimes \mathds1_S \in \mcA_{\ibar}$ be some $R_j$-local operator relative to frame $R_i$. Then $f_j\otimes\mathds1_S\in\mcA_{\ibar}^{X}$ for some bilocal unitary $X$ (and some orientations $g_i,g_j\in\mcG$) if and only if it lies in the image of $\Pid$,i.e.\ if and only if $f_j$ is diagonal in the frame orientation basis, $f_j=\sum_g f_g\ket{g}\!\bra{g}_j$. In particular, $f_j\otimes\mathds1_S\in\mcA_{\ibar}^\mbd$ resides in exactly all those subalgebras $\mcA_{\ibar}^{X}$ with $X=Y_j\otimes Z_S$ such that $Z_S$ is an arbitrary $S$-unitary and the action of the $R_j$-unitary $Y_j$ on $f_j$ is equivalent to that of the (unitary) \emph{parity-swap operator}
\be\label{eq:Pswap}
P_j^{g_i,g_j}:=\sum_g\ket{g_ig}\!\bra{g_jg^{-1}}_j
\ee
in the form 
\be \label{eq:Pswap2}
\hat Y^\dag_j(f_j)=\hat{P}_j^{g_i,g_j}(f_j)\,.
\ee
}

\begin{proof}
$f_j\otimes\mathds1_S\in\mcA_{\ibar}^{X}$ with $X=Y_j\otimes Z_S$ means 
\be 
\label{inaux3}
\hat{\mbfU}_{\ibar}^{g_i,g_j}(f_j \otimes \mathds1_S) = \hat Y_j^\dag(f_j) \otimes \mathds1_S\,.
\ee 
Invoking the decomposition \eqref{algddecomp} and \eqref{TPSschangemap} yields
\be \label{inaux4}
\hat{\mbfU}_{\ibar}^{g_i,g_j}(f_j \otimes \mathds1_S)=\sum_{g\in\mcG}\braket{g_jg^{-1}|f_j^\mbd|g_jg^{-1}}\ket{gg_i}\!\bra{gg_i}_j\otimes\mathds1_S+\sum_{\substack{g,g'\in\mcG\\ g\neq g'}}\braket{g_jg^{-1}|f_j^{\mbd\perp}|g_jg'^{-1}}\ket{gg_i}\!\bra{g'g_i}_j\otimes U_S^{gg'^{-1}}\,,
\ee
recalling that $f_j^\mbd$ and $f_j^{\mbd\perp}$ are diagonal and purely off-diagonal in the $\mcG$-basis. The second term cannot be written as a purely $R_j$-local expression due to linear independence and because $U_S^{gg'^{-1}}\neq\mathds1_S$ whenever $g\neq g'$. \eqref{inaux3} can thus only hold provided that 
\be 
\sum_{g\in\mcG}\braket{g_jg^{-1}|f_j^\mbd|g_jg^{-1}}\ket{gg_i}\!\bra{gg_i}_j = \hat Y_j^\dag(f_j)\,,\qquad\qquad\text{and}\qquad\qquad f_j^{\mbd\perp}=0\,.
\ee 
Hence, $f_j=f_j^\mbd$. Conjugating the left equation with the parity-swap operator \eqref{eq:Pswap} restores $f_j^\mbd=f_j$ on the left hand side, giving
\be
f_j^\mbd=f_j=(\hat{P}_j^{g_i,g_j})^\dag\hat Y_j^\dag(f_j)\,,
\ee 
that is $[f_j,Y_jP_j^{g_i,g_j}]=0$.

Conversely, inserting $f_j^{\mbd\perp}=0$, and so $f_j=f_j^\mbd$, into \eqref{inaux4} and again using the parity-swap \eqref{eq:Pswap} produces ${\hat{\mbfU}_{\ibar}^{g_i,g_j}(f_j \otimes \mathds1_S) =\hat P_j^{g_i,g_j}(f_j)\otimes\mathds1_S}$. That is, $f_j\otimes\mathds1_S\in\mcA_{\ibar}^{X}$ for all $X=Y_j\otimes Z_S$ such that $Z_S$ is an arbitrary $S$-unitary and $Y_j$ satisfies $\hat Y_j^\dag(f_j)=\hat P_j^{g_i,g_j}(f_j)$, or equivalently,  $[f_j,Y_jP_j^{g_i,g_j}]=0$.
\end{proof}

\noindent\textbf{Lemma~\ref{lemma:pinpd-uin}.}
\emph{
For any $f_{\ibar} \in \mcA_{\ibar}$ and frame orientations $g_i, g_j \in \mcG$, it holds that
\be \hat{\mbfU}_{\ibar}^{g_i,g_j} \hat{\Pi}_{\mbd} \hat{\Pi}_{\mbt} (f_{\ibar}) = (\hat {P}_{j}^{g_i,g_j}\otimes\hat{\mathds1}_S)\,\hat{\Pi}_{\mbd} \hat{\Pi}_{\mbt} (f_{\ibar}) \,, 
\ee
where $P_j^{g_i,g_j}$ is the parity-swap operator \eqref{Pswap}.~In other words, for any $f_{\ibar} \in \mcA_{\ibar}$, we have 
$\hat{\Pi}_{\mbd} \hat{\Pi}_{\mbt} (f_{\ibar})\in\mcA_{\ibar}^{X}$ with $X=(P_j^{g_i,g_j})^\dag\otimes\mathds1_S$.
}

\begin{proof} 
The proof is straightforward. Using \eqref{ugscomm}, the fact that $\braket{g|\Pid(A_{\ibar})|g'}_j=\delta_{g,g'}\braket{g|\Pid(A_{\ibar})|g}_j$ is diagonal for any $A_{\ibar}\in\mcA_{\ibar}$ and the definition of $\mbfU_{\ibar}^{g_i,g_j}$ in \eqref{TPSschangemap},
we immediately find for any $f_{\ibar}\in \mcA_{\ibar}$
\begin{align}
    \hat{\mbfU}_{\ibar}^{g_i,g_j}  \hat{\Pi}_{\mbd}\hat{\Pi}_{\mbt}(f_{\ibar}) 
    &= \sum_{g,g'} \left(\ket{g_ig}\!\bra{g_jg^{-1}}_j \otimes U_S^g\right) 
    \hat{\Pi}_{\mbd}\hat{\Pi}_{\mbt}(f_{\ibar}) \left(\ket{g_jg'^{-1}}\!\bra{g_ig'}_j\otimes U_S^{g'^{-1}}\right)\\
     &=\sum_{g}\left(\ket{g_ig}\!\bra{g_jg^{-1}}_j\otimes\mathds1_S\right)\hat{\Pi}_{\mbd}\hat{\Pi}_{\mbt}(f_{\ibar})
     \left(\ket{g_jg^{-1}}\!\bra{g_ig}_j\otimes\mathds1_S\right)
     \\
       &=\sum_{g,g'}\left(\ket{g_ig}\!\bra{g_jg^{-1}}_j\otimes\mathds1_S\right)\hat{\Pi}_{\mbd}\hat{\Pi}_{\mbt}(f_{\ibar})
     \left(\ket{g_jg'^{-1}}\!\bra{g_ig'}_j\otimes\mathds1_S\right)
     \\
    &=(\hat {P}_{j}^{g_i,g_j}\otimes\hat{\mathds1}_S)\,\hat{\Pi}_{\mbd}\hat{\Pi}_{\mbt}(f_{\ibar}) \,.
\end{align}
In the last step we invoked the definition \eqref{Pswap} of the parity-swap operator.
\end{proof}

\subsection{Subsystem states and entropies: Proofs}\label{APP:statesandentropy}

\noindent
\textbf{Theorem~\ref{claim:purestatesrhoSZS} (Unitarily related subsystem states).}
\emph{Let $\rho_{\ibar}\in\mathcal S(\mathcal H_{\ibar})$ be a pure state relative to frame $R_i$ in orientation $g_i$.~Let $\rho_{\jbar}=\hat{V}_{i\to j}^{g_i,g_j}(\rho_{\ibar})\in\mathcal S(\mathcal H_{\jbar})$ be the corresponding QRF-related pure state relative to frame $R_j$ in orientation $g_j$.~Then, the $S$-subsystem states $\rho_{S|R_i}=\Tr_j(\rho_{\ibar})$ and $\rho_{S|R_j}=\Tr_i(\rho_{\jbar})$ in the two frame perspectives are unitarily related, i.e.~$\exists\, Z_S\in\mathcal U(\mathcal H_S)$ s.t.
\be\label{rhoSZSpure}
\rho_{S|R_j}=\hat Z_S^{\dagger}(\rho_{S|R_i})
\ee
if and only if $\rho_{\ibar}\in\mcA_{\ibar}^{X}$ with $X=Y_j\otimes Z_S$ for some local unitary $Y_j \in \mathcal{U}(\mcH_{j})$, i.e.
\be\label{PSrhoibarinAU}
\hat\mbfU_{\ibar}^{g_i,g_j}(\rho_{\ibar})=(\hat Y_{j}^\dagger\otimes\hat{Z}_S^{\dagger})(\rho_{\ibar})\;.
\ee
}

\begin{proof} $\eqref{rhoSZSpure}\Rightarrow\eqref{PSrhoibarinAU}$:
Let $\rho_{\ibar}=\ket\psi\!\bra\psi_{\ibar}$ be a pure state describing the bipartite
system $R_j+S$ relative to frame $R_i$, $i=1,2$, $j\neq i$, in orientation $g_i\in\mathcal G$.~Let then $\ket\psi_{\jbar}=V_{i\to j}^{g_i,g_j}\ket\psi_{\ibar}$ be the corresponding QRF-related state relative to frame $R_j$ in orientation $g_j$, and let us assume that \eqref{rhoSZSpure} holds.~Writing $\ket\psi_{\ibar}$ and $\ket\psi_{\jbar}$ in their (a priori distinct) Schmidt decomposition,
\be\label{eq:SDiandjbar}
\ket\psi_{\ibar}=\sum_{a}\sqrt{\lambda_a}\,\ket{e_a}_{j}\otimes\ket{e_a}_{S}\qquad,\qquad \ket\psi_{\jbar}=\sum_{a}\sqrt{\zeta_a}\,\ket{\ell_a}_{i}\otimes\ket{\ell_a}_{S},
\ee
the ensuing reduced $S$ states $\rho_{S|R_i}$ and $\rho_{S|R_j}$ are diagonal
\be\label{SVDrhoSreltoiandj1}
\rho_{S|R_i}=\sum_{a}\lambda_a\,\ket{e_a}\!\bra{e_a}_{S}\qquad,\qquad \rho_{S|R_j}=\sum_{a}\zeta_a\,\ket{\ell_a}\!\bra{\ell_a}_{S}\;.
\ee 
Since unitarily equivalent operators have the same singular values, the requirement $\rho_{S|R_j}=\hat{Z}_S^{\dagger}(\rho_{S|R_i})$ as in \eqref{rhoSZSpure} demands that there exists an ordering of the summands in which $\lambda_a=\zeta_a$ in \eqref{SVDrhoSreltoiandj1} and $\ket{\ell_a}_{S}=Z_S^{\dagger}\ket{e_a}_{S}$. Should the spectrum be degenerate, one can always choose the non-unique basis elements of the degeneracy subspaces such that the second relation holds.
Hence, also the reduced ``other frame'' states $\rho_{j}=\Tr_S(\rho_{\ibar})$ and $\rho_{i}=\Tr_S(\rho_{\jbar})$ have the same spectrum so that there exists a unitary map $X_{i\to j}:\mathcal H_{j}\to\mathcal H_{i}$ such that $X_{i\to j}\ket{e_a}_{j}=\ket{\ell_a}_{i}$.~The states $\ket\psi_{\ibar}$ and $\ket\psi_{\jbar}$ in \eqref{eq:SDiandjbar} are thus related as
\be\label{eq:uritoj}
\ket\psi_{\jbar}=V_{i\to j}^{g_i,g_j}\ket\psi_{\ibar}=(X_{i\to j}\otimes Z_S^{\dagger})\ket\psi_{\ibar}\;,
\ee
from which, using the relation \eqref{IUVrelation} between $\mbfU_{\ibar}^{g_i,g_j}$, $V_{i\to j}^{g_i,g_j}$, and $\mathcal I_{j\to i}$, it follows that
\be\label{eq:UrhoibarA}
\mbfU_{\ibar}^{g_i,g_j}\ket\psi_{\ibar}=\mathcal I_{j\to i}(X_{i\to j}\otimes Z_S^{\dagger})\ket\psi_{\ibar}=(Y_j^{\dagger}\otimes\ Z_S^{\dagger})\ket\psi_{\ibar}\;.
\ee
In the second equality we used the fact that $\mathcal I_{j\to i}\circ(X_{i\to j}\otimes\mathds 1_S)$ yields a local $R_j$-unitary (cfr.~Eq.~\eqref{eq:Ijtoi}).~Therefore, $\rho_{\ibar}=\ket\psi\!\bra\psi_{\ibar}$ belongs to the operator subalgebra $\mcA_{\ibar}^{X}$, with $X=Y_j\otimes Z_S$.

$\eqref{PSrhoibarinAU}\Rightarrow\eqref{rhoSZSpure}$:~The sufficient implication can be proved by direct computation by taking the partial trace $\Tr_j$ of both sides of \eqref{PSrhoibarinAU}.~Due to the invariance of partial trace $\Tr_j$ under conjugation of its argument by any local $R_j$-unitary, the RHS of \eqref{PSrhoibarinAU} simply yields $\Tr_j[(\hat{Y}_j^\dagger\otimes\hat{Z}_S^\dagger)(\rho_{\ibar})]=\hat{Z}_S^\dagger(\rho_{S|R_i})$.~As for the LHS, we have
$$
\Tr_j[\hat{\mbfU}_{\ibar}^{g_i,g_j}(\rho_{\ibar})]=\Tr_j[\hat{\mathcal I}_{j\to i}(\rho_{\jbar})]=\Tr_i(\rho_{\jbar})=\rho_{S|R_j}\;,
$$
where, in the first equality, we used the relation \eqref{IUVrelation} between $\mbfU_{\ibar}^{g_i,g_j}$, $V_{i\to j}^{g_i,g_j}$, and $\mathcal I_{j\to i}$, while in the second equality we used that $\Tr_i(\mathcal I_{i\to j}\bullet\mathcal I_{j\to i})=\Tr_j(\bullet)$, which follows from the expression \eqref{eq:Ijtoi} for $\mathcal I_{j\to i}$.
\end{proof}

\noindent
\textbf{Corollary~\ref{claim:Renyipure} (QRF-invariant subsystem R\'enyi entropies).}
\emph{Let $\rho_{\ibar}$ be a pure state relative to frame $R_i$ in orientation $g_i$.~Let $\rho_{\jbar}=\hat{V}_{i\to j}^{g_i,g_j}(\rho_{\ibar})$ be the corresponding state relative to frame $R_j$ in orientation $g_j$.~Then, the $\alpha$-R\'enyi entropies of the $S$-subsystem states $\rho_{S|R_i}=\Tr_j(\rho_{\ibar})$ and $\rho_{S|R_j}=\Tr_i(\rho_{\jbar})$ in the two frame perspectives are the same for any value of $\alpha$, i.e.
\be\label{eq:RenyiSstate}
S_\alpha[\rho_{S|R_i}]=S_\alpha[\rho_{S|R_j}], \;\; \forall \alpha\in(0,1)\cup(1,\infty)
\ee
where
\be\label{eq:alpharenyi}
S_\alpha[\rho_{S|\bullet}]=\frac{1}{1-\alpha}\log\bigl(\Tr_S(\rho_{S|\bullet}^\alpha)\bigr),
\ee
if and only if $\rho_{\ibar} \in \mcA_{\ibar}^{X}$ for some bilocal unitary $X=Y_j\otimes Z_S$.~In other words, the entanglement spectrum agrees in both QRF perspectives, if and only if the global state lies in a TPS-invariant subalgebra.}

\begin{proof}~$(\Rightarrow)$: From the equality \eqref{eq:RenyiSstate} of $\alpha$-R\'enyi entropies, it follows that $\Tr_S(\rho_{S|R_i}^\alpha)=\Tr_S(\rho_{S|R_j}^\alpha)$ for all $\alpha\in(0,1)\cup(1,\infty)$ (cfr.~Eq.~\eqref{eq:alpharenyi}). In particular, this holds true for all integer values of $\alpha$ in $(0,1)\cup(1,\infty)$. But we also know that $\Tr_{S}(\rho_{S|R_i})=1=\Tr_{S}(\rho_{S|R_j})$ and $\Tr_{S}(\mathds1_{S|R_i})=\dim\mathcal H_S=\Tr_{S}(\mathds1_{S|R_j})$ so that the equality of the trace of powers of the $S$-reduced states in the two perspectives holds true also for $\alpha=0,1$. Then, recalling that for Hermitian operators such as $\rho_{S|R_i}$ and $\rho_{S|R_j}$, the equality of all \emph{words} $\Tr_S(\mathscr W(\rho_{S|R_i},\rho_{S|R_i}^{\dagger}))=\Tr_S(\mathscr W(\rho_{S|R_j},\rho_{S|R_j}^\dagger))$ amounts to any non-negative integer power of the two operators to have equal traces, by Specht's theorem \cite{Specht1940}, there exists a unitary $Z_S$ such that $\rho_{S|R_j}=\hat Z_S^\dagger(\rho_{S|R_i})$.~Then, by the necessary implication of Theorem~\ref{claim:purestatesrhoSZS}, we conclude that $\rho_{\ibar}\in\mathcal A_{\ibar}^{X}$ for some local unitary $X=Y_j\otimes Z_S$.

$(\Leftarrow)$: By the sufficient implication of Theorem~\ref{claim:purestatesrhoSZS}, we have that $\rho_{S|R_j}=\hat Z_S^\dagger(\rho_{S|R_i})$ for some local $S$ unitary $Z_S$. Thus, $\rho_{S|R_j}^\alpha=\exp(\alpha\log(\rho_{S|R_j}))=\hat Z_S^\dagger(\rho_{S|R_i}^\alpha)$ for any $\alpha$, from which the equality \eqref{eq:RenyiSstate} of $\alpha$-R\'enyi entropies follows.
\end{proof}

\noindent
\textbf{Lemma~\ref{claim:Uinv&tinvpure}.}
\emph{Let $\rho_{\ibar}$ be any $S$-translation invariant pure state relative to frame $R_i$ in orientation $g_i$, i.e.\ $[\mathds 1_{j}\otimes U^g_S, \rho_{\ibar}]=0$ for any $g\in\mathcal G$ (equivalently, $\rho_{\ibar}\in\mcA_{\ibar}^{\mbt}$).~Then, $\rho_{\ibar}\in\mcA_{\ibar}^{X}$ with bilocal unitary $X=Y_j\otimes\mathds1_S$,
\be\label{eq:YjUgSinv}
Y_j(g_i,g_j)=\sum_{g\in\mathcal G}\overline{q(g)}\ket{g^{-1}g_j}\!\bra{g_ig}_j\,,
\ee
and $q(g)\in\mathbb C$, $|q(g)|^2=1$.~The converse is however not true, i.e.\ $\rho_{\ibar}\in\mathcal{A}_{\ibar}^X$ does not imply $\rho_{\ibar}\in\mcA_{\ibar}^{\mbt}$}.

\begin{proof}~First, let us notice that for a pure state $\rho_{\ibar}=\ket\psi\!\bra\psi_{\ibar}$, local $S$-translation invariance $[\mathds 1_{j}\otimes U^g_S, \rho_{\ibar}]=0$ is equivalent to $(\mathds 1_{j}\otimes U^g_S)\ket\psi_{\ibar}=q(g)\ket\psi_{\ibar}$ for any $g\in\mathcal G$, with $q(g)\in\mathbb C$ such that $|q(g)|^2=1$.~Conjugation of $\rho_{\ibar}\in\mathcal A_{\ibar}^{\mbt}$ by $\mbfU_{\ibar}^{g_i,g_j}$ thus yields (cfr. Eq.~\eqref{TPSschangemap})
\begin{align}
    \hat{\mbfU}_{\ibar}^{g_i,g_j}(\rho_{\ibar})&=\sum_{g,g'\in\mathcal G}(\ket{gg_i}\!\bra{g_jg^{-1}}_j\otimes\mathds1_S)(\mathds1_j\otimes U_S^g)\rho_{\ibar}(\mathds1_j\otimes U^{g'^{-1}}_S)(\ket{g'^{-1}g_j}\!\bra{g_ig'}_j\otimes\mathds1_S)\notag\\
    &=\Bigl(\sum_{g\in\mathcal G}q(g)\ket{gg_i}\!\bra{g_jg^{-1}}_j\otimes\mathds1_S\Bigr)\rho_{\ibar}\Bigl(\sum_{g'\in\mathcal G}\overline{q(g')}\ket{g'^{-1}g_j}\!\bra{g_ig'}_j\otimes\mathds1_S\Bigr)\notag\\
    &=:(\hat{Y}_j(g_i,g_j)^\dagger\otimes\hat{\mathds1}_S)(\rho_{\ibar})\;.
\end{align}
That is, $\rho_{\ibar}\in\mcA_{\ibar}^{X}$ with local unitaries $X=Y_j\otimes Z_S$ such that $Z_S=\mathds1_S$ and $Y_j(g_i,g_j)$ given above (cfr.~Eq.~\eqref{eq:YjUgSinv}). As can be checked by direct computation, the converse statement is not true in general and we provided simple counterexamples of states $\rho_{\ibar}\in\mcA_{\ibar}^{X}$, but $\rho_{\ibar}\not\in\mathcal A_{\ibar}^{\mbt}$, in Examples~\ref{Ex:Wstate} and~\ref{Ex:nontranslinv} in Sec.~\ref{Sec:states&entanglement}.
\end{proof}

\noindent
\textbf{Theorem~\ref{thm_globindep} (QRF-invariant subsystem states).}
\emph{Given some subsystem state $\rho_{S|R_i}\in\mcS(\mcH_S)$ in $R_i$-perspective, its counterpart in $R_j$-perspective ($i\neq j$), $\rho_{S|R_j}=\Tr_i\bigl(\hat{V}_{i\to j}^{g_i,g_j}(\rho_{\ibar})\bigr)$, is \emph{independent of the global state} $\rho_{\ibar}$ from which $\rho_{S|R_i}$ originates, if and only if $\rho_{S|R_i}$ is translation-invariant.~In this case, we further have exact QRF-invariance, $\rho_{S|R_j}=\rho_{S|R_i}$.}

\begin{proof}
We first show by example that $\rho_{S|R_j}$ depends on $\rho_{\ibar}$ whenever $\rho_{S|R_i}$ is not translation-invariant.~Consider product states of the form
\be
\rho_{\ibar}=\ket{g}\!\bra{g}_j\otimes \rho_{S|R_i}=\ket{g}\!\bra{g}_j\otimes \left(\Pit^S(\rho_{S|R_i})+\Pit^{S\perp}(\rho_{S|R_i})\right)\,,
\ee
where $\Pit^S(\bullet) := \frac{1}{|\mcG|}\sum \limits_{g \in \mcG} U_S^g(\bullet) U^{g^{-1}}_S$ denotes the $G$-twirl over $\mcG$ for $S$ only (restriction of~\eqref{pitdef} to the $S$-tensor factor only), which is an orthogonal projector onto the translation-invariant sector, and $\Pit^{S\perp}=\hat{\mathds{1}}_S-\Pit^S$ is the projector onto its Hilbert-Schmidt orthogonal complement. From~\eqref{counterexample},~\eqref{rhoSUgSinv}  and Lemma~\ref{Lemma:tpseq}, we infer 
\be 
\rho_{\jbar}=\ket{g_ig_jg^{-1}}\!\bra{g_ig_jg^{-1}}_i\otimes\left(\Pit^S(\rho_{S|R_i})+\hat{U}_S^{g_jg^{-1}}\bigl(\Pit^{S\perp}(\rho_{S|R_i})\bigr)\right)\,.
\ee 
Now choose $\tilde g\in\mcG$ such that $\hat{U}_S^{\tilde g}\bigl(\Pit^{S\perp}(\rho_{S|R_i})\bigr)\neq \Pit^{S\perp}(\rho_{S|R_i})$.~For every fixed frame orientations $g_i,g_j\in\mcG$ (that determine the QRF-transformation), we can then build convex mixtures
\be 
\rho_{\ibar}=\left(p_e\,\ket{g_j}\!\bra{g_j}_j+p_{\tilde g}\,\ket{g_j\tilde g^{-1}}\!\bra{g_j\tilde g^{-1}}_j\right)\otimes\rho_{S|R_i}\,,
\ee 
with $p_e+p_{\tilde g}=1$ and $p_e,p_{\tilde g}\in[0,1]$, such that the QRF transformed state reads
\be 
\rho_{\jbar}=p_e\,\ket{g_i}\!\bra{g_i}_i\otimes \rho_{S|R_i}+p_{\tilde g}\,\ket{g_i\tilde g}\!\bra{g_i\tilde g}_i\otimes\left(\Pit^S(\rho_{S|R_i})+\hat{U}_S^{\tilde g}\bigl(\Pit^{S\perp}(\rho_{S|R_i})\bigr)\right)\,.
\ee 
Clearly, for different values of $p_e,p_{\tilde g}$, the resulting subsystem state $\rho_{S|R_j}=\Tr_i(\rho_{\jbar})$ will be different whenever $\rho_{S|R_i}$ is not translation-invariant, i.e.\ ${\Pit^{S\perp}(\rho_{S|R_i})\neq 0}$.~Hence, if $\rho_{S|R_j}$ is independent of the global state $\rho_{\ibar}$ such that $\rho_{S|R_i}=\Tr_j(\rho_{\ibar})$, then $\rho_{S|R_i}$ must be translation-invariant.

Conversely, suppose $\rho_{S|R_i}$ is translation-invariant.~Recall the observation in the main text that $\Pid(\rho_{\ibar})$ is $S$-translation-invariant in this case, i.e.\ ${\Pid(\rho_{\ibar})=\Pit\Pid(\rho_{\ibar})}$, which is the piece that matters for the trace.~Thanks to~\eqref{pivcomp} and Lemma~\ref{lemma:pinpd-uin}, this means that \emph{every} global state $\rho_{\ibar}$, such that $\Tr_j(\rho_{\ibar})=\rho_{S|R_i}$, gives rise to the same QRF-invariant subsystem state, $\rho_{S|R_j}=\rho_{S|R_i}$.
\end{proof}

\subsection{Further illustrative examples for Sec.~\ref{Sec:states&entanglement}}\label{app:Vex}

Here, we shall provide a few further illustrating examples for the results obtained in Sec.~\ref{Sec:states&entanglement}, complementing those already discussed in that section.\\

\begin{example}\label{Ex:GBS}\textbf{\emph{(Generalised Bell States).}}~Given a bipartite system $AB$ described by the Hilbert space $\mathcal H=\mathcal H_A\otimes\mathcal H_B$ with $\mathcal H_I\cong\mathbb C^n$, $I=A,B$, GB states are an orthonormal basis of $n^2$ maximally entangled states $\{\ket{\psi_{l,m}}\}_{l,m=0,\dots,n-1}$ defined as\footnote{For $n=2$, Eq.~\eqref{GBSdef} reproduces the well-known Bell states for bipartite qubit systems \cite{NielsenQIbook2010}: $\ket{\psi_{0,0}}=\frac{1}{\sqrt{2}}(\ket{0,0}+\ket{1,1})$, $\ket{\psi_{0,1}}=\frac{1}{\sqrt{2}}(\ket{0,0}-\ket{1,1})$, $\ket{\psi_{1,0}}=\frac{1}{\sqrt{2}}(\ket{0,1}+\ket{1,0})$, and $\ket{\psi_{1,1}}=\frac{1}{\sqrt{2}}(\ket{0,1}-\ket{1,0})$.} \cite{BennettPRLGBS}
\be\label{GBSdef}
\ket{\psi_{l,m}}=\frac{1}{\sqrt{n}}\sum_{j=0}^{n-1}e^{2\pi i\left(\frac{j\,m}{n}\right)}\ket{j,j+l}\;.
\ee
Consider a $\mathcal G=\mathbb Z_n$ translation-invariant system of $N=4$ particles on the discrete circle with bipartite 2-particle subsystem $S$ to be observed by the remaining two particles $R_1$ and $R_2$ ($g_1,g_2=e$). The group basis plays the role of the usual computational basis $\ket{l,m}$ and the exponentials in \eqref{GBSdef} are given by the group characters $\chi_k(g)=e^{2\pi i\left(\frac{kg}{n}\right)}$, $g\in\mathbb Z_n$, $k\in\{0,1,\dots,n-1\}$, i.e.
\be\label{ZnGBS}
\ket{\psi_{h,k}}=\frac{1}{\sqrt{|\mathcal G|}}\sum_{g\in\mathcal G}\chi_k(g)\ket{g,g+h}=:\ket{h;\chi_k}\;.
\ee
Let then $\rho_{\bar1}=\ket{\psi}\bra{\psi}_{\bar1}$ be a pure state with $\ket{\psi}_{\bar1}=\ket{\varphi}_{2}\otimes\ket{h;\chi_k}_{S}$, for some arbitrary state $\ket{\varphi}_2=\sum_{g\in\mathcal G}\varphi(g)\ket{g}_{2}$ and the system $S$ prepared in a maximally entangled GB state \eqref{ZnGBS}.~Then, using the property $U^g_S\ket{h;\chi_k}_{S}=(U^g\otimes U^g)\ket{h;\chi_k}_{S}=\chi_k(g^{-1})\ket{h;\chi_k}_{S}$, GB states are translation-invariant up to a phase given by the corresponding $\mathbb Z_n$ characters.~As proved in Lemma~\ref{claim:Uinv&tinvpure}, this means that the pure state $\rho_{\bar1}$ belongs to $\mcA_{\bar1}^{X}$ for some local unitary $X=Y_2\otimes Z_S$ with $Z_S=\mathds1_S$. Specifically, recalling the 
expression \eqref{TPSschangemap} for $\mbfU_{\bar1}$, we have
\be
\mbfU_{\bar1}\ket\psi_{\bar1}=\sum_{g\in\mathcal G}\varphi(g^{-1})\chi_k(g^{-1})\ket{g}_{2}\otimes\ket{h;\chi_k}_{S}=(Y_2^\dagger\otimes\mathds1_S)\ket\psi_{\bar1}\;,
\ee
with (cfr.~\eqref{YjUgSinv} with $q(g)=\chi_k(g^{-1})$ and $g_1,g_2=e$)
\be
Y_2=\sum_{g\in\mathcal G}\chi_k(g)\ket{g^{-1}}\!\bra{g}_2\;.
\ee
Note that unitarity of $Y_2$ follows from the property $\chi_k(g)\overline{\chi_k(g)}=\chi_k(g)\chi_k(g^{-1})=1$.~Correspondingly, the QRF transformed state relative to particle $R_2$ reads 
\begin{align}
\ket{\psi}_{\bar2}=V_{1\to 2}^{e,e}\ket{\psi}_{\bar1}&=\Bigl(\sum_{g\in\mathcal G}\ket{g}_1\otimes\bra{g^{-1}}_2\otimes U_S^g\Bigr)\ket{\psi}_{\bar1}\notag\\
&=\Bigl(\sum_{g\in\mathcal G}\varphi(g^{-1})\chi_k(g^{-1})\ket{g}_{1}\Bigr)\otimes\ket{h;\chi_k}_{S}\notag\\
&=\mathcal I_{1\to 2}(Y_2^\dagger\otimes\mathds1_S)\ket{\psi}_{\bar1}\;.
\end{align}
That is, compatibly with Theorem~\ref{claim:purestatesrhoSZS}, $\ket{\psi}_{\bar2}$ will still be a separable pure state with the subsystem $S$  in the same maximally entangled GB state as in $R_i$-perspective.~Moreover, in line with Corollary~\ref{claim:Renyipure}, the amount of correlations in the total state relative to both frames is the same and the entanglement entropy as well as all $\alpha$-R\'enyi entropies of the $S$-reduced states are equal in both perspectives.~Specifically, the total pure state remains separable before and after the QRF transformation, the $S$-subsystem state is a pure GB state, and its von Neumann entropy vanishes.

In the example just considered, only $S$ was in a GB state and the global separable $R_2+S$ state was $\mbfU_{\bar1}$-invariant up to local unitaries.~Let us now consider the case $N=3$ with particles $R_2$ and $S$ together in a GB state relative to particle $R_1$, namely $\ket\psi_{\bar1}=\ket{\psi_{h,k}}_{\bar1}$.~Then, 
\be
\mbfU_{\bar1}\ket\psi_{\bar1}=\Bigl(\frac{1}{\sqrt{|\mathcal G|}}\sum_{g\in\mathcal G}\chi_k(g^{-1})\ket{g}_2\Bigr)\otimes\ket{h}_S
\ee
that is, $\rho_{\bar1}=\ket\psi\!\bra\psi_{\bar1}$ does not reside in any of the subalgebras $\mcA_{\bar1}^{X}$.~The global maximally entangled state in $R_1$-perspective becomes separable in $R_2$-perspective.~Compatibly with Theorem~\ref{claim:purestatesrhoSZS} and Corollary~\ref{claim:Renyipure}, $\rho_{S|R_1}=\frac{1}{|\mathcal G|}\mathds1_S$, $S_\alpha[\rho_{S|R_1}]=\log|\mathcal G|$ while $\rho_{S|R_2}=\ket h\!\bra h_{S}$, $S_\alpha[\rho_{S|R_2}]=0$.
\end{example}

\begin{example}\label{Ex:GHZ}\textbf{\emph{(GHZ States).}}~As a further example with $S$ now being a multipartite system, let us consider a discrete $\mathbb Z_n$-system of $N>4$ particles. Let then $g_1=g_2=e$ and let the total state $\ket{\psi}_{\bar1}\in\mathcal H_{\bar1}\simeq\mathcal H^{\otimes(N-1)}$ relative to some particle $R_1$ be a pure separable state
\be\label{eq:totstateGHZex}
\ket{\psi}_{\bar1}=\ket{\varphi}_{2}\otimes\ket{GHZ}_{S}\;,
\ee
with the subsystem $S$ in a $(N-2)$-partite GHZ state \cite{GHZpaper} as e.g.
\be\label{Gsystemghz}
\ket{GHZ}_{S}=\frac{1}{\sqrt{|\mathcal G|}}\sum_{g\in\mathcal G}\ket{g}^{\otimes(N-2)}\;,
\ee
and particle $R_2$ in a generic state vector $\ket{\varphi}_{2}$.~The states of the form \eqref{Gsystemghz} are clearly $U_g^{\otimes(N-2)}$-invariant and, as can be checked by direct computation using the expression \eqref{eq:Vitoj} of the QRF transformation $V_{1\to 2}$, the $S$-subsystem state relative to particle $R_2$ is the same as \eqref{Gsystemghz}.~As expected from Theorem~\ref{claim:purestatesrhoSZS} and Lemma~\ref{claim:Uinv&tinvpure}, the state $\rho_{\bar1}=\ket\psi\!\bra\psi_{\bar1}$ with $\ket\psi_{\bar1}$ given in \eqref{eq:totstateGHZex} belongs to the operator subalgebra $\mcA_{\bar1}^{X}$ with $X=Y_2\otimes\mathds1_S$ given by
\be\label{GHZex:Yj}
\mbfU_{\bar1}\ket\psi_{\bar1}=\Bigl(\sum_{g\in\mathcal G}\ket{g}\!\bra{g^{-1}}_2\otimes\mathds1_S\Bigr)\ket\psi_{\bar1}=:(Y_2^\dagger\otimes\mathds1_S)\ket\psi_{\bar1}\;.
\ee
Correspondingly, the total state relative to particle $R_2$ reads as
\be
\ket\psi_{\bar2}=V_{1\to 2}\ket\psi_{\bar1}=\mathcal I_{1\to 2}(Y_2^\dagger\otimes\mathds1_S)\ket{\psi}_{\bar1}
\ee
i.e., it is also separable and the reduced $S$
state is a pure GHZ state in both perspectives.~This is also compatible with monogamy of entanglement according to which the QRF-transformed state in $R_2$-perspective is still a pure separable state with subsystem $S$ in a maximally entangled GHZ state.~No correlations between $S$ and the remaining DoF are thus present in both perspectives and, compatibly with Corollary~\ref{claim:Renyipure}, the $\alpha$-R\'enyi entropy of the $S$-subsystem is the same for all $\alpha$.~In particular, there is no entanglement across the $R_2+S$ and $R_1+S$ bipartitions relative to frames $R_1$ and $R_2$, respectively.~Note that in \eqref{Gsystemghz} we used a specific example of multiparticle GHZ state, but other examples of such states are known in the literature (see e.g.~\cite{ZeilingerFoundPhys99} for the case of 3 qubits).~These are all $U_S^g$-invariant, hence $\mbfU_{\bar1}$-invariant up to local $R_2$ unitaries according to Lemma~\ref{claim:Uinv&tinvpure}, so that the above considerations still hold.
\end{example}

\noindent
Combining the examples with pure states discussed above and in the main text, let us give a quick example with mixed states to illustrate also the results of Corollary~\ref{claim:rhoSZSmixed}.

\begin{example}\label{Ex:mixWGHZ}\textbf{\emph{(Mixed States).}}~Let us consider again the $\mathcal G=\mathbb Z_2$-system of $N>4$ qubits and let the total state $\rho_{\bar1}$ of the remaining $1+(N-2)$ particles relative to particle $R_1$ be a separable mixed state of the form
\be
\rho_{\bar1}=p_1\rho_{\bar1}^{(1)}+p_2\rho_{\bar1}^{(2)}\qquad\quad p_1,p_2>0\;\;,\;\;p_1+p_2=1
\ee
with pure states $\rho_{\bar1}^{(1)}=\ket{\psi^{(1)}}\!\bra{\psi^{(1)}}_{\bar1}$ and $\rho_{\bar1}^{(2)}=\ket{\psi^{(2)}}\!\bra{\psi^{(2)}}_{\bar1}$ given by
\begin{align}
\ket{\psi^{(1)}}_{\bar1}&=\ket{1}_2\otimes\ket{W}_{N-2}\;,\\
\ket{\psi^{(2)}}_{\bar1}&=(a\ket{0}_2+b\ket{1}_2)\otimes\ket{GHZ}_{N-2}\quad,\quad a,b\in\mathbb C,\;,\;|a|^2+|b|^2=1\,.
\end{align}
Note that $\braket{\psi^{(1)}|\psi^{(2)}}_{\bar1}=0$.~As discussed in Sec.~\ref{Sec:states&entanglement} and in Example \ref{Ex:GHZ} above, $\rho_{\bar1}^{(1)}\in\mcA_{\bar1}^{X^{(1)}}$ with $X^{(1)}=Y^{(1)}_2\otimes Z^{(1)}_S=\mathds1_2\otimes\sigma_x^{\otimes(N-2)}$ and $\rho_{\bar1}^{(2)}\in\mcA_{\bar1}^{X^{(2)}}$ with $X^{(2)}=Y^{(2)}_2\otimes Z^{(2)}_S=\mathds 1_2\otimes\mathds1^{\otimes(N-2)}$ (cfr.~Lemma~\ref{claim:Uinv&tinvpure} and \eqref{GHZex:Yj} for $\mathcal G=\mathbb Z_2$),~i.e.~$\rho_{\bar1}^{(2)}\in\mcA_{\bar1}^{\mathds1}$.\footnote{Note that here the fact that $Y_2^{(1)}=\mathds1_2=Y_2^{(2)}$ is simply due to our choice of frame orientations $g_1=g_2=0$ and the resulting form of $\mbfU^{g_1,g_2}_{\bar1}$, namely $\mbfU^{0,0}_{\bar1}=\ket{0}\!\bra{0}_2\otimes\mathds1^{\otimes(N-2)}+\ket{1}\!\bra{1}_2\otimes\mathds\sigma_x^{\otimes(N-2)}$.~For example, with orientation $g_1=0$, $g_2=1$, we have $\mbfU^{0,1}_{\bar1}=\ket{0}\!\bra{1}_2\otimes\mathds1^{\otimes(N-2)}+\ket{1}\!\bra{0}_2\otimes\mathds\sigma_x^{\otimes(N-2)}$ and correspondingly $Y_2^{(1)}=\sigma_2^x=Y_2^{(2)}$}~Then, in agreement with our discussion in Sec.~\ref{Sec:states&entanglement} and Lemma~\ref{Lemma:tpseq}, the total state relative to particle $R_2$ is $\rho_{\bar2}=p_1\rho_{\bar2}^{(1)}+p_2\rho_{\bar2}^{(2)}$ with \be
\ket{\psi^{(1)}}_{\bar2}=\ket{1}_1\otimes\sigma_x^{\otimes(N-2)}\ket{W}_{N-2}\qquad,\qquad\ket{\psi^{(2)}}_{\bar2}=(a\ket{0}_1+b\ket{1}_1)\otimes\ket{GHZ}_{N-2}
\ee
i.e.~still unentangled.~The $S$-subsystem states relative to particle $R_1$ and $R_2$ are respectively given by
\be
\rho_{S|R_1}=p_1\ket{W}\!\bra{W}_{N-2}+p_2\ket{GHZ}\!\bra{GHZ}_{N-2}\;,
\ee
and
\be
\rho_{S|R_2}=p_1\sigma_x^{\otimes(N-2)}\ket{W}\!\bra{W}_{N-2}\sigma_x^{\otimes(N-2)}+p_2\ket{GHZ}\!\bra{GHZ}_{N-2}\;.
\ee
Moreover, using $U_g^{\otimes(N-2)}$-invariance of the GHZ state \eqref{Gsystemghz}, $\rho_{S|R_2}$ can be rewritten as $\rho_{S|R_2}=\sigma_x^{\otimes(N-2)}\rho_{S|R_2}\,\sigma_x^{\otimes(N-2)}$.~The total state $\rho_{\bar1}$ can be then thought of as a mixture of pure states belonging to subalgebras $\mcA_{\bar1}^{X^A}$ with same $Z_S$, here just $\mcA_{\bar1}^{X^{(1)}}$ as $X^{(2)}=\mathds 1_2\otimes\mathds1^{\otimes(N-2)}$ and $[X^{(1)},\rho_{\bar1}^{(2)}]=0$ due to $U_S^g$-invariance of $\rho_{\bar1}^{(2)}$ so that $\rho_{\bar1}^{(2)}\in\mcA_{\bar1}^{X^{(1)}}\cap\mcA_{\bar1}^{X^{(2)}}$.~As such, it satisfies the conditions \eqref{eq:PSdecomp},~\eqref{rhoAibarinAU} of Corollary~\ref{claim:rhoSZSmixed} and the $S$-subsystem states in the two perspectives are unitarily related, thus giving rise to identical R\'enyi and von Neumann entropies.
\end{example}

Finally, we also provide two further examples illustrating the observations in Sec.~\ref{ssec:intScorr} on the behaviour of internal $S$-correlations under QRF transformations. The first example illustrates how internal entanglement of $S$ can remain invariant, even if the entanglement between $S$ and ``the other frame'' changes under QRF transformation.

\begin{example}[{\bf Invariant internal, but changing external $S$-entanglement}]\label{Ex:BSinternaltoS}
Consider $N=5$ qubits with $R_1$ and $R_2$ our internal reference frames ($g_1,g_2=0$) and the subsystem $S$ consisting of the remaining three qubits $S=(ABC)$.~Suppose the total state relative to $R_1$ is of the form
\be\label{Bell:BC-DE}
\ket\psi_{\bar{1}}=\frac{1}{\sqrt{2}}(\ket{00}_{2A}+\ket{11}_{2A})\otimes\frac{1}{\sqrt{2}}(\ket{00}_{BC}-\ket{11}_{BC})
\ee
i.e., in $R_1$ perspective, two of the three particles in $S$ are in a maximally entangled pair ($B$ and $C$) and the other particle $A$ in $S$ is maximally entangled with the other frame $R_2$.\footnote{Similar considerations will also apply to a total state of the form \eqref{Bell:BC-DE} for other choices of Bell states as well as for qudits generalised Bell states.}~Then,
\be\label{UBell:BC-DE}
\mbfU_{\onebar}\ket\psi_{\bar 1}=\frac{1}{\sqrt{2}}(\ket{0}_{2}-\ket{1}_{2})\otimes\ket0_A\otimes\frac{1}{\sqrt{2}}(\ket{00}_{BC}-\ket{11}_{BC})\;.
\ee
Clearly, the state \eqref{Bell:BC-DE} does not belong to any of the subalgebras $\mcA_{\bar{1}}^{X}$.~In the total state $\ket\psi_{\bar 2}=V_{1\to 2}\ket\psi_{\bar 1}$ relative to $R_2$, particle $A$ is not entangled anymore with any other particle, but particles $B$ and $C$ are still in a Bell state.~Thus, in this case, correlations between the subsystem $S$ and the other frame are different in the two perspectives, but the correlations between particles $B$ and $C$ inside $S$ and, by monogamy, also between (BC) and their complement inside $S$ remain the same. 
\end{example}

Next, we show an example where a QRF transformation can lead to non-trivial permutations of local subsystems within the subsystem $S$ (while leaving correlations invariant).

\begin{example}[{\bf QRF transformations and local $S$-permutations}]\label{Ex:permut}
    To illustrate the role of internal $S$ permutations, consider a $\mathbb Z_2$-system of $N=4$ qubits $R_1R_2AB$.~Let $R_1$ and $R_2$ be our internal reference frames with orientations $g_1,g_2=0$ so that the subsystem $S$ of interest consists of the pair $AB$.~Consider the following global pure state of $R_2AB$ relative to particle $R_1$
\be\label{fullWrelA}
\ket\psi_{\bar 1}=a\ket{100}_{\bar 1}+b\ket{010}_{\bar 1}+c\ket{001}_{\bar 1}\;,
\ee
where the three positions in the kets refer to particle $R_2$, $A$, and $B$ in that order, and $|a|^2+|b|^2+|c|^2=1$.~Then,
\be\label{UfullW}
\mbfU_{\onebar}\ket\psi_{\bar 1}=(\mathds1_2\otimes\sigma_x^{\otimes2})(\mathds1_2\otimes\mathcal P_{AB})\ket\psi_{\bar 1}\;,
\ee
where $\mathcal P_{AB}$ is the permutation operator acting on particles $A$ and $B$.~The state \eqref{fullWrelA} thus belongs to the subalgebra $\mcA_{\bar{1}}^{X}$ with $Y_2=\mathds1_2$ and $Z_S=\sigma_x^{\otimes2}\mathcal P_{AB}$ with respect to both the TPS $\mathbf T_{\bar{1}}:\mathcal H_{\emph{phys}}\to\mathcal H_2\otimes\mathcal H_{S}$ and the refined TPS in which $S$ comes to be a composite two particle system with internal TPS $\mathbf T_S:\mathcal H_{S}\to\mathcal H_A\otimes\mathcal H_B$.~The global state $\ket\psi_{\bar 2}=V_{1\to 2}\ket\psi_{\bar 1}$ of $R_1AB$ relative to $R_2$ is thus the same as \eqref{UfullW} with the labels $1$ and $2$ exchanged.~The entanglement between the ``other frame'' and $S=(AB)$, as well as within $S$ are thus  the same in both $R_1$ and $R_2$ perspectives.
\end{example}

\subsection{Perspectival dynamics: Proofs and further examples} \label{app:dyn}

In this appendix, we collect the proofs of the main statements from Sec.~\ref{Sec:Dyn} on global and subsystem dynamics.~Further examples complementing those of the main text are also detailed.
\begin{center}
\textbf{Global dynamics}
\end{center}
\noindent\textbf{Lemma \ref{lemma:finitetimeinvsoln}.}
\emph{Consider a Hamiltonian $H_{\ibar} \in \mcA_{\ibar}$ and an initial state $\rho_{\ibar}(t_0) \in \mathcal{S}(\mcH_{\ibar})$.~Then, $\rho_{\ibar}(t) \in \mcA_{\ibar}^{X}, \;\; \forall t \in[t_0,t_1]
$, for some bilocal unitary $X$ (and frame orientations $g_i,g_j \in \mcG$), if and only if the initial state resides in this subalgebra, $\rho_{\ibar}(t_0)\in\mcA_{\ibar}^X$, and
\be\label{AUXt2}
[H_{\ibar}-H_{j \to i},\rho_{\ibar}(t)]=0 , \;\; \forall t\in [t_0,t_1]\,,
\ee
which is equivalent to $[\hat{\Pi}_{\ibar}^{X\perp}(H_{\ibar}-H_{j\to i}),\rho_{\ibar}(t)]=0$ for $t\in [t_0,t_1]$.
}

\begin{proof}~By the definition \eqref{Uinvsubalgebra} of the subalgebras $\mcA_{\ibar}^{X}\subset\mcA_{\ibar}$, $\rho_{\ibar}(t)\in\mcA_{\ibar}^X$ for $t\in[t_0,t_1]$ implies
\begin{align}
0&=[\rho_{\ibar}(t_0),e^{iH_{\ibar}(t-t_0)}X\mbfU_{\ibar}^{g_i,g_j}e^{-iH_{\ibar}(t-t_0)}]=[\rho_{\ibar}(t_0), e^{iH_{\ibar}(t-t_0)}e^{-i\hat{X}\hat{\mbfU}_{\ibar}^{g_i,g_j}(H_{\ibar})(t-t_0)}X\mbfU_{\ibar}^{g_i,g_j}]\nonumber\\
&=[\rho_{\ibar}(t_0), e^{iH_{\ibar}(t-t_0)}e^{-i H_{j\to i}(t-t_0)}X\mbfU_{\ibar}^{g_i,g_j}]\nonumber\\
&= [\rho_{\ibar}(t_0), e^{iH_{\ibar}(t-t_0)}e^{-iH_{j\to i}(t-t_0)}]X\mbfU_{\ibar}^{g_i,g_j} + e^{iH_{\ibar}(t-t_0)}e^{-iH_{j\to i}(t-t_0)}[\rho_{\ibar}(t_0), X\mbfU_{\ibar}^{g_i,g_j}] \nonumber \\
      &= [\rho_{\ibar}(t_0), e^{iH_{\ibar}(t-t_0)}e^{-iH_{j\to i}(t-t_0)}]X\mbfU_{\ibar}^{g_i,g_j}\,.\label{eq:commAUXt}
\end{align}
In the second line we invoked definition~\eqref{defHimport} and in the last line we used that $\rho_{\ibar}(t_0)\in\mcA_{\ibar}^X$.~Eq.~\eqref{eq:commAUXt} can be equivalently written as $\rho_{\ibar}(t)=e^{-iH_{j\to i}(t-t_0)}\rho_{\ibar}(t_0)e^{iH_{j\to i}(t-t_0)}$ from which, taking the time derivative, the result \eqref{AUXt2} follows.~The converse follows straightforwardly along similar lines as above, and is thus omitted.~The final equivalent statement is implied by~\eqref{eq:hagree}.
\end{proof}

We illustrate Lemma~\ref{lemma:finitetimeinvsoln} with another example, in this case showcasing an automorphism of a TPS-invariant algebra, as well as cases when the trajectory is not TPS-invariant, even if the Hamiltonian is.

\begin{example}[{\bf TPS-invariant and non-invariant trajectories}]\label{Ex:HrhoAUX}
Consider again the three qubits $R_1R_2S$ setup with orientations $g_1=g_2=0$, and the Hamiltonian 
\be \label{ex:2qIZZ}
H_{\bar 1}= \sigma_2^z\otimes\mathds1_S+\mathds1_2\otimes\sigma_S^z + \sigma_2^z\otimes\sigma_S^z \ee
which resides in $\mcA_{\bar 1}^{\mathds1}$ (cfr.~\eqref{3qham} in Example \ref{Ex:AUqubits}), so that $H_{\bar{2}}=\hat{\mathcal I}_{1\to 2}(H_{\bar{1}})$.~Next, consider a pure initial state $\ket{\psi(0)}_{\bar{1}}$ at time $t=0$ in a Schmidt-decomposition given by
\be\label{ex:2qSD}
\ket{\psi(0)}_{\bar{1}} = \cos\theta\,\ket{00}_{\bar{1}} + \sin\theta\,\ket{11}_{\bar{1}}\,.
\ee
$\theta\in[0,\frac{\pi}{4}]$ parametrises the degree of entanglement: for $\theta=0$, $\ket{\psi(0)}_{\bar{1}}=\ket{00}_{\bar{1}}$ is separable; for $\theta=\pi/4$, $\ket{\psi(0)}_{\bar{1}}=\frac{1}{\sqrt{2}}(\ket{00}_{\bar{1}}+\ket{11}_{\bar{1}})$ is maximally entangled; and we get intermediate entangled states for $0<\theta<\pi/4$.~This state, $\rho_{\bar 1}(0)=\ket{\psi(0)}\bra{\psi(0)}_{\bar{1}}$, belongs to $\mcA_{\bar 1}^{\mathds1}$ for $\theta=0$, while it does not belong to any $\mcA_{\bar 1}^{X}$ (for any $X$) for $\theta\in(0,\frac{\pi}{4}]$.~The initial state in $R_2$-perspective $\ket{\psi(0)}_{\bar{2}} =V_{1\to 2}\ket{\psi(0)}_{\bar{1}}$ is separable for all $\theta$,
\be
\ket{\psi(0)}_{\bar{2}} =(\cos\theta\,\ket{0}_{1} + \sin\theta\,\ket{1}_{1})\otimes \ket{0}_S.
\ee
Under dynamical evolution, the states in the two perspectives at later times are,
\begin{align}
&\ket{\psi(t)}_{\bar{1}}= e^{-3it}\,\cos\theta\,\ket{00}_{\bar{1}} +e^{it}\, \sin\theta\,\ket{11}_{\bar{1}}\,,\label{ex11rho1bart}\\
&\ket{\psi(t)}_{\bar{2}}=(e^{-3it}\,\cos\theta\,\ket{0}_{1} +e^{it}\, \sin\theta\,\ket{1}_{1})\otimes \ket{0}_S\,.\label{ex11rho2bart}
\end{align}
Thus, we see that, only for $\theta=0$, the full dynamical trajectory appears the same in the two perspectives i.e.~$\rho_{\bar 1}(t) \in \mcA_{\bar 1}^{\mathds1}$ for all $t \geq 0$.~In this case, the initial state and the Hamiltonian belong to the same TPS-invariant subalgebra, and $\rho_{\bar 1}(t)$ remains therein at all later times, constituting an automorphism of $\mcA_{\onebar}^{\mathds1}$.
\end{example}

Finally, let us demonstrate an example in which the dynamical evolution of the global state $\rho_{\onebar}(t)$ oscillates between two distinct TPS-invariant subalgebras, so that the two QRFs will only periodically agree on the correlations between $S$ and ``other frame''.

\begin{example} \label{ex:inoutAUX}
Consider a three qubit $R_1R_2S$ system subject to the interacting Hamiltonian (cfr.~\eqref{HamBJZZ} in the main text)
\be\label{ex:HamBJZZ}
H_{\bar 1}=B(\sigma_2^z\otimes\mathds1_S+\mathds1_2\otimes\sigma_2^z)+2J\sigma_2^z\otimes\sigma_S^z\,,
\ee
which only for $B=2J$ resides in $\mcA_{\onebar}^{\mathds1}$ and otherwise not in any other TPS-invariant subalgebra (cfr.~Example~\ref{Ex:AUqubits}). Let the initial state be given by
\be\label{initialstateinout1}
\ket{\psi(0)}_{\bar 1}=\ket{x_+}_2\otimes\ket{x_+}_S\,,
\ee
where $\ket{x_+}=\frac{1}{\sqrt{2}}(\ket0+\ket1)$ denotes the eigenstate of $\sigma_x$ with eigenvalue $+1$.~Then the global state at time $t$ is
\be\label{timestateinout}
\ket{\psi(t)}_{\bar 1}=\frac{1}{2}\Bigl(e^{-2i(J+B)t}\ket{00}_{\bar 1}+e^{2iJt}\ket{01}_{\bar 1}+e^{2iJt}\ket{10}_{\bar 1}+e^{-2i(J-B)t}\ket{11}_{\bar 1}\Bigr).
\ee
Recalling that for the frame orientations $g_1,g_2=0$, $\mbfU_{\onebar}^{0,0}=\ket{0}\!\bra0_{2}\otimes\mathds1_S+\ket{1}\!\bra1_{2}\otimes\sigma_S^x \equiv \mbfU_{\onebar}$, we notice that $\mbfU_{\onebar}\ket{00}_{\bar 1}=\ket{00}_{\bar 1},\mbfU_{\onebar}\ket{01}_{\bar 1}=\ket{01}_{\bar 1}, \mbfU_{\onebar}\ket{10}_{\bar 1}=\ket{11}_{\bar 1}$, and $\mbfU_{\onebar}\ket{11}_{\bar 1}=\ket{10}_{\bar 1}$.~In particular we have that $\mbfU_{\onebar}(\ket{10}_{\bar 1}+\ket{11}_{\bar 1})=\ket{10}_{\bar 1}+\ket{11}_{\bar 1}$.~Therefore at any time $t_n =\frac{\pi n}{2J-B}$, $n\in\mathbb N \cup \{0\}$, the last two coefficients in \eqref{timestateinout} become equal and the state  $\ket{\psi(t_n)}\!\bra{\psi(t_n)}_{\bar 1}\in\mcA_{\bar 1}^{\mathds1}$ becomes $\mbfU_{\mathds1}$-invariant. The global dynamical trajectory thus enters the subalgebra $\mcA_{\bar 1}^{\mathds1}$ at certain instants of time while it lies outside of it at all times $t \neq t_n$.

This example further illustrates the scenario where the trajectory oscillates between two different invariant subalgebras under dynamical evolution.~Notice that for $X=\mathds1_2\otimes\sigma_S^x$, we have $X\mbfU_{\onebar}\ket{10}_{\bar 1}=\ket{10}_{\bar 1}$, $X\mbfU_{\onebar}\ket{11}_{\bar 1}=\ket{11}_{\bar 1}$, and $X\mbfU_{\onebar}(\ket{00}_{\bar 1}+\ket{01}_{\bar 1})=\ket{00}_{\bar 1}+\ket{01}_{\bar 1}$.~Therefore at any $\tilde{t}_n =\frac{\pi n}{2J+B}$, $n \in\mathbb N$, the first two coefficients in \eqref{timestateinout} become equal and $\ket{\psi(\tilde{t}_n)}\!\bra{\psi(\tilde{t}_n)}_{\bar 1}\in\mcA_{\bar 1}^{X}$ with $X=\mathds1_2\otimes\sigma_S^x$.~Thus under time evolution, the state \eqref{timestateinout} goes into and out of the subalgebras $\mcA_{\bar 1}^{\mathds1}$ and $\mcA_{\bar 1}^{X}$.~This is illustrated in Fig.~\ref{Fig:inandoutAUXs}.
\end{example}

\begin{figure}[!t]
\centering\includegraphics[scale=0.3]{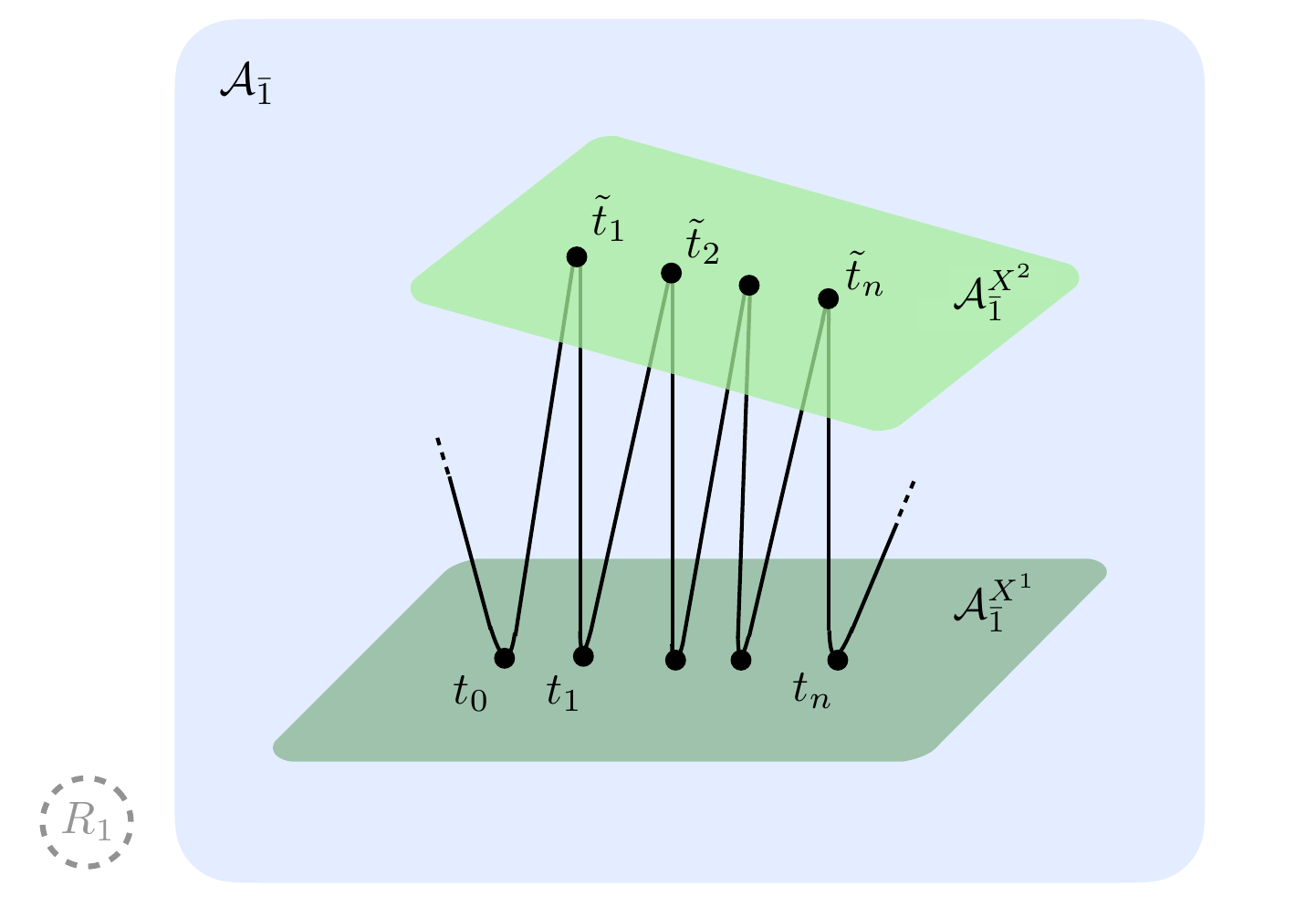}
\caption{Pictorial representation of the global dynamical trajectory in Example~\ref{ex:inoutAUX}.~The state $\rho_{\bar 1}(t)$ relative to frame $R_1$ enters two different invariant subalgebras, $\mcA_{\bar 1}^{X^1}$ and $\mcA_{\bar 1}^{X^2}$ with $X^1=\mathds1_{\bar 1}, X^2=\mathds1_2\otimes\sigma_S^x$, at different instants of time.~Thus also the state $\rho_{\bar 1}(t)$ exhibits the same correlation structure in the two perspectives at these instants of time (cfr.~Example~\ref{inandout:Sdyn} below).}
\label{Fig:inandoutAUXs}
\end{figure}
\begin{center}
\textbf{Subsystem dynamics}
\end{center}
\noindent
\textbf{Lemma \ref{lemma:closed2closed}.}~\textbf{(Closed-to-closed \& closed-to-open)}.
\emph{Let the subsystem $S$ be \emph{dynamically closed} relative to frame $R_i$ (in some orientation $g_i$), i.e.~the total Hamiltonian $H_{\ibar}=H_{j}\otimes\mathds1_S+\mathds1_j\otimes H_{S}$ relative to $R_i$ is non-interacting.~Then, the subsystem $S$ is \emph{dynamically closed} also relative to frame $R_j$ (in any orientation $g_j$) if and only if $H_{\ibar}$ is of the form
\be\label{eq:invHclosed}
H_{\ibar}=H_j^{\mbd}\otimes\mathds1_S+\mathds1_j\otimes H_S^{\mbt}\,.
\ee
In other words, $S$ is \emph{dynamically open} relative to frame $R_j$ if and only if at least one of the following holds
\be
H_{j}^{\mbd\perp}\otimes\mathds1_S\neq0\,,\quad\mathds1_j\otimes H_{S}^{\mbt\perp}\neq0\,.
\ee
}

\begin{proof}
$(\Leftarrow)$ By Theorems~\ref{thm:Slocals}~and~\ref{thm:jlocals} in Sec.~\ref{Sec:Uinv&break}, the QRF transform of \eqref{eq:invHclosed} yields (for $i\neq j$) $H_{\jbar}=H_i^{\mbd}\otimes\mathds1_S+\mathds1_i \otimes H_S^{\mbt}$ with $H_i^{\mbd}\otimes\mathds1_S=\hat{\mathcal I}_{i\to j}(\hat{P}_j^{g_i,g_j}(H_j^{\mbd})\otimes\mathds1_S)$.~Therfore, for a total Hamiltonian of the form \eqref{eq:invHclosed}, the dynamics of the subsystem $S$ is unitary and generated by $H_S^{\mbt}$ in both perspectives (for arbitrary $g_i,g_j\in\mathcal{G}$).

$(\Rightarrow)$ Consider a generic Hamiltonian $H_{\ibar}$ in $R_i$-perspective for which the subsystem $S$ is closed in that perspective, $H_{\ibar}=H_j\otimes\mathds1_S+\mathds1_j\otimes H_S$.~Then, in $R_j$-perspective we have
\be\label{qrfHam}
H_{\jbar}=H_i^{\mbd}\otimes\mathds1_S+\mathds1_i \otimes H_S^{\mbt}+\hat{V}_{i\to j}^{g_i,g_j}(H_i^{\mbd\perp}\otimes\mathds1_S+\mathds1_i \otimes H_S^{\mbt\perp})\;,
\ee
where we separated the $\Pid$-,~$\Pit$-invariant and non-invariant $\Pid^{\perp}$-,~$\Pit^{\perp}$-terms by means of the decomposition \eqref{fulldecomp}.~By Theorems~\ref{thm:Slocals}~and~\ref{thm:jlocals}, the last term on the RHS of \eqref{qrfHam} is necessarily non-local, i.e.\ an interaction (if non-vanishing).~Thus, in order for the subsystem $S$ to be closed also in the new perspective we must have $\hat{V}_{i\to j}^{g_i,g_j}(H_i^{\mbd\perp}\otimes\mathds1_S+\mathds1_i \otimes H_S^{\mbt\perp})=0$.~As the QRF transformation is unitary, this implies the result.
\end{proof}

We close this appendix with three examples illustrating the observations of Sec.~\ref{Sec:dynrhoSandentropies} about QRF (in)dependence of subsystem correlations over time for pure and mixed global states.~We begin with an example where entropies are constant along the dynamical trajectory.

\begin{example}\label{Ex:subsystemdyn}
We continue with Example~\ref{Ex:HrhoAUX}.~Recall that the global state relative to $R_1$ evolves according to \eqref{ex11rho1bart}, specifically that it is separable for $\theta=0$ and entangled for $\theta\in(0,\frac{\pi}{4}]$.~The subsystem state is given by $\rho_{S|R_1}(t)=\cos^2\theta\,\ket{0}\!\bra{0}_S+\sin^2\theta\, \ket{1}\!\bra{1}_S$
and the entropies by $S_\alpha[\rho_{S|R_1}(t)]=\frac{1}{1-\alpha}\log(\cos^{2\alpha}\theta+\sin^{2\alpha}\theta)$.~The global state in $R_2$-perspective, given by \eqref{ex11rho2bart}, is separable for all $t$ and $\theta$ with subsystem state $\rho_{S|R_2}(t)=\ket{0}\!\bra{0}_S$, so that $S_\alpha[\rho_{S|R_2}(t)]=0$ at any time $t$.~Thus, the $S$-subsystem density operator is the same in both perspectives only for $\theta=0$.~As discussed in Example~\ref{Ex:HrhoAUX}, this is the case wherein the trajectory is an authomorphism of  $\mcA^{\mathds1}_{\bar 1}$.
\end{example}

The next example illustrates a situation in which the correlations between $S$ and ``the other frame'' depend on time and where the trajectory oscillates between two distinct TPS-invariant subalgebras.~The two QRFs thereby agree only periodically on the entropies and disagree otherwise.

\begin{example}\label{inandout:Sdyn}
We continue with Example~\ref{ex:inoutAUX}.~The global state evolution in $R_1$-perspective given by \eqref{timestateinout} leads to
\be\label{rhoS1inout}
\rho_{S|R_1}(t)=\frac{1}{2}\Bigl[\ket0\!\bra0_S+\ket1\!\bra1_S+\cos(4Jt)\left(e^{-2iBt}\ket0\!\bra1_S+e^{2iBt}\ket1\!\bra0_S\right)\Bigr].
\ee
The $\alpha$-R\'enyi entropies read $S_\alpha[\rho_{S|R_1}(t)]=\frac{1}{1-\alpha}\log\left(\lambda_{+|R_1}(t)^{\alpha}+\lambda_{-|R_1}(t)^{\alpha}\right)$, where $\lambda_{\pm|R_1}(t)=\frac{1}{2}(1\pm|\cos(4Jt)|)$ are the eigenvalues of \eqref{rhoS1inout}.~On the other hand, in $R_2$-perspective, the reduced $S$ state is 
\be\label{rhoS2inout}
\rho_{S|R_2}(t)=\frac{1}{2}\Bigl[\ket0\!\bra0_S+\ket1\!\bra1_S+\cos(2Bt)\left(e^{-4iJt}\ket0\!\bra1_S+e^{4iJt}\ket1\!\bra0_S\right)\Bigr]
\ee
with R\'enyi entropies $S_\alpha[\rho_{S|R_2}(t)]=\frac{1}{1-\alpha}\log\left(\lambda_{+|R_2}(t)^{\alpha}+\lambda_{-|R_2}(t)^{\alpha}\right)$, where now $\lambda_{\pm|R_2}(t)=\frac{1}{2}(1\pm|\cos(2Bt)|)$ are the eigenvalues of \eqref{rhoS2inout}.

At any time $t_n=\frac{\pi n}{2J-B}, \, n\in\mathbb N\cup\{0\}$, we have that $\cos(4Jt_n)=\cos(2Bt_n)$ and $\sin(4Jt_n)=\sin(2Bt_n)$.~Therefore, at these instants of time, the reduced states coincide exactly $\rho_{S|R_2}(t_n)=\rho_{S|R_1}(t_n)$, and $\rho_{\bar 1}(t_n)\in\mcA_{\onebar}^{\mathds1}$.~Similarly, at any time $\tilde t_n=\frac{\pi n}{2J+B}, \, n\in\mathbb N$, we instead have $\cos(4Jt_n)=\cos(2Bt_n)$ and $\sin(4Jt_n)=-\sin(2Bt_n)$.~Thus, at these other instants of time, the reduced states are unitarily related as $\rho_{S|R_2}(\tilde t_n)=\sigma_S^x\rho_{S|R_1}(\tilde t_n)\sigma_S^x$, and $\rho_{\bar 1}(\tilde t_n)\in\mcA_{\onebar}^{X}$ with $X=\mathds1_2\otimes\sigma_S^x$.~Hence, for both $t=t_n$ and $t=\tilde t_n$ we have that $S_\alpha[\rho_{S|R_2}(t)]=S_\alpha[\rho_{S|R_1}(t)]$ for all $\alpha$.~At any other time $t\notin \{t_n,\tilde t_n\}_{n\in\mathbb N}$, the global state $\rho_{\bar 1}(t)$ lies outside any of the invariant subalgebras (cfr.~Fig.~\ref{Fig:inandoutAUXs}) and the entropies between $S$ and the other frame will differ in the two perspectives. 
\end{example}

Finally, we also discuss an example for mixed global states in which the two QRFs also only agree periodically on the entropies and disagree otherwise.

\begin{example}\label{ex:mixedstatedynamics}~Consider the total system $R_1R_2S$ to be a composite $N$ qubit system ($\mathcal G=\mathbb Z_2$), two of which, say particles $1$ and $2$, serve as our internal reference frames $R_1$ and $R_2$, and the subsystem $S$ consists of the remaining $(N-2)$ particles.~Suppose the initial state of $R_2S$ relative to $R_1$ is given by a rank-2 mixed state $\rho_{\bar1}(0)=p\rho^{(1)}_{\bar1}(0)+q\rho^{(2)}_{\bar1}(0)$ with $p,q>0$, $p+q=1$, and $\rho_{\bar1}^{(A)}(0)=\ket{\psi^{(A)}(0)}\!\bra{\psi^{(A)}(0)}_{\bar1}$ with
\begin{align}
    \ket{\psi^{(1)}(0)}_{\bar1}&=\ket0_2\otimes\ket{W}_{N-2}\,,\\
    \ket{\psi^{(2)}(0)}_{\bar1}&=(a\ket0_2+b\ket1_2)\otimes(\ket{0\dots0}_{N-2}-\ket{1\dots1}_{N-2})\,.
\end{align}
Here, $\ket{W}_{N-2}$ is the generalised W-state, and $a=\frac{1}{\sqrt{8(2-\sqrt{3})}}$, $b=\sqrt{\frac{2-\sqrt{3}}{8}}$.~Restricting, as in previous examples, to $\mbfU_{\onebar}^{g_1=0,g_2=0}=\mbfU_{\onebar}=\textsf{CNOT}$, we have $\rho^{(1)}_{\bar1}(0)\in\mcA_{\bar1}^{\mathds1}$ and $\rho^{(2)}_{\bar1}(0)\in\mcA_{\bar1}^{X^{(2)}}$ with $X^{(2)}=Y_2\otimes\mathds1^{\otimes(N-2)}$, $Y_2=\frac{\sqrt{3}}{2}\mathds1_2+\frac{1}{2}i\sigma_2^y$.

Consider now the $(N-1)$-particle Hamiltonian in $R_1$-perspective to be an interacting Hamiltonian with nearest neighbour $ZZ$-interactions, namely
\be
H_{\bar1}=\sum_{k=2}^{N-1}\sigma^z_{k}\otimes\sigma^z_{k+1}\otimes\mathds1^{\otimes(N-3)}=H_{2S}+\mathds1_2\otimes H_S\;,
\ee
where $H_{2S}=\sigma_2^z\otimes\sigma^z_{3}\otimes\mathds1^{\otimes(N-3)}$ and $H_S=\sum_{k=3}^{N-2}\sigma^z_{k}\otimes\sigma^z_{k+1}\otimes\mathds1^{\otimes(N-4)}$  contains the internal interactions in $S$.~The time evolved pure states in the mixture read
\begin{align}
    \ket{\psi^{(1)}(t)}_{\bar1}&=\ket0_2\otimes e^{it}\bigl(\cos(t)\ket{W}_{N-2}-i\sin(t)(\sigma^z_{3}\otimes\mathds1^{\otimes(N-3)})\ket{W}_{N-2}\bigr)\,,\label{psi1oft}\\
    \ket{\psi^{(2)}(t)}_{\bar1}&=(a\ket0_2+be^{2it}\ket1_2)\otimes\ket{0\dots0}_{N-2}-(ae^{2it}\ket0_2+b\ket1_2)\otimes\ket{1\dots1}_{N-2}\,.\label{psi2oft}
\end{align}
The state \eqref{psi1oft} still belongs to $\mcA_{\bar1}^{\mathds1}$ at all $t$, but \eqref{psi2oft} does not lie in any subalgebra $\mcA_{\bar1}^{X}$ except at $t=t_{n}=\pi n$, $n\in\mathbb N\cup\{0\}$, and $t=\tilde{t}_n=\frac{\pi}{2}(2n-1)$, $n\in\mathbb N$, when it belongs to $\mcA_{\bar1}^{X^{(2)}}$ and $\mcA_{\bar1}^{\mathds1}$, respectively.~At these instants of time, the state $\rho_{\bar1}(t)$ meets the conditions of Cor.~\ref{claim:rhoSZSmixed}.~While the initial separable mixed state gets entangled along its evolution, it returns to being separable precisely at those instants of time.~Lastly, the state in $R_2$-perspective is given by $\rho_{\bar2}(t)=p\rho_{\bar2}^{(1)}(t)+q\rho_{\bar2}^{(2)}(t)$, with
\begin{align}
    \ket{\psi^{(1)}(t)}_{\bar2}&=\ket0_1\otimes e^{it}\bigl(\cos(t)\ket{W}_{N-2}-i\sin(t)(\sigma^z_{3}\otimes\mathds1^{\otimes(N-3)})\ket{W}_{N-2}\bigr)\,,\\
    \ket{\psi^{(2)}(t)}_{\bar2}&=(a\ket0_1-b\ket1_1)\otimes(\ket{0\dots0}_{N-2}-e^{2it}\ket{1\dots1}_{N-2})\,.\label{psi2oftR2}
\end{align}
i.e., it is separable at all $t$.\footnote{This is due to the fact that no interactions between $S$ and $R_1$ are present in the Hamiltonian $H_{\bar 2}=\mathds1_1\otimes\sigma_{z,3}\otimes\mathds1^{\otimes(N-3)}+\mathds1_1\otimes H_S$ of $R_2$-perspective.~$H_{\bar2}$ satisfies the second condition of \eqref{cl2opconditions} in the main text and the subsystem $S$ is thus open in $R_1$- and closed in $R_2$-perspective.}~Thus, from \eqref{psi1oft}-\eqref{psi2oftR2}, we see that  $\rho_{S|R_1}(t)=\rho_{S|R_2}(t)$ and $S_{\alpha}[\rho_{S|R_1}(t)]=S_{\alpha}[\rho_{S|R_2}(t)]$ only at $t=t_n,\tilde{t}_n$.
\end{example}

\subsection{Proofs: Lemmas \ref{Lem:avgs} and \ref{Lemma:NegBProduct}} \label{app:thermo}

\noindent
\textbf{Lemma \ref{Lem:avgs}.}~\emph{Let $A_S \in \mcA_{S}$ be some $S$-local operator.~Then, for \emph{any} global state $\rho_{\ibar} \in \mathcal{S}(\mcH_{\ibar})$,
\be \label{eqavgspr}
\Tr(\rho_{S|R_j} A_S) = \Tr(\rho_{S|R_i} A_S)\,,
\ee
if and only if ${[A_S,U^g_S] = 0, \, \forall g\in \mcG}$.~Here $\rho_{S|R_j} = \Tr_i(\rho_{\jbar})$, $\rho_{S|R_i} = \Tr_j(\rho_{\ibar})$,~and $\rho_{\jbar} = \hat{V}_{i \to j}^{g_i,g_j}(\rho_{\ibar})$ for some frame orientations $g_i,g_j \in \mcG$.}

\begin{proof} By definition of partial trace and cyclic property, we have
\be
\Tr_S(\rho_{S|R_j}A_S) =\Tr_{\jbar}(\hat{V}^{g_i,g_j}_{i\to j}(\rho_{\ibar})\mathds1_i \otimes A_S) =\Tr_{\jbar}\bigl(\hat{\mathcal I}_{i\to j}(\hat{\mbfU}_{\ibar}^{g_i,g_j}(\rho_{\ibar}))\hat{\mathcal I}_{i\to j}(\mathds1_j \otimes A_S)\bigr) =\Tr_{\ibar}\bigl(\rho_{\ibar}\,\bigl(\hat{\mbfU}_{\ibar}^{g_i,g_j}\bigr)^\dagger(\mathds1_j \otimes A_S)\bigr)\,,\nonumber
\ee
where we used the relation \eqref{IUVrelation} between $V^{g_i,g_j}_{i\to j}$, $\mbfU_{\ibar}^{g_i,g_j}$, and the unitary operator $\mathcal I_{i\to j}$ given in \eqref{eq:Ijtoi}, and the fact that $\Tr_{\jbar}(\mathcal I_{i\to j}\bullet\mathcal I_{j\to i})=\Tr_{\ibar}(\bullet)$. Thus, \eqref{eqavgspr} is equivalent to ${\Tr_{\ibar}(\rho_{\ibar}(\hat{\mbfU}_{\ibar}^{g_i,g_j})^\dagger(\mathds1_j \otimes A_S))=\Tr_S(\rho_{S|R_i}A_S)=\Tr_{\ibar}(\rho_{\ibar}\,\mathds1_j \otimes A_S)}$. This holds true for \emph{any} $\rho_{\ibar}$ if and only if $(\hat{\mbfU}_{\ibar}^{g_i,g_j})^{\dagger}(\mathds1_j \otimes A_S) = \mathds1_j\otimes A_S$, which, according to Theorem~{\ref{thm:Slocals}}, is equivalent to $[A_S,U_S^g]=0$, $\forall\,g\in \mcG$.
\end{proof}

\noindent
\textbf{Lemma \ref{Lemma:NegBProduct}.}~\emph{Let the system relative to $R_i$ in some orientation $g_i\in \mcG$ be prepared in a global state $\rho_{\ibar} \in \mathcal{S}(\mcH_{\ibar})$ of the form
\be\label{eq:i-separable-gibbs}
\rho_{\ibar} = \rho_{j} \otimes \frac{1}{Z}e^{-\beta H_{S|R_i}}
\ee
and the Hamiltonian $H_{S|R_i}$ be such that,
\begin{align}
\exists \, \mcG_{\rm a} \subsetneq \mcG \;\;\; &\;{\rm s.t.} \;\, \{U_S^h, H_{S|R_i}\} = 0 \;\; \forall h\in \mcG_{\rm a}, \label{eq:negcond1} \\
&{\rm and} \;\;\; [U_S^g, H_{S|R_i}] \,= 0 \;\; \forall g \in \mcG_{\rm c}  \label{eq:negcond2}
\end{align}
where, $\mcG_{\rm a}$ is a strict\footnote{This is because there always exists one element of $\mcG$, the identity, which only ever satisfies \eqref{negcond2} and thus always belongs to $\mcG_{\rm c}$.} subset of $\mcG$, $\{\cdot,\cdot\}$ denotes an anti-commutator, and $\mcG_{\rm c} = \mcG \backslash \mcG_{\rm a}$.~Then, the subsystem state relative to $R_j$ in some orientation $g_j \in \mcG$ is
\be 
\rho_{S| R_j} = q_{\rm a}^{g_j} \frac{1}{Z} \, e^{+\beta H_{S|R_i}}  + (1 - q_{\rm a}^{g_j}) \frac{1}{Z} \, e^{-\beta H_{S|R_i}} 
\ee
where, $q_{\rm a}^{g_j} = \sum \limits_{h \in \mcG_{\rm a}} \bra{g_j h^{-1}}\rho_{j}\ket{g_j h^{-1}}_j$.}

\begin{proof} Given $\rho_{\ibar}$ in \eqref{eq:i-separable-gibbs} and the QRF transformation $V_{i\to j}^{g_i,g_j}$, for any $g_i,g_j \in \mcG$, it is easy to compute that the $S$-subsystem state in $R_j$-perspective is given by, $\rho_{S|R_j} = \sum_{g\in\mcG} \bra{g_jg^{-1}}\rho_{j}\ket{g_jg^{-1}}_j\frac{1}{Z}\, e^{-\beta U_S^g H_{S|R_i} U_S^{g^{-1}}}$. Let us split this sum according to elements $h \in \mcG_{\rm a}$ satisfying the anti-commutation property \eqref{eq:negcond1}, and elements $k \in \mcG_{\rm c}$ satisfying the commutation property \eqref{eq:negcond2}, to get 
\be
\rho_{S|R_j}= \sum_{h \in \mcG_{\rm a}} \bra{g_jh^{-1}}\rho_{j}\ket{g_jh^{-1}}_j \frac{1}{Z}e^{-\beta U_S^h H_{S|R_i} U^{h^{-1}}_S} + \sum_{k \in \mcG_{\rm c}}  \bra{g_jk^{-1}}\rho_{j}\ket{g_jk^{-1}}_j \frac{1}{Z} e^{-\beta U^{k}_S H_{S|R_i} U^{k^{-1}}_S} \,.
\ee
The result follows immediately by using conditions \eqref{eq:negcond1}-\eqref{eq:negcond2} for both exponents, and
$\sum \limits_{g\in \mcG} \bra{g_jg^{-1}}\rho_{j}\ket{g_jg^{-1}}_j = 1$ along with $\mcG = \mcG_{\rm a} \cup \mcG_{\rm c}$.
\end{proof}

\subsection{Perspectival energy balance and correlations as heat source}\label{App:TDroleofcorrelations}

In this appendix, complementing the exposition of Sec.~\ref{Sec:TDsetup}, we discuss in further detail the energy balance of a composite, interacting, quantum thermodynamical system as described relative to a given internal reference frame, and highlight the role of correlations for the energetics of the system. In particular, we shall detail the various contributions to the heat flowing into the subsystems and the work they perform onto one another in a given perspective, and illustrate that dynamical correlations can act as an energy source and contribute to the heat exchange between $S$ and ``the other frame'', as expected from studies in quantum thermodynamics. 

To begin, let us notice that from \eqref{totinten} and the definitions \eqref{convenWQ1},~\eqref{eq:QWdef} (and their analogues for the ``other frame'' subsystem), energy conservation for the total isolated system in the $R_i$-perspective can be written as
\begin{align}
0=\dot E_{\ibar}=\dot{E}_{j}+\dot{E}_{S|R_i}+\dot{E}_{jS}&=\dot q_{j}+\dot q_{S|R_i}+\dot w_{j}+\dot w_{S|R_i}+\dot{E}_{jS}\nonumber\\
&=\dot Q_{j}+\dot Q_{S|R_i}+\dot W_{j}+\dot W_{S|R_i}+\dot{E}_{jS}.\label{totEdotqwQW}
\end{align}
The specific forms of the various quantities appearing on the RHS of \eqref{totEdotqwQW} (and their magnitudes) depend on the prescription for defining quantum heat and work and the effective Hamiltonians therein.~To understand better the various contributions to \eqref{totEdotqwQW} and make the role of correlations more transparent, it is thus instructive to look at a specific prescription.~Consider some total Hamiltonian $H_{\ibar}=H_{j}\otimes\mathds1_S+\mathds1_j\otimes H_{S|R_i}+H_{jS}$, relative to $R_i$.~Following the prescription of \cite{rezakhani}, along with \eqref{hibareff} and the surrounding discussion, let us write $H_{\ibar}$ equivalently as
\begin{align}
H_{\ibar}
&=\left(H_j+\tilde{H}_j(t)-\alpha_j\tr\left[H_{jS}(\rho_j(t)\otimes\rho_{S|R_i}(t))\right]\mathds1_j\right)\otimes\mathds1_S\nonumber\\
&+\mathds1_j\otimes\Bigl(H_{S|R_i}+\tilde{H}_{S|R_i}(t)-\alpha_S\tr\left[H_{jS}(\rho_j(t)\otimes\rho_{S|R_i}(t))\right]\mathds1_S\Bigr)\nonumber\\
&+H_{jS}-\tilde{H}_j(t)\otimes\mathds1_S-\mathds1_j\otimes\tilde{H}_{S|R_i}(t)+\tr\left[H_{jS}(\rho_j(t)\otimes\rho_{S|R_i}(t))\right]\mathds1_{\ibar} \label{eq:Hibaralphas} \\
&=: H_{j}^{\eff}(t) \otimes \mathds1_S + \mathds1_j \otimes H_{S|R_i}^{\eff}(t) + H_{jS}^{\eff}(t) \,,\label{heffpr}
\end{align}
where $\tilde{H}_{S|R_i}(t) = \Tr_j\!\big(H_{jS} (\rho_{j}(t) \otimes \mathds1_S) \big)$ as in \eqref{eq:HStilde}, an analogous definition holds for $\tilde{H}_j(t)$, and $\alpha_S$, $\alpha_j$ are real, scalar parameters s.t.~$\alpha_j + \alpha_S = 1$, to be specified depending on the physical conditions of the (non-equilibrium) thermodynamic system under consideration.~Comparing \eqref{eq:Hibaralphas} with \eqref{heffpr}, we see that in our setup of Sec.~\ref{Sec:TDsetup}, the prescription of \cite{rezakhani} consists of the following identifications (cfr.~\eqref{effhs}), 
\be\label{hjrez}
\begin{aligned}
\mathds{h}_{S|R_i}(t)&=\tilde{H}_{S|R_i}(t)-\alpha_S\tr\left(H_{jS}(\rho_j(t)\otimes\rho_{S|R_i}(t))\right)\mathds1_S\;,\\
\mathds{h}_{j}(t)&=\tilde{H}_{j}(t)-\alpha_j\tr\left(H_{jS}(\rho_j(t)\otimes\rho_{S|R_i}(t))\right)\mathds1_j\;.
\end{aligned}
\ee
The main of purpose of rewriting the total Hamiltonian as \eqref{hibareff}, like in \eqref{eq:Hibaralphas} above following \cite{rezakhani}, is 
to divide the total internal energy of a composite system into local parts ($H_{S|R_i}^{\eff}$ and $H_j^{\eff}$), which are associated to and accessible by each subsystem, and a non-local part ($H_{jS}^{\eff}$), which is locally inaccessible by the subsystems\footnote{\label{fn:rez}The local inaccessibility requirement in \cite{rezakhani} implies that the mean value of the effective interactions in the separable state $\rho_{j}(t)\otimes\rho_{S}(t)$ of the corresponding uncoupled system vanishes.} and can be attributed only to the composite system as a whole.

With the above choice of $H_{S|R_i}^{\eff}$, it is easy to check that, $\dot{E}^*_{S|R_i}(t) = -i \Tr\!\big(H_{S|R_i}^{\eff}(t)\big[H_{S|R_i} + \tilde{H}_{S|R_i}(t), \rho_{S|R_i}(t) \big]\big)=0$, cfr.~Eq.~\eqref{star}.~Thus in this prescription for effective Hamiltonians, the expressions \eqref{eq:QWdef} reduce to those in \eqref{convenWQ1}-\eqref{convenWQ2}, and there is no discrepancy in the prescription for heat and work.~Explicitly, we get
\begin{align}
    \dot{W}_{S|R_i}= \dot{w}_{S|R_i}&=\Tr(\dot{\mathds h}_{S|R_i}(t)\rho_{S|R_i}(t))\nonumber\\
    &=\Tr\left(H_{jS}(\dot{\rho}_{j}(t)\otimes\rho_{S|R_i}(t))\right)-\alpha_S\tr\left(H_{jS}(\dot{\rho}_j(t)\otimes\rho_{S|R_i}(t)+\rho_{j}(t)\otimes\dot{\rho}_{S|R_i}(t))\right)\label{explicitwdot}\\
    \dot{Q}_{S|R_i}= \dot{q}_{S|R_i}&=\Tr\left((H_{S|R_i}+\mathds h_{S|R_i}(t))\dot{\rho}_{S|R_i}(t)\right)\nonumber\\
    &=\Tr\left((\mathds1_j\otimes H_{S|R_i})\dot{\rho}_{\ibar}\right)+\Tr\left(H_{jS}(\rho_{j}(t)\otimes\dot{\rho}_{S|R_i}(t))\right) \label{explicitqdot}
\end{align}
by using properties of partial traces, Eqs.~\eqref{heffpr}-\eqref{hjrez}, and $\Tr(\dot{\rho})=0$ for any state.~Similar results hold for $R_j$, with $\dot{W}_{j}= \dot{w}_{j}$ and $\dot{Q}_{j}= \dot{q}_{j}$.~Then, inserting \eqref{explicitwdot}-\eqref{explicitqdot} and the analogous quantities for $R_j$, into \eqref{totEdotqwQW} we get\footnote{Alternatively, we can get \eqref{eq:EjSdotomegajS} by direct calculation, by simplifying $\dot{E}_{jS}=\Tr(\dot{H}_{jS}^{\eff}\rho_{\ibar})+\Tr(H_{jS}^{\eff}\dot{\rho_{\ibar}})$, using $H_{jS}^{\eff}$ defined in \eqref{heffpr}.}
\be\label{eq:EjSdotomegajS}
\dot{E}_{jS}=\Tr(H_{jS}\,\dot{\omega}_{jS})\,,
\ee
using $\alpha_j+\alpha_S=1$, Eq.~\eqref{rhocorr}, $H_j\otimes\mathds1_S+\mathds1_j\otimes H_S=H_{\ibar}-H_{jS}$, and $\Tr_{\ibar}(H_{\ibar}\dot{\rho}_{\ibar})=\dot{E}_{\ibar}=0$.\footnote{We recall that in the present work we consider time-independent total Hamiltonians $H_{\ibar}$.} Furthermore, from \eqref{explicitwdot} and the analogous expression for the work onto $R_j$, a similar computation yields
\be\label{eq:workbalance}
\dot{w}_{j}+\dot{w}_{S|R_i}=0\,,
\ee
that is, in $R_i$-perspective, the work performed by $R_j$ onto $S$ is equal and opposite to the work performed by $S$ onto $R_j$.~Finally, combining \eqref{eq:EjSdotomegajS} and \eqref{eq:workbalance} with the total energy conservation \eqref{totEdotqwQW}, we have
\be\label{eq:heatbalance}
\dot{q}_{S|R_i}=-\dot{q}_{j}-\Tr(H_{jS}\,\dot{\omega}_{jS})\,,
\ee
so that the heat $\dot{q}_{S|R_i}$ ($=\dot{Q}_{S|R_i}$) flowing into $S$ comes not only from the heat $-\dot{q}_{j}$ ($=-\dot{Q}_{j}$)  flowing out of ``the other frame'', but also from energy stored in correlations.

The above discussion illustrates that there are three key players relevant for the energetics of a composite, quantum thermodynamical system relative to the chosen frame: the subsystems $S$  and ``the other frame'', and the correlations between them in an interacting dynamics. In quantum thermodynamical processes, the internal energy of one of the subsystems, say $S$ relative to $R_i$, can change as a result of the work done on $S$ by $R_j$, heat exchange with $R_j$, and energy exchange via non-trivial dynamical couplings between $R_j$ and $S$ (due to the interactions $H_{jS}$ in the total Hamiltonian and the resulting correlations $\omega_{jS}(t)$ in the global $R_jS$ state $\rho_{\ibar}(t)$).

As a final remark, we see that in a weak coupling limit, $H_{jS}\to 0$, we have $\dot q_{S|R_i}=-\dot q_{j}$, $\dot w_{S|R_i}=-\dot w_{j}$, $\dot{E}_{jS}=0$.~That is, as expected in a weak coupling limit, the correlations generated by the interactions are negligible and do not act as an additional source in the energy balance of the system. 

\subsection{Thermodynamic processes: proofs and further discussion}\label{App:TDrates}

Here, we complement the content of Secs.~\ref{Sec:TDproc1} and \ref{Sec:TDproc2} with further discussion on the effective Hamiltonians and proofs of the main statements.

First, let us illustrate how different prescriptions for the effective Hamiltonians might give different answers for the QRF-dependence of the energetics and thermodynamic quantities of the system when assumptions are made on the Hamiltonian but not on the state under consideration.~For concreteness, let us focus on the prescriptions of \cite{rezakhani} and \cite{weimer,Hossein_Nejad_2015}, and consider a class of TPS-invariant Hamiltonians given by $H_{\ibar}=\Pid\Pit(H_{\ibar})$, with $\Pit$ and $\Pid$ as defined in \eqref{pitdef} and \eqref{piddef}.~As we shall see below, for such a Hamiltonian, the prescription of \cite{rezakhani} yields the equality of subsystem energies, heat, and work exchanges in $R_i$- and $R_j$-perspective at all times and for all states.~This is not the case for the prescription of \cite{weimer,Hossein_Nejad_2015}.~The following claim collects the results obtained using the former prescription.

\begin{claim}\label{claim:pidpitprescription}
    Let $H_{\ibar}=\Pid\Pit(H_{\ibar})$.~Then, using the prescription \eqref{heffpr} for the effective Hamiltonians as in \cite{rezakhani}, the individual contributions to the energy of the total system, the heat flow into and work done onto ``the other frame'' and $S$ subsystems are equal in $R_i$- and $R_j$-perspectives
    \begin{align}
    E_{S|R_j}(t)=E_{S|R_i}(t)\quad,\quad E_{i}(t)&=E_{j}(t)\quad,\quad E_{iS}(t)=E_{jS}(t)\,,\label{eq:pidpitenergy}\\
    \dot{q}_{S|R_j}=\dot{q}_{S|R_i}\quad,\quad \dot{w}_{S|R_j}&=\dot{w}_{S|R_i}\quad,\quad\dot{E}^*_{S|R_j}=\dot{E}^*_{S|R_i}\,,\label{eq:Sqwestar}\\
    \dot{q}_{i}=\dot{q}_{j} \quad,\quad \dot{w}_{i}&=\dot{w}_{j} \quad,\quad \dot{E}^*_{i}=\dot{E}^*_j\,,\label{eq:frameqwestar}
    \end{align}
    at any time $t$ and for any given global state $\rho_{\ibar}$.~The quantities in \eqref{eq:pidpitenergy}-\eqref{eq:frameqwestar} are as defined in Sec.~\ref{Sec:TDsetup}.
\end{claim}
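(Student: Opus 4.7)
The plan is to exploit the structural constraints imposed by $H_{\ibar}=\Pid\Pit(H_{\ibar})$, together with the operator-level results of Sec.~\ref{Sec:Uinv&break}, to show that under the prescription \eqref{heffpr} the effective Hamiltonians coincide \emph{exactly} (not merely up to local unitaries) in the two perspectives, and that all traces entering the energy, heat, and work balance are identical for arbitrary $\rho_{\ibar}$ at every $t$. First I would unpack the hypothesis via Theorems~\ref{thm:Slocals} and~\ref{thm:jlocals}: the three pieces of $H_{\ibar}$ acquire the form $H_j=\sum_g\lambda_g\ket{g}\!\bra{g}_j$, $[H_{S|R_i},U_S^g]=0$ for all $g$, and $H_{jS}=\sum_g\ket{g}\!\bra{g}_j\otimes h_g$ with each $h_g$ translation-invariant. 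Feeding this into \eqref{gla1}--\eqref{intsourcelocal2} and noting that $(\Pid\Pit^\perp+\Pid^\perp)(H_{jS})=0$, so that $\lambda_i=\lambda_{S|R_j}=\lambda_{iS}=0$, I obtain the $R_j$-perspective counterparts: $H_{S|R_j}=H_{S|R_i}$ as operators on $\mcH_S$, while $H_i=\sum_g\lambda_g\ket{g_ig_jg^{-1}}\!\bra{g_ig_jg^{-1}}_i$ and $H_{iS}=\sum_g\ket{g_ig_jg^{-1}}\!\bra{g_ig_jg^{-1}}_i\otimes h_g$ are the parity-swap relatives of $H_j$ and $H_{jS}$.

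Next I would establish two state-level identities, valid for \emph{any} $\rho_{\ibar}$: (i) a direct computation from \eqref{eq:Vitoj} yields $\bra{g_ig_jg^{-1}}\rho_i\ket{g_ig_jg^{-1}}_i=\bra{g}\rho_j\ket{g}_j$ for every $g\in\mcG$; (ii) Lemma~\ref{Lem:avgs} gives $\Tr(A_S\rho_{S|R_j})=\Tr(A_S\rho_{S|R_i})$ for every translation-invariant $A_S$. Applying (i) term-by-term to the identity $\tilde H_{S|R_j}(t)=\sum_g\bra{g_ig_jg^{-1}}\rho_i\ket{g_ig_jg^{-1}}_i\,h_g$ yields $\tilde H_{S|R_j}(t)=\tilde H_{S|R_i}(t)$, while combining (i) and (ii) equates the scalar shifts $\Tr(H_{iS}(\rho_i\otimes\rho_{S|R_j}))=\Tr(H_{jS}(\rho_j\otimes\rho_{S|R_i}))$. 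With $\alpha_S$ chosen common to both perspectives, this gives $H_{S|R_j}^{\eff}(t)=H_{S|R_i}^{\eff}(t)$ as operators on $\mcH_S$, and the analogous argument identifies $H_i^{\eff}(t)$ as the parity-swap relative of $H_j^{\eff}(t)$.

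The claimed equalities then follow by direct evaluation. Translation invariance of $H_{S|R_i}^{\eff}(t)$ together with (ii) and $H_{S|R_j}^{\eff}=H_{S|R_i}^{\eff}$ gives $E_{S|R_j}=\Tr(H_{S|R_i}^{\eff}\rho_{S|R_j})=\Tr(H_{S|R_i}^{\eff}\rho_{S|R_i})=E_{S|R_i}$; the same template applied to $\dot H_{S|R_i}^{\eff}$ and to the time-derivative of (ii) delivers $\dot w_{S|R_j}=\dot w_{S|R_i}$ and $\dot q_{S|R_j}=\dot q_{S|R_i}$. The ``other frame'' equalities emerge by the same pattern using (i) and the parity-swap form of $H_i^{\eff}$, and $E_{iS}=E_{jS}$ is then forced by the established subsystem equalities combined with QRF invariance of $E_{\ibar}$. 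The starred contributions vanish identically in this prescription, since $H_{S|R_i}^{\eff}$ differs from $H_{S|R_i}+\tilde H_{S|R_i}$ only by a multiple of $\mathds1_S$, reducing the inner commutator in \eqref{star} to $[H_{S|R_i}^{\eff},\rho_{S|R_i}]$ whose trace against $H_{S|R_i}^{\eff}$ vanishes by cyclicity; this reproduces the observation in App.~\ref{App:TDroleofcorrelations}.

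I expect the principal book-keeping hurdle to be consistently tracking the parity-swap relabeling $g\mapsto g_ig_jg^{-1}$ when transferring identity (i) into the $R_j$-perspective traces of ``other frame'' operators; beyond that, the remaining algebra is routine, and since nothing in the argument restricts $\rho_{\ibar}$, the conclusions hold for arbitrary global states at all times.
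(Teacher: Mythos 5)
Your proof is correct and follows essentially the same route as the paper's: both hinge on showing that the $\Pid\Pit$-invariance of $H_{\ibar}$ forces the effective Hamiltonian pieces to transform by the parity-swap/frame-swap alone, and that only the $\Pid$- and $\Pit$-invariant components of the state enter the relevant traces. Your identities (i) and (ii) are the explicit, matrix-element form of the paper's relation $\Pid\Pit(\rho_{\jbar})=\hat{\mathcal I}_{i\to j}\hat{X}^{\dagger}(\Pid\Pit(\rho_{\ibar}))$ combined with Hilbert--Schmidt orthogonality, so the argument is the same in substance, just executed more concretely.
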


\begin{proof}
    When the Hamiltonian $H_{\ibar}=H_j\otimes\mathds 1_S+\mathds1_j\otimes H_{S|R_i}+H_{jS}$ is in the image of the projectors $\Pid,\Pit$, it means that the individual terms in it are as well.~That is,
    \be\label{eq:Hpidpit}
H_{\ibar}=H_j^{\mbd}\otimes\mathds 1_S+\mathds1_j\otimes H_{S|R_i}^{\mbt}+\Pid\Pit(H_{jS})\,,
    \ee
    where, recalling that $\Pid,\Pit$ act locally on the $R_j$ and $S$ tensor factors, respectively, we used the shorthand notation $H_j^{\mbd}\otimes\mathds 1_S=\Pid(H_j\otimes\mathds 1_S)$ and $\mathds1_j\otimes H_{S|R_i}^{\mbt}=\Pit(\mathds1_j\otimes H_{S|R_i})$.~Using then \eqref{eq:Hpidpit} in the expressions \eqref{heffpr} for the effective Hamiltonians in the prescription of \cite{rezakhani}, it follows that the effective subsystem Hamiltonians and interaction in such a prescription are also $\Pid\Pit$-invariant, i.e.
\be\label{eq:effHpidpit}
H_{\ibar}=(H_j^{\eff})^{\mbd}\otimes\mathds 1_S+\mathds1_j\otimes (H_{S|R_i}^{\eff})^{\mbt}+\Pid\Pit(H_{jS}^{\eff})\,.
    \ee
    By Lemma~\ref{lemma:pinpd-uin} and \ref{Lemma:tpseq}, the Hamiltonian in $R_j$-perspective is given by $H_{\jbar}=\hat{\mathcal{I}}'_{i\to j}\hat{P}_j^{g_i,g_j}(H_j^{\mbd})\otimes\mathds1_S+\mathds1_i\otimes H_{S|R_i}^{\mbt}+\hat{\mathcal I}_{i\to j}\hat{X}^{\dagger}(\Pid\Pit(H_{jS}))$, with $\mathcal{I}_{i\to j}$ the identity \eqref{eq:Ijtoi} from $R_i$- to $R_j$-perspective, $\hat{\mathcal{I}}'_{i\to j}$ its restriction to the frame degrees of freedom, $X=P_j^{g_i,g_j}\otimes\mathds1_S$, and $P_j^{g_i,g_j}$ the parity-swap operator \eqref{Pswap}.~Owing to the Hilbert-Schmidt orthogonality of the algebra decompositions \eqref{algtdecomp} and \eqref{algddecomp} and their QRF-invariance (cfr.~Eqs.~\eqref{pivcomp0},~\eqref{pivcomp}), only the $\Pid$- and $\Pit$-invariant components of the state contribute to the effective Hamiltonians \eqref{heffpr}.~Invoking again Lemma~\ref{lemma:pinpd-uin} and \ref{Lemma:tpseq}, we have that \be\label{eq:pitpidstates}
    \Pid\Pit(\rho_{\jbar})=\hat{\mathcal I}_{i\to j}\hat{X}^{\dagger}(\Pid\Pit(\rho_{\ibar}))\qquad\Rightarrow\qquad\rho_{S|R_j}^{\mbt}=\rho_{S|R_i}^{\mbt}\quad,\quad\rho_{i}^{\mbd}=\hat{\mathcal{I}}'_{i\to j}\hat{P}_j^{g_i,g_j}(\rho_{j}^{\mbd})
    \ee
    for any QRF-related states $\rho_{\ibar}$ and $\rho_{\jbar}$, and at any time.~Thus, using the expression for $H_{\jbar}$ and \eqref{eq:pitpidstates} in the analogous prescription \eqref{heffpr} for the effective Hamiltonians in $R_j$-perspective, we have
\be\label{eq:heffjpersp}
H_i^{\eff}=\hat{\mathcal{I}}'_{i\to j}\hat{P}_j^{g_i,g_j}((H_j^{\eff})^{\mbd})\quad,\quad H_{S|R_j}^{\eff}=(H_{S|R_i}^{\eff})^{\mbt}\quad,\quad H_{iS}^{\eff}=\hat{\mathcal I}_{i\to j}\hat{X}^{\dagger}(\Pid\Pit(H_{jS}^{\eff}))\,,
    \ee
    which coincide with the direct QRF transformation of the effective Hamiltonian contributions in~~\eqref{eq:effHpidpit}.~Again by Hilbert-Schmidt orthogonality, the $\Pid$- and $\Pit$-invariant components of the state are the only contributing also to the effective energies in the two perspectives.~The results in Eqs.~\eqref{eq:pidpitenergy}-\eqref{eq:frameqwestar} then follow from inserting \eqref{eq:effHpidpit}--\eqref{eq:heffjpersp} into the expressions \eqref{totinten},~\eqref{convenWQ1}, and \eqref{star}.
\end{proof}

\noindent
\textbf{Remark.}~The key step in the above proof is that, for a $\Pid\Pit$-invariant Hamiltonian, the effective Hamiltonians defined with the prescription of \cite{rezakhani} are also $\Pid\Pit$-invariant (cfr.~Eq.~\eqref{eq:effHpidpit}).~This is, however, not necessarily the case for the prescription of \cite{weimer,Hossein_Nejad_2015}.~In this prescription, the effective Hamiltonian for, say, the $S$-subsystem is defined as $H_{S|R_i}^{\eff}(t)=H_{S|R_i}+\mdsh_{S|R_i}(t)$, where $\mdsh_{S|R_i}(t)$ is the part of $\tilde{H}_{S|R_i}(t)$ which commutes with $H_{S|R_i}$, with $\tilde{H}_{S|R_i}(t)=\Tr_j\!\big(H_{jS} (\rho_{j}(t) \otimes \mathds1_S) \big)$ defined as in Eq.~\eqref{eq:HStilde}.~Clearly, $\tilde{H}_{S|R_i}=\tilde{H}^{\mbt}_{S|R_i}$ for $H_{\ibar}=\Pid\Pit(H_{\ibar})$, owing to Eq.~\eqref{eq:Hpidpit}.~However, for $H_{S|R_i}$ and $\tilde{H}_{S|R_i}$ to be translation invariant, the same need not be true for $\mdsh_{S|R_i}$ and thereby $H_{S|R_i}^{\eff}$, only that their $\Pit^{\perp}$-components are equal.~Therefore, unlike in Claim~\ref{claim:pidpitprescription} with the prescription of \cite{rezakhani}, in this other prescription it is not only the $\Pid$-,~$\Pit$-components of the states that contribute to the energetics of the system.~Thus, the effective energies, heat, and work exchanges of subsystems will generically not necessarily be equal in the two perspectives.\\

\noindent
In the remainder of the appendix we shall prove Lemmas~\ref{lemma:invTDrates},~\ref{lemma:zerosigmaphi},~and \ref{lemma:inventbal}.\\

\noindent
\textbf{Lemma~\ref{lemma:invTDrates}.}~\emph{Let $\rho_{\ibar}(t)\in\mathcal{A}_{\ibar}^X$ be a TPS-invariant trajectory for some interval $[t_0,t_1]$.~Choose \emph{any} prescription of effective Hamiltonians obeying Definition~\ref{def_prescription} and apply it to both $H_{\jbar}$ in $R_j$-perspective and to the imported Hamiltonian $H_{j\to i}$ in $R_i$-perspective ($i\neq j$) for defining the energetics.~Then, over $[t_0,t_1]$, the heat flow into and work done on $S$ are equal in both perspectives relative to this prescription. This is \emph{independent} of whether work and heat are defined via~\eqref{convenWQ1} \&~\eqref{convenWQ2} or via~\eqref{eq:QWdef}.~That is, for any $t\in[t_0,t_1]$,
    \be\label{sysqtspr}
    \dot{q}_{S|R_j}=\dot{q}_{S|R_i}, \quad \dot{w}_{S|R_j}=\dot{w}_{S|R_i}, \quad \dot{E}^*_{S|R_j}=\dot{E}^*_{S|R_i},
    \ee
and similarly, for the ``the other frame'' subsystem
    \be\label{frameqtspr}
    \dot q_{i}=\dot q_{j}, \;\; \dot{w}_{i}=\dot{w}_{j}, \quad \dot{E}^*_{i}=\dot{E}^*_{j}.
    \ee
This holds for any frame orientations $g_i,g_j$ before and after QRF transformation, respectively.
    }
\begin{proof}  
We recall that, for $\mbfUX$-invariant states, the imported Hamiltonian $H_{j\to i}=\hat X\hat{\mathcal I}_{j\to i}(H_{\jbar})$ can be equivalently used to define the dynamics (cfr.~Lemma~\ref{lemma:finitetimeinvsoln}) and the total energy (cfr.~Eq.~\eqref{energyUXinvstate}) for the global state under consideration over the interval $[t_0,t_1]$.~For $\rho_{\ibar}(t)\in\mcA_{\ibar}^{X}$ with $X=Y_j\otimes Z_S$, we also have by Lemma~\ref{Lemma:tpseq} that $
\rho_{\ibar}(t)=\hat X\hat{\mathcal I}_{j \to i}(\rho_{\jbar}(t))$.

Next, fix any prescription for constructing effective Hamiltonians that abides by Definition~\ref{def_prescription} and apply it \emph{both} to $H_{\jbar}$ in $R_j$-perspective and to $H_{j\to i}$ in $R_i$-perspective. Since all permissible operator ingredients in the two perspectives are related by conjugation with the bilocal unitaries $X\mathcal{I}_{j\to i}$, property (c) of Definition~\ref{def_prescription} directly implies that the ensuing effective Hamiltonians transform likewise between the two perspectives,\footnote{That is, property (c) of Def.~\ref{def_prescription} here ensures consistency between applying the prescription before or after applying $\hat{X}\hat{\mcI}_{j\to i}$ to $H_{\jbar}$.} i.e.
\be\label{effHimported}
H_{j \to i}=(\hat Y_j\otimes\hat{\mathds 1}_S)\hat{\mathcal I}_{j\to i}(H_{i}^{\eff}(t)\otimes\mathds1_S)+\mathds1_j\otimes\hat{Z}_S(H_{S|R_j}^{\eff}(t))+\hat X\hat{\mathcal I}_{j\to i}(H_{iS}^{\eff}(t))\;,
\ee
where $H_{\jbar}=H_{i}^{\eff}(t)\otimes\mathds1_S+\mathds1_i\otimes H_{S|R_j}^{\eff}(t)+H_{iS}^{\eff}(t)$ is the resulting splitting into effective Hamiltonians in $R_j$-perspective.~(This holds regardless of whether $H_{\ibar}$ is $\mbfUX$-invariant.)~Plugging \eqref{effHimported} into the total energy \eqref{energyUXinvstate} relative to frame $R_i$, we get
\be\label{effenergisUXrho}
    E_{\ibar}(t)=\Tr_{\ibar}\bigl((\hat Y_j\otimes\hat{\mathds 1}_S)\hat{\mathcal I}_{j\to i}(H_{i}^{\eff}(t)\otimes\mathds1_S)\,\rho_{\ibar}(t)\bigr)+\Tr_{\ibar}\bigl((\mathds1_j\otimes\hat{Z}_S(H_{S|R_j}^{\eff}(t)))\,\rho_{\ibar}(t)\bigr)+\Tr_{\ibar}\bigl(\hat X\hat{\mathcal I}_{j\to i}(H_{iS}^{\eff}(t))\,\rho_{\ibar}(t)\bigr)\,,
\ee
for $t\in[t_0,t_1]$.~The three terms in \eqref{effenergisUXrho} are naturally understood as ``the other frame'', the $S$-subsystem, and the interaction energies, respectively, as defined by the effective Hamiltonians built from the imported one.~Therefore, using again the fact that
$\rho_{\jbar}(t)=\hat{\mathcal I}_{i\to j}\hat X^{\dagger}(\rho_{\ibar}(t))$, and so $\rho_{S|R_j}(t)=\hat{Z}_S^{\dagger}(\rho_{S|R_i}(t))$, it is straightforward to see that the individual contributions to the energy of the total system over the interval $[t_0,t_1]$ are the same in the two perspectives,
\be
E_{j}(t)=E_{i}(t)\;,\qquad E_{S|R_i}(t)=E_{S|R_j}(t)\;,\qquad E_{jS}(t)=E_{iS}(t)\;.
\ee
Now, from the definitions \eqref{convenWQ1}, \eqref{convenWQ2} of the work and heat rates for the subsystem $S$ we have
\begin{align}
\dot w_{S|R_i}&=\Tr_S\bigl(\hat{Z}_S(\dot{H}_{S|R_j}^{\eff}(t))\rho_{S|R_i}(t)\bigr)=\Tr_S\bigl(\dot{H}_{S|R_j}^{\eff}(t)\rho_{S|R_j}(t)\bigr)=\dot w_{S|R_j},\label{eq:wequality}\\
\dot q_{S|R_i}&=\Tr_S\bigl(\hat{Z}_S(H_{S|R_j}^{\eff}(t))\dot{\rho}_{S|R_i}(t)\bigr)=\Tr_S\bigl(H_{S|R_j}^{\eff}(t)\dot{\rho}_{S|R_j}(t)\bigr)=\dot q_{S|R_j}, \label{eq:qequality}
\end{align}
where we used the time independence of the unitary $Z_S$ together with $\rho_{S|R_j}(t)=\hat{Z}_S^{\dagger}(\rho_{S|R_i}(t))$.~Similarly, one finds $\dot w_j=\dot w_i$, $\dot q_j=\dot q_i$ for ``the other frame'' work and heat rates over the time interval $[t_0,t_1]$.

Finally, notice that, by definition of the imported Hamiltonian, we have $H_{j\to i} = \hat{X}\hat{\mcI}_{j\to i}(H_i \otimes \mathds1_S + \mathds1_i \otimes H_{S|R_j} + H_{iS}) \equiv H'_j \otimes \mathds1_S + \mathds1_j \otimes H'_{S|R_i} + H'_{jS}$, where we have denoted the ensuing bare contributions by primes to distinguish them from the bare contributions to the generally distinct $H_{\ibar}$. Using these bare pieces and the fact that $\rho_{\ibar}(t) \in \mcA_{\ibar}^X$, it follows that $H'_{S|R_i} = \hat{Z}_S(H_{S|R_j})$ and $\tilde{H}_{S|R_i}'=\hat{Z}_S(\tilde{H}_{S|R_j})$.~Thus, the quantity $\dot{E}^*_{S|R_i}$ defined in Eq.~\eqref{star} reads
\be\label{Estarequality}
\dot{E}^*_{S|R_i}=-i\Tr_S\bigl(H_{S|R_j}^{\eff}(t)[\hat{Z}_S^\dag(H_{S|R_i}'+\tilde{H}_{S|R_i}'(t)),\rho_{S|R_j}(t)]\bigr)=\dot{E}^*_{S|R_j}\;,\qquad t\in[t_0,t_1]\,.\ee
Combining then Eqs.~\eqref{eq:wequality},~\eqref{eq:qequality},~and \eqref{Estarequality}, from the definitions \eqref{eq:QWdef} it follows that
\be
\dot{W}_{S|R_i}=\dot{W}_{S|R_j}\qquad,\qquad \dot{Q}_{S|R_i}=\dot{Q}_{S|R_j}\;\;,\qquad t\in[t_0,t_1]\,.
\ee
Similarly for $\dot{E}^*_j=\dot{E}^*_i$ and $\dot{W}_{j}=\dot{W}_{i}$,~$\dot{Q}_{j}=\dot{Q}_{i}$. 
\end{proof}

\noindent
\textbf{Lemma~\ref{lemma:zerosigmaphi}~(No entropy production and flow).}~\emph{For pure global states $\rho_{\ibar}$, the entropy production and entropy flow of the subsystem $S$ at time $t=t_1>0$ are zero in both perspectives (for any given frame orientations $g_i,g_j\in\mathcal G$)
    \be\label{noentexchange}
\Sigma_{\ibar}(t_1)=0=\Sigma_{\jbar}(t_1)\quad,\quad\Phi_{\ibar}(t_1)=0=\Phi_{\jbar}(t_1)\,,
    \ee
    if and only if, at the initial $t_0=0$ and the later time $t_1$
\be\label{pure:zerosigmaphi}
    \rho_{\ibar}(t_a)=\rho_j(0)\otimes\rho_{S|R_i}(t_a)\in\mcA_{\ibar}^{X^a}\,,
    \ee
    for some bilocal unitary $X^a=Y_j^0\otimes Z_S^a$, $a=0,1$.~For $\rho_{\ibar}$ mixed,~\eqref{noentexchange} holds true if $\rho_{\ibar}(t_a)$, $a=0,1$, satisfy at least one of the following:}\\

\noindent
    \emph{$\bullet$ they obey~\eqref{pure:zerosigmaphi},}\\
    
\noindent
\emph{$\bullet$ they admit pure state decompositions of the form
\be\label{mixed2:zerosigmaphi}
    \rho_{\ibar}(t_a)=\sum_{A,B}p_A\,q_B^a\,\rho_{j}^A(0)\otimes\rho_{S|R_i}^{B}(t_a)\;,
\ee
with $\rho_{j}^A(0)\otimes\rho_{S|R_i}^B(t_a)\in\mcA_{\ibar}^{X^{AB}}$, for possibly distinct $X^{AB}=Y_j^{A,0}\otimes Z_S^{B,a}$.
}
\begin{proof} \eqref{noentexchange} $\Rightarrow$ \eqref{pure:zerosigmaphi}: As emphasised in the main text, to apply the definitions \eqref{entprod}, \eqref{entflow} of entropy production and flow in both perspectives the initial global pure states $\rho_{\ibar}(0)$ and $\rho_{\jbar}(0)=\hat{V}_{i\to j}(\rho_{\jbar}(0))$ must be product states.~By Corollary~\ref{claim:Renyipure}, this means that $\rho_{\ibar}(0)=\rho_{j}(0)\otimes\rho_{S|R_i}(0)\in\mcA_{\ibar}^X$, for some bilocal unitary $X=Y_j\otimes Z_S$.~Then, inserting \eqref{noentexchange} into the entropy balance equation \eqref{entbal2}, we have $0=\Delta S_{S|R_i}=S_{\rm vN}[\rho_{S|R_i}(t_1)]$, where in the second equality we used the property $S_{\rm vN}[\rho_{S|R_i}(0)]=0$ for $\rho_{S|R_i}(0)$ pure.~Similarly, in $R_j$-perspective we have $S_{\rm vN}[\rho_{S|R_j}(t_1)]=0$.~Therefore, the $S$-subsystem state at time $t_1$ must be pure in both perspectives and, invoking again Corollary~\ref{claim:Renyipure}, the global pure state $\rho_{\ibar}(t_1)$ has to be a product state in a TPS-invariant subalgebra, namely $\rho_{\ibar}(t_1)=\rho_j(t_1)\otimes\rho_{S|R_i}(t_1)\in\mcA_{\ibar}^{X^1}$ for some bilocal unitary $X^1=Y_j^1\otimes Z_S^1$.~Finally, rewriting the definition \eqref{entprod} of $\Sigma_{\ibar}$ equivalently as $\Sigma_{\ibar}=S[\rho_{\ibar}(t_1)\|\rho_j(0)\otimes\rho_{S|R_i}(t_1)]$ and recalling the well-known result that $S[\rho\|\sigma]=0$, if and only if $\rho=\sigma$, it follows that $\rho_{\ibar}(t_1)=\rho_{j}(0)\otimes\rho_{S|R_i}(t_1)$ which completes the proof of the necessary direction of the claim.

\eqref{pure:zerosigmaphi} $\Rightarrow$ \eqref{noentexchange}:~By Lemma~\ref{Lemma:tpseq}, \eqref{pure:zerosigmaphi} implies that the global pure states are product states in both perspectives with $\rho_j(t_1)=\rho_j(0)$ and $\rho_i(t_1)=\rho_i(0)=\hat{\mathcal I}_{i\to j}'\hat{Y}_j(\rho_j(0))$, $\hat{\mathcal{I}}'_{i\to j} = \sum_{g\in\mcG}\ket{g}_i\otimes \bra{g}_j$.~The result \eqref{noentexchange} immediately follows then from inserting these relations into the definitions \eqref{entprod} and \eqref{entflow} of $\Sigma$ and $\Phi$ in the two perspectives.

For $\rho_{\ibar}$ mixed, when (at least one of) the conditions \eqref{pure:zerosigmaphi},~\eqref{mixed2:zerosigmaphi} hold(s), the global states at the initial and the later time $t$ are product states in both perspectives by Lemma~\ref{Lemma:tpseq}.~Moreover, $\rho_j(t_1)=\rho_j(0)$ and $\rho_i(t_1)=\rho_i(0)$.~Thus, the definitions \eqref{entprod},~\eqref{entflow} of entropy production and flow are applicable in both perspectives and vanish.
\end{proof}    

\noindent
\textbf{Lemma~\ref{lemma:inventbal}~(QRF-invariance of entropy balance).}~\emph{Suppose the global state is TPS-invariant at $t_0$ and $t_1>t_0$, i.e.\ $\rho_{\ibar}(t_a)\in\mcA_{\ibar}^{X^a}$, $a=0,1$, for some bilocal unitaries $X^a=Y_j^a\otimes Z_S^a$.~Then, between these times,} \begin{itemize}
    \item \emph{the change in the entropy of ``the other frame'' and $S$ subsystems are the same in both perspectives, i.e.}
\be\label{invDeltaS}
\Delta S_{i}=\Delta S_{j}\quad,\quad\Delta S_{S|R_j}=\Delta S_{S|R_i}\,.
\ee
\emph{When $\rho_{\ibar}(t_0)$ is pure, we have an additional equality, $\Delta S_j=\Delta S_{S|R_i}$, and so all entropy changes are equal.}

\item \emph{the entropy production in the subsystem $S$ and the entropy flow out of $S$ are equal in both perspectives,~i.e.}
\be\label{invSigmaPhi}
\Sigma_{\jbar}=\Sigma_{\ibar}\quad,\quad \Phi_{\jbar}=\Phi_{\ibar}
\ee
\emph{if and only if the initial state is of product form and}
\be\label{invSigmaPhi:cond01}
S[\hat{Y}_j^{01}(\rho_j(t_1))\|\rho_j(t_0)]=S[\rho_j(t_1)\|\rho_j(t_0)]\,,
\ee
\emph{with $Y_j^{01}=Y_j^0Y_j^{1\dagger}$.}
\end{itemize}

\begin{proof}
   When $\rho_{\ibar}(t_a) \in \mcA_{\ibar}^{X^a}$, $a=0,1$, by Lemma \ref{Lemma:tpseq} we have that $\rho_{\jbar}(t_a) = \hat{\mathcal{I}}_{i \to j} \widehat{X^a}^\dag (\rho_{\ibar}(t_a))$.~Thus, the ``other frame'' and $S$ reduced states at $t_0$ and $t_1$ in the two perspectives are related unitarily, respectively, as
    \be \label{unisubspr} 
    \rho_{i}(t_a)=\hat{\mathcal{I}}'_{i\to j} \widehat{Y^a_j}^\dag (\rho_j (t_a))\qquad,\qquad \rho_{S|R_j}(t_a) = \widehat{Z^a_S}^\dag(\rho_{S|R_i}(t_a))\,, \ee 
    where $\hat{\mathcal{I}}'_{i\to j} = \sum_{g\in\mcG}\ket{g}_i\otimes \bra{g}_j$.~Therefore, $S_{\rm vN}[\rho_{i}(t_a)]=S_{\rm vN}[\rho_{j}(t_a)]$ and $S_{\rm vN}[\rho_{S|R_j}(t_a)]=S_{\rm vN}[\rho_{S|R_i}(t_a)]$, from which the equality \eqref{invDeltaS} of the subsystem entropy variations follows. When $\rho_{\ibar}(t_0)$ is furthermore pure, then so is $\rho_{\ibar}(t_1)$, and we have $S_{\rm vN}[\rho_{S|R_i}(t_a)]=S_{\rm vN}[\rho_j(t_a)]$, implying also $\Delta S_{S|R_i}=\Delta S_j$.
    
   For the second statement, recall first from the main text that, in order to apply the definition of entropy production and flow, the initial state has to be of product form. Next, the mutual information between the ``other frame'' and $S$ at time $t_1$ is equal in the two perspectives,
    \be \label{eqmutinfpr}
        I[\rho_{\jbar}(t_1)] = S_{\rm vN}[\rho_i(t_1)]+S_{\rm vN}[\rho_{S|R_j}(t_1)]-S_{\rm vN}[\rho_{\jbar}(t_1)] = S_{\rm vN}[\rho_j(t_1)]+S_{\rm vN}[\rho_{S|R_i}(t_1)]-S_{\rm vN}[\rho_{\ibar}(t_1)] = I[\rho_{\ibar}(t_1)]
    \ee
    since each term on the RHS of the first equality is equal to the respective term on the RHS of the second equality, due to \eqref{unisubspr}.~Inserting then \eqref{invDeltaS} and \eqref{eqmutinfpr} into the definitions \eqref{entprod},~\eqref{entflow} of entropy production and entropy flow, these are equal in the two perspectives if and only if
    \be\label{eqrelentpr}
    S[\rho_i(t_1) || \rho_i(t_0)] = S[\rho_j(t_1) || \rho_j(t_0)]\;.
    \ee
    Owing to \eqref{unisubspr} and the invariance of quantum relative entropy under unitary conjugation of both of its arguments, this is equivalent to \eqref{invSigmaPhi:cond01}.
\end{proof}

We close with an illustration of Lemma~\ref{lemma:inventbal}, in which a global state trajectory oscillates between two distinct TPS-invariant subalgebras and is otherwise not TPS-invariant. The entropy balance in the two perspectives agrees only at these discrete times, in which case also~\eqref{invSigmaPhi:cond01} holds.

\begin{example}{\bf(QRF-relativity of entropy balance)}\label{ex2:entbalance}
Consider three qubits $R_1R_2S$ with frames $R_1$ and $R_2$ in orientations $g_1=0$ and $g_2=1$ before and after QRF transformation, respectively.~In this case, we have $\mbfU_{\bar1}^{0,1}=\ket{0}\!\bra{1}_2\otimes\mathds1_S+\ket{1}\!\bra{0}_2\otimes\sigma^x_S$.~Suppose the initial state is $\ket{\psi(0)}_{\bar1}=\ket{1,0}_{\bar1}$ of $R_2S$ relative to $R_1$.~Then, $\rho_{\bar1}(0)\in\mcA_{\bar1}^{X^0}$ with $X^0=\sigma_2^x\otimes\mathds1_S$, and the corresponding initial state of $R_1S$ relative to $R_2$ is $\ket{\psi(0)}_{\bar2}=V_{1\to2}^{0,1}\ket{\psi(0)}_{\bar1}=\ket{0,0}_{\bar2}$.

Consider now the Hamiltonian $H_{\bar1}=\sigma_2^x\otimes\mathds1_S+\mathds1_2\otimes\sigma_S^x$ in $R_1$-perspective.~The Hamiltonian in $R_2$-perspective is $H_{\bar2}=\mathds1_1\otimes\sigma_S^x+\sigma_1^x\otimes\sigma_S^x$, thus interacting.~The global states at time $t>0$ in the two perspectives are given by
\begin{align}
\ket{\psi(t)}_{\bar1}&=(\cos(t)\ket{1}_2-i\sin(t)\ket{0}_2)\otimes(\cos(t)\ket{0}_S-i\sin(t)\ket{1}_S)\,,\label{psit1bar}\\
\ket{\psi(t)}_{\bar2}&=\cos(t)\ket{0}_1\otimes(\cos(t)\ket{0}_S-i\sin(t)\ket{1}_S)-i\sin(t)\ket{1}_1\otimes(\cos(t)\ket{1}_S-i\sin(t)\ket{0}_S).\label{psit2bar}
\end{align}
Thus, the state in $R_1$-perspective remains unentangled at all times, while the state in $R_2$-perspective is entangled at any $t$ except for $t=n\frac{\pi}{2}$, $n\in\mathbb{N}$, when the global state $\rho_{\bar1}(t)$ resides in a TPS-invariant subalgebra.~Specifically, we have that $\rho_{\bar1}(t)\in\mcA_{\bar1}^{X^0}$ for $t=t_n=n\pi$ and $\rho_{\bar1}(t)\in\mcA_{\bar1}^{X^1}$ with $X^1=\sigma_2^x\otimes\sigma_S^x$ for $t=\tilde{t}_n=(2n-1)\frac{\pi}{2}$.~At any other time $\rho_{\bar1}(t)\notin\mcA_{\bar1}^{X}$, for any $X$.

Let us now look at the entropy balance in the two perspectives.~Since $\rho_{\bar1}(t)$ is a separable pure state for all $t$, there is no subsystem entropy variation in $R_1$-perspective at all times, that is $\Delta S_{S|R_1}=0=\Delta S_2$.~In $R_2$-perspective, we have $\Delta S_1=S_{\rm vN}[\rho_1(t)]=S_{\rm vN}[\rho_{S|R_2}(t)]=\Delta S_{S|R_2}$, owing to the global state being pure and the initial state being separable.~Thus, $\Delta S_1$ and $\Delta S_{S|R_2}$ are nonzero at all $t$ ($\rho_{\bar2}$ entangled), except for $t=t_n,\tilde{t}_n$ when, as discussed, $\rho_{\bar1}$ resides in a $TPS$-invariant subalgebra and the entropy variations vanish as in $R_1$-perspective.

Finally, using \eqref{psit1bar} in the definitions \eqref{entprod} and \eqref{entflow}, we have that $\Sigma_{\bar1}=\Phi_{\bar1}=S[\rho_2(t)\|\rho_2(0)]$, which vanishes at $t=t_n$ ($\rho_2(t_n)=\rho_2(0)$, $n\in\mathbb N$) and diverges at any other $t$.~For $t=t_n,\tilde{t}_n$, the reduced states are pure also in $R_2$-perspectives, and we have $\Sigma_{\bar2}=\Phi_{\bar2}=S[\rho_1(t)\|\rho_1(0)]$, which vanishes for $t=t_n$ and diverges for $t=\tilde{t}_n$.~Thus, $\Sigma_{\bar1}=\Sigma_{\bar2}$ and $\Phi_{\bar1}=\Phi_{\bar2}$ for $t=t_n,\tilde{t}_n$, when in fact condition \eqref{cond01:invSigmaPhi} is satisfied with $Y_2^{01}=(\sigma_2^x)^2=\mathds1_2$.~For $t\neq t_n,\tilde{t}_n$, $\Delta_{S|R_2}$ is nonzero so that, by the entropy balance relation \eqref{entbal2}, both $\Sigma_{\bar2}$ and $\Phi_{\bar2}$ are finite and at least one of them is nonzero.~Therefore, $\Sigma_{\bar1}\neq\Sigma_{\bar2}$ and $\Phi_{\bar1}\neq\Phi_{\bar2}$ for $t\neq t_n,\tilde{t}_n$ when $\rho_{\bar1}$ does not reside in any TPS-invariant subalgebra.
\end{example}
\end{document}